\documentclass[a4paper]{article}

\usepackage[utf8]{inputenc}
\usepackage[T1]{fontenc}
\usepackage{textcomp}
\usepackage{amsmath, amssymb, amsthm, physics}
\usepackage{mdframed}
\usepackage[backend=bibtex]{biblatex}
\usepackage{hyperref}

\theoremstyle{plain}
\newtheorem{definition}{Definition}[section]
\newtheorem{thm}{Theorem}[section]
\newtheorem{fact}{Fact}[section]

\newtheorem{cor}{Corollary}[section]
\newtheorem{prop}{Proposition}[section]
\newtheorem{lem}{Lemma}[section]

\usepackage{graphicx}
\graphicspath{ {diagrams/} }
\begin{document}

\title{Gauging the Spacetime Code}

\author{Gideon Lee$^1$}
\date{%
    $^1$Pritzker School of Molecular Engineering, The University of Chicago, Chicago, Illinois 60637, USA%
}


\maketitle

\begin{abstract}
    In recent years, the spacetime code has arisen as a candidate for a unifying view of fault tolerance in space and time. On the other hand, the recent study of dynamical phases has increasingly turned its attention to fault tolerance as a notion of a dynamically stable process. In this work, I explore one pathway between the two, achieved by gauging the spacetime code. This gives rise to a lattice gauge theory that inherits the elements of fault tolerance associated with a circuit, with Gauss laws corresponding to equivalence relations between configurations of spacetime errors and Wilson loops corresponding to detectors. The obtained gauge theory finds a surprisingly wide array of applications, from quantum error correction to condensed matter physics, and even learning theory: (1) It contains in its description foliated computation, and hence gives rise to one version of a gauge theory for measurement-based quantum computation. (2) For a class of topologically ordered mixed states, it gives us a gauge-theoretic language to describe the classical memory associated with the state. (3) The gauge-invariant observables of the theory which describe detectors also coincide with the learnable degrees of freedom of circuit Pauli noise. 
\end{abstract}

\tableofcontents

\section{Introduction}

This work is motivated by an attempt to understand the analogies between the theory of fault tolerance \cite{Terhal_RMP_2015_QEC_review, NC_2019, Gottesman_PRA_1998_FTQC} and dynamical phases of matter \cite{Rakovszky_PRX_2024_stable_open, Sang_PRL_2025_markov_length, Coser_Quantum_2019_mixed_state_phase_equivalence}. The drawing of such analogies has a long history, starting from early works like Ref.~\cite{Dennis_JMP_2002_topological_quantum_memory}, which map fault tolerant circuits for a toric code to statistical mechanical models, and measurement-based quantum computation \cite{Raussendorf_AnnPhys_2006_one_way_QC, Raussendorf_NJP_2007_topological_cluster_states, Raussendorf_PRA_2005_cluster_states, Raussendorf_PRL_2007_FTQC_2D}, which maps fault tolerant computations to resource states. On a high level, the analogy is clear and compelling -- fault tolerant processes should be sufficiently stable against noise that they can preserve quantum information in the thermodynamic limit. Conversely, a dynamical phase of matter should be stable against perturbations (noise) in parameter space such that it preserves crucial features, such as possibly a degeneracy (quantum information) in steady states. 

In recent years, various frameworks have emerged that can be said to formalize this analogy, all of which involve some form of rewriting of circuits, from methods based on ZX calculus \cite{Bombin_quantum_2024_unifying_FT, Rodatz_arXiv_2026_FT_by_construction, vandewetering_arXiv_2020_zx_tutorial}, to topological fixed-point path integrals \cite{Davydova_arXiv_2025_universal_FT, Bauer_quantum_2024_fixed_point_PI}, to spacetime codes \cite{Bacon_STOC_2015_sparse_codes_circuits,Delfosse_arXiv_2023_spacetime_code_clifford,Pesah_arXiv_2025_FT_transformations,Gottesman_arXiv_2022_opportunities_FT}. Each has advantages and disadvantages in the drawing of analogies to phases of matter. The ZX calculus provides a complete description of linear algebra -- naturally it also appears to provide a fairly complete description of fault tolerance \cite{Bombin_quantum_2024_unifying_FT}. However, the language and flexibility of spiders and Pauli flows do not immediately or naturally describe a phase of matter (although very recent work \cite{Zhang_arXiv_2026_quantum_process_LDPC} has managed to do exactly that). In contrast, topological fixed-point path integrals are naturally associated with topological phases of matter. However, they are primarily used to describe the operation of topological codes rather than general circuits. Finally, spacetime codes lie somewhere in between -- they describe a large range of circuits (although restricted to Clifford operations) without assuming explicit topological structure, but at the same time, given their explicit description of a circuit as a quantum error correction code, are also naturally associated with a phase of matter \cite{Zeng_2019_QIMQM}. However, in contrast to the other two formalisms, here the associated code does not necessarily inherit all fault-tolerant properties of the circuit (though it often does) -- in particular, whether the code-space is degenerate depends on the choice of discretization of the circuit, and the distance of the code upper bounds but is not in strict correspondence to the number of spacetime errors the circuit can detect.

The goals and contributions of this work then is two-fold. First, starting from spacetime codes, we will explore one systematic method to construct (loosely speaking) a phase of matter, in particular (strictly speaking) a gauge theory, from a Clifford circuit, which preserves certain quantities associated with fault tolerance. Combining a mild modification of spacetime codes \cite{Bacon_STOC_2015_sparse_codes_circuits,Delfosse_arXiv_2023_spacetime_code_clifford,Pesah_arXiv_2025_FT_transformations,Gottesman_arXiv_2022_opportunities_FT} with the gauging procedures outlined in Refs.~\cite{kubica_2018_arXiv_ungauging_QEC,Rakovszky_2023_arxiv_physics_ldpc_I}, we obtain a circuit-to-code mapping which may be interpreted as a gauge theory. Note that the spacetime code is already a circuit-to-code mapping -- if one wishes, one may associate these codes with Hamiltonians, which may then be given an interpretation as a phase of matter \cite{Aitchison_arXiv_2026_spacetime_spins}. The code we will arrive at, while distinct from the spacetime code, is constructed from the same essential elements, and may be considered dual to the spacetime code\footnote{In the sense that both codes can be regarded as different segments of the same chain complex, see Eq.~\eqref{eq:gauge_complex_2} and the surrounding discussion.}. In contrast to the usual approach, constructing this dual code requires conceptualizing the spacetime code as a classical rather than a quantum code.

The second goal of this work is to clarify how the fault tolerance properties of the circuit are inherited by the gauge theory. Along the way, this will also allow us to review and clarify the mechanisms by which these properties of the circuit are inherited by the spacetime code itself. While couched in the language of gauge theory, we point out that this should be of interest to readers whose primary interest is the spacetime code. Our approach to the spacetime code puts measurement and data errors on equal footing, and explore its consequences. Operationally, this means we allow for multi-qubit mid-circuit measurements, but retain a single ancilla error location for the book-keeping of readout errors\footnote{Note that this is just a choice of discretization of the circuit, and one can recover both circuit-level and phenomenological errors via other choices of discretization.}. Previous approaches have treated measurement errors as higher-weight `sandwiching' errors, i.e. pairs of data errors occurring before and after mid-circuit measurements. At first glance, the approach we take in Sec.~\ref{sec:circuits} may look like nothing more than a change in basis that allows us to properly account for the weights of measurement errors\footnote{In the case where there is only one measurement per time step, this looks like a small difference -- we can always choose a weight-two sandwiching error to represent a measurement error. 
However, for phenomenological models, one may want to characterize multiple measurements in a single time step (see the discussion of circuit discretization in Sec.~\ref{sec:circuits}). In that case, it may take sandwiching Pauli errors of arbitrarily high weight to create a single readout error.}. However, on the fault tolerance side, this also allows to conceptually account for the fact that syndrome outcomes are themselves classical bits being transmitted through a circuit.\footnote{As an example, consider the undetectable error marked in yellow in Fig.~\ref{fig:rep_code_gauge_fixed}.} 

On the gauge theory side, this reveals spatial structure that is obscured by the sandwiching approach. The gauge theory frames the fault tolerance properties of the circuit in the following sense: the matter fields are the spacetime locations in the original circuit, the gauge fields are the propagators and measurements that characterize the action of the circuit, and the (local) Wilson loops are the detectors of the circuit. Importantly, the Wilson loops arise from local redundancies in gauge fields. In contrast to stabilizers of the spacetime code formulated in Ref.~\cite{Pesah_arXiv_2025_FT_transformations}\footnote{As argued in Ref.~\cite{Pesah_arXiv_2025_FT_transformations}, all detectors may be mapped to stabilizers for the subsystem spacetime code, however, there can exist stabilizers that do not map back to detectors. We note that in the earlier formulation of the purely stabilizer spacetime code of Ref.~\cite{Delfosse_arXiv_2023_spacetime_code_clifford}, stabilizers were explicitly constructed only from detectors. However, unlike the subsystem version, the purely stabilizer spacetime code does not try to fully capture other elements of fault tolerance, such as fault equivalence.}, these are always in one-to-one correspondence with the detectors. As we will see, this formally extends the various constructions of the spacetime code regarded as a chain complex. The above are all bulk (temporally speaking) properties. The boundary conditions of the gauge theory will also play a critical role when relating it to quantum error correction (QEC), and we will also show how boundaries may be systematically constructed in correspondence with various practical QEC experiments. We will refer to the obtained gauge theory as the gauged subsystem spacetime code (SSC). 



The rest of this work proceeds as follows: In Sec.~\ref{sec:clifford_circuits_and_gauge_generators}, we begin by introducing a description of Clifford circuits using \textit{elementary circuit operators}, and prove some structural claims about this description, which apply to spacetime codes. In particular, we show that these operators retain all information needed to understand the fault tolerance of a circuit, namely, how operators propagate (Cor.~\ref{cor:gauge_op_ISG}), the effect and equivalence of spacetime faults (Prop.~\ref{prop:fault_equiv}), and detectors and their relationship to spacetime errors (Prop.~\ref{prop:detectors_and_errors_I}). In Sec.~\ref{sec:introducing_SSC}, we discuss how and when these elements show up in the spacetime code. The material in this section is a mix of review of basic ideas in fault tolerance and spacetime codes, but also proves some simple results that serve to concretely ground the elementary circuit operators in the language of fault tolerance. Along the way, we try to unify some basic notions and language appearing in Refs.~\cite{Delfosse_arXiv_2023_spacetime_code_clifford, Blackwell_arXiv_2025_distance_floquet, Pesah_arXiv_2025_FT_transformations}. In Sec.~\ref{sec:gauging_SSC}, we introduce the gauging procedure, which we carry out on the elementary circuit operators to obtain a formal description of the gauge theory associated with a Clifford circuit. In Sec.~\ref{ssec:elements_FT_gauged_SSC}, we show that the same elements of fault tolerance are retained, and correspond to various elements of gauge theory. Moreover, we prove that detectors are in one-to-one correspondence with the gauge invariant observables of this theory, which are inherited from the local redundancies of the elementary circuit operators (see Prop.~\ref{prop:detectors_are_redundancies}). This gauge theory is given a topological interpretation in Sec.~\ref{ssec:topology_gauged_SSC}, which allows us to define and prove a notion of an internal distance, which describes the error correction of spacetime errors using only detectors. In Sec.~\ref{ssec:boundaries}, we discuss temporal boundary conditions, and their correspondence to various operational QEC experiments, and in Sec.~\ref{ssec:examples}, we discuss some examples of gauged SSCs. Finally, in Sec.~\ref{sec:applications}, we take a step away from formalism to look at various applications of the gauged SSC. While we do not propose new circuits or schemes, we use the gauged SSC to provide reinterpretations of circuits and schemes going beyond just error correction, from measurement-based quantum computation in Sec.~\ref{ssec:SSC_foliated_computation}, to mixed state order in Sec.~\ref{ssec:mixed_state_order}, to Pauli noise learning in Sec.~\ref{ssec:learning_Pauli_noise}.






\section{Clifford circuits, Elementary Circuit Operators, and the Spacetime Code}\label{sec:clifford_circuits_and_gauge_generators}

We begin our work by identifying a Clifford circuit $\mathcal{C}$  with a set of spacetime locations and operators, which we will refer to as the \textit{elementary circuit operators} (ECOs), or  $G_{\mathcal{C}}$. We will set up notation for a Clifford circuit in Sec.~\ref{sec:circuits}, and define the ECOs in Sec.~\ref{sec:ECOs}. Note that a similar set of operators are referred to as gauge operators in the usual spacetime code literature \cite{Bacon_STOC_2015_sparse_codes_circuits, Pesah_arXiv_2025_FT_transformations, Fu_unpub_subsystem_spacetime_code}, or as benign errors\footnote{We discuss the relationship between benign errors and gauge operators in Sec.~\ref{sssec:floquet_codes}} in Ref.~\cite{Blackwell_arXiv_2025_distance_floquet}. We refer to them as `circuit operators' for two reasons: Primarily, the word `gauge' is already at risk of becoming dangerously overloaded by the end of this work. Conceptually, we also insist that for now, we do not view $G_{\mathcal{C}}$ as the gauge group of anything, but rather a collection of operators. In particular, one is not allowed to refactor these operators by say replacing operators with products of operators.

To that end, we view these operators as simply an alternative expression of the circuit. All quantities of interest related to the fault tolerance and error correction properties of $\mathcal{C}$ may be derived from $G_{\mathcal{C}}$. These quantities include the action of the circuit, as described in Sec.~\ref{ssec:ISGs}, as well as the effect (Def.~\ref{def:bare_effect}) and detectors (Def.~\ref{def:detector}) of a circuit, as described in Sec.~\ref{ssec:elts_of_FT}. These elements together suffice to define (Def.~\ref{def:fault_distance}) and determine the spacetime fault distance of the circuit. In other words, for our purposes, there is no loss of information when going from $\mathcal{C}$ to $G_{\mathcal{C}}$.

\subsection{Discretizing a Clifford circuit}\label{sec:circuits}

We recall the definition of a Clifford operation and circuit, using conventions developed in Refs.~\cite{Bacon_STOC_2015_sparse_codes_circuits, Delfosse_arXiv_2023_spacetime_code_clifford} to standardize notation and setup. The following definitions are schematically illustrated in Fig.~\ref{fig:circuit_convention}, with examples.

\begin{figure}[ht!]
\centering 
\includegraphics[width=0.95\linewidth]{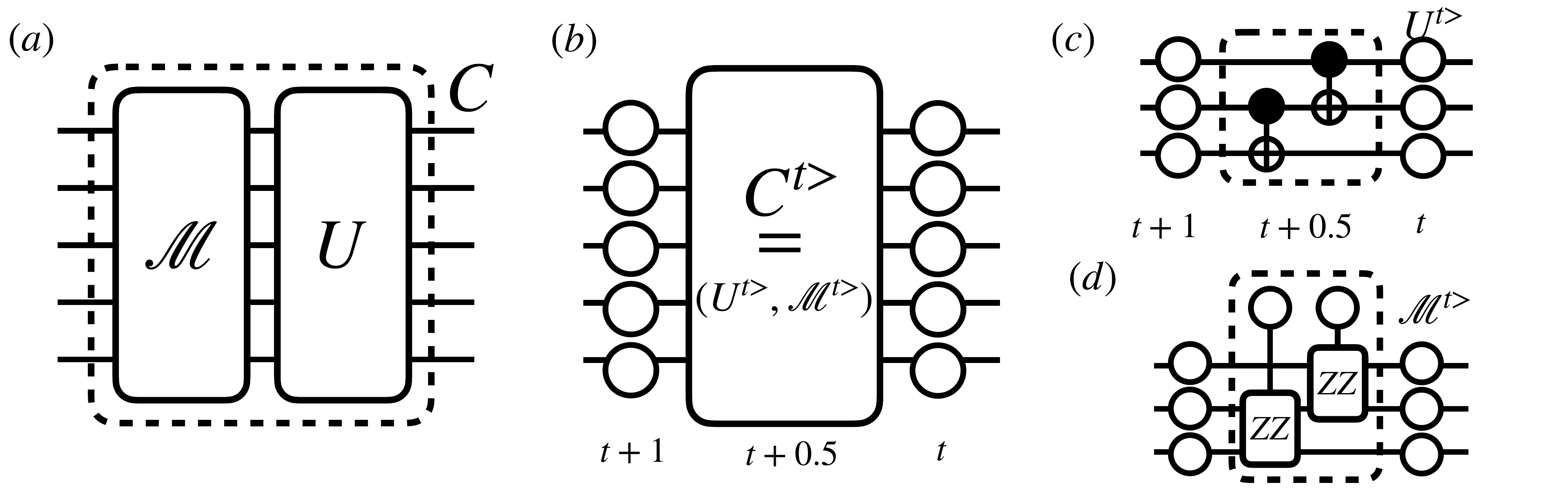} \caption{Schematic illustrating the conventions used in this work. In all figures, time goes from right to left. \textbf{(a)} A Clifford operation $C$ comprises a Clifford unitary $U$ and a set of measurements $\mathcal{M}$. \textbf{(b)} We discretize spacetime locations in a Clifford circuit by grouping its elements up into Clifford operations, and placing these at half-integer time-steps. We insert spacetime locations, depicted by circles, at integer time-steps. \textbf{(c)} An example of a Clifford unitary: We allow the unitaries at each time-step to be arbitrarily complicated -- in this case the unitary comprises two cNOT gates. We note that this circuit could also be differently discretized by separating the two cNOTs into different time-steps. We allow for these different possibilities in order to be able to model the full spectrum from fully phenomenological to circuit-level noise. \textbf{(d)} An example of a measurement. In this case, the measurement comprises two $ZZ$ measurements. Our only requirement for grouping measurements into time-steps is that the measurements in each time-step should commute. Each measurement comes with an additional spacetime location which is placed at half-integer time-steps. Spacetime locations at integer time-steps are expected to be susceptible to both bit and phase flip errors. In contrast, the spacetime locations at half-integer time-steps are regarded as classical, in the sense that they are only susceptible to one kind, say phase flip, of error.}\label{fig:circuit_convention} 
\end{figure}

\textbf{Clifford operations -- } A \textit{Clifford operation} on $n$ qubits $C = (U, \mathcal{M})$ is a tuple comprising a Clifford unitary $U$ and a set of $m$ Pauli measurements $\mathcal{M} = \{q_1, ..., q_m\}$, with $q_i \in \mathcal{P}_N$. When referring to a particular application of a Clifford operation, we may write $(-1)^{o(p_i)} = \pm 1$ for the outcome of measuring $p_i$. We will not restrict the measurements to be disjoint. However, we will require them to be independent in most of this work, although we briefly loosen this second requirement in Sec.~\ref{ssec:examples}. We will also require that they commute (otherwise, we would have to order them). We will take $U$ to happen before any of the measurements in $M$. While the circuit itself may comprise many smaller unitaries, for our purposes we always take $U$ to be a single global $n$-qubit unitary (which may have support on any number of qubits). Since errors only occur before or after Clifford operations, the breakdown of $U$ in individual timesteps into smaller unitaries does not affect fault propagation or detector structure -- this should be regarded as a choice of discretization. Such a choice allows us to account for different error models, ranging from fully phenomenological to circuit level noise.


\textbf{Clifford circuit -- } A \textit{Clifford circuit}  $\mathcal{C}$ of length $T$ on $N$ qubits is an ordered set of $T$ Clifford operations, together with $n(T+1)$ spacetime locations.  The spacetime locations are labelled with a tuple $(i,t)$, with $i = 1,...,n; t = 0, ..., T$. Each spacetime location is associated with a spacetime qubit, and we denote the associated Paulis by $\eta^t(X_n), \eta^t(Z_n)$. In other words, given a Pauli operator $p \in \mathcal{P}_N$, we denote by $\eta^t(p) := I_n^{\otimes T - t + 1} \otimes p \otimes I_n^{\otimes t}\in \mathcal{P}_{n(T+1)}$, i.e. the same Pauli but placed along the $t$-th slice of spacetime qubits. Throughout this work, we will reserve upper indices for time. To go in the other direction, if $q \in \mathcal{P}_{n(T+1)}$ is a Pauli supported on spacetime locations, we denote by $\pi^t(q)$ the $t$-th slice of Paulis in $q$. Spacetime qubits are where we will place Pauli errors later. As short-hand, we also denote $\zeta^t(p) := \pi^t ( \eta^t(p))$ the in-place projection of a spacetime Pauli onto a particular time-slice.

The Clifford operations in $\mathcal{C}$ are regarded to occur at half-integer time-steps and as such are labelled $C^{t+0.5}$, or interchangably $C^{t>}$ for short, and $C^{t+0.5}$ is considered to occur between time steps $t, t+1$. We similarly notate the constituent elements, $C^{t>} = (U^{t>}, \mathcal{M}^{t>})$, and $\mathcal{M}^{t>} = \{q_1^{t>}, ..., q_{m^{t>}}^{t>}\}$. For each measurement $q_i^{t>}$, we introduce an ancillary spacetime location, labelled $(i, t+0.5)$, see Fig.~\ref{fig:circuit_convention}d. Although we will often write Paulis associated with this spacetime location,  this should be regarded as a purely classical degree of freedom. This means that we will only allow one kind of error, namely $Z$  errors, on this location. This is where we will place measurement errors later. For a Pauli $p$ supported across all spacetime locations, we denote by $\Pi^-(p)$ the support of $p$ on half-integer locations only, and we denote by $\Pi^+(p)$ the support on integer locations only. We denote the composition of unitaries in the circuit via,
\begin{equation}\label{eq:unitary_composition}
    U^{t \rightarrow t'} := U^{t' - 0.5} ... U^{t + 1.5} U^{t + 0.5}.
\end{equation}

We will distinguish between \textit{initialization} and \textit{input}. For us, initialization refers to something that happens within the circuit $\mathcal{C}$, hence for instance, inserting a complete set of measurements in the $0.5$-th time step may be regarded as initializing the circuit in some state. Conversely, specifying an input requires that some operations have already been performed prior to the circuit. We will avoid that complication for now, and will treat it only in Sec.~\ref{ssec:boundaries} when dealing with boundary conditions.

Henceforth, whenever we say operation or circuit, we mean the Clifford kind, unless stated otherwise.

\begin{figure}[ht!]
\centering 
\includegraphics[width=0.6\linewidth]{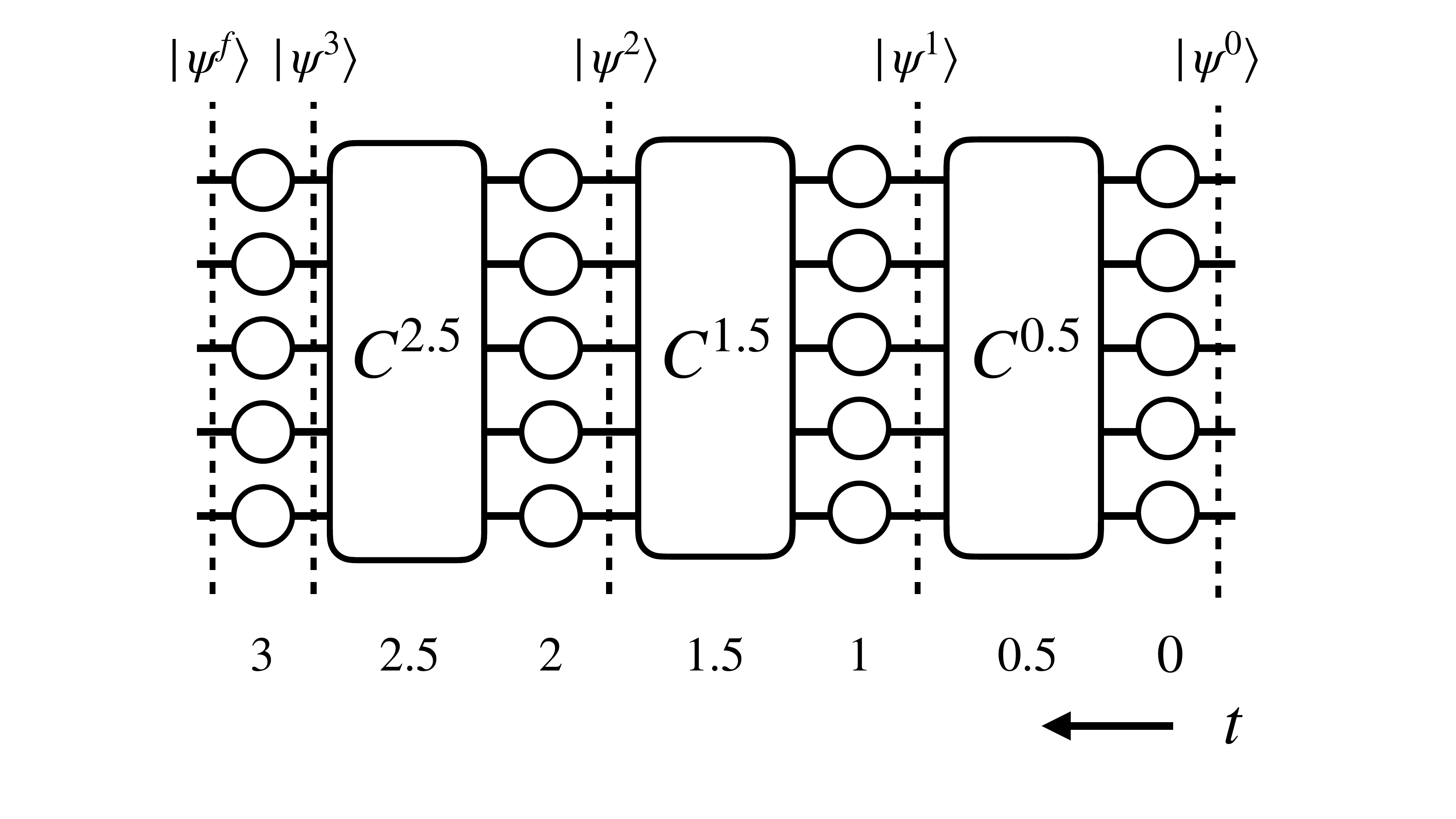} \caption{Schematic depicting our conventions for where we evaluate states and their ISGs. The state $|\psi^t \rangle$ describes the state at just before the $t$-th layer of spacetime locations, and ${\rm ISG}(t)$ describes the ISG of $|\psi^t \rangle$. The final output state of the circuit is labelled $| \psi^f \rangle$.}\label{fig:state_convention} 
\end{figure}
\textbf{States and error identification\footnote{While this is the standard way to identify errors in the literature, in App.~\ref{app_sec:spackle_domain_wall} we will explain an alternate error identification scheme that provides a physically useful picture for the gauged SSC.} -- } Finally, we establish notation for the states associated with the circuits, in order to be clear about how the circuit and errors act on them. Consider a slice of the circuit contains spacetime locations $(i, t)$, and a Clifford operation $C^{t>} = (U^{t>}, M^{t>})$. Say $M^{t>}$ contains $m$ measurements $q_j$. Then, we also have spacetime locations $(j, t+0.5)$ associated with each measurement. We say this slice of circuit acts on an input state $| \psi^t \rangle$, and produces two pieces of data: a measurement record, which we may denote $o(q_1), ..., o(q_m)$, and an output state $|\psi^{t+1} \rangle$. In a run of the circuit, we may encounter a \textit{data error} $p \in \mathcal{P}_n$, this is identified with the operator $\eta^{t}(p)$ on the integer-time spacetime locations, as well as \textit{measurement errors} $Z[\mathbf{v}] = \otimes_{j=1}^m Z_j^{v_j}$, where $\mathbf{v}$ is a vector with $m$ entries, where the $j$-th entry indicates a readout error in the $j$-th measurement, identified with $\eta^{t+0.5}(Z_j)$\footnote{Choosing to identify measurement errors with $Z$ or $X$ is an arbitrary choice. Choosing $Z$ is consistent with preparing an ancilla in the $| + \rangle$ state and using a controlled-Pauli to carry out the measurement. In that case, an ancilla $Z$ error before or after the controlled-Pauli has the same effect on the circuit.}. In that case, the output state is 
\begin{equation}\label{eq:error_identification_I}
    | \psi^{t+1} \rangle \propto \left( \prod_{j=1}^m ( I + (-1)^{o(q_j) + v_j} q_j ) \right) U^{t>} p | \psi^t \rangle,
\end{equation}
up to normalization. We take the outcomes to be $0, 1$, which are randomly distributed according to 
\begin{equation}\label{eq:error_identification_outcomes_I}
    {\rm Prob}(o(q_j) = +1)= \frac{1}{2} + (-1)^{\mathbf{v}_j} \frac{1}{2} \langle \psi^t | p (U^{t>})^{\dag}q_j U^{t>} p | \psi^t \rangle,
\end{equation}
where we see that $\mathbf{v}_j$ indeed acts as a readout error by flipping the sign of the obtained outcome from what one might expect. Eq.~\ref{eq:error_identification_I} allows us to identify \textit{spacetime errors} occurring on spacetime locations, with \textit{circuit errors}, which occur during a run on the circuit.

This convention is depicted in Fig.~\ref{fig:state_convention}. In this manner every set of spacetime locations and Clifford operations in the circuit is associated with an input and output state, $| \psi^0 \rangle, ..., | \psi^T \rangle$. The final layer of the circuit we take to be a final set of spacetime locations. This takes as an input $| \psi^T \rangle$ and outputs $| \psi^f \rangle = p | \psi^T \rangle$, where $p$ is some Pauli error occurring at the end of the circuit. In the absence of errors, we may denote the uncorrupted states $| \psi^t_{(0)} \rangle$ and outcomes $o^{(0)}$. Note that the uncorrupted data is still in general random. 

\subsection{Elementary circuit operators}\label{sec:ECOs}

Next, we will associate a set of elementary circuit operators, $G_{\mathcal{C}}$, associated with the circuit. We emphasize that at this stage we want to think of $G_{\mathcal{C}}$ as a set, not a group -- gauging the spacetime code is a basis-dependent endeavor. It is sufficient to consider two adjacent time-steps $t, t+1$, and the measurements at $t+0.5$. Recall we use $\eta^{t}(p)$ to place a Pauli $p$ on the $t$-th time-slice. We include three sets of elementary circuit operators in $G_{\mathcal{C}}$. One big caveat is that this only fully specifies elementary circuit operators in the `bulk', i.e. away from the $0$ and $T$-th time-steps. Consistent treatment of boundaries will take some care later.

\begin{figure}[ht!]
\centering 
\includegraphics[width=0.9\linewidth]{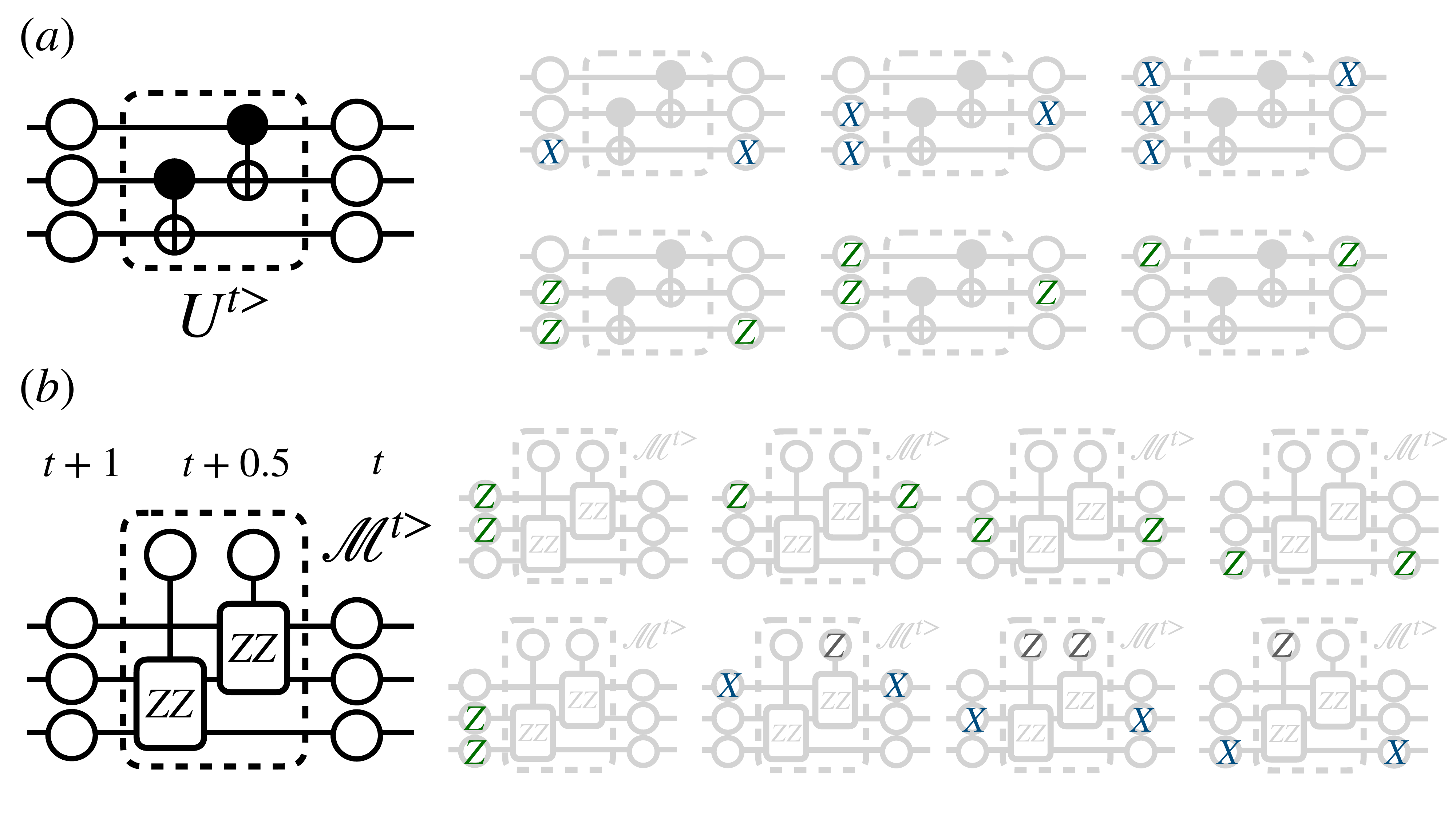} \caption{Schematic illustrating examples of elementary circuit operators associated with circuit elements. For integer time-steps, we will use blue for $X$-type operators and green for $Z$-type operators. Although the operators on half-integer time-steps are formally written as Paulis, they are in principle associated with purely classical variables. To distinguish them, we use gray. \textbf{(a)} Propagators associated with the example depicted in Fig.~\ref{fig:circuit_convention}c. Since the Clifford operation is purely unitary, there are no measurement slices or half-integer spacetime locations. \textbf{(b)} Propagators and measurement slices associated with the example depicted in Fig.~\ref{fig:circuit_convention}d. We omit the measurement dephasers, as they are unimportant at this level. }\label{fig:gauge_op_examples} 
\end{figure}

\begin{enumerate}
    \item \textbf{Measurement slices -- } For each measurement, $q_i^{t>} \in M^{t>}$, we put the operator 
    \begin{equation}\label{eq:meas_slice}
        (g_{\rm ms})_i^{t>} := \eta^{t+1}(q_i^{t>})
    \end{equation} 
    into $G_{\mathcal{C}}$.

    \item \textbf{Propagators -- } Consider the $n$-th qubit. Let $A \subseteq \{1, ..., m^{t>}\}$ denote the set of indices such that $[U^{t>} X_n (U^{t>})^{\dag}, q_i^{t>}] \neq 0$ for all $i \in A$. Then, we put 
    \begin{equation}\label{eq:X_prop}
        (g_{{\rm prop}}^X)_n^{t>} := \eta^{t+1}(U^{t>} X_n (U^{t>})^{\dag}) \eta^{t+0.5} \left( \prod_{i \in A} Z_i \right) \eta^{t}(X_n)
    \end{equation}
    into $G_{\rm st}$. Similarly, let $B \subseteq \{1, ..., m^{t>}\}$ denote the set of indices such that $[U^{t>} Z_n (U^{t>})^{\dag}, q_j^{t>}] \neq 0$ for all $j \in B$. Then, we put 
    \begin{equation}\label{eq:Z_prop}
        (g_{{\rm prop}}^Z)_n^{t>} := \eta^{t+1}(U^{t>} Z_n (U^{t>})^{\dag}) \eta^{t+0.5} \left( \prod_{j \in B} Z_j \right) \eta^{t}(Z_n).
    \end{equation}
    We refer to the operators in Eq.~\eqref{eq:X_prop}, \eqref{eq:Z_prop} as \textit{elementary $X/Z$ propagators}.

    \item \textbf{Measurement dephasers -- } This last set of elementary circuit operators exist primarily to make formal contact with the spacetime code in Sec.~\ref{sec:introducing_SSC}.  For each ancilla qubit in the half-integer time steps, we introduce 
    \begin{equation}
        (g_r)^{t>}_j := \eta^{t+0.5}( X_j).
    \end{equation}
    We will not make much use of these.
\end{enumerate}

It will also be useful to consider a larger set of \textit{propagators}, which are operators that can be generated by elementary propagators. One can regard the propagators as a set of linear maps, indexed by half-integer time-steps, on $\mathbb{Z}_2^{2n}$, writing,
\begin{equation}
    g_{\rm prop}^{t>}(X_n/Z_n) \mapsto (g_{\rm prop}^{X/Z})_n^{t>},
\end{equation}
and extending linearly to arbitrary phaseless Paulis, which will ease notation later. Note that $g_{\rm prop}^{t>}(p)$ for any $p \in \mathcal{P}_n$ has a unique decomposition into propagators (up to a possible minus sign). Later, we will frequently talk about the propagator of a general Pauli, and refer to the set of elementary propagators that generate it.

We illustrate with some examples of elementary circuit operators in Fig.~\ref{fig:gauge_op_examples}. An alternative graphical representation will come in useful later. This is illustrated in Fig.~\ref{fig:check_representation_examples}. We associate each gauge operator with a `check', which is a square vertex. Each spacetime location on an integer time-slice is associated with two vertices, one blue and one green, associated with $X$ and $Z$ operators respectively, whereas the spacetime locations on a half-integer time-slice are associated with a single gray vertex. This distinguishes them as classical degrees of freedom. We connect checks to spacetime locations exactly according to the support of the relevant elementary circuit operators. We will refer to the ECOs generated by operations in $t+0.5$ as the ECOs associated with time $t+0.5$.

\begin{figure}[ht!]
\centering 
\includegraphics[width=0.95\linewidth]{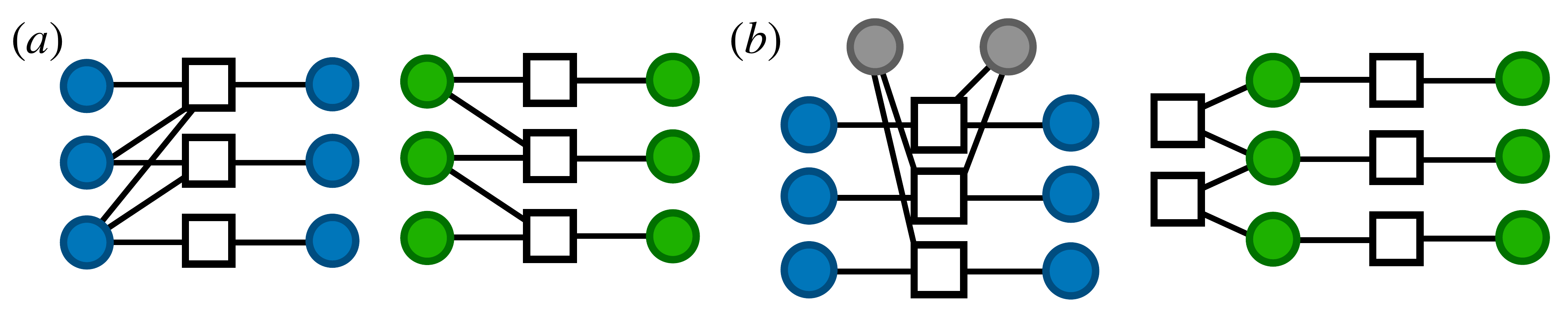} \caption{Schematic illustrating the graphical representation of the elementary circuit operators depicted in Fig.~\ref{fig:gauge_op_examples}. Spacetime locations are associated with circular vertices, and elementary circuit operators are associated with square vertices. Each spacetime location in an integer-valued time-slice is associated with two vertices, one blue and one green, associated with $X$ and $Z$ operators respectively. Each spacetime location in a half-integer time-slice is associated with a single gray vertex. We connect the square vertices to spacetime locations based on their support. We do not distinguish types of square vertices. Each row in the above should be considered part of the same graph. While not depicted here, elementary circuit operators may connect green and blue vertices, eg. when we have a $cZ$ gate, or an $XZ$ measurement. }\label{fig:check_representation_examples} 
\end{figure}

We record an obvious but extremely crucial lemma that we will later refer to again and again. 

\begin{lem}[Emergence of centralizer]\label{lem:centralizer_from_propagator}
    Let $p \in \mathcal{P}_n$ be a Pauli operator. Then, $\pi^{t>}(g^{t>}_{\rm prop}(p)) = I$ if and only if $U^{t>} p (U^{t>})^{\dag}$ commutes with all measurements in $M^{t>}$.  
\end{lem}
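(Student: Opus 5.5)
The plan is to reduce everything to the elementary propagators and use linearity over $\mathbb{Z}_2$. Here $\pi^{t>}$ denotes the projection onto the half-integer slice $t+0.5$, i.e.\ the measurement-ancilla locations $(i,t+0.5)$. Write the phaseless Pauli $p$ as a product of single-qubit generators $X_k$, $Z_k$. By definition, $g^{t>}_{\rm prop}(p)$ is then the corresponding product of the elementary propagators $(g^X_{\rm prop})^{t>}_k$ and $(g^Z_{\rm prop})^{t>}_k$, up to sign.

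\textbf{Step 1: read off each generator's ancilla support.} From Eqs.~\eqref{eq:X_prop} and \eqref{eq:Z_prop}, the half-integer part of an elementary propagator for a generator $P_k\in\{X_k,Z_k\}$ is $\prod_{i} Z_i^{c_i(P_k)}$. Here $c_i(P_k)\in\mathbb{Z}_2$ equals $1$ exactly when $U^{t>}P_k(U^{t>})^\dagger$ anticommutes with $q_i^{t>}$. So the $i$-th ancilla entry is the symplectic inner product $\langle U^{t>}P_k(U^{t>})^\dagger, q_i^{t>}\rangle$.

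\textbf{Step 2: extend linearly.} The $Z$-type Paulis on the ancilla slice multiply by adding exponents mod 2. Hence the $i$-th ancilla entry of $\pi^{t>}(g^{t>}_{\rm prop}(p))$ is $\sum_k c_i(P_k)$. Conjugation by a unitary preserves products, and the symplectic form is bilinear. Therefore this sum equals $\langle U^{t>}p(U^{t>})^\dagger, q_i^{t>}\rangle$, which is $0$ iff $U^{t>}p(U^{t>})^\dagger$ commutes with $q_i^{t>}$.

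\textbf{Step 3: conclude.} Now $\pi^{t>}(g^{t>}_{\rm prop}(p))=I$ iff every ancilla entry vanishes. By Step 2 this holds iff $U^{t>}p(U^{t>})^\dagger$ commutes with all $q_i^{t>}\in M^{t>}$. Both directions come at once.

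The only point requiring care is Step 2: the product of elementary propagators must be compared with the propagator of $p$ defined by linear extension. Since the map $g^{t>}_{\rm prop}$ was defined by linear extension, this is consistent by definition. Signs are irrelevant because we only test whether the support is trivial. I expect no real obstacle beyond stating this linearity cleanly.
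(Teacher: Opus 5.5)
Your proof is correct and is the direct verification the paper leaves implicit: the paper records this lemma as ``obvious'' and gives no separate proof. The point is that the half-integer support of each elementary propagator is, by definition, the vector of symplectic products with the $q_i^{t>}$, so $\mathbb{Z}_2$-linearity of $g^{t>}_{\rm prop}$ together with bilinearity of the symplectic form gives the claim for arbitrary $p$, exactly as you argue.
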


This makes the relationship to the elementary propagation and measurement operators of Ref.~\cite{Pesah_arXiv_2025_FT_transformations} clearer (see Sec.~\ref{sec:introducing_SSC}). In light of this lemma, we may think about the half-integer spacetime locations as a tool to help us keep track of which Paulis are able to propagate through the measurements in each layer. 

At this point, we could choose (without motivation) to contextualize the ECOs in a chain complex (which will continue to grow throughout the remainder of this work),
\begin{equation}\label{eq:gauge_complex_1}
    V_2 \xrightarrow{G_{\mathcal{C}}} V_1,
\end{equation} 
where the rows of $G_{\mathcal{C}}$ correspond to ECOs, written in the symplectic representation \cite{Aaronson_2004_PRA_stab_sim}\footnote{For unfamiliar readers, we provide some exposition to chain complexes and the symplectic representation of Paulis in App.~\ref{app_sec:some_useful_concepts}}.

\subsection{ISGs}\label{ssec:ISGs}

The concept of an Instantaneous Stabilizer Group (ISG) was introduced in Ref.~\cite{Hastings_Quantum_2021_floquet_codes} to describe Floquet codes. This will be a useful characterization of a circuit for our purposes. Thinking about circuits as consecutive sets of ISGs is a useful way to anchor our discussion later on. In this section, we prove several statements relating ISGs to elementary circuit operators, and in particular the group $\langle G_{\mathcal{C}} \rangle$ they generate. 

This is used later to define the subsystem spacetime code, but we do this now as the following statements will be convenient for proving things. One immediate piece of information that $G_{\mathcal{C}}$ gives us is that it fully characterizes the ISGs of a circuit, and hence the noiseless action of the circuit. The notion of an ISG is not heavily utilized in the spacetime code literature. Nevertheless, these ideas are largely contained in the exposition of Ref.~\cite{Blackwell_arXiv_2025_distance_floquet}; however, we must adapt them to the particular basis we use.

\begin{definition}[ISG]\label{def:ISG}
    Given a circuit, the ISG at time $t$, denoted ${\rm ISG}(t)$, is the set of stabilizers that the input state will have after the noiseless implementation of all the operations from $0.5, 1.5, ..., t-0.5$. In other words, ${\rm ISG(t)}$ stabilizes $| \psi^t_{(0)} \rangle$. 
\end{definition}

Recall that in the absence of noise $| \psi^f_{(0)} \rangle = | \psi^T_{(0)} \rangle$, so there is no need to specify ${\rm ISG}(f)$. The signs of the operators ${\rm ISG}(t)$ may depend on random measurement outcomes (in a deterministic way). However, the phaseless ISG, denoted $\overline{\rm ISG}(t)$ only depends on the circuit.

It will be useful to describe how the ISG is updated, given how we have chosen to describe a circuit. This follows directly from stabilizer update rules \cite{NC_2019}.

\begin{prop}[ISG update rules]\label{prop:ISG_update}
    Let $\overline{\rm ISG}(t)$ be the ISG at time $t$, and $U^{t>}, M^{t>}$ be the circuit elements at time $t+0.5$, and let $s_1, ..., s_k$ be a generating set of Paulis for $\overline{\rm ISG}(t)$. To obtain $\overline{\rm ISG}(t+1)$, we do the following:
    \begin{enumerate}
        \item Replace each $s_i$ by $s_i' = U^{t>} s_i (U^{t>})^{\dag}$. Initialize the set $S = \{s_1', ..., s_k'\}$.
        \item For each $m_j^{t>} \in M^{t>}$, do the following: 
        \begin{enumerate}
            \item If $m_j^{t>}$ commutes with every $s_i'$: 
            \begin{enumerate}
                \item If $m_j^{t>} \notin S$, append $m_j^{t>}$ to $S$.
                \item If $m_j^{t>} \in S$, do nothing. Note $o(m_j^{t>})$ will be deterministic in this case.
            \end{enumerate}
            \item Suppose $m_j^{t>}$ anticommutes with some set of $K = \{s_i'\}$. Pick one $s_{i*}' \in K$, and replace every other $s_i' \in K$ in $S$ with $s_i' s_{i*}'$. Finally, replace $s_{i*}'$ in $S$ with $m_j^{t>}$. 
        \end{enumerate}
    \end{enumerate}
\end{prop}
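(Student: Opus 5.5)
The plan is to reduce the statement to the standard Gottesman--Knill stabilizer update rules, applied one operation at a time, while tracking generating sets rather than whole groups. Phases are irrelevant throughout because we only claim a statement about $\overline{\rm ISG}$. The state $|\psi^t_{(0)}\rangle$ is stabilized by $\overline{\rm ISG}(t)$ up to signs, so we may work with the stabilizer group $\mathcal{S}=\langle s_1,\dots,s_k\rangle$. I will also allow $\mathcal{S}$ to be non-maximal: the state may be mixed, or more precisely, we track only the known stabilizers of an ensemble over random outcomes.

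\textbf{Step 1 (unitary).} For any stabilizer $s$ of $|\psi\rangle$, the conjugate $U s U^\dagger$ stabilizes $U|\psi\rangle$, because $UsU^\dagger U|\psi\rangle = U s|\psi\rangle = U|\psi\rangle$. Conjugation is a group automorphism, so the images $s_i'$ generate the updated group. This justifies step 1 of Prop.~\ref{prop:ISG_update}.

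\textbf{Step 2 (measurements).} The measurements in $M^{t>}$ commute, so the projector in Eq.~\eqref{eq:error_identification_I} factorizes and we may process them sequentially in any order. It therefore suffices to prove the update for a single Pauli measurement $m$ acting on a state with stabilizer group $\langle S\rangle$. There are two cases.

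\emph{Case (a): $m$ commutes with every element of $S$.} Each $s\in S$ commutes with the projector $(I\pm m)/2$, so $s$ still stabilizes the post-measurement state. The post-measurement state is also stabilized by $\pm m$. If $m\in\langle S\rangle$ up to sign, the outcome is deterministic by Eq.~\eqref{eq:error_identification_outcomes_I}, since the expectation value is $\pm1$, and nothing new is added. Otherwise $m$ is appended.

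One subtlety needs care here. The proposition says "$m\notin S$", but the correct condition is membership in the \emph{group} $\langle S\rangle$. I will interpret the statement that way; alternatively, I will note that appending a dependent generator leaves the generated group unchanged, so the rule remains correct as a statement about $\langle S\rangle$ either way.

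\emph{Case (b): $m$ anticommutes with the set $K$.} Fix $s_{i^*}'\in K$. For every other $s_i'\in K$, the product $s_i's_{i^*}'$ commutes with $m$. The replacement $s_i'\mapsto s_i's_{i^*}'$ is an invertible change of generators, so the generated group is unchanged. After this change, $s_{i^*}'$ is the unique generator anticommuting with $m$, and the subgroup generated by the remaining generators is exactly the commutant of $m$ within $\langle S\rangle$. Those remaining generators survive the projection, as in case (a). The element $s_{i^*}'$ does not survive: $(I\pm m)s_{i^*}'=s_{i^*}'(I\mp m)$, so no element of the coset $s_{i^*}'\cdot(\text{commutant})$ stabilizes the post-measurement state. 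Finally, $\pm m$ is a new stabilizer. Hence the new group is generated by the modified $S$ with $s_{i^*}'$ replaced by $m$.

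\textbf{Obstacle.} The main step needing care is case (b): showing that the stated set generates the \emph{entire} new stabilizer group, not just a subgroup of it. I will handle this by a counting argument. When the tracked group is maximal, the generator count is preserved, which gives equality. In general I will argue directly: any stabilizer of the post-measurement state that comes from $\langle S,m\rangle$ must commute with $m$. Such an element therefore lies in the commutant times $\langle m\rangle$, which is exactly the group generated by the output set.

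Iterating this over the measurements $m_j^{t>}$, starting from the output of Step 1, yields $\overline{\rm ISG}(t+1)$.
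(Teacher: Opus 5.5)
Your proposal is correct and follows essentially the paper's route: the paper gives no argument beyond invoking the standard stabilizer update rules of Nielsen--Chuang, which is what you reconstruct, and your reading of ``$m_j^{t>}\notin S$'' as non-membership in the group $\langle S\rangle$ is the right one, since that is what makes the ``deterministic outcome'' remark true. The one soft spot is completeness when $\langle S\rangle$ is not maximal: you only rule out extra stabilizers lying inside $\langle S,m\rangle$, but this closes in one line by tracking $\rho\propto\Pi_{\langle S\rangle}$ and writing $\langle S\rangle=\langle C,s_{i^*}'\rangle$ with $C$ the commutant of $m$, which gives $\tfrac{I\pm m}{2}\,\Pi_{\langle S\rangle}\,\tfrac{I\pm m}{2}=\tfrac12\,\Pi_{C}\,\tfrac{I\pm m}{2}$; by linear independence of Paulis, its Pauli stabilizer group is exactly $\langle C,\pm m\rangle$, and case (a) works the same way with $\Pi_{\langle S\rangle}\tfrac{I\pm m}{2}$.
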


To see how $G_{\mathcal{C}}$ yields this information, we note that the content of Prop.~\ref{prop:ISG_update} can be easily restated in terms of propagators.

\begin{lem}[ISG update rules from propagators]\label{lem:ISG_update_prop}
    Let $\overline{\rm ISG}(t)$ be the ISG at time $t$, and $U^{t>}, M^{t>}$ be the circuit elements at time $t+0.5$, and let $s_1, ..., s_k$ be a generating set of Paulis for $\overline{\rm ISG}(t)$. Then a generating set $\eta^{t+1}(v_1), ..., \eta^{t+1}(v_q)$ of $\overline{\rm ISG}(t+1)$ can be written as products of $\eta^t(s_1), ..., \eta^t(s_k)$ together with ECOs associated with time $t+0.5$.
\end{lem}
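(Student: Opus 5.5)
We will prove the statement by running through the update procedure of Prop.~\ref{prop:ISG_update} and showing that every generator it outputs is realized as a product of operators in the allowed family. To state this precisely, let $\mathcal{F}$ be the set of products of $\eta^t(s_1),\dots,\eta^t(s_k)$ with ECOs associated with time $t+0.5$. We will show that each generator produced by Prop.~\ref{prop:ISG_update} equals $\eta^{t+1}(v)$ for some $v$, with $\eta^{t+1}(v) \in \mathcal{F}$.

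The single observation that drives the argument concerns the propagator $g^{t>}_{\rm prop}(s)$ of a Pauli $s$. It has three pieces:
\begin{itemize}
\item its $t$-slice is $\eta^t(s)$;
\item its $(t+1)$-slice is $\eta^{t+1}(U^{t>} s (U^{t>})^\dagger)$;
\item its half-integer slice is $\eta^{t+0.5}(\prod_{i} Z_i)$, where $i$ runs over the measurements anticommuting with $U^{t>} s (U^{t>})^\dagger$.
\end{itemize}
This holds by linearity of the map $g^{t>}_{\rm prop}$ over elementary propagators. The overlap sets add mod $2$ under products, so anticommutation of the product with $q_i$ is exactly the parity of anticommutations of its factors.

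Multiplying $\eta^t(s)$ into $g^{t>}_{\rm prop}(s)$ cancels the $t$-slice and leaves
\[
  \eta^{t+1}(s')\,\eta^{t+0.5}\Big(\textstyle\prod_{i\in K(s)} Z_i\Big), \qquad s' = U^{t>} s (U^{t>})^\dagger .
\]
By Lem.~\ref{lem:centralizer_from_propagator}, the half-integer part is trivial exactly when $s'$ commutes with all of $M^{t>}$. The measurement slices $\eta^{t+1}(q_j^{t>})$ are ECOs and directly supply the measured operators appended in step 2(a)(i) or substituted in step 2(b).

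The steps, in order, are as follows. First, consider the subgroup $\{s \in \overline{\rm ISG}(t) : U^{t>} s (U^{t>})^\dagger \text{ commutes with all of } M^{t>}\}$. Following the elimination in step 2(b) of Prop.~\ref{prop:ISG_update}, each surviving stabilizer of $\overline{\rm ISG}(t+1)$ is either a product $s_i' s_{i*}'$ or an original $s_i'$ that commutes with everything measured. In both cases it is the conjugate of an element $s$ of this subgroup. Writing $s = \prod_{i\in S} s_i$, the product $\prod_{i\in S}\eta^t(s_i)\, g^{t>}_{\rm prop}(s_i)$ has trivial $t$-slice. Its half-integer slice is the mod-$2$ sum of the $K(s_i)$, which equals $K(s)$ and is therefore empty. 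So the product is purely $\eta^{t+1}(s')$. Second, every measurement $q_j^{t>}$ that enters $S$ is the measurement-slice ECO itself. Third, since the procedure's output $S$ is a generating set, its images $\eta^{t+1}(v_1),\dots,\eta^{t+1}(v_q)$ generate $\eta^{t+1}(\overline{\rm ISG}(t+1))$, and each lies in $\mathcal{F}$.

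The main obstacle is bookkeeping rather than ideas. One point is that the elimination in step 2(b) must be tracked across several sequential measurements, with each later measurement seeing the already-modified set. This is handled by the invariant that after processing every measurement, each non-measurement element of $S$ commutes with all processed $q_j$; commutativity of $M^{t>}$ keeps the invariant intact. The other point is phases, which do not matter because we work with phaseless ISGs and the propagator decomposition is defined up to sign.
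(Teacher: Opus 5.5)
Your proof is correct, but it organizes the argument differently from the paper. You work with the particular generating set output by the procedure of Prop.~\ref{prop:ISG_update} and use the fact that it is split. Every generator is either a measured operator, which is literally a measurement slice, or a conjugate $s'=U^{t>}s(U^{t>})^{\dag}$ with $s\in\overline{\rm ISG}(t)$ and $s'$ commuting with all of $M^{t>}$. For the latter, $\eta^t(s)\,g^{t>}_{\rm prop}(s)=\eta^{t+1}(s')$ by Lem.~\ref{lem:centralizer_from_propagator}.

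The paper instead takes an arbitrary $p\in\overline{\rm ISG}(t+1)$ and argues as follows:
\begin{itemize}
\item $p$ commutes with every measurement in $M^{t>}$, since these lie in the same abelian ISG.
\item Hence $\eta^{t+1}(p)=O_1\,\eta^t((U^{t>})^{\dag}pU^{t>})$, with $O_1$ a propagator that has no half-integer support.
\item It then writes $p$ as conjugated old generators times measured operators.
\item Each measured $m$ is pulled back to slice $t$ using its own half-integer-free propagator together with the slice $\eta^{t+1}(m)$; this is the factor $O_2$.
\end{itemize}

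The trade-off is this. The paper never looks inside the sequential elimination: it needs only the coarse fact that $\overline{\rm ISG}(t+1)$ is generated by conjugated old stabilizers and measured operators. It also gets the representation for every element, not just one generating set. Commutativity of $M^{t>}$ enters there when pulling measurements back.

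Your route avoids pulling measurements back, but you pay for it by tracking the sequential updates. There, commutativity of $M^{t>}$ ensures that measured operators never land in an anticommuting set $K$. So the pivot $s'_{i*}$ is always a conjugated old element that already commutes with the earlier measurements; that, rather than commutativity alone, is what preserves your invariant.

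Two small points. Your description of surviving elements as ``either $s_i's_{i*}'$ or $s_i'$'' is too narrow when several measurements act, though your invariant covers the general case. You also reuse $S$ for both the index set and the procedure's set.
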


\begin{cor}[ECOs generate ISGs]\label{cor:gauge_op_ISG}
     Consider a circuit $\mathcal{C}$ with an input state described by ${\rm ISG}(0)$. Then $\eta^t({\rm ISG}(t)) \subseteq \langle G_{\mathcal{C}} \cup \eta^0({\rm ISG}(0)) \rangle$. Furthermore, elements of ${\rm ISG}(t)$ can be generated `causally', i.e. without using any elementary circuit operators associated with any time $t' -0.5 > t -0.5$.
\end{cor}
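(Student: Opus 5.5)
The plan is to prove the corollary by induction on $t$, using Lemma~\ref{lem:ISG_update_prop} as the inductive step. We prove the slightly stronger claim that for every $t$ and every $s \in \overline{\rm ISG}(t)$, the operator $\eta^t(s)$ lies in the group generated by $\eta^0({\rm ISG}(0))$ together with the ECOs associated with times $0.5, 1.5, \dots, t-0.5$. This statement is exactly the `causal' part of the corollary, and the inclusion $\eta^t({\rm ISG}(t)) \subseteq \langle G_{\mathcal{C}} \cup \eta^0({\rm ISG}(0)) \rangle$ follows from it immediately. We work with phaseless Paulis throughout, since signs depend on measurement outcomes but membership in the (phaseless) generated group does not. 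Alternatively, one notes that overall phases of $\pm 1$ can be absorbed once we regard these as subgroups of the phaseless Pauli group.

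The base case $t=0$ is trivial, since $\eta^0(s) \in \eta^0({\rm ISG}(0))$. For the inductive step, assume the claim at time $t$, and fix a generating set $s_1, \dots, s_k$ of $\overline{\rm ISG}(t)$. Lemma~\ref{lem:ISG_update_prop} gives a generating set $\eta^{t+1}(v_1), \dots, \eta^{t+1}(v_q)$ of $\eta^{t+1}(\overline{\rm ISG}(t+1))$. Each $\eta^{t+1}(v_j)$ is a product of some of the $\eta^t(s_i)$ and ECOs associated with time $t+0.5$. By the inductive hypothesis, each $\eta^t(s_i)$ is a product of elements of $\eta^0({\rm ISG}(0))$ and ECOs at times at most $t-0.5$. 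Substituting these expressions, every $\eta^{t+1}(v_j)$ lies in the group generated by $\eta^0({\rm ISG}(0))$ and ECOs at times at most $t+0.5$. Since $\eta^{t+1}$ is a group homomorphism from $\mathcal{P}_n$ modulo phases, arbitrary elements of $\eta^{t+1}(\overline{\rm ISG}(t+1))$ are products of the $\eta^{t+1}(v_j)$. This closes the induction.

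The only real work sits inside Lemma~\ref{lem:ISG_update_prop}, which we may assume. To double-check the inductive step, one can recall why that lemma holds by comparing with the rules of Prop.~\ref{prop:ISG_update}. For a generator $s$ with $U s U^\dagger$ commuting with all of $M^{t>}$, we have $\eta^t(s) g_{\rm prop}^{t>}(s) = \eta^{t+1}(U s U^\dagger)$, because by Lemma~\ref{lem:centralizer_from_propagator} the half-integer part vanishes. Newly measured operators appear directly as measurement slices. In the anticommuting case, the product $s_i' s_{i*}'$ commutes with $m_j$, so the half-integer parts of its propagator cancel.

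The main subtlety is that Prop.~\ref{prop:ISG_update} processes the measurements sequentially. The half-integer $Z$ parts of a propagator record commutation with all measurements of $M^{t>}$, not just those already processed. One should check that each final generator in $S$ commutes with every measurement in $M^{t>}$, which holds because the measurements commute with each other. Its propagator times $\eta^t(\cdot)$ therefore has trivial support on half-integer locations. Given the lemma, however, this is already handled, and the corollary is a straightforward induction.
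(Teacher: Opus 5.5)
Your proposal is correct and follows the paper's intended route: the corollary is an induction on $t$ with Lemma~\ref{lem:ISG_update_prop} as the inductive step, carried out phaselessly (the paper's proof of that lemma already assumes that each $\eta^t(s_i)$ is a product of ECOs). Your check of the sequential-measurement subtlety is fine, though the paper handles it more directly by observing that every element of $\overline{\rm ISG}(t+1)$ commutes with all of $\mathcal{M}^{t>}$, since ${\rm ISG}(t+1)$ is abelian and contains those measurements.
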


Hence, all ISGs are obtainable from $G_{\mathcal{C}}$ as well as a specification of initial conditions. From the above, we see that each element of each ISG shows up in in $\langle G_{\mathcal{C}} \rangle$ as an operator on spacetime locations supported on single integer-valued time-slices. We note that the converse is not true -- there are operators supported on single integer-valued time-slices that are not in the ISG (although in that case, they would be, up to a unitary, in the ISG of the next time-slice). We only need a limited version of this statement. 

\begin{lem}\label{lem:alt_ISG_prop}
    Consider a circuit $\mathcal{C}$, and suppose ${\rm ISG}(0) = \langle I \rangle$. Let $p \in \langle G_{\mathcal{C}} \rangle$ be an operator that (i) is supported only on time-slice $t_p$, (ii) can be generated by elementary circuit operators associated with time-steps $t < t_p$ only. Then, $\pi^t(p) \in \overline{\rm ISG}(t)$. 
\end{lem}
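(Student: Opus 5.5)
We argue by induction on $t_p$, peeling off the last layer of elementary circuit operators. The goal is to reduce the claim to the ISG update rules of Prop.~\ref{prop:ISG_update}.

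\textbf{Setup.} Write $p = \prod_{g \in S} g$ for a subset $S$ of ECOs, all associated with half-integer times $t'+0.5 < t_p$. Split $S = S_{\rm last} \cup S_{\rm earlier}$, where $S_{\rm last}$ consists of the ECOs associated with time $t_p - 0.5$. Every ECO associated with $t_p-0.5$ is supported on slices $t_p-1$, $t_p-0.5$ and $t_p$. Every ECO associated with earlier times is supported on slices $\le t_p - 1$. Hence $\pi^{t_p}(p) = \pi^{t_p}(\prod_{S_{\rm last}} g)$, and since $p$ is supported only on slice $t_p$, three things follow:
\begin{itemize}
\item[(a)] the $t_p-0.5$ component of $\prod_{S_{\rm last}} g$ is trivial;
\item[(b)] $\pi^{t_p-1}(\prod_{S_{\rm last}} g) = \pi^{t_p-1}(\prod_{S_{\rm earlier}} g)$;
\item[(c)] all slices below $t_p-1$ of $\prod_{S_{\rm earlier}} g$ cancel.
\end{itemize}
The measurement dephasers act only on half-integer slices with $X$. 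Since elementary propagators and measurement slices contribute only $Z$'s there, dephasers must appear in pairs and cancel, so we may drop them.

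\textbf{Inductive step.} Let $p' := \prod_{S_{\rm earlier}} g$. By (c), $p'$ is supported only on slice $t_p-1$ and is generated by ECOs from times $< t_p-1$. By induction, $s := \pi^{t_p-1}(p') \in \overline{\rm ISG}(t_p-1)$. The base case is $t_p - 1 = 0$ with $S_{\rm earlier}=\emptyset$, giving $s = I \in \overline{\rm ISG}(0)$.

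Now decompose $S_{\rm last}$ into measurement slices and elementary propagators. The elementary propagators multiply to $g^{t_p-0.5}_{\rm prop}(s)$, because they are linear in the slice-$(t_p-1)$ Pauli and by (b) that Pauli equals $s$. Measurement slices have no support on slice $t_p-1$ or $t_p - 0.5$. By (a), the half-integer part of $g^{t_p-0.5}_{\rm prop}(s)$ is trivial, so Lem.~\ref{lem:centralizer_from_propagator} shows that $U s U^\dagger$ commutes with every measurement in $M^{t_p-0.5}$. Therefore
\[
\pi^{t_p}(p) = U s U^\dagger \cdot \prod_{j \in J} q_j
\]
for some subset $J$ of the measurements at $t_p - 0.5$.

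\textbf{Closing step.} The measured $q_j$ lie in $\overline{\rm ISG}(t_p)$ by the update rules. It remains to show that $U s U^\dagger \in \overline{\rm ISG}(t_p)$ whenever $s \in \overline{\rm ISG}(t_p-1)$ and $UsU^\dagger$ commutes with all measurements. In Prop.~\ref{prop:ISG_update}, an element that commutes with all measurements survives: if some measurement anticommutes with generators in $K$, the product of the elements in $K$ that make up $UsU^\dagger$ has even overlap with $K$. It can therefore be rewritten in terms of the products $s_i' s_{i^*}'$, which remain in $S$. I would state this carefully as the fact that the commutant of each measurement within the current stabilizer group is preserved by the update.

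I expect this closing step to be the main obstacle: it requires a clean statement that the post-measurement ISG contains the centralizer of the measurements within the conjugated old ISG. I would prove it one measurement at a time, tracking how the generating set is modified.
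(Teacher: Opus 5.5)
Your proposal is correct and follows the paper's proof essentially step for step: induction on the time slice, splitting off the ECOs of the last layer, applying the hypothesis to the earlier product, using Lem.~\ref{lem:centralizer_from_propagator} to get commutation with all measurements, and concluding via Prop.~\ref{prop:ISG_update} together with the fact that the measured operators lie in the new ISG. The closing step you flag as the main obstacle is the step the paper handles simply by citing Prop.~\ref{prop:ISG_update}, and your even-overlap argument rewriting the product in terms of the $s_i' s_{i^*}'$ suffices to justify it (your handling of the measurement dephasers is a small detail the paper omits).
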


We end with a simple corollary.

\begin{cor}
    Consider a circuit $\mathcal{C}$. Let $p \in \langle G_{\mathcal{C}} \cup {\rm ISG}(0)\rangle $ be an operator supported only on the final time-slice $T$. Then, $p \in \overline{\rm ISG}(T)$.  
\end{cor}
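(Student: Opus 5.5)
The plan is to deduce the corollary from Lemma~\ref{lem:alt_ISG_prop}, after reducing to the case ${\rm ISG}(0)=\langle I\rangle$. The point is that, for the final slice $t_p=T$, the causality hypothesis (ii) of that lemma is automatic. Every ECO is associated with some half-integer time $t'+0.5$ with $t'+0.5<T$, so every ECO is associated with a time before the final slice.

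First, I would handle the initial condition. I would prepend to $\mathcal{C}$ a single Clifford operation $C^{0.5}=(I,\mathcal{M}_0)$, where $\mathcal{M}_0$ is an independent commuting generating set of $\overline{\rm ISG}(0)$. The resulting circuit $\mathcal{C}'$ starts from ${\rm ISG}(0)=\langle I\rangle$, and by Prop.~\ref{prop:ISG_update} its ISG after the new layer is exactly $\overline{\rm ISG}(0)$. The remaining ISGs coincide with those of $\mathcal{C}$, shifted by one time step.

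Next, I would show that $\langle G_{\mathcal{C}}\cup\eta^0({\rm ISG}(0))\rangle\subseteq\langle G_{\mathcal{C}'}\rangle$ after this shift. The measurement slices of $C^{0.5}$ give $\eta^1(s)$ for each generator $s$, which are exactly the relabelled $\eta^0({\rm ISG}(0))$. The original ECOs of $\mathcal{C}$ appear unchanged. One check is needed here: the new half-integer slice contributes no extra support to the propagators of $C^{0.5}$. By Lemma~\ref{lem:centralizer_from_propagator}, the propagator of a Pauli $q$ picks up half-integer support exactly when $q$ anticommutes with some element of $\mathcal{M}_0$. This does not matter, because we only need the original operators to be present in $\langle G_{\mathcal{C}'}\rangle$, and they are.

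Then I would take $p$ supported only on slice $T$ of $\mathcal{C}$, which is slice $T+1$ of $\mathcal{C}'$. As an element of $\langle G_{\mathcal{C}'}\rangle$, $p$ is a product of ECOs, each associated with a half-integer time below $T+1$. It therefore satisfies hypotheses (i) and (ii) of Lemma~\ref{lem:alt_ISG_prop}, which gives $\pi^T(p)\in\overline{\rm ISG}(T)$. This is what the corollary asserts, identifying $p$ with its single slice.

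The main obstacle is the boundary convention. The ECOs near slices $0$ and $T$ are only specified in the bulk, so I need to confirm that Lemma~\ref{lem:alt_ISG_prop} applies to $t_p=T$ with the ECOs as defined. I also need the prepended layer to be a legitimate operation, with independent commuting measurements. Choosing $\mathcal{M}_0$ as an independent generating set handles the second point. If the reduction seems awkward, there is a fallback. I would rerun the argument of Lemma~\ref{lem:alt_ISG_prop} directly, using Lemma~\ref{lem:ISG_update_prop} and Cor.~\ref{cor:gauge_op_ISG}, with the seed set enlarged from $\langle I\rangle$ to $\eta^0({\rm ISG}(0))$.
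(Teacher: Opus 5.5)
Your argument is correct and matches the paper's intended one: the corollary is Lemma~\ref{lem:alt_ISG_prop} at $t_p=T$, where hypothesis (ii) is vacuous because every ECO is associated with a time $t'+0.5<T$. The paper leaves implicit that Lemma~\ref{lem:alt_ISG_prop} assumes ${\rm ISG}(0)=\langle I\rangle$, and your prepended layer $C^{0.5}=(I,\mathcal{M}_0)$ (or your fallback of seeding the induction with $\eta^0({\rm ISG}(0))$) handles this cleanly; your boundary worry is unnecessary, since here the ECOs of Sec.~\ref{sec:ECOs} are used as defined at every layer $t+0.5$, $t=0,\dots,T-1$, so the lemma applies at the final slice unchanged.
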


\subsection{Elements of fault tolerance}\label{ssec:elts_of_FT}

The goal of this section is to examine two crucial elements of fault tolerance. The first element is spacetime faults, which form the set of errors we want to protect ourselves against. Of course not all spacetime faults are bad, and not all spacetime faults are equivalent. One major goal of this section is to understand the equivalence between spacetime faults. The definition, given in Def.~\ref{def:fault_equiv}, is straightforward, and reduces to a similar definition in Ref.~\cite{Blackwell_arXiv_2025_distance_floquet} when pure measurement errors are removed from the picture and infinitely long circuits are considered. This will set the stage for when we explore equivalence in the gauged version of the spacetime code. 


The other element is detectors \cite{Gidney_2021_quantum_STIM, McEwen_quantum_2023_time_dynamics, Derks_Quantum_2025_designing_detectors}, which form a cornerstone of modern fault tolerance. The main idea is that in fault tolerance, we must treat not just the qubits but also the measurements as unreliable. Decoding algorithms are then not built directly on measurement outcomes, but rather a new set of variables called detectors, which are invariant relations between measurement outcomes. In the presence of measurement error, decoders are built on detectors rather than measurements \cite{Derks_Quantum_2025_designing_detectors}. Pre-empting our formulation of the gauge theory in Sec.~\ref{sec:gauging_SSC}, we point out that in fault tolerance, detectors, not measurements, are the actual observables. When moving to a gauge theory, we will identify detectors with Wilson loops, which are the actual gauge-invariant observables of a gauge theory.

To illustrate the concepts we want to capture, we start with an example. A simple example of a detector is depicted in Fig.~\ref{fig:detector_example}. Measurements may not yield syndrome data directly, as the distribution of outcomes of any single measurement may depend on the input state to the circuit. Instead, the circuits used in fault-tolerance work with \textit{detectors}, which are sets of measurements whose product yield a deterministic outcome. In this circuit we start with a $ZZ$ measurement across qubits $1, 2$, then perform a cNOT gate from qubit $2$ to $1$, and finally measure $Z$ on qubit $1$. The two measurements, which we will refer to as $q^{t+1>}, q^{t>}$ yield the same outcome in the noiseless case, in other words, the product of their outcomes, 
\begin{equation}
   (-1)^{o(q^{t+1>}) + o(q^{t>})} = 1,
\end{equation} 
hence they form a detector. We define a detector more generally in Def.~\ref{def:detector}.  Figs.~\ref{fig:detector_example}a-c show some ways in which this detector may be violated by spacetime \textit{faults}, which comprise Pauli operators placed at various spacetime locations. In Fig.~\ref{fig:detector_example}a, we have a data error $\eta^{t+1.5}(X_2)$, which after the cNOT gate will flip the outcome of the second measurement. In Fig.~\ref{fig:detector_example}b, we have a measurement error, which flips the outcome of the first measurement without any effect on the circuit, which is identified with $\eta^{t+0.5}(Z_a)$. In Fig.~\ref{fig:detector_example}c, we have a pair of data errors that would cause one to readout $q^{t>}$ wrongly. 

All three errors \textit{trigger}, or \textit{violate}, the same detector, i.e. they cause $(-1)^{o(q^{t+1>}) + o(q^{t>})} = -1$, and so cannot be distinguished by the classical measurement outcomes of the circuit. However, the error of Fig.~\ref{fig:detector_example}a should not be regarded as equivalent to the errors of Fig.~\ref{fig:detector_example}b, c, as they lead to different Pauli errors on the output state of the circuit. Conversely, the errors Fig.~\ref{fig:detector_example}b, c are truly equivalent, in the sense that they flip exactly the same measurement outcomes and they lead to exactly the same output state. The distinction is that errors that trigger different sets of detectors may be distinguished by the measurement outcomes of the present circuit alone, whereas errors that are inequivalent could in some cases still be distinguished by further processing on the output state.

The main goal of this section is formalize these notions, and show how they emerge from $G_{\mathcal{C}}$. 

\begin{figure}[ht!]
\centering 
\includegraphics[width=0.95\linewidth]{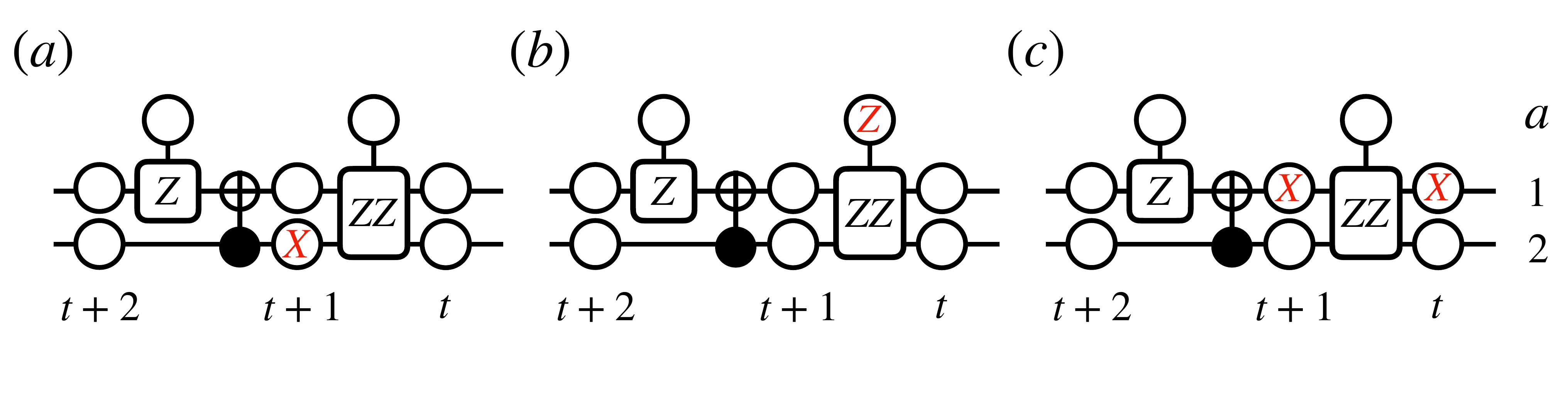} \caption{A simple example of a detector formed by a $ZZ$ measurement followed by a cNOT gate followed by a single qubit $Z$ measurement, and several faults which violate the detector. \textbf{(a)} An $X$ error between the two measurements. \textbf{(b)} A readout error in the first measurement. \textbf{(c)} A pair of $X$ errors just before and just after the first measurement. All three examples trigger the detector. However, while the errors in (b) and (c) are equivalent with each other, they are not equivalent to the error in (a).}\label{fig:detector_example} 
\end{figure}

\subsubsection{Fault equivalence and effect}

We start by characterizing the effects of faults in a circuit. A \textit{fault} $F$ is a tuple $F = (e, f)$, where $e \in \overline{\mathcal{P}}_{n(T+1)}$ is a Pauli supported on spacetime locations, and $f \subseteq \cup_{t=0}^{T-1} M^{t>}$ is a set of measurements which are flipped. This second object represents classical measurement errors, which may occur at readout or otherwise, which have no effect on the state of the circuit. They may also be represented by Pauli $Z$'s in the half-integer locations. Put together, we may alternatively identify a fault as an element $F \in V_1$, see Eq.~\ref{eq:gauge_complex_1}.  In other words, we can regard $F$ as a Pauli supported on both integer and half-integer spacetime locations, as long as we restrict the support on half-integer locations to be $Z$-type. 

We will concern ourselves with the effect of a circuit. We follow the lead of Ref.~\cite{Delfosse_arXiv_2023_spacetime_code_clifford} in formalizing this notion, with two small differences: (1) We include spacetime locations for measurement errors in half-integer time-slices, (2) In their work, they define the effect only via the Pauli error on the output, obtained by propagating the fault to the end of the circuit. However, refining this notion a little will allow us to prove a powerful correspondence to ECOs. 

The key idea is that executing a circuit produces two kinds of data: (1) an outcome bit-string $o \in \mathbb{Z}_2^{\sum_{t} m^{t>}}$, with components corresponding to the outcomes of measurements in the circuit, (2) an output state $\rho_o$ of the qubits after running the circuit. A fault modifies each of these outputs in a systematic way.

Regarding (1), the outcome bit-string of a circuit is generally non-deterministic even in the noiseless case. We may describe it instead by a probability distribution $P_{\mathcal{C}, \rho}(o)$ as the distribution of the outcome bit-string $o$, for a circuit $\mathcal{C}$, given the input state $\rho$.

\begin{definition}[Bare effect]\label{def:bare_effect}
    Consider a circuit $\mathcal{C}$ subject to a fault $f$, and suppose the outcome distribution is modified to be $P_{\mathcal{C}, \rho}^{(f)}(o) = P_{\mathcal{C}, \rho}(o + o_f)$, and the output state is modified as $\rho_f = p_f \rho_o p_f^{\dag}$ for some Pauli $p_f$. We then refer to $(o_f, p_f)$ as a bare effect of the fault. 
\end{definition}

Suppose $\rho_o$ is stabilized by some Pauli $p'$. Then, $(o_f, p_f p')$ will also be a bare effect of $f$ -- in other words, the bare effect is non-unique. However, before dealing with that, we first point out that one can obtain a bare effect using $G_{\mathcal{C}}$.

\begin{lem}[Propagating to effect]\label{lem:prop_to_effect}
    Let $F$ be a fault, and let us denote $F^t$ the support of $F$ on time-slice $t$, so that $F = \eta^T(F^T) \prod_{t=0}^{T-1} \eta^{t+0.5}(F^{t+0.5}) \eta^t(F^t)$. The bare effect of a fault can be obtained as 
    \begin{equation*}
        F^{\rm eff} = F \times \prod_{t = 0}^{T-1} g_{\rm prop}^{t+0.5}( F^t U^{t-0.5}...F^1 U^{0.5} F^0 (U^{0 \rightarrow t})^{\dag} ),
    \end{equation*}
    where we recall the definition of $U^{0 \rightarrow t}$ from Eq.~\eqref{eq:unitary_composition}. By construction, this is a product of $F$ with a set of ECOs. Furthermore, the bare effect can be obtained from this operator: (1) $p_F = \pi^T(F^{\rm eff})$ is the Pauli modifying the output state, and (2) The non-trivial support of $\Pi^-(F^{\rm eff})$ on measurement locations tells us $o_F$, i.e. which outcomes are flipped. 
\end{lem}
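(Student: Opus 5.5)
The plan is to push the fault forward through the circuit one layer at a time, accumulating propagated data errors, and to read off at each layer what the elementary propagators contribute. I would argue by induction on $t$ and track the ``running'' Pauli
\[
E^t := F^t U^{t-0.5} F^{t-1} \cdots F^1 U^{0.5} F^0 (U^{0\rightarrow t})^{\dag}.
\]
First I need to pin down the conjugation convention, which is the main bookkeeping obstacle. The formula as written conjugates by $U^{0\rightarrow t}$ on only one side. I would check it against Eq.~\eqref{eq:error_identification_I} and reinterpret $E^t$ as the single Pauli at slice $t$ (up to phase) equivalent to all data errors $F^0,\dots,F^t$ moved forward through $U^{0.5},\dots,U^{t-0.5}$. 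Concretely, I would take the recursion $E^{t} = F^t\, U^{t-0.5} E^{t-1} (U^{t-0.5})^{\dag}$, and I expect this to agree with the displayed product once the phase-free, linear-map reading of $g^{t+0.5}_{\rm prop}$ is used.

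The key step is a single-layer identity. Applying $g^{t>}_{\rm prop}$ to $E^t$ multiplies the spacetime operator by
\[
\eta^{t+1}(U^{t>}E^t(U^{t>})^{\dag})\,\eta^{t+0.5}(Z[\mathbf{a}])\,\eta^{t}(E^t).
\]
Here $\mathbf{a}_j=1$ exactly when $U^{t>}E^t(U^{t>})^\dag$ anticommutes with $q_j^{t>}$; this follows by linearity from Eqs.~\eqref{eq:X_prop} and \eqref{eq:Z_prop}, as in Lem.~\ref{lem:centralizer_from_propagator}. I then compute what this does to $F$ times all earlier factors.

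By the induction hypothesis, before the layer-$t$ propagator is applied, slice $t$ carries $E^t$. The $\eta^t(E^t)$ factor cancels it, so slice $t$ is cleared. Slice $t+1$ then carries $U^{t>}E^t(U^{t>})^{\dag}\cdot F^{t+1}=E^{t+1}$, which is exactly the inductive invariant. Each half-integer slice $t+0.5$ ends up carrying $F^{t+0.5}Z[\mathbf{a}]$.

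The induction yields that $F^{\rm eff}$ is supported only on slice $T$ and the half-integer slices. Its slice-$T$ part is $E^T$, which is the final accumulated Pauli, and its half-integer parts are the flip patterns just described.

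To interpret these physically, I use Eqs.~\eqref{eq:error_identification_I}--\eqref{eq:error_identification_outcomes_I}. A Pauli $E$ present before $C^{t>}$ can be commuted through $U^{t>}$ and then through each projector $I\pm q_j$. When it anticommutes with $q_j$, it flips that projector's sign, which is equivalent to relabelling $o(q_j)\to o(q_j)+1$. The outcome distribution is therefore shifted by $\mathbf{a}$. The readout error $F^{t+0.5}$ contributes a further shift by $\mathbf{v}$. The state after the layer is then $U^{t>}E^t(U^{t>})^\dag$ applied to the ideal post-measurement state, and the new error $F^{t+1}$ is appended. Iterating this gives $P^{(F)}(o)=P(o+o_F)$, with $o_F$ equal to the union of the half-integer supports, i.e.\ $\Pi^-(F^{\rm eff})$. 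It also gives output $p_F=\pi^T(F^{\rm eff})$, as claimed.

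The phrase ``by construction'' is immediate, since each $g_{\rm prop}$ of a Pauli is a product of elementary propagators. I expect the only delicate point to be the relabelling argument: showing that the shift of outcomes is a shift of the joint distribution, not just of individual marginals. I would handle this by writing the joint probability as $\|\prod_t\prod_j \tfrac{1}{2}(I+(-1)^{o_j}q_j)U\cdots\|^2$ and commuting all Paulis to the end at once.
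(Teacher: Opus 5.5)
Your proposal is correct and is essentially the paper's argument: the paper justifies the lemma only by ``sequentially propagating the operators in each integer time-slice through to the next one'' (or by comparison with Prop.~1 of Ref.~\cite{Delfosse_arXiv_2023_spacetime_code_clifford}). Your induction on the invariant $E^t$, together with the joint-distribution relabelling, is a fleshed-out version of that sketch.

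The one-sided conjugation you flagged is not actually an obstacle. Since $U^{0\rightarrow t}=U^{t-0.5}U^{0\rightarrow t-1}$, the displayed product telescopes exactly, not just up to phase, to your recursion $E^t=F^tU^{t-0.5}E^{t-1}(U^{t-0.5})^{\dag}$.
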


The above lemma can be verified either by sequentially propagating the operators in each integer time-slice through to the next one, or by comparison with Prop.~1 of Ref.~\cite{Delfosse_arXiv_2023_spacetime_code_clifford}. The exact form of $F^{\rm eff}$ will not be important, only that it can be obtained via multiplication with ECOs. The main idea is that applying $g^{t>}_{\rm prop}$ to the fault for each $t$ yields a Pauli operator localized on the $T$-th time-slice, which is exactly equal to $p_f$, and leaves behind a bunch of $Z$ operators in half-integer time-slices, which tell us exactly what $o_f$ is.

The mentioned non-uniqueness in Def.~\ref{def:bare_effect} is resolved by taking ISGs into account. 

\begin{definition}[Fault equivalence]\label{def:fault_equiv}
    Let $(o_f, p_f)$, $(o_{f'}, p_{f'})$ be the bare effects of $f, f'$. We say that they are equivalent if $o_f = o_{f'}$ and $p_f =  p_{f'} \cdot s$ (up to a sign) for $s \in {\rm ISG}(T)$. 
\end{definition}

In other words, we say two faults $f, f'$ are equivalent if they flip the same measurements and modify the output state via the same Pauli modulo ${\rm ISG}(T)$. Finally, to relate this to the elementary circuit operators, we have the following proposition: 

\begin{prop}[Fault equivalence]\label{prop:fault_equiv}
    Two faults $F, F'$ are equivalent if and only if they can be obtained from each other via a product of elementary circuit operators.
\end{prop}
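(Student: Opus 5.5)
The plan is to reduce both directions to a statement about the map $F \mapsto F^{\rm eff}$ from Lem.~\ref{lem:prop_to_effect}. First, $F^{\rm eff}$ differs from $F$ by a product of ECOs, so $F$ and $F^{\rm eff}$ are always related by ECOs. Second, by construction $F^{\rm eff}$ is supported only on the final slice $T$ (as $p_F$) and on half-integer locations (as $Z$'s encoding $o_F$). Third, the bare effect is an invariant of the product-of-ECOs relation in the appropriate sense. Throughout I work with phaseless Paulis, since signs do not matter in Def.~\ref{def:fault_equiv}, and I take the initial condition ${\rm ISG}(0)=\langle I\rangle$, so that the final-slice corollary following Lem.~\ref{lem:alt_ISG_prop} applies to $\langle G_{\mathcal{C}}\rangle$.

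\textbf{($\Leftarrow$)} It suffices to check that multiplying a fault by a single ECO $g$ produces an equivalent fault. Here I take "fault" to mean any element of $V_1$, so that products with ECOs remain faults. This holds because measurement slices and propagators have only $Z$ support on half-integer locations, and dephasers $\eta^{t+0.5}(X_j)$ are excluded from this argument. I would argue case by case. For a measurement dephaser, I would note that it acts trivially on $V_1$ restricted to $Z$-type half-integer support, so it is irrelevant or excluded. For an elementary propagator $g^{t>}_{\rm prop}(X_n)$, inserting $X_n$ at slice $t$ and $UX_nU^\dagger$ at slice $t+1$ cancels after propagation: the $X_n$ at $t$, pushed through $U^{t>}$, equals the one at $t+1$. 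The commutation flips it induces on measurements in $M^{t>}$ are exactly cancelled by the explicit $Z$'s on the ancillas in $A$ (this is the content of Lem.~\ref{lem:centralizer_from_propagator}). Hence neither $o_F$ nor $p_F$ changes. For a measurement slice $\eta^{t+1}(q_i^{t>})$, propagating this Pauli forward from slice $t+1$ gives, at slice $T$, some operator in $\overline{\rm ISG}(T)$ by Lem.~\ref{lem:alt_ISG_prop}. Along the way it flips only measurements with which it anticommutes in the Heisenberg sense. Here I must check that $o$ is unchanged: a stabilizer of the post-measurement state acting on that state does not change the outcome distribution, so $P^{(f)}$ is unchanged and $o_f$ can be taken trivial. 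I will phrase this at the level of states rather than via the bookkeeping $Z$'s, because it is where genuine noncommuting flips can appear, and reconciling the operational bare effect with the bookkeeping version requires care. Equivalently, the combination $F\cdot g$ gives $p_{Fg}=p_F\cdot s$ with $s\in {\rm ISG}(T)$ and the same $o$.

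\textbf{($\Rightarrow$)} Suppose $F,F'$ are equivalent. Replace each by its effect form, so it suffices to show that $F^{\rm eff}(F')^{\rm eff}$ lies in $\langle G_{\mathcal{C}}\rangle$. By equivalence, the half-integer parts agree, since $o_F=o_{F'}$ is read off from $\Pi^-$. Thus the product is $\eta^T(s)$ with $s\in{\rm ISG}(T)$. By Cor.~\ref{cor:gauge_op_ISG}, $\eta^T(s)\in\langle G_{\mathcal{C}}\rangle$, which finishes this direction.

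\textbf{Main obstacle.} I expect the hard part to be making the ($\Leftarrow$) step for measurement slices rigorous. The flips recorded by half-integer $Z$'s in $F^{\rm eff}$ must genuinely match the operational outcome-flip $o_f$ of Def.~\ref{def:bare_effect}. In particular, one must show that a fault equal to an element of $\langle G_{\mathcal{C}}\rangle$ has trivial effect, even when its propagation touches measurements that anticommute with it. I would handle this by an induction on $t$, using the update rules of Prop.~\ref{prop:ISG_update}. At each layer, the accumulated slice operator is either a stabilizer (trivial action) or is tracked by the propagator bookkeeping exactly.
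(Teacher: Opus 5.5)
Your two directions follow the paper's proof. For ($\Rightarrow$) you pass to $F^{\rm eff}$ via Lem.~\ref{lem:prop_to_effect} and absorb the ${\rm ISG}(T)$ discrepancy with Cor.~\ref{cor:gauge_op_ISG}; for ($\Leftarrow$) you show each ECO acts trivially. However, the step you flag as the main obstacle is a genuine gap, and the claim you make there is false.

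Forward-propagating a measurement slice $\eta^{t+1}(q)$ does not in general land in $\overline{\rm ISG}(T)$. Lem.~\ref{lem:alt_ISG_prop} needs the product to have no half-integer support, and this fails once the propagated $q$ anticommutes with a later measurement. Take one qubit, measure $Z$ at $0.5$ and $X$ at $1.5$, and call the ancilla of the $X$ measurement $b$. For the slice $g=\eta^{1}(Z)$ you get $g^{\rm eff}=g\cdot g^{1.5}_{\rm prop}(Z)=\eta^{1.5}(Z_b)\,\eta^{2}(Z)$, and $Z\notin\overline{\rm ISG}(2)=\langle X\rangle$.

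\emph{Bookkeeping reading.} Your ($\Rightarrow$) step uses the notion of effect in which $o_F$ is read off $\Pi^-(F^{\rm eff})$. Under that notion this ECO is not equivalent to the null fault, so ($\Leftarrow$) fails.

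\emph{Operational reading.} Under the operational notion of Def.~\ref{def:bare_effect}, ($\Leftarrow$) is fine: the $X$ outcome is uniform and the two branches differ by $Z$. But then bare effects are non-unique beyond ${\rm ISG}(T)$: both $(0,I)$ and $(e_b,Z)$ are bare effects of the null fault. So $F\sim F'$ no longer forces $\Pi^-(F^{\rm eff})=\Pi^-((F')^{\rm eff})$, and ($\Rightarrow$) breaks.

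Your proposed induction (``either a stabilizer, or tracked exactly'') cannot reconcile the two readings. In this example the same operator is a stabilizer at slice $1$ and a genuinely tracked flip at $1.5$. The paper's proof is loose at the same point: it argues operationally for measurement slices and by bookkeeping in the converse. So you have not deviated from it, but neither argument is complete.

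To close the gap, read equivalence operationally: identical faulty instruments for every input.
\begin{itemize}
\item \emph{($\Leftarrow$).} Your argument then works, and even avoids the paper's linearity footnote. With $\Pi_o(q)=\tfrac12(I+(-1)^o q)$, $\tilde p=UpU^\dagger$ and $c$ the anticommutation bit of $\tilde p$ with $q$, one has $\Pi_o(q)\,\tilde p=\tilde p\,\Pi_{o+c}(q)$. This identity and the $q$-eigenstate property are operator statements valid in the presence of any other faults.
\item \emph{Reduction for ($\Rightarrow$).} $F\sim F'$ makes $E=F^{\rm eff}(F')^{\rm eff}$ operationally null, where $E$ consists of a flip pattern $\delta$ on half-integer locations plus a Pauli $\pi$ on slice $T$. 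The missing lemma is that such an $E$ lies in $\langle G_{\mathcal{C}}\rangle$.
\item \emph{Induction step.} Induct on the earliest layer $t_0+0.5$ where $\delta\neq 0$, and let $\delta_{t_0}$ be its restriction there. Invariance of the conditional outcome distribution at that layer, for all code states of ${\rm ISG}(t_0)$, says $\delta_{t_0}\cdot b=0$ whenever $\prod_j q_j^{b_j}$ commutes with $U\,\overline{\rm ISG}(t_0)\,U^\dagger$. By linear duality, $\delta_{t_0}$ is then the anticommutation pattern of $UsU^\dagger$ for some $s\in\overline{\rm ISG}(t_0)$. Multiplying $E$ by $(\eta^{t_0}(s))^{\rm eff}\in\langle G_{\mathcal{C}}\rangle$ clears that layer and keeps $E$ null.
\item \emph{Base case.} When $\delta=0$, $\pi$ must act as a scalar on the output code space, because each nonzero Kraus operator maps onto it. Hence $\pi\in\overline{\rm ISG}(T)$, and Cor.~\ref{cor:gauge_op_ISG} finishes.
\end{itemize}
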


The proof is straightforward from application of Lem.~\ref{lem:prop_to_effect}, and is given in App.~\ref{app_sec:deferred_proofs}. We provide a couple comments on the relation of Prop.~\ref{prop:fault_equiv} to prior work. In contrast to Ref.~\cite{Blackwell_arXiv_2025_distance_floquet}, our notion of equivalence does not have to posit a large number of future times. The reason for this is that our notion of equivalence is weaker than theirs, and is adapted for finite depth circuits. To that end, we note that Prop.~\ref{prop:fault_equiv} breaks if we do not include a final layer of spacetime locations at the end of the circuit which are not operated on. Our conceptualization of measurement errors lying on half-integer locations also reveals that we must treat input stabilizers on a different footing than measurements in the circuit. This is why, in contrast to Ref.~\cite{Pesah_arXiv_2025_FT_transformations}, we choose, at this stage, to avoid including elements of $\eta^0({\rm ISG}(0))$. The key reason for this is that these input stabilizers, when regarded as input data, can never suffer from a fault (insofar as faults arise from errors internal to the circuit), and as such play a different role from all other measurement slices in the circuit -- one more akin to boundary conditions, as we will see in Sec.~\ref{ssec:boundaries}.

\subsubsection{Detectors and errors}

For this discussion, we will restrict ourselves to circuits where ${\rm ISG}(0) = I$, i.e. no restrictions on the input state. The reason for this is we want to leave the issue of input states to later on when we deal with boundaries.

\begin{definition}[Detector]\label{def:detector}
    Given a circuit $\mathcal{C}$, a detector is a set of measurements such that the product of measurement outcomes has a definite parity in any given noiseless run of the circuit.
\end{definition}

To characterize detectors in the usual treatment of the spacetime code, we must define the rather adorably named \textit{spackle}. 


\begin{definition}[Spackle \cite{Pesah_arXiv_2025_FT_transformations}]\label{def:spackle}
    The spackle of a measurement $q_j^{t+0.5}$ in time-slice $t -0.5$ is defined,
    \begin{equation*}
        {\rm spackle}(q_j^{t+0.5}) = \eta^{t+0.5}(X_j) \prod_{s = t+1}^T \eta^s \left(U^{t \rightarrow s} q_j (U^{t \rightarrow s})^{\dag} \right),
    \end{equation*}
    and the spackle of multiple measurements $m_1, m_2, ..., m_k$ is simply their product,
    \begin{equation*}
        {\rm spackle}(m_1, ..., m_k) = {\rm spackle}(m_1) \times ... \times {\rm spackle}(m_k).
    \end{equation*}
\end{definition}

\begin{prop}[Detectors are stabilizers]\label{prop:detectors_are_stabilizers}
    The spackle of a detector is in the center of $\mathcal{G}_{\mathcal{C}}$. 
\end{prop}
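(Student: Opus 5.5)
The plan is to check directly that, for a detector $D=\{m_1,\dots,m_k\}$, the operator ${\rm spackle}(D)$ commutes with every elementary circuit operator in $G_{\mathcal{C}}$. It then commutes with $\langle G_{\mathcal{C}}\rangle$. I read ``center of $\mathcal{G}_{\mathcal{C}}$'' as this centralizer statement, with ${\rm spackle}(D)$ itself in the group, as in Ref.~\cite{Pesah_arXiv_2025_FT_transformations}. Since the spackle is multiplicative in the measurements, I first treat a single ${\rm spackle}(q_j^{t+0.5})$ and find which ECOs it can fail to commute with. I then show that these failures cancel in the product over a detector.

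\textbf{Single-measurement spackle.} Dephasers $\eta^{s+0.5}(X_i)$ commute trivially, since the spackle has only $X$ on ancillas. Propagators associated with times $s+0.5<t+0.5$ have disjoint support. For the propagator $g^{t>}_{\rm prop}(p)$, the $\eta^{t+1}(U^{t>}p(U^{t>})^\dag)$ piece anticommutes with $\eta^{t+1}(q_j)$ exactly when $j\in A$ (resp.\ $B$). In that case the ancilla factor $\eta^{t+0.5}(Z_j)$ anticommutes with $\eta^{t+0.5}(X_j)$, so the two signs cancel. This is where the definition of the propagators (Eqs.~\eqref{eq:X_prop}, \eqref{eq:Z_prop}) is designed to match the spackle. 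For propagators at $s+0.5$ with $s\ge t+1$, the spackle restricted to slices $s,s+1$ is $\eta^s(r)\eta^{s+1}(U^{s>}r(U^{s>})^\dag)$ with $r=U^{t\to s}q_j(U^{t\to s})^\dag$. Conjugation by $U^{s>}$ preserves commutation, so the two slice contributions cancel. The ancilla $Z$'s at $s+0.5$ do not meet the spackle. Hence only measurement slices $\eta^{s+1}(q')$ with $s\ge t$ can anticommute. Such a slice anticommutes precisely when $q'$ anticommutes with $U^{t\to s+1}q_j(U^{t\to s+1})^\dag$.

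\textbf{Product over a detector.} For a measurement $q'$ at time $s+0.5$, the total commutation sign of ${\rm spackle}(D)$ with $\eta^{s+1}(q')$ is that of $q'$ against $P_s:=\prod_{m_i\in D,\ \text{earlier than } q'} U(m_i)U^\dag$. Here every factor is pushed forward to the slice $s+1$. So it remains to show that $P_s$ commutes with every measurement that is performed after the measurements it contains. This is the main obstacle, and it is where Def.~\ref{def:detector} must be used. I would argue as follows, taking ${\rm ISG}(0)=I$ and using the stabilizer update rules (Prop.~\ref{prop:ISG_update}). Suppose some $q'$ anticommutes with $P_s$. By Prop.~\ref{prop:ISG_update}, each measured $m_i$ enters the ISG after its measurement, so $P_s$ is, up to sign, a product of ISG elements at that time. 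Then by rule 2(b), measuring $q'$ produces a uniformly random outcome whose value is correlated with the sign of $P_s$. To make this precise, exhibit a Pauli $r$ in the centralizer of the current ISG with $\{r,q'\}=0$. Inserting $r$ just before $q'$ (a data fault) must then be equivalent to an outcome relabelling. Combined with Lem.~\ref{lem:prop_to_effect}, this shows that the two noiseless branches differ in the parity $\sum_i o(m_i)$, contradicting determinism.

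Alternatively, the same step can be handled by induction on time. One shows that determinism of the detector parity forces each partial product $P_s$ to lie in $\pm\overline{\rm ISG}(s+1)$, modulo the later detector measurements. Lem.~\ref{lem:alt_ISG_prop} supplies the ISG membership, and anything in the ISG commutes with the next measurements that do not randomize the parity. At the final slice $T$, $P_T$ is the full product of detector operators, which lies in $\overline{\rm ISG}(T)$. So ${\rm spackle}(D)$ commutes with all measurement slices as well, completing the proof.
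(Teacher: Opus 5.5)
Your single-measurement computation is correct and is essentially the paper's own argument for the propagators in $\mathcal{G}^{\rm acmm}_{\mathcal{C}}$ (the cases $t<\tau$, $t>\tau$, $t=\tau$). It correctly reduces the statement to showing $\langle q',P_s\rangle=0$ for every measurement $q'$, i.e.\ commutation with the measurement slices. That step, which you rightly flag as the main obstacle, contains a genuine gap.

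The paper does not derive this step from Def.~\ref{def:detector}. It imports it from Fact~\ref{fact:Pesahspackle}, since $\mathcal{G}^{\rm cmm}_{\mathcal{C}}$ contains all measurement slices. The same fact also supplies membership of the spackle in $\mathcal{G}_{\mathcal{C}}$, which you only assert.

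Your first sketch fails on two counts.
\begin{itemize}
\item $P_s$ is not a product of ISG elements merely because each $m_i$ entered the ISG when measured. An intervening layer can evict an individual $m_i$ while commuting with the partial product. So this premise is only available at the \emph{first} offending $q'$, after an induction showing that the partial product survives every layer it commutes with.
\item More seriously, your fault cannot produce a contradiction. Any $r$ in the centralizer of the current ISG commutes with $P_s$, which lies in that ISG. Inserted just before layer $s+0.5$, $r$ changes the parity of $D$ by $\sum\langle r,\tilde m_i\rangle$, summed over the $m_i\in D$ measured in that layer or later. Here $\tilde m_i$ is $m_i$ transported back through the unitaries. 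By the closure property discussed below, this sum equals $\langle r,P_s\rangle=0$.
\end{itemize}
Relatedly, the outcome of $q'$ is independent of the sign of $P_s$, not correlated with it.

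The inductive alternative is circular. The assertion that ISG elements commute with the next measurements that do not randomize the parity is exactly the claim to be proved. Also, $P_T\in\overline{\rm ISG}(T)$ says nothing about measurements made after the last element of $D$.

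The fault that does work is $q'$ itself, placed just before its own layer: $F=\eta^{s}((U^{s>})^{\dag}q'U^{s>})$.
\begin{itemize}
\item It commutes with every measurement in layer $s+0.5$ and is a stabilizer afterwards, so it is physically null. Equivalently, it is a propagator times the measurement slice of $q'$, hence equivalent to the null fault by Prop.~\ref{prop:fault_equiv}. It therefore cannot change a deterministic parity.
\item Under Pauli-frame propagation (Lem.~\ref{lem:prop_to_effect}), however, it flips only measurements in later layers. It changes the parity of $D$ by $\langle q',\prod\tilde m_i\rangle$, taken over the $m_i\in D$ measured after layer $s+0.5$.
\end{itemize}

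To identify this with $\langle q',P_s\rangle$ you need closure: the product of all elements of $D$, transported through the unitaries to a common slice, is $\pm I$. The elements of $D$ in layer $s+0.5$ commute with $q'$, so they drop out. You never state closure, but it follows from the same principle with ${\rm ISG}(0)=I$. For maximally mixed input every input Pauli $Q$ is physically null, so $\langle Q,\prod_{m\in D}\tilde m\rangle=0$ for all $Q$.

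Closure also covers measurements after the last element of $D$, where $P_s=\pm I$. Membership then follows:
\begin{itemize}
\item each slice of ${\rm spackle}(D)$ is a partial product that stays in $\overline{\rm ISG}$ by Prop.~\ref{prop:ISG_update};
\item hence it lies in $\langle G_{\mathcal{C}}\rangle$ by Cor.~\ref{cor:gauge_op_ISG};
\item the ancilla $X$'s are supplied by the dephasers.
\end{itemize}
With these additions your route would be a self-contained proof that avoids Fact~\ref{fact:Pesahspackle}; as written, it is not.
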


We will not make much use of spackles. Nevertheless, Prop.~\ref{prop:detectors_are_stabilizers} is crucial in motivating the spacetime code. A version of Prop.~\ref{prop:detectors_are_stabilizers} is proven in Ref.~\cite{Pesah_arXiv_2025_FT_transformations}, which we adapt to our setting in App.~\ref{app_sec:deferred_proofs}. 

\begin{prop}[Detectors and spacetime errors \cite{Delfosse_arXiv_2023_spacetime_code_clifford}]\label{prop:detectors_and_errors_I}
    A fault that anticommutes with the spackle of a detector flips the total parity of the measurements in the detector. 
\end{prop}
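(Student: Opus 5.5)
The plan is to reduce the statement to Lem.~\ref{lem:prop_to_effect} together with Prop.~\ref{prop:detectors_are_stabilizers}. Let $D = \{m_1, \dots, m_k\}$ be a detector and $S = {\rm spackle}(m_1,\dots,m_k)$. Write the symplectic commutation bit as $\langle F, S\rangle \in \mathbb{Z}_2$.

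\textbf{Step 1: replace $F$ by its effect representative.} By Lem.~\ref{lem:prop_to_effect}, $F^{\rm eff} = F \cdot g$ with $g$ a product of ECOs. By Prop.~\ref{prop:detectors_are_stabilizers}, $S$ lies in the center of $\mathcal{G}_{\mathcal{C}}$, so $S$ commutes with every ECO. Hence $\langle F, S\rangle = \langle F^{\rm eff}, S\rangle$. Lem.~\ref{lem:prop_to_effect} also says $F$ and $F^{\rm eff}$ produce the same bare effect. It therefore suffices to prove the claim for faults of the special form $F^{\rm eff} = \eta^T(p_F)\,\Pi^-(F^{\rm eff})$. Such a fault is supported only on the final time-slice and on half-integer locations, the latter carrying $Z$'s exactly at the flipped measurements $o_F$.

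\textbf{Step 2: compute the commutation of such an operator with $S$.}

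On the half-integer locations, ${\rm spackle}(m_i)$ contains $X$ at the ancilla of $m_i$ and nothing at any other ancilla. A $Z$ there anticommutes with it. So the half-integer contribution to $\langle F^{\rm eff}, S\rangle$ is $|o_F \cap D| \bmod 2$, which is precisely the parity flip of the detector.

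On the final slice $T$, the spackle contributes $\prod_i U^{t_i\rightarrow T} m_i (U^{t_i\rightarrow T})^\dagger$. We must show this commutes with $p_F$, or at least that this term does not contribute. The key claim to check is that this product lies in $\overline{\rm ISG}(T)$, up to the boundary conventions. Equivalently, it should be generated causally by ECOs and be supported on slice $T$ only, as in Lem.~\ref{lem:alt_ISG_prop}. The reason is that a detector's product of measurements, propagated forward, remains a stabilizer of the noiseless state; this is exactly the statement that its parity is deterministic.

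\textbf{Main obstacle.} The obstacle is this final-slice term, since $p_F$ is a generic Pauli and need not commute with an ISG element. I would resolve it by exploiting the freedom in Def.~\ref{def:fault_equiv}. The spackle's final-slice piece is itself an ISG element, so I would either argue that an effect representative can be chosen commuting with it, or handle it directly: an error $p_F$ applied after the last measurements cannot flip any measurement, so its contribution must vanish.

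The cleanest route is to redo Step 1 with a modified fault, $\tilde F = F^{\rm eff}\eta^T(p_F)$. This strips the final-slice part, and $\eta^T(p_F)$ is itself a fault with trivial $o$. The proof of Prop.~\ref{prop:detectors_are_stabilizers} should then give that the spackle restricted to slice $T$ arises as a product of ECOs (measurement slices and propagators) with the ancilla $X$'s. If so, its commutation with a slice-$T$ Pauli equals that of the remaining pieces, which have no overlap with slice $T$; hence that contribution vanishes. I expect this bookkeeping at the final boundary to be the delicate point. After it, $\langle F,S\rangle = |o_F\cap D| \bmod 2$, and the proposition follows.
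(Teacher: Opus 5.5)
Your Step 1 and the half-integer computation in Step 2 are fine. The proof is incomplete, however: the slice-$T$ term you flag is never disposed of, and none of the fixes you sketch works.

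\begin{itemize}
\item Using the freedom of Def.~\ref{def:fault_equiv} to multiply $p_F$ by an element of ${\rm ISG}(T)$ cannot change its commutation with another ISG element, since the ISG is abelian.
\item Observing that a final-slice error flips no measurement only restates what has to be shown about $\langle \eta^T(p_F), S\rangle$.
\item Passing to $\tilde F = F^{\rm eff}\eta^T(p_F)$ only relocates the problem to that same commutator.
\item The claim that, once $S$ is written as a product of ECOs, its commutation with a slice-$T$ Pauli is carried by pieces with no slice-$T$ overlap is not valid. The bit $\langle \eta^T(p), S\rangle$ depends only on $\pi^T(S)$, however $S$ is factored.
\end{itemize}

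Membership $\pi^T(S)\in\overline{\rm ISG}(T)$ could never suffice anyway. If $\pi^T(S)$ were a non-trivial ISG element, a final-slice fault $\eta^T(p)$ with $p$ anticommuting with it would anticommute with $S$ while flipping no measurement, contradicting the statement. Your heuristic also undersells the detector property. Deterministic parity does not merely say that the propagated product of the measurements is a stabilizer. It says that the last measurement reproduces the propagated product of the earlier ones.

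The missing idea is that $\pi^s(S)=\pm I$ for every integer slice $s$ after the detector's last measurement time $t_k+0.5$, in particular for $s=T$. This, not centrality, is where the detector property really enters. It follows from Prop.~\ref{prop:detectors_are_redundancies}: take the redundancy $R$ with $\phi(R)=D$ and read $\prod_{g\in R}g=I$ slice by slice. Let $g^{s>}_{\rm prop}(p^s)$ denote the combined propagator of $R$ at time $s+0.5$. Then, up to sign, $p^s = U^{s-0.5}p^{s-1}(U^{s-0.5})^\dagger\prod_{j:\,t_j+1=s}m_j$. This is the same recursion, with the same trivial initial condition, that $\pi^s(S)$ obeys. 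Moreover $p^{t_k+1}=I$, because no propagator of $R$ leaves its latest slice (Lem.~\ref{lem:terminating_redundancy}). This is the relation $q_k\propto s_{t_{k-1}+1\rightarrow t_k+1}$ at the end of the proof of Prop.~\ref{prop:redundancies_to_detectors}.

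With this fact, your computation gives $\langle F,S\rangle = |o_F\cap D| \bmod 2$, and the proof closes. Step 1 in fact needs only that $S$ commutes with elementary propagators. That already holds for every single-measurement spackle, as checked in the proof of Prop.~\ref{prop:detectors_are_stabilizers}, so centrality is more than you need.

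Once patched, your argument is a genuinely different and more self-contained route than the paper's. The paper splits $F$ into data and readout parts, cites Ref.~\cite{Delfosse_arXiv_2023_spacetime_code_clifford} (Fact~\ref{fact:stab_SC}) for the data part, and handles readout errors directly through the ancilla $X$'s of the spackle.
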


\begin{lem}[Detector and ISG]\label{lem:detector_ISG}
    Let $D$ be a set of measurements forming a detector. Let $t$ be the latest time in this set, i.e. there exists a set of measurements $q_1, ..., q_m \in D$ supported on time $t$, and no measurement in $D$ supported on any $t' > t$. Then the product $q_1 \cdot ... \cdot q_m \in U^{t-0.5}\overline{\rm ISG}(t-1) (U^{t-0.5})^{\dag}$.
\end{lem}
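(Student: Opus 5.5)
We interpret the measurements $q_1,\dots,q_m$ as belonging to $M^{t-0.5}$, i.e. the last layer of measurements, performed just after $U^{t-0.5}$ acts on $|\psi^{t-1}\rangle$. The plan is to argue in stabilizer language. Write $D = D' \cup \{q_1,\dots,q_m\}$, where $D'$ consists of measurements at times $\le t-1.5$. Because ${\rm ISG}(0)=\langle I\rangle$, the only source of determinism in the circuit is earlier measurements. We then show that the parity of $q_1\cdots q_m$ being fixed by earlier outcomes forces $q:=q_1\cdots q_m$ into the conjugated ISG. The tool is the standard stabilizer fact: measuring a Pauli $q$ on a stabilizer state with stabilizer group $S$ yields a deterministic outcome if and only if $\pm q\in S$; otherwise the outcome is uniformly random and independent of all prior data.

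First, we reduce the $m$ commuting measurements to a single effective measurement of $q$. The measurements in $M^{t-0.5}$ commute and are performed jointly, so the parity $\sum_i o(q_i)$ is distributed exactly as the outcome of measuring $q$ on $U^{t-0.5}|\psi^{t-1}_{(0)}\rangle$. This holds because measuring all $q_i$ determines the eigenvalue of their product, and the marginal law of the product is the Born law for $q$, which we check from Eq.~\eqref{eq:error_identification_outcomes_I} with projectors multiplied together. The state $U^{t-0.5}|\psi^{t-1}_{(0)}\rangle$ has stabilizer group $U^{t-0.5}{\rm ISG}(t-1)(U^{t-0.5})^{\dag}$, with signs depending on earlier outcomes.

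Second, we use the detector condition. In every noiseless run, $\sum_{i} o(q_i) + \sum_{d\in D'} o(d)$ is constant. Conditioning on the full record of outcomes up to time $t-1.5$, the parity of the $q_i$ is therefore a deterministic function of that record. Given the record, the state before the last layer is the pure stabilizer state $U^{t-0.5}|\psi^{t-1}_{(0)}\rangle$ with a definite signed stabilizer group. By the stabilizer fact, a conditionally deterministic outcome forces $\pm q \in U^{t-0.5}{\rm ISG}(t-1)(U^{t-0.5})^{\dag}$. If instead $q$ anticommuted with some stabilizer, or commuted with all of them without lying in the group, the outcome would be unbiased given the past. Dropping signs gives $q\in U^{t-0.5}\overline{\rm ISG}(t-1)(U^{t-0.5})^{\dag}$.

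The main obstacle is the "commutes but not in the group" case. It requires the state $|\psi^{t-1}_{(0)}\rangle$ to be treated honestly. With ${\rm ISG}(0)=\langle I\rangle$ the input is maximally mixed on unconstrained qubits, or equivalently averaged over inputs. Here $q$ commutes with all of ${\rm ISG}(t-1)$ yet acts nontrivially on the logical part, so its outcome is random over inputs, which contradicts the detector condition. We handle this by taking the input to be the maximally mixed state, whose stabilizer group is exactly ${\rm ISG}(t-1)$ at each step by Prop.~\ref{prop:ISG_update}. Alternatively, we can use that a detector must have definite parity for every input state.
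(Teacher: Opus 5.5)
Your argument is correct, and it is essentially the reasoning the paper relies on. The paper gives no written proof of this lemma; its only justification is the remark accompanying Prop.~\ref{prop:detectors_are_redundancies} that the final measurement of a detector must be deterministic conditioned on earlier outcomes, which can only happen when one measures an ISG element. Your reduction of the last layer to a single effective measurement of $q_1\cdots q_m$, together with the maximally mixed input, makes that remark rigorous. You should adopt the maximally mixed input from the start in place of the ``pure stabilizer state'' of your second paragraph, because only then is the conditional stabilizer group exactly $U^{t-0.5}{\rm ISG}(t-1)(U^{t-0.5})^{\dag}$.
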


\subsubsection{Circuit distance}\label{sssec:circuit_distance}

Taking into account the notion of fault tolerance together with the observability of detectors leads to the following natural definition of an undetectable error.
\begin{definition}[Undetectable error]\label{def:undetectable_error}
    Given a discretization of a Clifford circuit, an undetectable error is a circuit fault that (1) does not trigger any detector, and (2) is not equivalent to the null fault.
\end{definition}

One is tempted to define the fault distance of a circuit as the lowest weight undetectable error. However, this is tricky -- any circuit that does not begin and end with a fully specified state will have weight-one undetectable errors on the $0$-th or $T$-th time-step. The reason for this is that detectors cannot catch errors that happen before or after the circuit. We will instead use the following notion of an \textit{internal distance}.

\begin{definition}[Internal distance of a circuit]\label{def:fault_distance}
    The internal distance (interchangeably, fault distance) of a circuit is the weight of the lowest weight undetectable error with no support on the $0$-th or $T$-th time-slices. 
\end{definition}

We note that another common definition of a circuit distance is the lowest weight error that leads to a logical error on the output code. This is the definition used for instance in Ref.~\cite{Pesah_arXiv_2025_FT_transformations} and similar in spirit to the definition used in Ref.~\cite{Blackwell_arXiv_2025_distance_floquet}. In this work, we avoid this definition of distance for three reasons: first, we do not want to commit to an output code. Second, our circuits do not include any feedback and correction, and so do not take into account a subsequent decoding step that determines what finally ends up being a logical error or not. Finally, as pointed out in Ref.~\cite{Blackwell_arXiv_2025_distance_floquet}, there is some subtlety in how long it takes a problematic error to become a logical error, which may not be captured in our setting of finite-length quantum circuits.

The notion of a fault distance crucially depends on how one discretizes the Clifford circuit. This is to allow us to take into account circuit-level and phenomenological noise models, as well as models in-between. Any given circuit discretization may fail to capture subtleties of the compilation of the gates wrapped up in each time-step. We further note that the above is not the only way to define the fault distance of a circuit, and may not even be the most reasonable way in many circumstances. For instance, in proofs of fault tolerance, one usually requires that a locally stochastic error channel remains locally stochastic after the circuit \cite{Kubica_natcomm_2022_single_shot_3D} -- this notion cannot be captured by Def.~\ref{def:fault_distance}, as any undetected error is regarded as ruinous, and is perhaps a rather unforgiving notion of a distance.

\subsection{The subsystem spacetime code}\label{sec:introducing_SSC}

In this section, we review the subsystem spacetime code \cite{Pesah_arXiv_2025_FT_transformations, Delfosse_arXiv_2023_spacetime_code_clifford, Bacon_STOC_2015_sparse_codes_circuits}. While the construction of the gauged SSC later may be motivated directly from the results of the previous section, this allows us to make contact with a prevailing framework in the literature. 


The \textit{spacetime subsystem code} (SSC) is the subsystem code obtained by regarding $G_{\mathcal{C}} \cup \eta^0({\rm ISG}(0))$ as the generating gauge group (in the rest of this discussion we assume ${\rm ISG}(0)$ is included, and simply say $G_{\mathcal{C}}$). Recall that the code then has stabilizers $Z[\mathcal{G}_{\mathcal{C}}]$, the center of the $\mathcal{G}_{\mathcal{C}}$, and logical operators $C_{\mathcal{P}_n} [\mathcal{G}_{\mathcal{C}}] - Z[\mathcal{G}_{\mathcal{C}}]$, the centralizer of $\mathcal{G}_{\mathcal{C}}$ in the Pauli group, with stabilizers removed) \cite{Terhal_RMP_2015_QEC_review}. Denoting a generating set of stabilizers by $S_{\rm st}$ (regarded as a matrix over $\mathbb{F}^{2n}$), we can associate the SSC with a \textit{gauge complex} \cite{Pesah_arXiv_2025_FT_transformations}.
\begin{equation}\label{eq:gauge_complex_1p5}
    V_3 \xrightarrow{ \Omega S_{\rm SSC}^T} V_2 \xrightarrow{G_{\mathcal{C}}} V_1,
\end{equation} 
with $\Omega$ the symplectic $2$-form encoding the commutation between Paulis. The gauge complex, in the above form, is not very crucial to our results, but it is a nice framing to keep in mind. The gauge theory we construct later may be viewed as an extension of this complex \cite{kubica_2018_arXiv_ungauging_QEC}.

We note that the standard construction of the SSC does not include the half-integer time steps we use, nor propagators that anti-commute with measurement slices. However, one can arrive at the standard construction from ours by applying the fault-tolerant maps developed in Ref.~\cite{Pesah_arXiv_2025_FT_transformations} or simply tracing out half-integer time-steps from our construction. To a large extent, either construction will yield more or less identical properties when regarded as a subsystem code, modulo measurement errors. 

\subsubsection{The gauge group and detectors}

As a construction, the SSC may be motivated directly by Prop.~\ref{prop:detectors_are_stabilizers}. This is the observation that the detectors may be identified with certain elements in the center of $\mathcal{G}_{\mathcal{C}} = \langle G_{\mathcal{C}} \rangle$. In terms of a subsystem code, these are simply stabilizers of the code. In fact this identification is an injective homomorphism, so the identified stabilizers do act and combine the way the detectors do\footnote{For concreteness, see App.~\ref{app_sec:detectors_are_redundancies} for instance, which establishes the way in which detectors form a group.}. Furthermore, due to Prop.~\ref{prop:detectors_and_errors_I}, these also play well with the error identification described in Sec.~\ref{sec:clifford_circuits_and_gauge_generators}. This leads to the following fact:

\begin{fact}\label{fact:detectability_SSC}
    Any error that is detectable by the SSC is also detectable by the circuit. \cite{Gottesman_arXiv_2022_opportunities_FT}
\end{fact}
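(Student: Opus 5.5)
We read "detectable by the SSC" in the standard subsystem-code sense: a fault $F \in V_1$ is detectable if it anticommutes with at least one element of the stabilizer group $Z[\mathcal{G}_{\mathcal{C}}]$. "Detectable by the circuit" means that $F$ either flips the parity of some detector (Def.~\ref{def:detector}) or is not equivalent to the null fault (Def.~\ref{def:fault_equiv}); that is, $F$ is not undetectable in the sense of Def.~\ref{def:undetectable_error}.

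We prove the contrapositive. Suppose $F$ is undetectable by the circuit, so it triggers no detector and is equivalent to the null fault. By Prop.~\ref{prop:fault_equiv}, equivalence to the null fault means $F$ can be obtained from the identity by multiplying elementary circuit operators. Hence $F \in \mathcal{G}_{\mathcal{C}} = \langle G_{\mathcal{C}}\rangle$ up to phase. Since every element of $Z[\mathcal{G}_{\mathcal{C}}]$ commutes with every element of $\mathcal{G}_{\mathcal{C}}$, the fault $F$ commutes with all SSC stabilizers, and is therefore undetectable by the SSC. This gives the fact.

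Two technical points need care.

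\textbf{Faults on half-integer locations.} Faults carry only $Z$-type support on half-integer locations, while the measurement dephasers $\eta^{t+0.5}(X_j)$ lie in $G_{\mathcal{C}}$. The commutation argument above is purely group-theoretic, so this causes no problem; we only need $F$ to lie in the group.

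\textbf{Including ${\rm ISG}(0)$.} The SSC here is built from $G_{\mathcal{C}} \cup \eta^0({\rm ISG}(0))$. If equivalence is taken relative to that enlarged group, the argument is unchanged. If instead we work with ${\rm ISG}(0) = I$, as in the detector subsection, nothing changes at all.

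I expect the main obstacle to be pinning down the notion of detectability so that Prop.~\ref{prop:fault_equiv} applies cleanly. One could worry that a fault triggering no detector but inequivalent to the null fault (a nontrivial logical action) might still anticommute with an SSC stabilizer. That situation is consistent with the statement, since the fact only claims the implication from SSC-detectable to circuit-detectable, and such a fault is already circuit-detectable by inequivalence.

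A complementary route uses the detector relation directly. If $F$ anticommutes with a stabilizer that is the spackle of a detector, then Prop.~\ref{prop:detectors_and_errors_I} says $F$ flips that detector. This handles detector-type stabilizers. The group-membership argument covers the remaining stabilizers, which may not correspond to detectors, so the proof should rest on the contrapositive via Prop.~\ref{prop:fault_equiv}.
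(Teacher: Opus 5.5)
Your main argument proves a trivialized statement, because it rests on a mis-negation of Def.~\ref{def:undetectable_error}. An undetectable error triggers no detector \emph{and} is \emph{not} equivalent to the null fault. Its negation is therefore ``triggers some detector or \emph{is} equivalent to the null fault.'' You instead wrote ``\ldots or is \emph{not} equivalent to the null fault.'' As a result, the fault in your ``main obstacle'' paragraph, which you declare ``circuit-detectable by inequivalence,'' is by Def.~\ref{def:undetectable_error} precisely an undetectable error: it triggers no detector and is inequivalent to the null fault. The contrapositive you actually prove is true but empty. It says that a null-equivalent fault lies in $\mathcal{G}_{\mathcal{C}}$ by Prop.~\ref{prop:fault_equiv} and so commutes with $Z[\mathcal{G}_{\mathcal{C}}]$, which is just the definition of the center. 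Notice that you never use the hypothesis that $F$ flips no detector.

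With the correct negation, the contrapositive hinges on the step ``if $F$ flips no detector, then $F$ commutes with every element of $Z[\mathcal{G}_{\mathcal{C}}]$.'' This holds whether circuit-detectability means ``flips some detector'' or, subsystem-code style, ``flips some detector or lies in $\mathcal{G}_{\mathcal{C}}$.'' Prop.~\ref{prop:fault_equiv} gives nothing at this step, since a fault anticommuting with a stabilizer is automatically outside $\mathcal{G}_{\mathcal{C}}$. So your closing claim that the group-membership argument ``covers the remaining stabilizers'' is false.

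The paper's justification of the Fact is exactly your ``complementary route.'' Prop.~\ref{prop:detectors_are_stabilizers} places ${\rm spackle}(D)$ in $Z[\mathcal{G}_{\mathcal{C}}]$, injectively in $D$. Prop.~\ref{prop:detectors_and_errors_I} then says that a fault anticommuting with ${\rm spackle}(D)$ flips $D$. This proves the implication only for syndromes of stabilizers generated by detector spackles. The paper says so itself in the next paragraph: $Z[\mathcal{G}_{\mathcal{C}}]$ can be strictly larger than the set of detector spackles (``open'' detectors), so errors with nontrivial SSC syndrome need not be caught by the circuit.

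Your write-up should therefore make the spackle argument the proof and state that restriction explicitly, rather than use a redefinition that makes the statement vacuous. If you want an inclusion that holds for the full stabilizer group, it runs the other way: a fault that flips $D$ anticommutes with ${\rm spackle}(D)$, i.e.\ the converse of Prop.~\ref{prop:detectors_and_errors_I}. That is the direction Fact~\ref{fact:distance_Bound_SSC} relies on.
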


One may hope that the distance of the SSC captures the fault distance of the circuit (Def.~\ref{def:fault_distance}). Due to the Fact.~\ref{fact:detectability_SSC}, the SSC certainly captures to a large degree the detectability of circuit errors. However, the set of stabilizers of the SSC is potentially larger (but never smaller) than the set of detectors in the circuit. Recall that the distance of a subsystem code is the lowest-weight dressed (that is, up to multiplication by operators in the gauge group) logical operator, and logical operators are operators that have no syndrome, i.e. are undetectable. Since there exists the possibility of errors with non-trivial syndrome in the SSC that cannot be detected by the circuit, leading us to the next fact: 
\begin{fact}\label{fact:distance_Bound_SSC}
    When the SSC has a non-trivial code-space, the distance of the SSC upper bounds the fault distance of the circuit. \cite{Gottesman_arXiv_2022_opportunities_FT}
\end{fact}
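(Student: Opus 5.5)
The argument will compare two sets of undetectable errors and their weights. Let $d_{\rm SSC}$ be the distance of the SSC and $d_{\mathcal{C}}$ the internal (fault) distance of Def.~\ref{def:fault_distance}.

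The strategy is to start from a minimum-weight undetectable error of the circuit and show it is a nontrivial logical operator of the SSC. That gives $d_{\rm SSC} \le d_{\mathcal{C}}$.

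Let $F$ realize $d_{\mathcal{C}}$. By Def.~\ref{def:undetectable_error}, $F$ triggers no detector and is not equivalent to the null fault. By Prop.~\ref{prop:fault_equiv}, the second condition says exactly that $F \notin \mathcal{G}_{\mathcal{C}}$, so $F$ is not a gauge operator.

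First, we show $F$ has trivial SSC syndrome, i.e.\ $F$ commutes with every element of $Z[\mathcal{G}_{\mathcal{C}}]$. Take a stabilizer $S$ of the SSC.

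1. If $S$ is the spackle of a detector (Prop.~\ref{prop:detectors_are_stabilizers}), then Prop.~\ref{prop:detectors_and_errors_I} says that anticommuting with $S$ would flip that detector. Since $F$ triggers no detector, $F$ commutes with $S$.
2. If $S$ is a stabilizer that does not come from a detector, a separate argument is needed. This is the main obstacle.

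For case 2, I would try to show directly that every element of the center corresponds to some redundancy. Write $S$ in symplectic form. Being central, $S$ commutes with every ECO, and in particular with the measurement dephasers $\eta^{t+0.5}(X_j)$. Hence $S$ restricted to half-integer locations is $X$-type.

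Commuting with the propagators and measurement slices should then force $S$ to look like a product of spackles, up to boundary pieces on slices $0$ and $T$. The conditions at the two ends are:

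1. at slice $T$, $S$ must commute with ${\rm ISG}$ data;
2. at slice $0$, $S$ must commute with $\eta^0({\rm ISG}(0))$.

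Running Lem.~\ref{lem:detector_ISG} in reverse, commutation with the propagators says the $X_j$ pattern of $S$ on measurement locations multiplies to an element of the ISG, i.e.\ it is a detector.

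If $S$ carries extra boundary support with no detector interpretation, I would use the restriction of Def.~\ref{def:fault_distance}. Since $F$ has no support on slices $0$ and $T$, $F$ commutes with any part of $S$ living only there. Combining this with the detector part gives $[F,S]=0$.

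With that in hand, $F$ lies in $C_{\mathcal{P}}[\mathcal{G}_{\mathcal{C}}]$. We also need $F \notin Z[\mathcal{G}_{\mathcal{C}}]\cdot\mathcal{G}_{\mathcal{C}} = \mathcal{G}_{\mathcal{C}}$, which holds by Prop.~\ref{prop:fault_equiv}. So $F$ is a nontrivial bare logical operator.

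The SSC distance is the minimum weight over dressed logicals, so it is at most ${\rm wt}(F) = d_{\mathcal{C}}$. The hypothesis that the code space is nontrivial is used only to make $d_{\rm SSC}$ well defined; otherwise there are no logicals and the statement is vacuous. If the centre step fails in general, I would fall back on Fact~\ref{fact:detectability_SSC} in contrapositive form, namely that an error undetectable by the circuit cannot be flagged by any detector-derived stabilizer. I would then argue that the remaining stabilizers are supported only on boundary slices, where $F$ has no support.
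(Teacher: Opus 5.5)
The statement asserts $d_{\mathcal{C}} \le d_{\rm SSC}$: the SSC distance is the upper bound. Your argument instead aims at the reverse inequality $d_{\rm SSC} \le d_{\mathcal{C}}$, and its key step is false. Your Case 2 needs every element of $Z[\mathcal{G}_{\mathcal{C}}]$ to be a detector spackle up to pieces on slices $0$ and $T$. The paper stresses, just before the statement, that the SSC can have stabilizers that are not detectors (``open'' detectors), and so can flag errors the circuit cannot.

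Concretely, let one qubit be measured in $Z$ at time $0.5$ and then idle until $T \ge 2$. One checks directly that ${\rm spackle}(Z) = \eta^{0.5}(X)\prod_{s=1}^{T}\eta^{s}(Z)$ lies in $\mathcal{G}_{\mathcal{C}}$: it is the dephaser times $Z$-propagators, together with the slice $\eta^{1}(Z)$ if $T$ is odd. It also commutes with every ECO, so it is a stabilizer. But it is not a detector (its outcome is random), and it is supported on every interior slice.

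Now take an $X$ error on this qubit at a slice $1 \le s \le T-1$. It flips no measurement and is not null-equivalent, since its output $X$ is not in $\overline{\rm ISG}(T)$. So it is an internal undetectable error of weight one. Yet it anticommutes with ${\rm spackle}(Z)$, so it is not an SSC logical. Hence ``running Lem.~\ref{lem:detector_ISG} in reverse'' fails, and so does your fallback that non-detector stabilizers live only on slices $0$ and $T$.

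Appending this idle wire to any circuit whose SSC has a nontrivial code space and distance at least $2$ adds stabilizers and gauge operators but no logicals. That leaves $d_{\rm SSC}$ unchanged while making $d_{\mathcal{C}} = 1$, so the inequality you target is false in general.

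The paper's reasoning runs in the opposite direction, from SSC logicals to circuit errors. A minimum-weight dressed logical $L$ can be taken to be a valid fault, since multiplying by dephasers strips $X$-support on half-integer locations without increasing weight. Because $L$ commutes with all of $Z[\mathcal{G}_{\mathcal{C}}]$, it commutes in particular with every detector spackle (Prop.~\ref{prop:detectors_are_stabilizers}), so it flips no detector. This step uses the converse of Prop.~\ref{prop:detectors_and_errors_I}, which holds because a detector's parity flip is exactly the commutation sign of the fault with its spackle. Since $L \notin \mathcal{G}_{\mathcal{C}}$, Prop.~\ref{prop:fault_equiv} shows it is not null-equivalent.

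Hence every SSC logical is an undetectable circuit error. The extra non-detector stabilizers can only remove candidates, which is why $d_{\rm SSC}$ may exceed the circuit's distance. Be aware that this argument, as given, bounds $d_{\rm SSC}$ below by the minimum weight over \emph{all} undetectable non-null faults. A lightest dressed logical may be supported on slice $0$ or $T$, which Def.~\ref{def:fault_distance} excludes. Passing to the internal distance therefore needs an additional argument that the paper's informal justification does not provide.
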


The above fact comes with a caveat of a non-trivial code space, which we will deal with in the next section. The difference between the distance of the SSC and the fault distance of the circuit can be attributed to existence of stabilizers which are not detectors. These are usually regarded as `open' detectors \cite{Pesah_arXiv_2025_FT_transformations}. We note that Fact.~\ref{fact:distance_Bound_SSC} is perhaps inelegant but not a big issue. One advantage is that non-detector stabilizers leave open the possibility of composition with open detectors in other circuits -- in a sense, the distance of the SSC describes the potential of the circuit. Note that this does not happen in the construction of Ref.~\cite{Delfosse_arXiv_2023_spacetime_code_clifford}, which constructs a stabilizer code out of the detectors only. However, the stabilizer spacetime code of Ref.~\cite{Delfosse_arXiv_2023_spacetime_code_clifford} is much less forgiving than the SSC when it comes to the detectability of errors. One can think about this as converting many of the gauge qubits of the SSC into logical qubits \cite{Fu_unpub_subsystem_spacetime_code}, as such the stabilizer spacetime code may have a distance smaller than the circuit. 

There is a large plethora of circuits with guarantees in terms of their fault distance. In terms of phases of matter, we then expect the SSC to describe a phase of matter that is in a sense at least as stable as we expect from the circuit.

\subsubsection{Logical operators of the SSC}

How important are logical qubits to the SSC? Whether or not we even need logical qubits in such a framework is questionable. While the notion of a code distance depends on having logical qubits, we can always appropriately redefine it. However, one reason one might care is that we might really like to think about the dynamical process of a circuit as a static code; this can aid us in decoding for instance \cite{Delfosse_arXiv_2023_spacetime_code_clifford}. Another reason to care is that when trying to think about a code as associated with a phase of matter, one minimal expectation is that the basic form of the theory should have a degenerate ground state space \cite{Zeng_2019_QIMQM}.

When does the SSC have a non-trivial code-space? A necessary but insufficient condition is that $T$ is even (i.e there are an odd number of integer time-slices) \cite{Bacon_STOC_2015_sparse_codes_circuits, Pesah_arXiv_2025_FT_transformations}. In this case, the logicals are usually products of logicals of the ISGs across all time-steps. For instance, in a repeated syndrome extraction circuit on a standard stabilizer subspace code, these logicals are of the form $\prod_{t=0}^T \eta^t(L)$, where $L$ is a logical operator of the input code. In a Floquet code \cite{Hastings_Quantum_2021_floquet_codes}, we might have something of the form $\prod_{t=0}^T \eta^t(L^t)$, where $L^t$ is different for different time-steps \cite{Fu_Quantum_2025_Dynamical_Codes, Blackwell_arXiv_2025_distance_floquet}. 

The fine-tuning between even and odd discretizations is inelegant, but this does not seem to be a problem in practice \cite{Aitchison_arXiv_2026_spacetime_spins}. After all, one can always add an additional time-step by appending identity gates to the circuit. Still, one must pay close attention to the caveat of Fact.~\ref{fact:distance_Bound_SSC}. The reason for this is that, even when fixing the parity of time-steps, there can be practical circuits of interest that defy the existence of logical qubits. One pertinent example is circuits describing memory experiments, which we discuss at length in Sec.~\ref{ssec:boundaries}. 



\subsection{What then, and what next?}

In this long preamble, we have made somewhat of a case for the SSC. We've argued that the elements of fault tolerance primarily require three things to work out: circuit action, error equivalence, and detectors. On the level of the gauge group, the SSC gets error equivalence right, as we showed in Prop.~\ref{prop:fault_equiv}. However, this is a separate issue from whether we treat it as a code or not. The ECOs alone give this property, as evidenced by the treatment of the problem by Ref.~\cite{Blackwell_arXiv_2025_distance_floquet} via the benign error formalism with no recourse whatsoever to a code. It can be the case that, without some fine-tuning, the SSC fails to admit information about the logic of the circuit -- in particular, situations where the SSC has no logical qubits. The SSC captures detectors as stabilizers, but not as tightly as one might desire. None of these are major problems, but still we are driven to soften some of them by developing a complementary view of the subject.

Our next step will be to gauge the SSC (rather, we will gauge the ECOs). In doing so, our goal is to construct a lattice gauge theory which retains the idea of gauge equivalence, but always has a distance (i.e. logical degrees of freedom). As a bonus, we will obtain a set of symmetries which cleanly correspond to detectors.

\section{Gauging the spacetime code}\label{sec:gauging_SSC}

In this section, I will describe a complementary approach to the spacetime code. We will not immediately attempt to generate a quantum code with $G_{\mathcal{C}}$. Instead, we will gauge $G_{\mathcal{C}}$ to obtain a lattice gauge theory that serves as a dual to the SSC\footnote{We note that the gauge theory we construct technically does not live on a lattice but a hypergraph, as the gauge fields can neighbor more than two matter fields.}. We comment that gauging has become something of the Swiss Army knife of QEC, and has found widespread use in code construction \cite{Cuiper_quantum_2025_systematic_construction_via_gauging}, state preparation \cite{Tantivasadakarn_PRX_2024_LRE_meas}, measuring logical operators \cite{Williamson_arXiv_2024_low_overhead_gauging}, and carrying out non-Clifford gates \cite{Davydova_arXiv_2025_universal_FT}, just to name a few examples\footnote{A big part of this work was motivated by simply seeing the word `gauging' everywhere and trying to understand what that meant.}. My goal is much more modest. I will show that this dual theory preserves the elements of fault tolerance outlined in the previous section, but with several additional perks, such as symmetries that correspond one-to-one with detectors (Sec.~\ref{sssec:detectors_redundancies}) and a clear distinction between logical and non-logical degrees of freedom in terms of locality (Sec.~\ref{ssec:boundaries}). In principle, all this information is contained in the SSC (what we are getting is but a dual theory after all). However, the gauge theory formulation will also clarify some connections to other constructions and applications in the literature (Sec.~\ref{sec:applications}). 

We will now provide a high-level discussion of the gauged theory as a guide to the rest of the section. Henceforth, we will refer to the gauged theory as simply the \textit{gauged SSC}. In the following discussion, we will treat $G_{\mathcal{C}}$ as absent of measurement dephasers. We originally introduced those as a means to make the measurement locations classical, but will now simply take that as given\footnote{One could keep the measurement dephasers around and go through the whole procedure. The result is that the measurement dephasers form a disconnected component of the resulting gauge theory and so can be removed from the picture without consequence.}.

The procedure we will carry corresponds to treating $G_{\mathcal{C}}$ as the parity check matrix of a classical code by splitting $X, Z$ and measurement locations, essentially using the symplectic representation over $\mathbb{Z}_2^{2n}$ (see App.~\ref{app_sec:some_useful_concepts}), and then gauging the classical code. This corresponds to the general procedure introduced in Refs.~\cite{kubica_2018_arXiv_ungauging_QEC, Rakovszky_2023_arxiv_physics_ldpc_I}, with only minimal modifications at the temporal boundaries. This is the `minimal coupling' prescription used to introduce dynamical $\mathbb{Z}_2$ gauge fields, which is developed at length in the context of QEC in Ref.~\cite{Rakovszky_2023_arxiv_physics_ldpc_I}. In Sec.~\ref{ssec:construction}, we will explain the steps of the gauging procedure as they apply to our work, but refer the reader to the aforementioned references, as well as Refs.~\cite{Haegeman_PRX_2015_gauging_states, Prem_scipost_2019_gauging_swaps}, for a full review.

In broad strokes this will involve:
\begin{enumerate}
    \item Introducing ancilla variables, or \textit{gauge fields}, for each ECO in $G_{\mathcal{C}}$.
    \item Identifying \textit{redundancies} in $G_{\mathcal{C}}$, and enforce them as symmetries acting on the gauge fields.
\end{enumerate}
Arranging the redundancies in a matrix $R_{\mathcal{C}}$, this may be interpreted as an extension of the chain complex of Eq.~\ref{eq:gauge_complex_1p5},
\begin{equation}\label{eq:gauge_complex_2}
    V_3 \xrightarrow{ \Omega S_{\rm SSC}^T} V_2 \xrightarrow{G_{\mathcal{C}}} V_1 \xrightarrow{R_{\mathcal{C}}} V_0.
\end{equation} 
with $V_1$ identified with the gauge fields. The resultant sub-complex from $V_2$ to $V_0$ may be identified as a $\mathbb{Z}_2$ lattice gauge theory \cite{Kogut_RMP_1979_lattice_gauge_theory}, albeit with gauge fields living on hyper-edges rather than edges.

Gauging allows us to capture the elements of fault tolerance expressed by $G_{\mathcal{C}}$. However, it will give us something a little more compelling than just an equivalent theory on a different set of variables. To understand why, we consider the three important elements that identify this as a gauge theory. 
    \begin{enumerate}
        \item The first element is the gauge fields, which are elements of $V_1$, identified with the ECOs of $G_{\mathcal{C}}$. In a model permitting single-qubit $X, Z$ errors at any location\footnote{In this error model, a $Y$ error is considered weight-two.}, as well as in any measurement outcome, these gauge fields can be identified with circuit faults in a way that is completely local, up to boundary conditions. In other words, a weight-one circuit fault corresponds to a weight-one gauge field\footnote{It is also possible to identify gauge qubits in the SSC, and identify these with circuit errors \cite{Fu_unpub_subsystem_spacetime_code}. However, for small weight errors, the gauge qubits operators will in general always have a higher weight than the corresponding circuit errors.}. This is an unsurprising result, and just takes a little care to establish; stemming fundamentally from the duality between the classical code specified by check matrix $G_{\mathcal{C}}$ and its transpose code specified by $G_{\mathcal{C}}^T$. However, we note that this is not a given -- writing the ECOs in the usual basis used for the spacetime code spoils this strict locality property \cite{Aitchison_arXiv_2026_spacetime_spins}.
        \item The second element is the \textit{local Gauss laws}, which are derived from $G_{\mathcal{C}}$. These are a set of operators acting on $V_1$ that map configurations of gauge fields (errors) to other equivalent configurations, in the sense of Prop.~\ref{def:fault_equiv}. We will discuss these physically in Sec.~\ref{sssec:gauss_laws}, deferring the technical (and less intuitive) proof to App.~\ref{app_sec:gauss_laws}.
        \item The third element is the \textit{Wilson loops}, which are the gauge-invariant observables of a gauge theory \cite{Kogut_RMP_1979_lattice_gauge_theory}. These are identified with the redundancies $R_{\mathcal{C}}$ in the chain complex, Eq.~\ref{eq:gauge_complex_2}. Why do we care about redundancies? It turns out one can prove a somewhat more powerful version of Prop.~\ref{prop:detectors_are_stabilizers} -- redundancies of $G_{\mathcal{C}}$ are in fact exactly isomorphic to the detectors of the circuit (Prop.~\ref{prop:detectors_are_redundancies}). Again, we argue this heuristically in Sec.~\ref{sssec:detectors_redundancies}, but defer the technical proof to App.~\ref{app_sec:detectors_are_redundancies}. Given the modern detector-forward approach of fault tolerance, it is particularly apt that we identify detectors with gauge-invariant observables, since detectors should be invariant over different equivalent errors configurations, and detector are really the only real observables we obtain from a circuit. 
    \end{enumerate}

In Sec.~\ref{ssec:topology_gauged_SSC}, we study the the established gauge theory through the lens of the topology of Eq.~\ref{eq:gauge_complex_2}. As argued in Sec.~\ref{sssec:circuit_distance}, for us the relevant distance is actually the internal distance of the circuit, Def.~\ref{def:fault_distance}. In Sec.~\ref{sssec:gauge_fix_internal}, we show that this arises by modifying Eq.~\ref{eq:gauge_complex_2} through a simple gauge fixing procedure.

While the gauging procedure is standard, one has to take especial care with the temporal boundaries. In this construction, we will have fixed the boundary conditions in a particular way. Gauge fixing to obtain the internal distance can be considered a different kind of boundary condition. Up to this point we have not included input stabilizers in $G_{\mathcal{C}}$. In Sec.~\ref{ssec:boundaries}, we will include ${\rm ISG}(0)$ in the gauging procedure, which will necessitate the introduction of yet another set of boundary conditions, and require us to distinguish local and global redundancies. This will allow us to characterize different standard QEC experiments via boundary conditions on the gauged SSC.

Throughout, we will use repeated measurement of a repetition code as a leading example (see Figs.~\ref{fig:repeated_meas_rep_code_circs}, \ref{fig:rep_code_std_and_mem}). We will close in Sec.~\ref{ssec:examples} by looking at some further examples.

\subsection{Construction}\label{ssec:construction}

The idea of fault equivalence in Prop.~\ref{prop:fault_equiv} tells us that there are only a much smaller set of inequivalent error configurations, which are separated by the fact that they cannot reach each other via applications of $G_{\mathcal{C}}$. One would like to construct a gauge theory where the states are only labelled by these inequivalent error configurations \cite{Wen_2017_book}. Consider $g \in G_{\mathcal{C}}$. Two configurations related by the application of $g$ are equivalent as faults, and we want them to be equivalent as states in our theory as well. The natural way to do this is to project onto the $+1$ eigenspace of $g$; unfortunately $G_{\mathcal{C}}$ is non-abelian, so this set of operators don't commute. 

There are two ways to fix this. We could double the space, since doubled Paulis always commute. We will not consider this strategy here. Alternatively, we simply recognize that in the error correction and decoding process for Clifford circuits, the phases of the errors don't matter. As such, we can simply think of each Pauli in $\mathcal{P}_{m}$ instead as a vector in $\mathbb{F}^{2m}$. This is of course the space $V_2$ in Eq.~\eqref{eq:gauge_complex_1}, as augmented by measurement errors, which each contribute only one degree of freedom rather than two.

\begin{figure}[ht!]
\centering 
\includegraphics[width=0.7\linewidth]{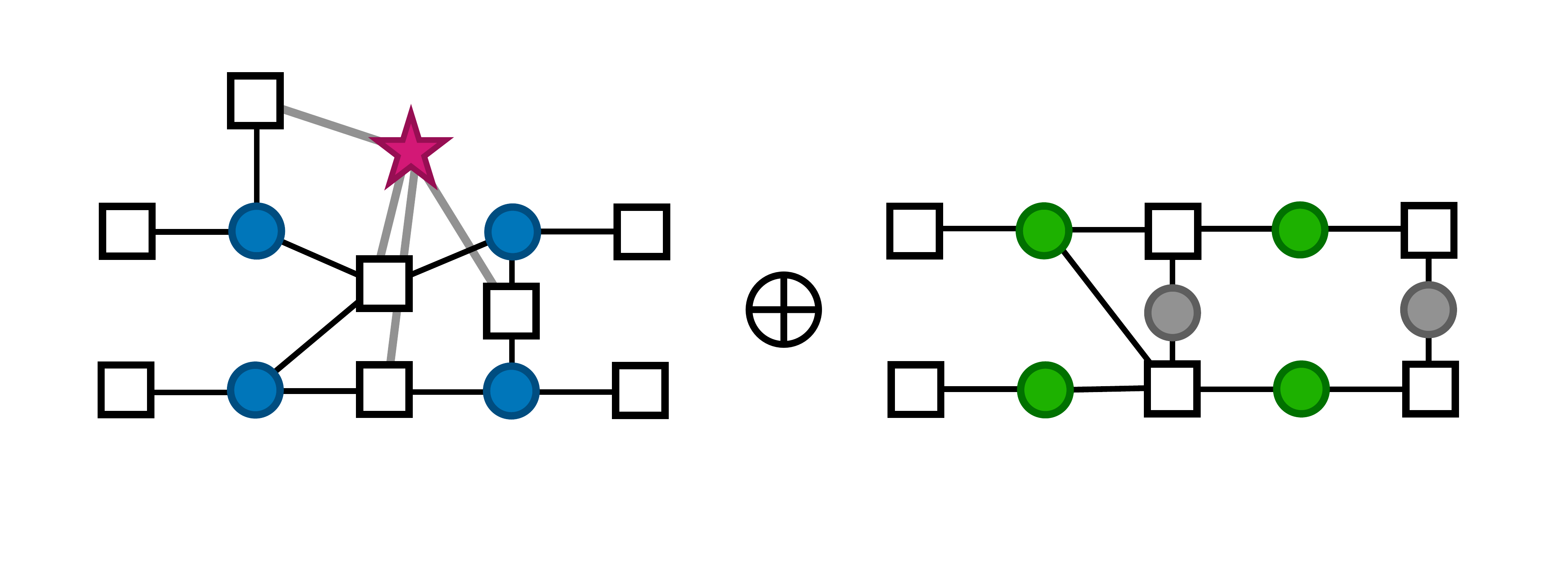} \caption{A schematic depiction of the gauge theory obtained by treating the circuit segment in Fig.~\ref{fig:detector_example} as a full circuit. As described in the main text, squares indicate gauge fields. Blue (green) circles indicate $Z(X)$-type matter fields, gray circles indicate measurement type matter fields, which are each connected by black edges to the gauge fields in the support of their associated Gauss law. The redundancy, which is derived from $G_{\mathcal{C}}$, is depicted by a magenta star, which is connected to the gauge fields in its support via gray edges.}\label{fig:small_full_example} 
\end{figure}

\bigskip

The first step in the construction is to introduce gauge fields, and define Gauss laws and Wilson loops on them. This section will also establish notation for the rest of the work as we switch from ECOs to matter and gauge fields. By and large I have tried to follow the terminology established in Ref.~\cite{Rakovszky_2023_arxiv_physics_ldpc_I}.  

\textit{From spacetime locations to classical spins -- } Recognizing the classical nature in which we decode errors in Clifford circuits, we can now map every spacetime location $(i, t)$ to a pair of classical bits (or spins), eg. $\sigma^t_{i, x}$, $\sigma^t_{i, z}$, where we emphasize that the $x, z$ subscript is simply a label, and $\sigma$ is a classical bit taking values of $0, 1$. The $j$-th measurements in time $t+0.5$ is mapped to a single classical bit $\sigma^{t+0.5}_{j, m}$. The `all-zeros' (or all `down-spins') state, which we henceforth refer to as the vacuum, is identified with an absence of circuit errors. When talking about some $g \in G_{\mathcal{C}}$, we denote by $\sigma[g]$ the product of the spins associated with the support of $g$. For instance, if $g = \eta^t(Z_n) \eta^{t-1}(X_n)$, then $\sigma[g] = \sigma_{n, z}^t \sigma_{n, x}^{t-1}$. We refer to $\sigma$ as \textit{matter fields}. While we philosophically regard these as classical degrees of freedom, it is useful to embed them in a Hilbert space; in that case we write $(\sigma^{X, Z})^t_{i, x/z}$, using the superscript to denote the type of Pauli, in a way completely decoupled from the subscript. 

\textit{Gauge fields -- } Next, we introduce our gauge fields. For every elementary propagator, which may be labelled by $\alpha_n = X_n, Z_n$ and time-step $t+0.5$, we introduce a single bit $\tau_{n, \alpha}^{t+0.5}$. For each measurement slice, which may be labelled by a time $t+0.5$ and measured operator $q$, we introduce a $\tau^{t+0.5}_q$. Again we emphasize that despite the $x, z$ subscript, $\tau$ is a classical bit taking values of $0, 1$. Conventionally, when embedding these degrees of freedom in a Hilbert space, we will think about $\tau$ as Pauli $Z$ and $\sigma$ as Pauli $X$. When talking about some $g \in G_{\mathcal{C}}$, we may notate the associated spin as $\tau[g]$. This is a single spin, in contrast to $\sigma[g]$, which is the product of multiple spins. Similarly, we may embed $\tau$ in a Hilbert space with superscripts $\tau^{X/Z}$ indicating a Pauli matrix. Diagramatically, we will depict $Z$-type matter fields with green circles, $X$-type matter fields with blue circles, and measurement type matter fields with gray circles. All gauge fields will be depicted by squares. This is consistent with the diagrammatic notation already established for the ECOs.

\textit{Gauss laws -- } The matter and gauge fields are tied together by Gauss laws. We introduce a set of local Gauss laws labelled by spacetime locations. The possible spacetime locations have the form $(s, t) = (i, x, t), (j, z, t), (q, m, t)$, where $s$ a tuple comprising an index and label $x, z, m$ describing the nature of the location, and $t$ is a time index. We denote by $(G_{\mathcal{C}}^T)_{(s, t), \cdot}$ the row of the matrix $G_{\mathcal{C}}^T$ associated with the spacetime location $(s, t)$, and we denote by $g \in (G_{\mathcal{C}}^T)_{(s,t), \cdot}$ the ECOs associated with the columns of that row that have a $1$ in them. In this notation, the Gauss laws operators are given by 
\begin{equation}\label{eq:matter_gauge_gauss}
    S_{s}^t = (\sigma^X)_{s} \prod_{ g \in  (G_{\mathcal{C}}^T)_{(s, t), \cdot}} \tau^X[g].
\end{equation}
These Gauss laws will help us relate the circuit errors associated with the matter fields to circuit errors associated with the gauge fields. In building a gauge theory, one often projects onto the symmetric space by only considering states where $S_s^t = +1$. However,
for our purposes it will be easier to think about $S_s^t$ as `allowed moves', i.e. a gauge field configuration labels the same state as another if related by the operator $S_s^t$, as this will allow us to stick to discussing classical configurations of fields. Diagrammatically, we will connect matter fields to all gauge fields in the support of their associated Gauss law via black edges.

\textit{Wilson loops -- } The Wilson loops are observables that live on the gauge fields. They are gauge-invariant, and hence in a gauge theory, correspond to physical observables. Gauge-invariance requires that they commute with all Gauss laws (as they should give the same result for any gauge equivalent configuration). In this prescription, Wilson loops are identified with the redundancies of $G_{\mathcal{C}}$. Since the term `redundancy' is used colloquially in several different ways, we formally define this.

\begin{definition}[Operator redundancy]\label{def:op_redundancy}
    A redundancy $R \subseteq G_{\mathcal{C}}$ is a set of ECOs $g \in R$ such that $\prod_{g \in R} g^{\otimes 2} = I$. Here, we have taken the two-fold tensor product to get rid of any ordering ambiguity, we may informally say that $g \in R$ multiplies to the identity.
\end{definition}

The operator redundancies that show up in $G_{\mathcal{C}}$ may be the single most important concept in this work. Unfortunately, we will eventually encounter several other kinds of (also important) redundancies. When we say `redundancy', we will usually be referring to Def.~\ref{def:op_redundancy} unless specified otherwise. We may arrange all independent redundancies in a matrix $R_{\mathcal{C}}$, where the rows correspond to a redundancy $R$ and the columns correspond to ECOs $g$, with a $1$ if $g \in R$ and $0$ otherwise. Our Wilson loops are then
\begin{equation}
    W_R = \prod_{g \in R} \tau^Z[g].
\end{equation}
It is easy to check that as matrices $R G_{\mathcal{C}} = 0$ \cite{kubica_2018_arXiv_ungauging_QEC}. Diagrammatically, we will depict redundancies as magenta stars, which are connected to the gauge fields in their support by a gray edge.

\bigskip

We have now constructed a proper gauge theory, with gauge fields, Gauss laws and Wilson loops. In its current form, the dynamics (errors) mix matter and gauge fields. Our next step, which will complete the gauging of the spacetime code, is to move the dynamics onto the gauge fields only. We will also specify a set of boundary conditions, although these will undergo much modification later.

\textit{Dynamical gauge fields -- } We will want a theory that only lives on the gauge fields. For $\mathbb{Z}_2$ fields there is a simple disentangling step one can carry out for this purpose (see \cite{Prem_scipost_2019_gauging_swaps, Haegeman_PRX_2015_gauging_states} for details), which amounts to replacing our Gauss laws with 
\begin{equation}\label{eq:gauge_gauss}
    S_s^t \rightarrow \prod_{ g \in  (G_{\mathcal{C}}^T)_{(s, t), \cdot}} \tau^X[g],
\end{equation}
or in other words, simplifying the theory by freezing out the degrees of freedom associated with the matter field.

\textit{Boundary fields -- } For technical reasons that will become clear shortly, we will introduce boundary fields. There are many ways to do this, which we will argue are associated with different kinds of QEC experiments and calculations. For now, we will go with a fairly minimal prescription, which we will refer to as the \textit{open boundary conditions.} This involves introducing additional boundary gauge fields for each of the $2n$ matter fields in the $T$-th time-step, $\tau^{f}_{i, x/z}$, and removing the spacetime locations (and their associated Gauss laws) at time $0$\footnote{If we had included a non-trivial input ${\rm ISG}$ then we would add additional boundary gauge fields before the circuit instead of removing spacetime locations.}. The Gauss laws (and $G_{\mathcal{C}}$ accordingly) associated with final boundary spacetime locations are modified like
\begin{equation}
    \begin{aligned}
        S^{T}_{i, x/z} &\rightarrow (\tau^X)^{f}_{i, x/z} (\tau^X)^{T-0.5}_{i, x/z},
    \end{aligned}
\end{equation}
while the Gauss laws associated with the locations $(i, x/z, 0)$ are removed. Diagrammatically, this corresponds to making sure the temporal boundaries only contain squares. Since gauge fields correspond to gate layers, schematically this looks like a rediscretization of the circuit like in Fig.~\ref{fig:state_convention_II}, where a final operation $C^f$ containing only the identity is introduced. Subsequently, when we speak in generalities, we are generally speaking about properties in the bulk, i.e away from the time boundaries, and we will try to be careful to be clear whenever the discussion starts touching on boundaries. 

At first glance, this procedure seems to introduce new redundancies to $G_{\mathcal{C}}$. For now, we simply assert that we should not add in these new redundancies at this stage. In App.~\ref{app_sec:obc}, we explain how these boundary conditions simply arise via careful treatment of the disentangling step in a way that does not introduce new redundancies. (Only in this step does our construction deviate a little from the prescriptions of Refs.~\cite{kubica_2018_arXiv_ungauging_QEC, Rakovszky_2023_arxiv_physics_ldpc_I}). Ultimately, our choices of boundary are guided by the desire to preserve the elements of fault tolerance, which is natural but not innate to the gauging procedure. We will refer to these as the \textit{input/output boundary fields}.

\bigskip

\begin{figure}[ht!]
\centering 
\includegraphics[width=0.75\linewidth]{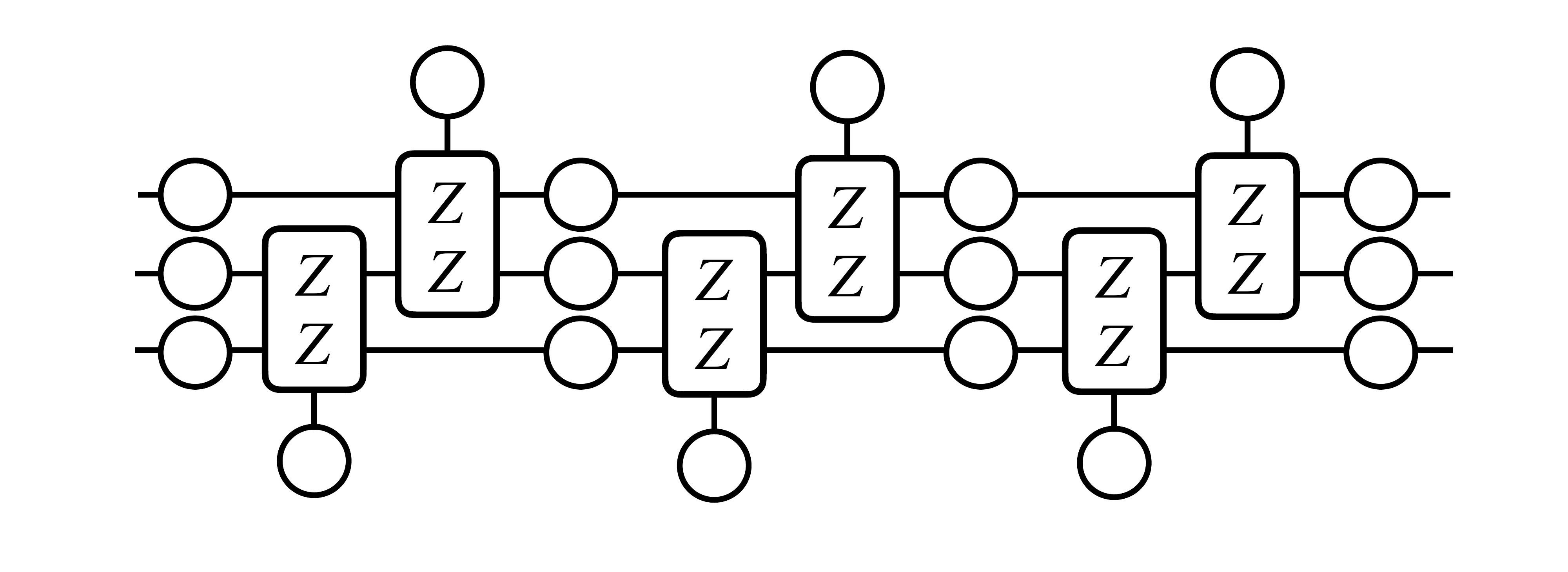} \caption{ Circuit depicting repeated measurements of a three-qubit repetition code, together with spacetime locations discretizing it for a phenomenological noise model. In between each integer layer of spacetime locations, we assume all stabilizers are measured.}\label{fig:repeated_meas_rep_code_circs} 
\end{figure}

\begin{figure}[ht!]
\centering 
\includegraphics[width=0.9\linewidth]{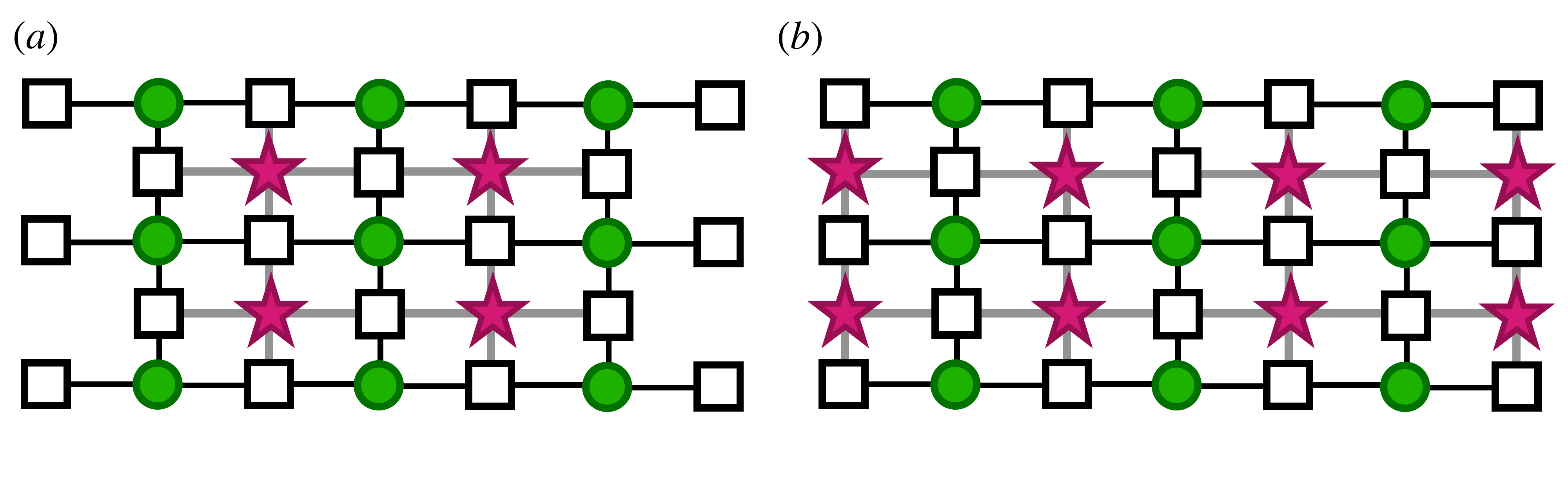} \caption{ The gauge theory associated with the with the repeated measurement repetition code circuit of Fig.~\ref{fig:repeated_meas_rep_code_circs}, with the open boundary conditions described in Sec.~\ref{ssec:construction}. We only depict the $Z$-type matter fields and the connected gauge fields and detectors, since the $X$-type fields are not relevant to the repetition code's ability to protect information. However, one should keep in mind that the $X$-type fields remain present but decoupled from the structure displayed here.
}\label{fig:rep_code_std_and_mem} 
\end{figure}

This completes the construction of the lattice gauge theory obtained by gauging the SSC, which is succinctly described in Eq.~\ref{eq:gauge_complex_2}. In Fig.~\ref{fig:small_full_example}, we depict the gauge theory obtained from the circuit segment in Fig.~\ref{fig:detector_example} as a minimal example containing detectors. In the next few sections, we will justify the construction by examining the correspondence between elements of fault tolerance and the elements of the gauged SSC. As a leading example, we depict the gauge theory obtained from the repeated syndrome measurement of a three-qubit repetition code in Fig.~\ref{fig:rep_code_std_and_mem}a. This is constructed from the circuit depicted in Fig.~\ref{fig:repeated_meas_rep_code_circs}.

\subsection{Elements of fault tolerance}\label{ssec:elements_FT_gauged_SSC}

\subsubsection{Errors as gauge fields}\label{sssec:errors_as_gauge_fields}

We now describe the mapping of gauge fields to errors, i.e. the error identification step, similar to what we did in Eqs.~\ref{eq:error_identification_I}, \ref{eq:error_identification_outcomes_I}. In principle, we should start from the error identification of Sec.~\ref{sec:clifford_circuits_and_gauge_generators} together with the Gauss laws and gauge fixing step to derive consistent error identification rules for the gauge fields. There are subtleties in doing so, which we discuss in Apps.~\ref{app_sec:obc},~\ref{app_sec:spackle_domain_wall} for completeness. Here we will simply assert a particular error identification, and note that it captures all errors in the ungauged theory.

Consider the operation $C^{t>}$, which comprises a unitary $U^{t>}$ and measurements of Paulis $q_1, ..., q_k$. In the absence of error, we have $| \psi^{t+1} \rangle \propto \prod_{j} (I + (-1)^{o(q_j)} q_j) U^{t>} | \psi^t \rangle$. 

\begin{figure}[ht!]
\centering 
\includegraphics[width=0.65\linewidth]{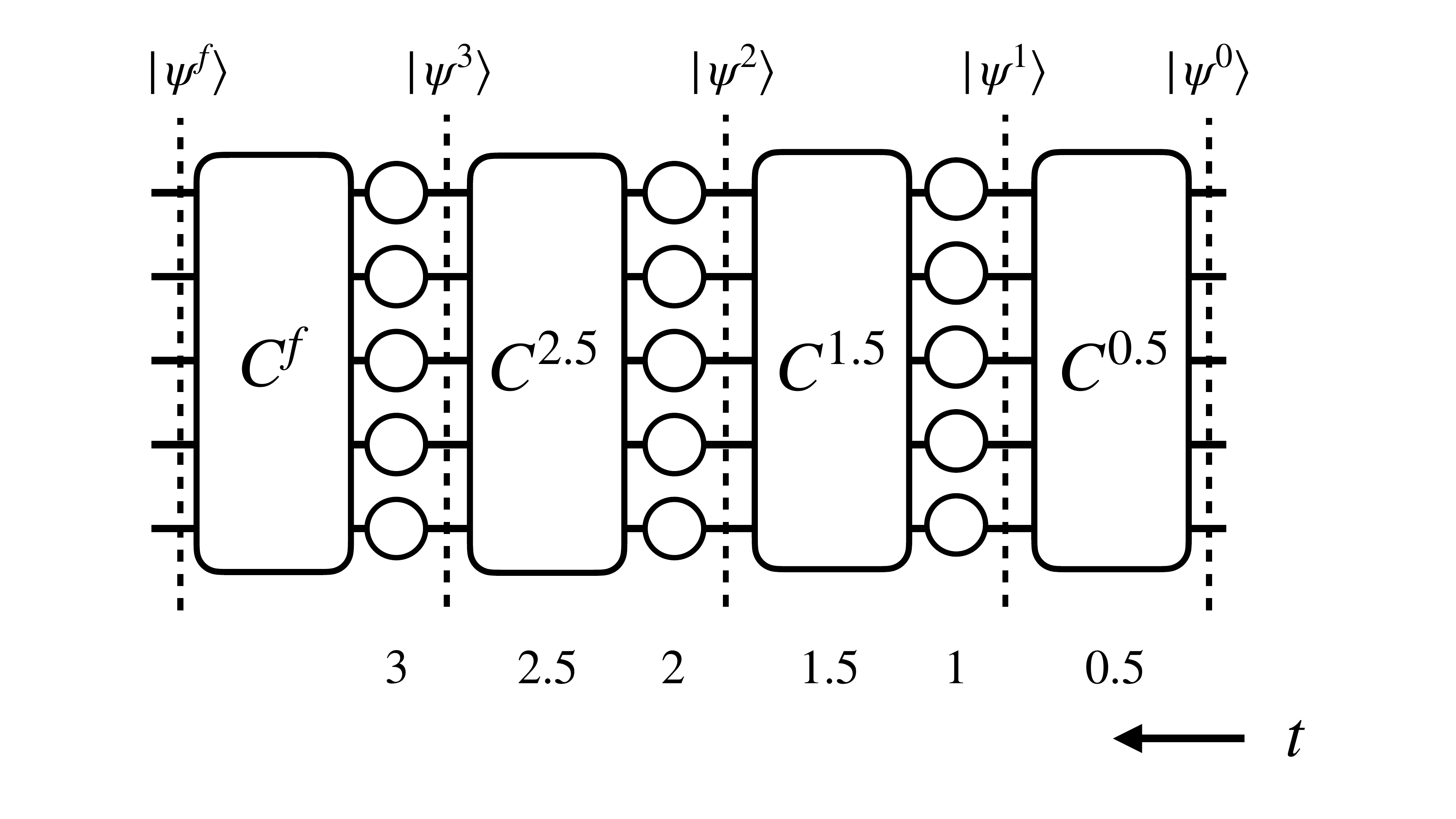} \caption{Schematic depicting our conventions for where we evaluate states and errors in the gauged SSC with OBC. Compared to Fig.~\ref{fig:state_convention}, we have appended a null operation (just the identity) to the end of the circuit, and removed a layer of spacetime locations from the start of the circuit. The reason for this is that we will think about errors as associated with the operation layers rather than the Pauli operators placed on spacetime locations.}\label{fig:state_convention_II} 
\end{figure}

We will identify the gauge configuration with $\tau_{i, x}^t = 1$ (or $\uparrow$) and $0$ everywhere else with a $Z_i$ error occurring just before the unitary part of $C^{t+0.5}$. This results in 
\begin{equation}\label{eq:error_identification_II}
    | \psi^{t+1} \rangle \propto \prod_{j} (I + (-1)^{o(q_j)} q_j) U^{t>} Z_i | \psi^t \rangle,
\end{equation}
and similarly $\tau_{i, z}^t = 1$ indicates an error $X_i$ instead. Note that since we ignore overall phases there is no ordering between the errors associated with $\tau_{i, z}$ and $\tau_{i, x}$. Finally, we identify the error $\tau_{l, m}^t = 1$, with $0$ everywhere else, with a measurement error in $q_l$, such that 
\begin{equation}\label{eq:error_identification_meas_II}
    | \psi^{t+1} \rangle  \propto \prod_{j} (I + (-1)^{o(q_j) + \delta_{j,l}} q_j) U^{t>} | \psi^t \rangle.
\end{equation}
Similar identification holds for $C^f$ (which is identity) and $C^{0.5}$, and the errors from each gauge field combines linearly. We note that we have not lost any information by trimming the spacetime locations at time $0$, since the identified circuit errors span exactly the same set of errors as what we originally identified Eq.~\ref{eq:error_identification_I}. One can think about these gauge fields as associated with gates rather than the idling time between them. Then, the reason we haven't lost any information is because no circuit can tell the difference between an error occurring in the idling time before the first gate, or just as the first gate is carried out.

The following definitions capture the analogous concept to circuit faults and Def.~\ref{def:fault_equiv}.

\begin{definition}[Gauge field fault]\label{def:gauge_fault}
    A fault (on the gauge fields) is non-trivial configuration of gauge fields, and may be associated with a vector $v \in V_1$ or a subset $F \subseteq G_{\mathcal{C}}$ of elementary circuit generators. It is associated with circuit faults as described in Eqs.~\ref{eq:error_identification_II}, \ref{eq:error_identification_meas_II}. 
\end{definition}

\begin{definition}[Gauge field fault equivalence]\label{def:gauge_fault_equiv}
    Two gauge field configurations are equivalent if their associated circuit faults are equivalent via Def.~\ref{def:fault_equiv}.
\end{definition}

To distinguish this notion of equivalence from the pre-existing notion of gauge fields being equivalent under Gauss law moves, we will refer to this as being equivalent as faults. Henceforth, when we say fault we are referring to Def.~\ref{def:gauge_fault} unless stated otherwise. The takeaway of this section is that the gauge fields capture exactly the same set of faults as the original spacetime locations did. In the next section, we will see how these are enforced by the Gauss laws.

Finally, we note that in our earlier discussion, we chose to identify errors in single spacetime locations directly with weight-one circuit errors. This follows the usual treatment of error identification in the discussion of spacetime codes. On the other hand, for a cleaner analogy to gauge theory, one might actually want to go back and re-do the identification of the original spacetime locations with errors in a slightly different way. One could instead try identifying a circuit error with the entire spackle of that error, which spreads the excitations in the matter fields across the circuit from either initial of final time boundary. The advantage of doing so is that it would allow us to draw the analogy between excitations in gauge fields and domain walls more tightly. However, this corrupts the notion of an error weight, as small circuit errors can now look like large weight errors in the spacetime code. We briefly explore this perspective in App.~\ref{app_sec:spackle_domain_wall}.

\subsubsection{Spacetime locations as Gauss laws}\label{sssec:gauss_laws}

Recall for a spacetime location the Gauss law operator takes the form, $S_{i, \alpha}^t = \prod_{ g \in  (G_{\mathcal{C}}^T)_{(i, \alpha, t), \cdot}} \tau^X[g]$. Under the identification of Def.~\ref{def:gauge_fault}, gauge field configurations that are equivalent under Gauss laws are also equivalent as faults.

\begin{prop}[Spacetime locations as Gauss laws]\label{prop:spacetime_gauss_law}
    Gauge configurations are equivalent under Gauss laws if and only if they are also equivalent as faults.
\end{prop}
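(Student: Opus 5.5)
The plan is to reduce the statement to Prop.~\ref{prop:fault_equiv} by exhibiting an explicit linear map from gauge-field configurations to circuit faults. Under this map, applying a Gauss law $S_s^t$ to a configuration becomes multiplication of the associated circuit fault by a single ECO. The proof then has three steps.

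\textbf{Step 1: the map.} Let $\Phi: V_1 \to \{\text{circuit faults}\}$ send a configuration to the circuit fault given by the identification of Eqs.~\eqref{eq:error_identification_II}, \eqref{eq:error_identification_meas_II}. A gauge field $\tau^{t+0.5}$ attached to an elementary propagator of $\alpha_n$ is sent to the Pauli $\alpha_n$ inserted just before $U^{t>}$. In spacetime-location language this is, up to equivalence, the dual Pauli placed on the location $(n,t)$. A measurement-slice field is sent to the $Z$ on the corresponding half-integer location. The final boundary fields $\tau^f$ are sent to Paulis on slice $T$.

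I will check that $\Phi$ is a linear bijection onto the span of single-location faults with support off slice $0$, modulo faults at slice $0$. Slice-$0$ faults are absorbed by the first propagator, as argued after Eq.~\eqref{eq:error_identification_meas_II}. The cleanest way to see this is to note that the gauge-field/spacetime-location incidence is exactly the transpose $G_{\mathcal{C}}^T$ with boundary modifications. A gauge field $\tau[g]$ is then naturally identified with the unique spacetime location that $g$ "sources" in the previous slice: $\eta^t(\alpha_n)$ for a propagator, the half-integer $Z$ for a measurement slice, and the slice-$T$ Pauli for $\tau^f$.

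\textbf{Step 2: Gauss laws versus ECOs.} For a spacetime location $(s,t)$, the Gauss law $S_s^t$ flips exactly the gauge fields $\tau[g]$ with $g \ni (s,t)$. I must show that $\Phi(S_s^t)$ is a circuit fault equivalent, in the sense of Prop.~\ref{prop:fault_equiv}, to the trivial fault, i.e.\ that it is a product of ECOs. Concretely, flipping these fields inserts, just before the layer $C^{t+0.5}$, a combination of Paulis. I will compare this combination with the single error at $(s,t)$ followed by propagation through $C^{t+0.5}$. The expected outcome is that $\Phi(S_s^t)$ equals, as a spacetime fault, the product of the ECOs $g \in (G_{\mathcal{C}}^T)_{(s,t),\cdot}$ expressed in the "sourced location" basis.

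The hard part is making this precise. Via the symplectic duality ($X \leftrightarrow Z$ swap in the labels, which is why $\tau_{i,x}=1$ means a $Z_i$ error), the Gauss law at $(s,t)$ should correspond to the operation "create an error at $(s,t)$, then cancel it by an equivalent configuration at the earlier layer". This is a fault whose bare effect is trivial, which Lemma~\ref{lem:prop_to_effect} lets me verify by propagating to slice $T$. This duality is the step I expect to be the main obstacle. The boundary cases need separate checks: $t=T$ uses the modified laws $(\tau^X)^f(\tau^X)^{T-0.5}$, and slice $0$ is absent.

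\textbf{Step 3: both directions.} The "only if" direction follows from Step 2 by linearity. For "if", suppose two configurations $v, v'$ are equivalent as faults. By Prop.~\ref{prop:fault_equiv}, $\Phi(v)+\Phi(v')$ is a product of ECOs. Since the Gauss laws are indexed by spacetime locations and generate exactly $\mathrm{im}(G_{\mathcal{C}}^T)$ in $V_1$, I need the correspondence of Step 2 to be surjective onto the group of trivial faults. That is, every product of ECOs, pulled back via $\Phi^{-1}$, must be a sum of Gauss laws.

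To get this, I will run the propagation of Lemma~\ref{lem:prop_to_effect} slice by slice. A trivial fault has empty $o_F$ and $p_F \in \overline{\rm ISG}(T)$. Sweeping forward, each time I push the slice-$t$ part of the configuration through $C^{t+0.5}$ I apply Gauss laws at the corresponding locations. At the end, only final boundary fields remain, carrying $p_F$ and flipped measurement fields carrying $o_F=0$. Using Cor.~\ref{cor:gauge_op_ISG}, $p_F$ can be generated causally by measurement slices. Those are Gauss moves through the $\tau^f$ and measurement fields, so the configuration is reduced to the vacuum.
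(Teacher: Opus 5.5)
Your reduction to Prop.~\ref{prop:fault_equiv} through the bijection $\Phi$ is the paper's strategy. But Step~2, which is the entire content of the paper's proof (App.~\ref{app_sec:gauss_laws}), is only sketched, and your heuristic for it does not cover all Gauss laws. Here is what must be computed.
The Gauss law of $\sigma^t_{i,x}$ flips three kinds of fields:
$\tau^{t+0.5}_{i,x}$, i.e.\ the error $\eta^t(Z_i)$;
the layer-$(t-0.5)$ propagator fields of those $\alpha_j$ with $[Z_i,U^{t-0.5}\alpha_j(U^{t-0.5})^{\dag}]\neq 0$;
and the measurement fields of those $q_l$ at $t-0.5$ with $[Z_i,q_l]\neq 0$.
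Since $[Z_i,U\alpha_jU^{\dag}]\neq0\iff[U^{\dag}Z_iU,\alpha_j]\neq0$, the dual Paulis sourced at slice $t-1$ multiply to exactly $(U^{t-0.5})^{\dag}Z_iU^{t-0.5}$. So the image is the propagator $g^{t-0.5}_{\rm prop}((U^{t-0.5})^{\dag}Z_iU^{t-0.5})$.
The Gauss law of a measurement location $\sigma^{t+0.5}_{m,l}$ contains \emph{no} measurement field, because the slice $\eta^{t+1}(q_l)$ is not supported there; it contains only layer-$(t+0.5)$ propagator fields. Its image is $\eta^t((U^{t>})^{\dag}q_lU^{t>})$, which is trivial because by Lem.~\ref{lem:centralizer_from_propagator} it equals $\eta^{t+1}(q_l)$ times a propagator. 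Your picture ``create an error at $(s,t)$ and cancel it one layer earlier'' does not describe this case.
You should exhibit these ECO decompositions directly, rather than testing triviality by propagating to slice $T$ with Lem.~\ref{lem:prop_to_effect}, for the reason given below. A smaller point: $\Phi$ is a bijection onto \emph{all} faults, slice $0$ included, because the layer-$0.5$ propagator fields are exactly the slice-$0$ errors. No quotient and no ``up to equivalence'' is involved.

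Step~3 rests on a false claim. After your forward sweep, a trivial configuration need not have vanishing measurement fields, nor a final Pauli in $\overline{\rm ISG}(T)$.
Take one qubit, with $C^{T-1.5}$ measuring $Z$ (label $b$) and $C^{T-0.5}$ measuring $X$ (label $a$). The configuration $\tau^f_{1,x}+\tau^{T-0.5}_{m,a}$ is a readout flip of $a$ plus a final $Z$. It equals the sum of the Gauss laws of $\sigma^T_{1,x}$, $\sigma^{T-1}_{1,x}$ and $\sigma^{T-1.5}_{m,b}$. Correspondingly, $\eta^T(Z)\eta^{T-0.5}(Z_a)=\eta^{T-1}(Z)\cdot g^{T-0.5}_{\rm prop}(Z)$ is a product of ECOs.
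This configuration has no propagator fields, so your sweep leaves it unchanged. Yet $o_F\neq0$ and $Z\notin\overline{\rm ISG}(T)=\langle X\rangle$. So the read-out of Lem.~\ref{lem:prop_to_effect} for a trivial fault need not have the form $(0,s)$ with $s\in\overline{\rm ISG}(T)$.
Your last move, turning measurement slices into `Gauss moves through $\tau^f$ and the measurement fields', also presupposes the surjectivity you are trying to prove.

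The repair is to use the Step~2 images instead of a sweep.
For each layer $t-0.5$, the images of the Gauss laws at slice $t$ (with $\tau^f$ when $t=T$) are $g^{t-0.5}_{\rm prop}((U^{t-0.5})^{\dag}PU^{t-0.5})$ for $P\in\{X_i,Z_i\}$. These span all propagators of that layer, since conjugation is bijective.
The measurement-location images then supply every measurement slice.
Hence $\Phi$ maps the span of the Gauss laws onto exactly $\langle G_{\mathcal{C}}\rangle$. Bijectivity of $\Phi$ together with Prop.~\ref{prop:fault_equiv} then gives both directions at once. This is the content behind the paper's one-line appeal to linearity.
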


This is proven in App.~\ref{app_sec:gauss_laws}. The main idea is as follows: We have established that with the stipulated boundary conditions and our error identification maps, gauge configurations are in one-to-one correspondence with spacetime faults; both map to the same set of circuit faults. From Prop.~\ref{prop:fault_equiv}, we know that two spacetime faults lead to equivalent circuit faults if and only if they are related to each other via multiplication by $\langle G_{\mathcal{C}} \rangle$. Hence, to show Prop.~\ref{prop:spacetime_gauss_law}, we show that the Gauss laws map gauge configurations to each other in a way that is isomorphic to $\langle G_{\mathcal{C}} \rangle$.


This can be stated more succinctly as follows: Classes of gauge field faults that are equivalent under Gauss laws in the gauged SSC are in one-to-one correspondence to equivalence classes of effects.

\subsubsection{Detectors as redundancies}\label{sssec:detectors_redundancies}

The final element of the gauged SSC are the Wilson loops, which come from redundancies in $G_{\mathcal{C}}$. The key result is that these are actually detectors. 

\begin{prop}[Detectors are redundancies]\label{prop:detectors_are_redundancies}
    When regarded as groups (see App.~\ref{app_sec:detectors_are_redundancies}), the detectors, $\mathcal{D}$, of a circuit $\mathcal{C}$ are isomorphic to redundancies, $\mathcal{R}$, in $G_{\mathcal{C}}$.
\end{prop}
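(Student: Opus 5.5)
I will prove the isomorphism by building an explicit map from redundancies to detectors and checking that it is a group homomorphism, injective, and surjective. Both sides are $\mathbb{Z}_2$-vector spaces under symmetric difference. Detectors form one under symmetric difference of measurement sets, because the product of two deterministic parities is deterministic. Redundancies form one as the left kernel of $G_{\mathcal{C}}$ over $\mathbb{F}_2$, i.e. the row space spanned by $R_{\mathcal{C}}$.

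The map is $\phi(R) = \{ q^{t>}_i : (g_{\rm ms})^{t>}_i \in R\}$, which keeps only the measurement slices appearing in $R$. This is manifestly linear.

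\textbf{Step 1: $\phi(R)$ is a detector.}
1. Collect the elements of $R$ by time, starting from the latest ECOs. Cancellation on the final slice forces the time-$T$ parts of the propagators and measurement slices in $R$ to multiply to identity.
2. Each half-integer location $(j,t+0.5)$ must also cancel. By Lem.~\ref{lem:centralizer_from_propagator}, this means the combined propagator $g^{t>}_{\rm prop}(p^t)$ in $R$ has $U^{t>}p^t(U^{t>})^\dagger$ commuting with every $q^{t>}_j$.
3. The $\eta^{t+1}$ part of that propagator equals the product of the measurement slices of $R$ at $t+0.5$ times the Pauli $p^{t+1}$ carried by the next layer's propagators.
4. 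Inducting backwards, and using that there are no locations at time $0$ with ISG$(0)=I$, the propagators at time $0.5$ must have trivial $\eta^0$ part. Hence $p^0=I$.
5. Unrolling forward, $\prod_{q\in\phi(R)} q$ is expressed, up to the unitaries, as a product of operators in successive ISGs. Concretely, each step exhibits the product of measured operators as an element of the propagated ISG, in the manner of Lem.~\ref{lem:ISG_update_prop}, Cor.~\ref{cor:gauge_op_ISG} and Lem.~\ref{lem:alt_ISG_prop}.
6. An operator in the ISG has a definite eigenvalue, so the product of outcomes is deterministic.

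An alternative route for this step is to show that $\phi(R)$ is flipped exactly by faults anticommuting with its spackle. One then checks that $W_R$ commutes with all Gauss laws, so equivalent faults trigger it identically. Combined with Prop.~\ref{prop:fault_equiv}, this shows the parity is invariant under ECOs and hence noiseless-deterministic.

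\textbf{Step 2: surjectivity.}
1. Given a detector $D$, induct on its latest time $t$, as in Lem.~\ref{lem:detector_ISG}. The product of its latest measurements lies in $U^{t-0.5}\overline{\rm ISG}(t-1)(U^{t-0.5})^\dagger$.
2. By Cor.~\ref{cor:gauge_op_ISG}, with ISG$(0)$ trivial, the element of $\overline{\rm ISG}(t-1)$ on slice $t-1$ is generated causally by ECOs.
3. Conjugating by $U^{t-0.5}$ corresponds to appending the propagator $g^{t-0.5}_{\rm prop}$. Its $\eta^{t-0.5}$ part is trivial, by Lem.~\ref{lem:centralizer_from_propagator} and because ISG elements commute with the measurements.
4. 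Multiplying by the measurement slices of $D$ at time $t-0.5$ then yields a product of ECOs equal to the identity, i.e. a redundancy $R$.
5. The earlier measurement slices used in generating the ISG element may not equal $D$'s earlier measurements. Their difference with $D$ is again a detector with strictly smaller latest time, so we induct and add the resulting redundancies.

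\textbf{Step 3: injectivity.} This is the main obstacle. I need $\ker\phi=0$: a redundancy built only from propagators must be trivial.
1. Suppose $R$ contains no measurement slices. Its time-$T$ cancellation forces the propagator Paulis at $T-0.5$ to multiply to $I$ on the $\eta^T$ slice.
2. Since $U^{t>}$ is invertible, the map $p\mapsto U p U^\dagger$ is injective. The elementary propagators at a fixed time are linearly independent via their $\eta^t(X_n)$, $\eta^t(Z_n)$ components, so $p^{T-1}=I$.
3. Sweep backwards. The $\eta^{t}$ part of layer $t-0.5$ must then cancel alone, giving trivial propagators at every layer.
4. The delicate point is at the output boundary: the boundary fields $\tau^f$ must not introduce extra redundancies. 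I would check this against the open-boundary prescription, where no new redundancies are added, and against the independence of the measurements within each layer. Independence ensures that measurement slices at one time are also linearly independent, so $\phi$ loses no information.
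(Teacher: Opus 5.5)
Your Steps 1 and 3 follow the paper. Step 1 is the paper's causal multiplication of a redundancy into successive ISG elements (Lems.~\ref{lem:meas_to_ISG_base_case} and \ref{lem:meas_to_ISG_inductive_case}). You package it more compactly through the recursion $p^{t+1}=\pm U^{t>}p^t(U^{t>})^{\dag}\,Q^{t>}$, with $Q^{t>}$ the product of $R$'s measured Paulis at $t+0.5$. Step 3 is exactly the paper's injectivity argument via Lem.~\ref{lem:no_prop_redundancies}, and you even supply the backward sweep that the paper leaves as obvious. Note that injectivity is the easy part, not the main obstacle.

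Surjectivity is where you genuinely diverge. The paper inducts on the \emph{number} of measurements in $D$:
it shows that no measurement strictly between $m_1$ and $m_2$ anticommutes with the propagated $m_1$ (Lem.~\ref{lem:squishing_first_measurement});
it passes to a modified circuit $\mathcal{C}'$ in which $m_1$ is absorbed into $m_2$;
and it lifts the redundancy of $\mathcal{C}'$ back to $\mathcal{C}$.
You instead induct on the \emph{latest time}. Lem.~\ref{lem:detector_ISG} and Cor.~\ref{cor:gauge_op_ISG} produce a redundancy $R_0$ whose latest measurement slices are exactly the last layer of $D$. Then $D\oplus\phi(R_0)$ is a detector, by Step 1 and closure under symmetric difference, with strictly earlier latest time.

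This is precisely the heuristic the paper sketches in the main text, made rigorous by your peeling step. It avoids both circuit surgery and the somewhat delicate squishing lemma. Its price is that everything rests on Lem.~\ref{lem:detector_ISG}, which the paper states but never proves, so you should supply that justification. Take the maximally mixed input. Conditioned on earlier outcomes, the state before $D$'s last layer is maximally mixed on the code space of $U^{t-0.5}\,{\rm ISG}(t-1)\,(U^{t-0.5})^{\dag}$. On that state, the parity of commuting measurements with product $Q$ is deterministic iff $\pm Q$ lies in that group.

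A few local fixes.
\begin{itemize}
\item In Step 2.3, the half-integer part of $g^{t-0.5}_{\rm prop}(s)$ vanishes because $U^{t-0.5}s(U^{t-0.5})^{\dag}=\pm Q$ is a product of mutually commuting measurements at $t-0.5$. Propagated ISG elements need not commute with the next layer of measurements in general.
\item In Step 1.4, $p^0=I$ comes from the \emph{presence} of the time-$0$ locations in $G_{\mathcal{C}}$, which only the $0.5$ propagators touch when ${\rm ISG}(0)$ is trivial. In the OBC-trimmed complex this constraint disappears, and spurious redundancies appear. For $q$ measured at $0.5$, the elementary propagators of $(U^{0.5})^{\dag}qU^{0.5}$ at time $0.5$ together with the slice $\eta^1(q)$ form one, and it maps to the non-detector $\{q\}$. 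That, rather than $\tau^f$, is the real boundary issue, and it is why the paper declines to add such redundancies.
\item Drop the ``alternative route'' for Step 1. ECO-invariance of a violation pattern does not imply noiseless determinism: in a circuit whose only measurement is a single $Z_1$ in the final layer, no ECO flips it, yet its outcome is random. The route also leans on Prop.~\ref{prop:wilson_like_detectors}, which presupposes the identification.
\item Independence of same-layer measurements plays no role in injectivity. You should instead record that measurement dephasers cannot appear in a redundancy (Lem.~\ref{lem:no_meas_dephasers}), since your kernel argument treats only propagators.
\end{itemize}
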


We give a few comments on the content of this proposition. First, one often already refers to detectors as redundancies. A detector is a set of measurements whose product has fixed parity in the noiseless case -- hence detectors are in fact characterized by redundant measurements. Here, we emphasize that when we say redundancies, we mean redundancies in the ECOs, in the sense of Def.~\ref{def:op_redundancy}. While ECOs characterize measurements, we note that this is not a given, as it required us to have chosen the ECOs to be over-complete in a specific way. As an example, suppose we did not introduce additional ancilla locations and instead used the formulation of either Ref.~\cite{Pesah_arXiv_2025_FT_transformations} or Ref.~\cite{Blackwell_arXiv_2025_distance_floquet}, which only keeps propagators of operators that commute with all measurements. Then choosing a basis for the gauge group/benign errors such that Prop.~\ref{prop:detectors_are_redundancies} holds would obstruct the error identification step and in turn Prop.~\ref{prop:spacetime_gauss_law}. 

Next, we should also note that a version of Prop.~\ref{prop:detectors_are_redundancies} is implicitly realized by the stabilizer flow formalism \cite{McEwen_quantum_2023_time_dynamics}, or in the ZX calculus \cite{Bombin_quantum_2024_unifying_FT}. This suggests that one could think about SSC as a discretization of stabilizer or Pauli flows, which in turn could be regarded as continuum versions of our lattice gauge theory. 

The proof of Prop.~\ref{prop:detectors_are_redundancies} is presented in App.~\ref{app_sec:detectors_are_redundancies}. We note one can quickly intuit Prop.~\ref{prop:detectors_are_redundancies} from Cor.~\ref{cor:gauge_op_ISG} -- the final measurement of a detector must be deterministic (conditioned on all earlier measurements) in a noiseless run of a circuit, which can only happen when we measure an element of an ISG. By Cor.~\ref{cor:gauge_op_ISG}, such an ISG element may be generated by multiplying ECOs from before that time-step, and the measurement slice associated with the final measurement is also equal to this product, so this form a redundancy. However, to show the full statement of Prop.~\ref{prop:detectors_are_redundancies}, we must be a little more careful. We briefly outline the proof strategy here. 

We start by constructing  an explicit map $\phi$ from redundancies to detectors, and show that it is a homomorphism. Each redundancy $R$ may be partitioned into measurement slices $M$ and elementary propagators $P$. The map then takes $\phi(R) = M$, with the measurement slices identified with the measurements in the circuit. To show that $M$ is indeed a detector, we show that multiplying out elements of a redundancy causally (i.e. in time-order) always results in an element of an ISG, such that the final measurement of $M$ is a measurement of an ISG element. Homomorphism follows from the fact that $\phi$ is essentially taking subsets and the group operations of both sides are symmetric difference.

After showing that $\phi$ is a homomorphism, we show in turn that it is injective and surjective. Injectivity can be shown by applying the fact that there are no propagator redundancies (Lem.~\ref{lem:no_prop_redundancies}). Surjectivity is a little more difficult, and we must make recourse to the ancestry formalism of Ref.~\cite{Blackwell_arXiv_2025_distance_floquet}.

Finally, having shown that detectors may be identified with redundancies and hence the Wilson loops of the gauged SSC, we must show that they interact with circuit errors via acting on gauge configurations in the expected way. 

\begin{prop}[Wilson loops acts like detectors]\label{prop:wilson_like_detectors}
    Let $W$ be a Wilson loop, and $| \phi \rangle \in H_{\tau}$ be a gauge configuration. Then $\langle \phi | W | \phi \rangle = -1$ if and only if the circuit fault associated with $| \phi \rangle$ violates the detector associated with $W$.
\end{prop}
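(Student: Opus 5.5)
Since $W$ is diagonal in the $\tau^Z$ basis and a gauge configuration $|\phi\rangle$ is a computational basis state, we have $\langle \phi | W | \phi\rangle = (-1)^{\sum_{g\in R}\phi(g)}$. The claim therefore reduces to a parity statement: the number of ECOs in the redundancy $R$ that are ``on'' in the fault configuration $F\subseteq G_{\mathcal{C}}$ is odd if and only if the circuit fault associated with $F$ flips the parity of the detector $\phi(R)=M$ from Prop.~\ref{prop:detectors_are_redundancies}. Both sides are linear in $F$ over $\mathbb{Z}_2$. The left side clearly is. The right side is too, because the bare effect of a product of faults is the product of bare effects, by Lem.~\ref{lem:prop_to_effect}, and detector parity is linear in $o_F$. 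It therefore suffices to check the statement on single gauge fields, that is, on weight-one configurations $\tau[g]=1$, one for each $g\in G_{\mathcal{C}}$.

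I would check the generators type by type using the error identification of Eqs.~\eqref{eq:error_identification_II} and \eqref{eq:error_identification_meas_II}.

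First consider a measurement-type gauge field $\tau^{t+0.5}_{l,m}$. The corresponding circuit error flips only the outcome of $q_l$. It therefore violates the detector exactly when $q_l\in M$, which happens exactly when the measurement slice $(g_{\rm ms})^{t>}_l\in R$. This matches the parity count.

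Next consider a data-type gauge field $\tau^{t+0.5}_{i,\alpha}$, which corresponds to a single Pauli $p$ (either $Z_i$ or $X_i$) inserted just before $U^{t>}$. Under the identification of Sec.~\ref{sssec:errors_as_gauge_fields}, the gauge field labelled $\alpha$ corresponds to the error that anticommutes with the Pauli $\alpha_i$. So $\tau[g]$ with $g=(g^{\alpha}_{\rm prop})^{t>}_i$ encodes exactly the symplectic pairing of $p$ with $\alpha_i$ at that location. I would compute the detector flip caused by $p$ via Prop.~\ref{prop:detectors_and_errors_I}: it equals the symplectic product of $\eta^{t}(p)$ with the spackle of $M$. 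I then need to show that this symplectic product equals $|R\cap\{g\}|$ mod 2, where $g$ is the propagator dual to $p$.

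This last identity is the main step, and I would attack it as follows. Write the redundancy as $\prod_{g'\in R} g' = I$. Multiplying the elements of $R$ causally, as in the proof outline of Prop.~\ref{prop:detectors_are_redundancies}, the partial product of all ECOs in $R$ associated with times up to $t-0.5$ is a Pauli supported on slice $t$ (together with half-integer $Z$'s). This Pauli is exactly the restriction of the detector's spackle to the time-$t$ slice, propagated backward. Its $\alpha_i$ component on slice $t$ must be cancelled by the next layer's propagator $(g^{\alpha}_{\rm prop})^{t>}_i$. So whether that propagator lies in $R$ is determined by whether the slice-$t$ operator contains $\alpha_i$. That is precisely the symplectic pairing of $p$ with the spackle slice, so the two counts agree.

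The main obstacle is making this causal-product bookkeeping precise. I would handle it by induction on $t$, relying on the uniqueness of the decomposition of $g^{t>}_{\rm prop}(\cdot)$ into elementary propagators and on Lem.~\ref{lem:no_prop_redundancies}. I would also check the boundary fields $\tau^f$ separately. These are trivially consistent: errors at the output touch no measurement, and the output boundary fields lie in no redundancy, since no new redundancies were added at the boundary.
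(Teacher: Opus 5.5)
Your proposal is correct and follows essentially the same route as the paper's proof. Both reduce by linearity to single gauge fields and treat measurement fields directly; for a propagator field, both use that the redundancy's operator on slice $t$ contains the component dual to that field exactly when the corresponding elementary propagator lies in $R$, so the associated single-qubit error anticommutes with it and flips the detector parity.

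Your causal-partial-product bookkeeping is more careful than the paper's ``propagate to the end of the circuit'' sketch and gives both directions at once. One small point: Prop.~\ref{prop:detectors_and_errors_I} is stated only as ``anticommutes $\Rightarrow$ flips''. The converse you use follows from the $\mathbb{Z}_2$-linearity of both functionals, since the detector's spackle is nonzero.
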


\subsection{Topology of the gauged SSC}\label{ssec:topology_gauged_SSC}

We now revisit the interpretation of the gauged SSC as a chain complex, 
\begin{equation}\label{eq:gauge_complex_w_boundaries}
        V_3 \xrightarrow{ \Omega S_{\rm SSC}^T} V_2 \xrightarrow{G_{\mathcal{C}}^{\rm obc}} V_1^{\rm obc} \xrightarrow{R_{\mathcal{C}}} V_0,
\end{equation} 
where $G_{\mathcal{C}}^{\rm obc}, V_1^{\rm obc}$ are $G_{\mathcal{C}}, V_1$ modified by the boundary conditions described in the previous section.

Recall that gauge field configurations and thus circuit faults can be identified with vectors $v \in V_1$. From Prop.~\ref{prop:wilson_like_detectors}, we know that a circuit fault does not trigger any detectors if and only if $v \in {\rm ker} R_{\mathcal{C}}$, and from Prop.~\ref{prop:spacetime_gauss_law}, we know that the circuit faults $v, v'$ are equivalent if and only if $v = v' + w$ for some $w \in {\rm Im} G_{\mathcal{C}}'$. Given Def.~\ref{def:fault_distance}. This tells us that equivalence classes of undetectable errors are given by the first homology class,
\begin{equation}
    H_1 = \frac{{\rm ker} R_{\mathcal{C}}}{{\rm Im} G_{\mathcal{C}}^{\rm obc}}.
\end{equation}
However, as discussed in Sec.~\ref{sssec:circuit_distance}, this does not give us a good definition of a circuit distance, as weight-one errors on the temporal boundaries, corresponding to errors happening before or after the circuit, are all in $H_1$. Note that while we categorically do not want to interpret the gauged SSC as a QEC code, we will colloquially refer to elements of ${\rm ker} R_{\mathcal{C}}$ as the `logicals' of the gauged SSC.

\subsubsection{Gauge fixing and internal distance}\label{sssec:gauge_fix_internal}

Fortunately, there is an easy fix -- we shall simply fix the boundary gauge fields to be $\tau[g] = 0$ for all $g$ on the temporal boundaries, as depicted on the left side of Fig.~\ref{fig:rep_code_gauge_fixed} for the repetition code circuit. This is analogous to imposing $\tau^Z[g]$ as a stabilizer. Analogously thinking about the Gauss law operators as stabilizers, we need to remove one constraint that anticommutes with $\tau^Z[g]$ for each $\tau^Z[g]$. By construction, these are just the Gauss laws on the time-slice $T$ and $1$. Removing these fields leads to the figure on the right side of Fig.~\ref{fig:rep_code_gauge_fixed}. We depict with red and yellow boxes inequivalent undetectable errors, corresponding to a logical $X$ error in the repetition code and a persistent readout error respectively. Since we have not included any input ISG, there can be no detectors supported on the removed gauge fields, all detectors are retained in this gauge fixed theory.

We modify the chain complex accordingly,
\begin{equation}\label{eq:internal_gauge_complex}
        V_3 \xrightarrow{ \Omega S_{\rm SSC}^T} V_2 \xrightarrow{G_{\mathcal{C}}^{\rm int}} V_1^{\rm int}  \xrightarrow{R_{\mathcal{C}}} V_0,
\end{equation} 
where the superscript ${\rm int}$, short for `internal' indicates the above modifications. We note that we don't have to actually modify $V_2$; simply set the relevant columns of $G_{\mathcal{C}}^{\rm int}$ to $0$. Since all Gauss laws and detectors still hold, and the remaining gauge fields still match the internal circuit errors one-to-one, the following theorem follows.

\begin{thm}\label{thm:homology_distance}
The internal distance of a circuit $\mathcal{C}$ is
\begin{equation}
d_{\mathrm{int}}(\mathcal{C})
= \min \left\{ \mathrm{wt}(v) \;:\; v \in \ker R_{\mathcal{C}} \setminus \operatorname{Im} G_{\mathcal{C}}^{\mathrm{int}} \right\},
\end{equation}
where $\mathrm{wt}(v)$ denotes the Hamming weight of $v \in V_1$.
\end{thm}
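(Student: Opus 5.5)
The plan is to show that the two quantities being minimized range over the same set with the same weights. On one side is the weight of an undetectable circuit fault with no support on time-slices $0$ and $T$ (Def.~\ref{def:fault_distance}). On the other is the Hamming weight of a vector in $\ker R_{\mathcal{C}}\setminus\operatorname{Im}G^{\rm int}_{\mathcal{C}}$. I would build a weight-preserving bijection between internal circuit faults and vectors of $V_1^{\rm int}$, and then check that the two defining conditions correspond under it.

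\emph{Step 1 (faults $\leftrightarrow$ $V_1^{\rm int}$).} Under the error identification of Eqs.~\eqref{eq:error_identification_II}, \eqref{eq:error_identification_meas_II}, each gauge field is a single-qubit $X$ or $Z$ error just before a layer $C^{t>}$, or a single readout flip. Gauge fixing removes exactly the boundary gauge fields, namely those of $C^f$ and those of the first layer. I will check that the surviving fields are exactly the weight-one faults on the spacetime locations of slices $1,\dots,T-1$ together with all half-integer measurement locations. Both sides use the same weight convention ($Y$ counts as weight two), so Hamming weight equals fault weight.

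\emph{Step 2 (undetectable $\leftrightarrow$ $\ker R_{\mathcal{C}}$).} By Prop.~\ref{prop:wilson_like_detectors}, a configuration $v$ violates the detector of $W_R$ exactly when $W_R=-1$, i.e.\ when $R\cdot v\neq 0$. Prop.~\ref{prop:detectors_are_redundancies} says the rows of $R_{\mathcal{C}}$ generate all detectors. So $v$ triggers no detector iff $v\in\ker R_{\mathcal{C}}$. No detector is lost by the gauge fixing: with trivial ${\rm ISG}(0)$, no redundancy is supported on the removed fields.

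\emph{Step 3 (not equivalent to the null fault $\leftrightarrow$ $v\notin\operatorname{Im}G^{\rm int}_{\mathcal{C}}$).} This is the step I expect to be the main obstacle. Prop.~\ref{prop:spacetime_gauss_law} says $v\sim 0$ as faults iff $v\in\operatorname{Im}G^{\rm obc}_{\mathcal{C}}$, but the statement uses $G^{\rm int}_{\mathcal{C}}$. Setting the boundary columns to zero and removing the Gauss laws on slices $1$ and $T$ changes the image, so I must show
\[
\operatorname{Im}G^{\rm obc}_{\mathcal{C}}\cap V_1^{\rm int}=\operatorname{Im}G^{\rm int}_{\mathcal{C}}.
\]
I would first try the following argument. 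An element of $\operatorname{Im}G^{\rm obc}_{\mathcal{C}}$ with no support on the boundary fields is a sum of Gauss laws whose boundary contributions cancel. The boundary field $\tau^f_{i,x/z}$ appears only in the Gauss law $S^T_{i,x/z}$, so cancellation on the output boundary forces $S^T$ to be absent. An analogous uniqueness of incidence at the first layer should force the slice-$1$ Gauss laws to be absent, or to enter only in combinations supported away from the boundary; whichever of these holds, the sum then lies in $\operatorname{Im}G^{\rm int}_{\mathcal{C}}$. The reverse inclusion is immediate. If the incidence argument at the first layer fails, the fallback is to reason directly with Lem.~\ref{lem:prop_to_effect} and Def.~\ref{def:fault_equiv}. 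An internal fault equivalent to the null fault flips no outcomes, and its output Pauli lies in ${\rm ISG}(T)$. I would then show it can be propagated to zero using only ECOs from interior layers, by absorbing the final ISG element through Cor.~\ref{cor:gauge_op_ISG}.

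With Steps 1–3, the set of internal undetectable errors maps bijectively and weight-preservingly onto $\ker R_{\mathcal{C}}\setminus\operatorname{Im}G^{\rm int}_{\mathcal{C}}$. Taking minima over the two sets then gives the claimed formula for $d_{\rm int}(\mathcal{C})$.
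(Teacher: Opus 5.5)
Your Steps 1 and 2 are the paper's argument, which it states in one line. You are also right that Step 3 is the real issue; the paper passes over it with ``all Gauss laws still hold''. But your argument for $\operatorname{Im}G^{\rm obc}_{\mathcal{C}}\cap V_1^{\rm int}=\operatorname{Im}G^{\rm int}_{\mathcal{C}}$ fails on the input side, because there is no uniqueness of incidence at the first layer.

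A first-layer propagator field $\tau^{0.5}_{i,x/z}$ sits in slice-$1$ Gauss laws. It also sits in the Gauss law $S_{(q,0.5)}$ of every first-layer measurement $q\in\mathcal{M}^{0.5}$ that its propagated Pauli anticommutes with. In fact $S_{(q,0.5)}$ consists only of such fields, so restricting to $V_1^{\rm int}$ reduces it to zero. The slice-$1$ Gauss laws restricted to these fields form an invertible $2n\times 2n$ matrix (the symplectic matrix of $U^{0.5}$). So each $S_{(q,0.5)}$ can be multiplied by slice-$1$ Gauss laws to cancel its boundary support. The product is the fault pattern of the measurement slice $\eta^1(q)$, i.e.\ the propagator fields at $1.5$ spelling out $q$ on slice $1$.

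These are boundary-free elements of $\operatorname{Im}G^{\rm obc}_{\mathcal{C}}$ that genuinely use slice-$1$ Gauss laws. If $G^{\rm int}_{\mathcal{C}}$ is formed literally by zeroing the slice-$1$ and slice-$T$ columns, they need not lie in its image, so your ``whichever of these holds'' step does not follow.

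Concretely, take $n=2$, $T=2$, with $C^{0.5}$ measuring $Z_1,Z_2$ and $C^{1.5}$ measuring $Z_1Z_2$.
The single field $\tau^{1.5}_{1,x}$ (a $Z_1$ error on slice $1$) equals the ECO $\eta^1(Z_1)$, so it is null by Prop.~\ref{prop:fault_equiv}. It flips no outcome, so it lies in $\ker R_{\mathcal{C}}$. Yet the surviving Gauss laws span only $\tau^{1.5}_{1,x}\tau^{1.5}_{2,x}$. The right-hand side of the theorem would then be $1$, whereas $d_{\rm int}=2$:
the only detector is the triple of all three measurements, which catches every single readout error and every single $X$ on slice $1$;
single $Z$'s on slice $1$ are null;
$X_1X_2$ on slice $1$ is undetectable.
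Your fallback fails for the same reason: absorbing the final ISG element via Cor.~\ref{cor:gauge_op_ISG} requires exactly the first-layer measurement slices.

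The missing idea is to carry out the gauge fixing properly, as the paper's stabilizer analogy suggests. Before discarding the slice-$1$ and slice-$T$ Gauss laws, replace each $S_{(q,0.5)}$ by its product with the slice-$1$ Gauss laws that cancel its first-layer support. No refactoring is needed at the output, where, as you note, $\tau^f_{i,x/z}$ sits only in $S^T_{i,x/z}$. With $G^{\rm int}_{\mathcal{C}}$ defined this way, the invertibility above shows that every boundary-free combination of open-boundary Gauss laws is a combination of refactored and untouched ones. Hence $\operatorname{Im}G^{\rm int}_{\mathcal{C}}=\operatorname{Im}G^{\rm obc}_{\mathcal{C}}\cap V_1^{\rm int}$ by construction, and Step 3 follows directly from Prop.~\ref{prop:spacetime_gauss_law}.

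The two versions of $G^{\rm int}_{\mathcal{C}}$ coincide whenever each $q\in\mathcal{M}^{0.5}$ is, by itself, the first-layer part of some detector, as in the paper's repeated-measurement examples. That is why the issue does not show up there.
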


\begin{figure}[ht!]
\centering 
\includegraphics[width=0.9\linewidth]{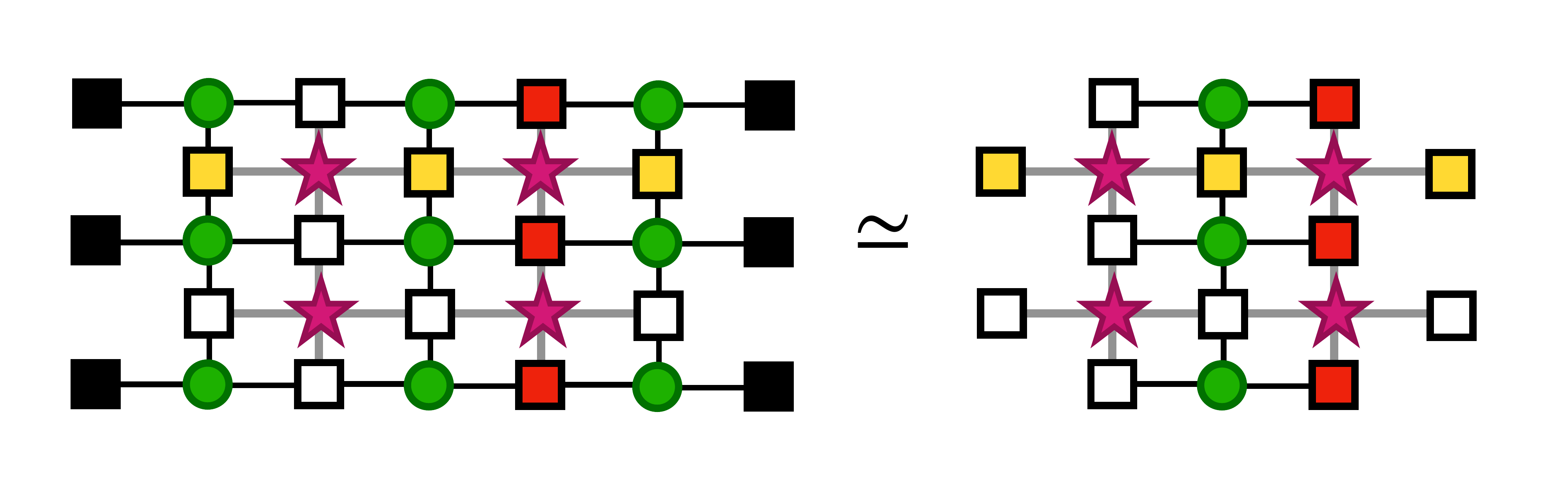} \caption{ The gauge theory associated with the circuit of Fig.~\ref{fig:repeated_meas_rep_code_circs}, with the open boundary fields fixed to be $0$, and the relevant Gauss laws removed. The set of gauge fields colored in red represents a logical error resulting from errors on qubits, and the set of gauge fields colored in yellow represents a logical error resulting from errors in readouts. The internal distance of the circuit in Fig.~\ref{fig:repeated_meas_rep_code_circs} is hence $d_{\rm int}(\mathcal{C}) = 3$.
}\label{fig:rep_code_gauge_fixed} 
\end{figure}

\subsection{Temporal boundaries, local and global redundancies, and QEC experiments}\label{ssec:boundaries}

Up till now, we have not considered the possibility of an input ${\rm ISG}(0)$ in the gauged SSC. We now consider that possibility. The formalism extends in a natural way, and the crucial claims, Prop.~\ref{prop:spacetime_gauss_law}, Prop.~\ref{prop:detectors_are_redundancies}, Thm.~\ref{thm:homology_distance} still hold up to small modifications of the input boundary conditions. However, conceptually separating the input ISG allows us to discuss a different kind of redundancy, Def.~\ref{def:global_redundancy}, that can arise in this situation.

To anchor our discussion, let us first consider the case of a standard memory experiment, comprising the following steps: (1) Prepare a logical eigenstate of the desired QEC code. (2) Carry out repeated syndrome measurements. (3) Measure the output of the circuit, usually transversally, to see if the eigenstate survives. 

Strictly speaking, the SSC for a memory experiment contains no logical qubits. To see this, recall the prescription for the SSC is to put $\eta^0(s)$ for all $s \in {\rm ISG}(0)$ into the gauge group. Recall also that the rank of an ISG can never decrease under measurements and unitaries. In the case of a memory experiment, ${\rm ISG}(0)$ is full rank, so by Cor.~\ref{cor:gauge_op_ISG}, the gauge group of the SSC contains a maximally spanning set of commuting operators in every time-step. This forms a maximally spanning set of operators over all spacetime locations. Hence any operator that commutes with the gauge group must be contained in the gauge group, and so the resulting SSC has no logical operators. To arrive at an SSC with a distance and hence logical operators, we must single out the logical operators and exclude them from the gauge group when constructing the SSC.

What happens in the gauged SSC for such a memory experiment? Specifying an input ISG via ECOs introduces the possibility of a new kind of redundancy, which we will refer to as a \textit{global} redundancy. 
\begin{definition}[Global redundancy]\label{def:global_redundancy}
    A global redundancy $R_{\rm global} \subseteq G_{\mathcal{C}}$ is a redundancy that:
    \begin{enumerate}
        \item contains ECOs associated with every time-step in the circuit, and
        \item cannot be reduced (via the symmetric difference with other redundancies) into a redundancy that does not have support on every time-step.
    \end{enumerate}
    Conversely a local redundancy is any redundancy that is not a global redundancy.
\end{definition}
For physical gauge theories, it is more natural to define a global redundancy as one whose support scales extensively in system size. Since we only explicitly discuss temporal boundaries in this work, it is more useful to capture the analogous definition arising from scaling $T$, which only accounts for one dimension of the system size.

In the memory experiment, elements of the input ISG that are stabilizers only give rise to local redundancies, whereas elements of the input ISG associated with logical operators give rise to global redundancies. This is in fact a general principle: if an operator survives the entire circuit before it is measured, then we probably were not using it for QEC as a detector. Hence, this leads to a natural way to modify our gauging procedure to account for input states, namely: \textit{We will only identify local redundancies as Wilson loops.} Global redundancies on the other hand, can be thought of as ways to specify boundary conditions. 

Let us now compare to the earlier approach of removing logical operators of the input code from the SSC by hand. There are two advantages to instead characterizing these degrees of freedom as global redundancies in the gauged SSC. First, this allows us to look at a circuit without committing to an input code -- the logical degrees of freedom then arise naturally from the dynamics of the circuit, in line with the modern view of dynamical QEC \cite{Hastings_Quantum_2021_floquet_codes, McEwen_quantum_2023_time_dynamics}. Second, by excluding global redundancies from $R_{\mathcal{C}}$, they naturally become `logicals' of the gauged SSC; this is true regardless of whether we view them as ECOs or whether the discretization of the circuit has an odd or even number of time steps. We emphasize that \textit{distinguishing global and local redundancies is a physically-motivated choice that we are making.} It is completely consistent to use global redundancy as a detector if desired, although it is almost never useful in error correction\footnote{One can imagine using the gauged SSC to analyze gadgets, eg. flag gadgets \cite{Chao_PRL_2018_Flag_I, Chao_PRXQ_2020_flag_II, Debroy_PRA_2020_extended_flag}. When examining these sub-circuits alone, certain detectors arise as global redundancies. However, since we never scale these circuits with time, it is not useful to think about these as global redundancies.}. On the other hand, when we consider the application of the gauged SSC to learning theory in Sec.~\ref{ssec:learning_Pauli_noise}, we will find it useful to treat global and local redundancies on the same footing.

Next, we will review and modify the construction of the boundaries of the gauged SSC again to obtain several possible boundary conditions beyond the original open boundary conditions, with each corresponding to a different kind of QEC experiment. We note that Thm.~\ref{thm:homology_distance} holds for all these cases because all these boundary conditions respect error identification, equivalence under Gauss laws, and the relationship between errors and detectors.

\subsubsection{Open boundary conditions and memory experiments}

In light of the above discussion, let us review the construction of open boundary conditions for the gauged SSC again. We do not need to modify the output boundary fields. Now that we have introduced an input ISG, we simply move the input boundary fields one time-step back, with no change in their interpretation as errors happening before the circuit. 

There are two subtleties to address here: First, as discussed in the previous section, the input ISG may now introduce global redundancies. We might introduce a stabilizer that is never measured until the end of the circuit, in correspondence with the logical operators in a memory experiment. We reiterate that in general we will exclude global redundancies from $R_{\mathcal{C}}$. 

Second, one thing we have emphasized is the locality of the errors identified with gauge fields. An input ISG can potentially spoil this. In the bulk of the circuit, errors on gauge fields associated with measurements are local, as they only take a single-bit readout error. However, errors in the stabilizers in an input ISG are only local if we assume that these were directly measured just before the circuit, which is often not the case. In particular, an error in an input stabilizer corresponding to the logical operator of some desired input code may correspond to a very non-local error physically. 

There are several ways one might want to deal with this. First, one can simply discard the gauge field associated with the logical stabilizer. Alternatively, in a memory experiment such a stabilizer is often prepared with local measurements. For instance, with a CSS code, we may prepare a product of all $Z_i$ eigenstates and then measure all $X$-type stabilizers, in order to prepare an eigenstate of all $Z_L$-type logicals. We could include this entire process in the boundary of the gauged SSC -- the result is a gauged SSC very similar to the original open boundary conditions, except that we only have open boundaries of the $Z$-type. Note that we still have to exclude the global redundancies arising in this setup.

Finally, one can leave this open-ended, and say this is left to the task of concatenating (in time) gauged SSCs. In spirit this is what we were imagining doing when we excluded input ISGs in the first place, but we leave the exploration of this possibility to future work.

\subsubsection{Rough boundary conditions and noiseless layers}

Instead of introducing input ISGs as gauge fields, we may instead want to introduce a layer of noiseless state preparation or a layer of noiseless measurements after the circuit. This is useful for making contact with standard models for fault tolerance, which assume a layer of perfect correction at the end \cite{Aliferis_arXiv_2005_concatenation, Gottesman_PRA_1998_FTQC, Gottesman_arXiv_2022_opportunities_FT}. These potentially form detectors on the boundaries, as the first layer of noisy measurements may be compared to the noiseless input stabilizers, and similar for the output. To account for this, we start with the open boundary conditions for the input ISG-free case. We then introduce the relevant detectors on the boundaries as though we had introduced an additional layer of virtual measurements before or after the circuit. These detectors comprise all the checks that together with the virtual measurements would have formed a redundancy, but without the virtual measurement itself. This forms something analogous to a rough boundary in a surface code \cite{Bravyi_arXiv_1998_surface_code}, see Fig.~\ref{fig:rep_code_other_bds}a for an example of the repetition code circuit of Fig.~\ref{fig:repeated_meas_rep_code_circs} with a noiseless layer of measurements after the circuit.

\subsubsection{Periodic boundary conditions and entanglement fidelity experiments}

The memory experiment resulted in minimal change to the boundary conditions we started out with. We now discuss the case of `entanglement fidelity' experiments \footnote{See for instance Remark 2.8 in Ref.~\cite{Blackwell_arXiv_2025_distance_floquet}}. In addition to the circuit of interest $\mathcal{C}$, we prepare a reference system with the same number of qubits that is maximally entangled with the input into $\mathcal{C}$. The reference system and state preparation is assumed to be noiseless. After running $\mathcal{C}$, we perform Bell measurements between the two systems, which are also assumed to be noiseless. This evaluates the entanglement fidelity, which is useful because it provides bounds for other performance metrics \cite{Gilchrist_PRA_2005_distance_measures, Nielsen_arXiv_1996_EF}, and is often analytically tractable to work with \cite{Zheng_PRL_2024_near_optimal}.

The discretization of such a circuit is depicted in Fig.~\ref{fig:EF_expt}. The main circuit is discretized in the usual way. Since no operations are carried out on the ancilla, we introduce just one layer of spacetime locations to them. Finally, we introduce state preparation and measurement ECOs of the form
\begin{equation*}
    \eta^{0}(X_i X_i^{(a)}), \eta^{0}(Z_i Z_i^{(a)}), \eta^{T}(X_i X_i^{(a)}), \eta^{T}(Z_i Z_i^{(a)}),
\end{equation*}
for each qubit $i$ in the circuit, with superscript $(a)$ indicating its ancillary counterpart. Schematically, on the level of abstraction in Fig.~\ref{fig:EF_expt}, this looks the same for both $X$ and $Z$ in the state preparation and measurement parts. 

Next, we can perform a similar gauge fixing step to Sec.~\ref{sssec:gauge_fix_internal}. The state preparation and measurement steps are associated are noiseless, so we fix the associated gauge fields to be $0$. Recall that for each of these fields we fix, we must refactor Gauss laws so that there is remains only one independent Gauss law that is spoiled by the gauge fixing, and we remove this Gauss law from the picture. The result is that the Gauss laws associated with noiseless ancilla locations disappear and the Gauss laws associated with the initial and final layer of matter qubits are merged into one set of Gauss laws, which look like
\begin{equation*}
    S_{i, x/z} = (\tau^X)^{T-0.5}_{i, x/z} (\tau^X)^{0.5}_{i, x/z}.
\end{equation*}
This effectively identifies the initial layer of matter fields with the final layer, giving us \textit{periodic boundary conditions}, see eg. Fig.~\ref{fig:rep_code_other_bds}b. Finally, we add that as long as the circuit $\mathcal{C}$ does not enforce a full rank ISG on an ISG-free input, the periodic boundary conditions will induce a set of global redundancies. These correspond to logical degrees of freedom that are allowed to propagate through the circuit, but should be excluded as detectors (in fact, the violation of these global redundancies directly yields the result of the entanglement fidelity experiment). 

\begin{figure}[ht!]
\centering 
\includegraphics[width=0.5\linewidth]{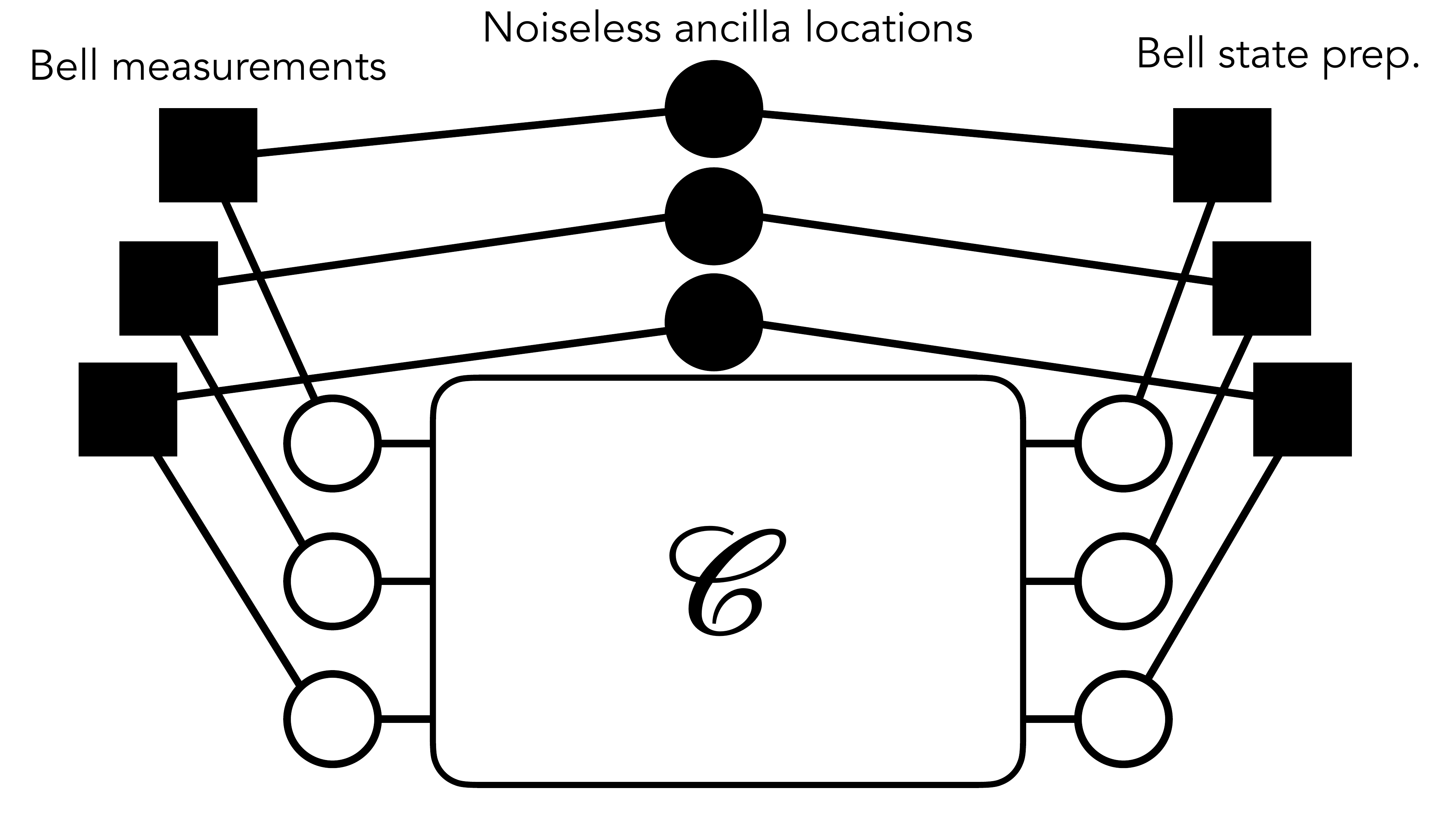} \caption{ 
Schematic depiction of the gauged SSC associated with an entanglement-fidelity experiment. The circuit $\mathcal{C}$ acts only on the system qubits (which is itself discretized and has internal fields and so on), while a noiseless reference system is prepared in Bell pairs with the input and Bell measurements are performed on the output. The ECOs associated with state preparation measurement and preparation have the same form for $X$ and $Z$ type operators, hence the lack of color. Since the reference locations, Bell-state preparation, and Bell measurements are assumed noiseless, their associated gauge fields are fixed to zero. After this gauge fixing, the initial and final matter fields may be identified, producing periodic temporal boundary conditions for the system circuit.
}\label{fig:EF_expt} 
\end{figure}

\subsubsection{Floquet codes}\label{sssec:floquet_codes}

We now discuss Floquet codes \cite{Hastings_Quantum_2021_floquet_codes} as an interesting application of the gauged SSC with periodic boundary conditions. We begin by recalling several key concepts from Ref.~\cite{Blackwell_arXiv_2025_distance_floquet}. In general, a Floquet code is defined by a periodic sequence of measurements. Starting from an arbitrary input state, the Floquet code is said to be in the \textit{steady stage} when the rank of the ISG stops increasing; in other words, when the ISGs of the circuit form the periodic set that will repeat indefinitely as the sequence of measurements is carried out. We restrict to \textit{bounded-inference} Floquet codes, which means the sign of any element in an ISG can be inferred from measurements from only a finite number $\mu$ of earlier time-steps; this is also referred to as the \textit{ancestry} of the element. Ref.~\cite{Blackwell_arXiv_2025_distance_floquet} then defines the code distance of a bounded-inference Floquet code as the minimal weight of a \textit{non-benign} undetectable error inserted in the steady stage. This is justified by their proof that such an error is always equivalent to a logical operator of some ISG within $\mu$ time-steps of the latest-time support of the error. 

 We briefly discuss the concept of a benign error and how it shows up in our work. Recall that a benign error is essentially the same set of operators as our ECOs (although Ref.~\cite{Blackwell_arXiv_2025_distance_floquet} did not include ancillary measurement locations). For readers familiar with the benign error formalism, we first note that all operators generated by the ECOs that are supported only on the integer-time locations are exactly the set of benign errors of a circuit. For a finite-time circuit (which is not considered in Ref.~\cite{Blackwell_arXiv_2025_distance_floquet}), this is not true on the boundaries, but becomes true if we identify time-steps $T$ with $2T$ and so on, which is exactly what the periodic boundary conditions do. 
    
The benign error formalism characterizes measurement errors via pairs of Paulis just before and just after the measurement, such that the earlier error propagates into the later error, and this Pauli anti-commutes with the measurement at the time of measurement (whether this is the Pauli itself $p$ or the conjugated $U^{t>} p (U^{t>})^{\dag}$ depends on convention). The ECOs encode this fact by explicitly making such errors equivalent to measurement errors via the half-integer time locations. As such, the ECOs (modulo measurement dephasers as usual) and benign errors really characterize exactly the same set of errors.

With this blitz review of Ref.~\cite{Blackwell_arXiv_2025_distance_floquet}, we can now discuss how the gauged SSC with periodic boundary conditions can be used to study Floquet codes. We pick some $T \gg \mu$ for convenience. Due to the fact that ISG elements are can be causally formed by ECOs (Cor.~\ref{cor:gauge_op_ISG}) and the time wraps around, every time-step of the SSC with periodic boundary conditions will have the ISG elements of the steady stage, so the gauged SSC indeed characterizes the steady stage of a Floquet code. Since $T \gg \mu$, the ancestry of these ISG elements never see the temporal boundaries of the system. Hence, the SSC with periodic boundary conditions contains all benign errors associated with the steady stage, which then map into Gauss laws in the gauged SSC. A non-benign undetectable error in the steady stage is thus a gauge field configuration that maps to an undetectable error which is not equivalent to the null error. Hence, if we have chosen a sufficiently large $T$, the distance of the gauged SSC with periodic boundary conditions will be the distance of the Floquet code, and upper bounds it otherwise.

\begin{figure}[ht!]
\centering 
\includegraphics[width=0.9\linewidth]{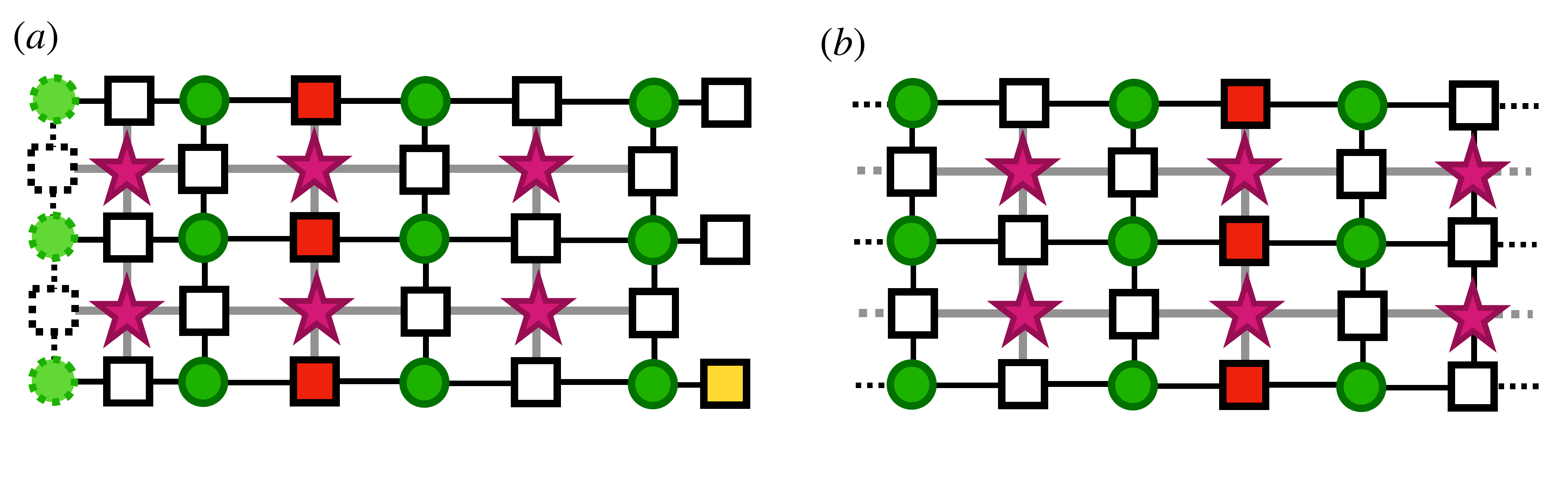} \caption{ The gauge theory associated with the repeated-measurement circuit of Fig.~\ref{fig:repeated_meas_rep_code_circs} under alternative temporal boundary conditions. \textbf{(a)} Rough/output boundary conditions obtained by adding a noiseless layer of virtual measurements after the circuit. Boundary detectors compare the final noisy syndrome data with this noiseless layer. \textbf{(b)} Periodic temporal boundary conditions (see Fig.~\ref{fig:EF_expt}). Colored (red and yellow) gauge field configurations indicate representative undetectable error classes. Note that for periodic boundary conditions, we (choose to) identify global redundancies with logical degrees of freedom rather than detectors.
}\label{fig:rep_code_other_bds} 
\end{figure}

\subsection{Further examples}\label{ssec:examples}

We conclude this section by providing several examples of gauged SSCs. We emphasize that these are all well-known fault-tolerant circuits that we are simply re-interpreting in terms of the gauged SSC.

\begin{figure}[ht!]
\centering 
\includegraphics[width=0.5\linewidth]{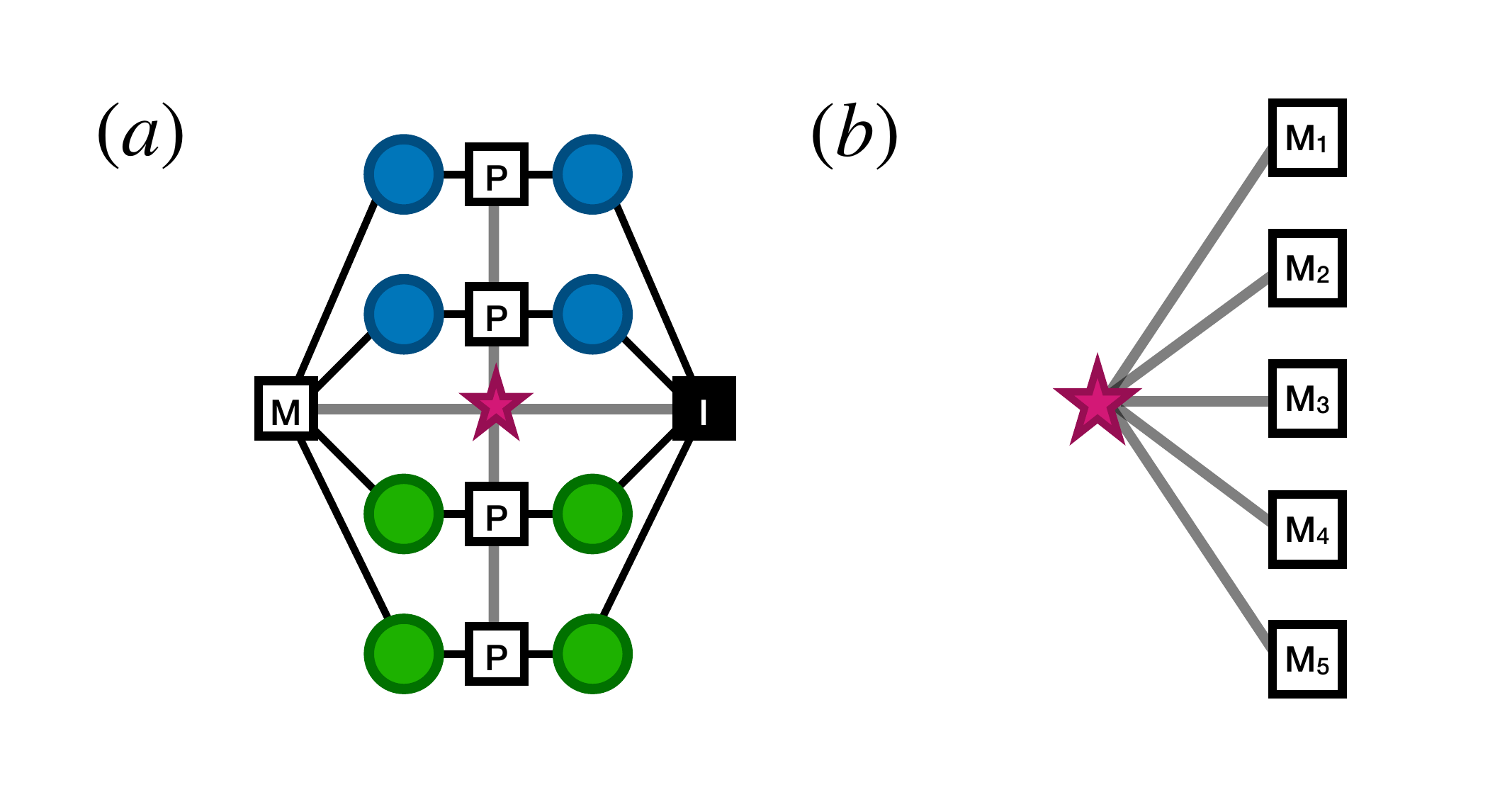} \caption{Local detector structures in the gauged SSC for the data-syndrome code example presented in Sec.~\ref{ssec:examples}. Squares labelled `M' are gauge fields associated with measurement slices whereas those labelled `P' are associated with propagators. \textbf{(a)} Detector formed by measuring input stabilizers, which comprise of the measurement and the propagators in its support. The input stabilizer, labelled `I' is assumed to be noiseless, and so does not participate in the detector. \textbf{(b)} Detector formed from the redundancy comprised of all measurement slices. In both panels, the magenta star marks the Wilson loop associated with the detector.
}\label{fig:data_syndrome_example} 
\end{figure}

\textbf{Data Syndrome codes -- } We start with the example of a data syndrome code \cite{Ashikhmin_IEEE_2020_data_syndrome_codes, Fujiwara_PRA_2014_OG_data_syndrome}. In particular, we consider the five qubit code, with stabilizers $X_i Z_{i+1} Z_{i+2} X_{i+3}$ for $i = 0, 1, 2, 3$ and indices taken modulo $5$  \cite{NC_2019}. In the symplectic representation, the check matrix is:
\begin{equation}\label{eq:H5qb0}
    H^{\rm 5 qb}_0 = \left(
\begin{array}{ccccc|ccccc}
1 & 0 & 0 & 1 & 0 & 0 & 1 & 1 & 0 & 0\\
0 & 1 & 0 & 0 & 1 & 0 & 0 & 1 & 1 & 0\\
1 & 0 & 1 & 0 & 0 & 0 & 0 & 0 & 1 & 1 \\
0 & 1 & 0 & 1 & 0 & 1 & 1 & 0 & 0 & 0 \\
\end{array}
\right)
\end{equation}
When these syndromes are measured noiselessly, the code is able to correct any single qubit error. This is not the case when there is the possibility of measurement errors. In the phenomenological model, this can be fixed by measuring, in addition to the four stabilizers, the linear combination of each stabilizer, which presents the simplest example of a data syndrome code \cite{Fujiwara_PRA_2014_OG_data_syndrome}. One may write the new check matrix as, 
\begin{equation}\label{eq:H5qb1}
    H^{\rm 5 qb}_1 = \left(
\begin{array}{ccccc|ccccc}
1 & 0 & 0 & 1 & 0 & 0 & 1 & 1 & 0 & 0\\
0 & 1 & 0 & 0 & 1 & 0 & 0 & 1 & 1 & 0\\
1 & 0 & 1 & 0 & 0 & 0 & 0 & 0 & 1 & 1 \\
0 & 1 & 0 & 1 & 0 & 1 & 0 & 0 & 0 & 1 \\
0 & 0 & 1 & 0 & 1 & 1 & 1 & 0 & 0 & 0 \\
\end{array}
\right)
\end{equation}
An alternative representation is to regard each of the syndromes as associated with a classical bit. One appends a column to the check matrix for each row of Eq.~\ref{eq:H5qb1}, and then eliminates the support on the first ten columns from the last row to obtain.
\begin{equation}
    H^{\rm 5 qb}_2 = \left(
\begin{array}{ccccc|ccccc|ccccc}
1 & 0 & 0 & 1 & 0 & 0 & 1 & 1 & 0 & 0 & 1 & 0 & 0 & 0 & 0\\
0 & 1 & 0 & 0 & 1 & 0 & 0 & 1 & 1 & 0 & 0 & 1 & 0 & 0 & 0\\
1 & 0 & 1 & 0 & 0 & 0 & 0 & 0 & 1 & 1 & 0 & 0 & 1 & 0 & 0\\
0 & 1 & 0 & 1 & 0 & 1 & 0 & 0 & 0 & 1 & 0 & 0 & 0 & 1 & 0\\
0 & 0 & 0 & 0 & 0 & 0 & 0 & 0 & 0 & 0 & 1 & 1 & 1 & 1 & 1\\
\end{array}
\right) =: \left( \begin{array}{c|c|c}
    H_X & H_Z  & H_D
\end{array} \right),
\end{equation}
which reveals the last block $H_D$ of the matrix as a classical code on the syndromes themselves, hence the name \textit{data syndrome code}. This particular code corrects all single qubit qubit or measurement errors that could occur on a pristine code state being measured.

Let us now construct the circuit describing this data syndrome code in the phenomenological model. The circuit will comprise two time steps $0, 1$, and a single Clifford operation,
\begin{equation*}
    C^{0.5} = (I, \mathcal{M}^{0.5} = \{X_i Z_{i+1} Z_{i+2} X_{i+3}: i = 0, 1, 2, 3, 4, 5\} ),
\end{equation*}
operating on $5$ qubits, with the spatial index taken modulo $5$. Instead of going through ECOs, we directly use the language of gauge and matter fields. These yield the checks,
\begin{gather*}
    \tau^{0.5}_{x, i} \leftrightarrow \sigma^1_{x, i} \sigma^0_{x, i} , \quad \tau^{0.5}_{z, i} \leftrightarrow \sigma^1_{z, i} \sigma^0_{z, i}, \quad
    \tau^{0.5}_{m, i} \leftrightarrow \sigma^1_{x, i} \sigma^1_{z, i+1} \sigma^1_{z, i+2} \sigma^1_{x, i+3}, \quad 
    i = 0, 1, 2, 3, 4
\end{gather*}
which are $X$-type and $Z$-type propagators and measurement slices respectively. In the bulk (i.e. before introducing boundaries), this has a single redundancy, given by 
\begin{equation}\label{eq:5qb_metacheck_red}
    R_4 := \{\tau^{0.5}_{m, i}: i = 0, 1, 2, 3, 4 \}.
\end{equation}
We can now gauge the code and introduce boundaries. For the output boundaries, we introduce the open boundary fields $\tau^{1.5}_{i, x/z}$. For the input boundary, we introduce a rough boundary, corresponding to an independent set of stabilizers, like Eq.~\ref{eq:H5qb0}, which appear as the Gauss law moves
\begin{equation*}
    \tau^{-0.5}_{I, i} \leftrightarrow \sigma^{0}_{i, x} \sigma^{0}_{i+1, z} \sigma^{0}_{i+2, z} \sigma^{0}_{i+3, x}, \quad i = 0, 1, 2, 3,
\end{equation*}
where we use the subscript $I$ for input. In addition to the redundancy corresponding to Eq.~\ref{eq:5qb_metacheck_red}, these introduce $4$ more redundancies in the gauge fields, 
\begin{equation*}
    R_i := \{\tau^{-0.5}_{I, i}, \tau^{0.5}_{m, i}, \tau^{0.5}_{x, i}, \tau^{0.5}_{z, i+1}, \tau^{0.5}_{z, i+2}, \tau^{0.5}_{x, i+3} \}.
\end{equation*}
Finally, we may eliminate the matter fields and gauge fix the noiseless degrees of freedom. We take $\tau^{-0.5}_{I, i}$ to be noiseless, since we are considering the case of an pristine code state subject to data error, and we take $\tau^{1.5}_{x/z, i}$ to be noiseless, since we don't care for errors after the end of the circuit. This also eliminates all the Gauss laws associated with $\sigma^{t}_i$ for integer $t$. In terms of Gauss laws, we are left with the Gauss laws associated with measurement, 
\begin{equation*}
    G^{\rm int}_{\mathcal{C}} = \left(
    \begin{array}{c|c|c}
          H_Z & H_X & \mathbf{0}_{5 \times 5} \\
    \end{array}
    \right)^T,
\end{equation*}
where the first five rows correspond to the $X$-type propagators, the next five correspond to the $Z$-type propagators, and the remaining correspond to the measurement slices. Finally, the redundancy matrix is 
\begin{equation}
    R_{\mathcal{C}} = \left(
    \begin{array}{c|c|c}
          H_X & H_Z & H_D \\
    \end{array}
    \right) = H^{\rm 5qb}_{2},
\end{equation}
where each row corresponds to a column of $G^{\rm int}_{\mathcal{C}}$. We see that the gauged SSC cleanly reproduces the check matrix of a data syndrome code. The detectors of this data syndrome code are depicted in Fig.~\ref{fig:data_syndrome_example}.

\begin{figure}[ht!]
\centering 
\includegraphics[width=0.99\linewidth]{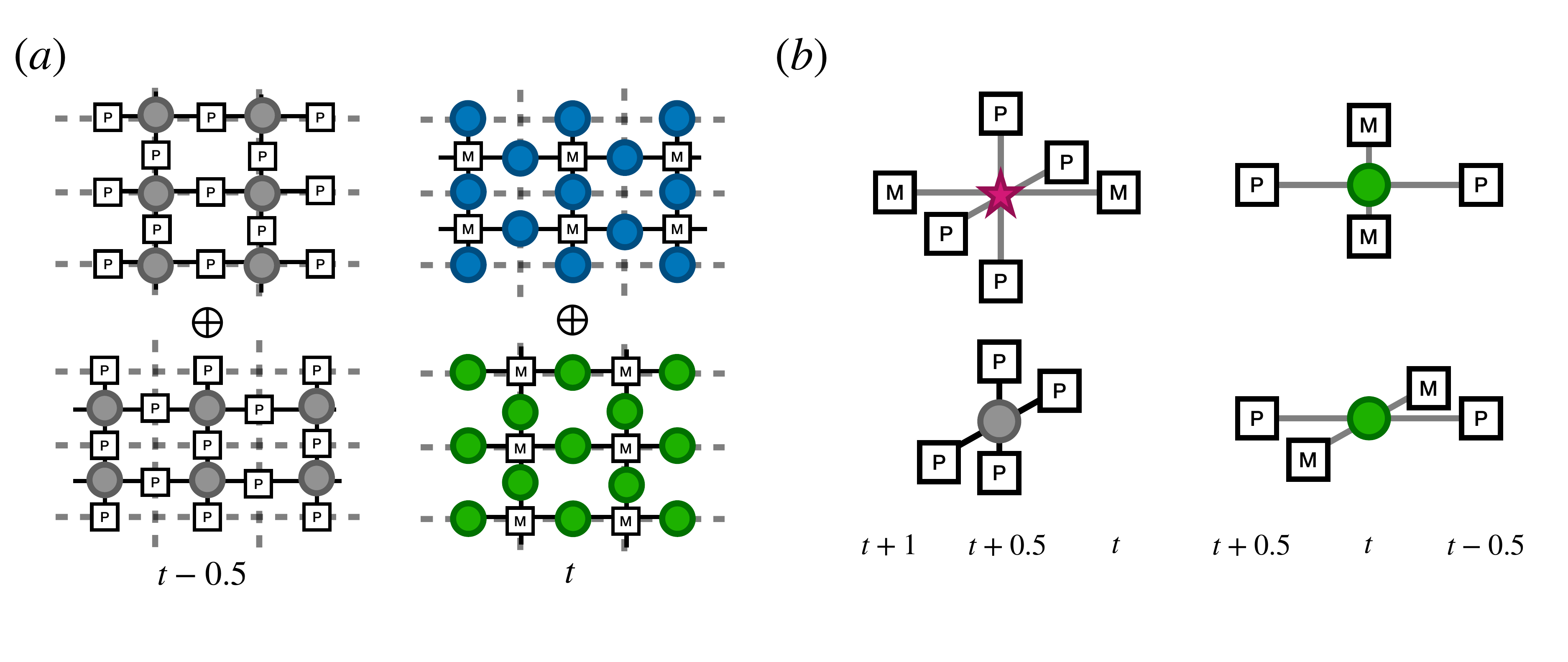} \caption{ Gauged SSC for the repeated measurement of a surface code with phenomenological errors. \textbf{(a)}  Schematic depicting the gauge fields in integer and half-integer time-slices of a circuit performing repeated measurement on a surface code. The gray dashed lines depict the underlying lattice, and the propagators connect the blue (green) circles of adjacent integer time slices. \textbf{(b)} Local structures in the resulting gauge theory. Upper-left: The detectors of the surface code circuit are formed by redundancies between measurement slices on integer time-slices and propagators on half-integer time-slices. This is associated with cubes (vertices) on the 3D lattice when the `M' squares are associated with plaquette $Z$ (vertex $X$) type measurements. Bottom-left: The Gauss laws associated with measurement type spacetime locations. These are associated with faces centered on the half-integer time-steps. Right: These are the Gauss laws associated the $X$-type spacetime locations, which are associated with faces of the 3D lattice centered on integer time-steps. Similar Gauss laws hold for the $Z$-type spacetime locations. 
}\label{fig:toric_example} 
\end{figure}

\textbf{Repeated measurement of CSS codes (phenomenological) -- } We consider a circuit where all stabilizers of some CSS code is measured in every time-step\footnote{The discussion here holds for non-CSS codes as well, but the specific discussion of CSS codes will be a convenient reference for Sec.~\ref{sec:applications}.}. Let the check matrices be $H_X, H_Z$. Say we have $m_X$ and $m_Z$ stabilizers of each type. The measurement slices have the form
\begin{equation*}
    \begin{aligned}
        \eta^{t}\left( \prod_{i = 1}^n X_i^{(H_X)_{ji}} \right), \qquad \eta^{t}\left( \prod_{i = 1}^n Z_i^{(H_Z)_{ji}} \right),
    \end{aligned}
\end{equation*}
for each $j = 1, ..., m_{X(Z)}$ enumerating the number of stabilizers and $t = 1, ..., T$. 

An error $Z_i$ anticommutes with the $j$-th $X$-type stabilizer if $(H_X)_{ji} = 1$, and similarly for $X_i$ with $(H_Z)_{ji} = 1$. Hence, we have elementary propagators of the form
\begin{equation*}
    \begin{aligned}
        \eta^{t+1} \left( Z_i \right) \eta^{t+0.5} \left( \prod_{j = 1}^{m_X} Z_j^{(H_X)_{ji}} \right) \eta^{t} \left( Z_i \right), \qquad \eta^{t+1} \left( X_i \right) \eta^{t+0.5} \left( \prod_{j = 1}^{m_Z} Z_j^{(H_Z)_{ji}}  \right) \eta^{t} \left( X_i \right),
    \end{aligned}
\end{equation*}
for $i = 1, ... , n$. We will impose periodic boundary conditions, corresponding to an entanglement fidelity experiment, so that in the above $t = 0, ..., T-1$, and the $T$-th time-slice identified with $0$. In the symplectic representation, we may write
\begin{equation*}
    G_{\mathcal{C}} = \left( \begin{array}{c|c|c}
    I_T \otimes H_X & 0  & 0 \\
    0 & I_T \otimes H_Z  & 0 \\ 
    R \otimes I_n & 0 & I_T \otimes H_Z^T \\
    0 & R \otimes I_n & I_T \otimes H_X^T \\
\end{array} \right),
\end{equation*}
where the first two rows are the measurement slices, and the last two rows are the elementary propagators, with $R$ being a $T \times n$ matrix defined by $R_{ij} = \delta_{i, j} + \delta_{i+1, j}$, with the delta functions defined mod $T$. Having picked periodic boundary conditions, we have the Gauss laws given by $G_{\mathcal{C}}^T$. Assuming the measured stabilizers are independent, the redundancies are given by adjacent measurement slices and their propagators, so 
\begin{equation*}
    R_{\mathcal{C}} = \left( \begin{array}{c|c|c|c}
    I_{m_X} \otimes R & 0  & I_T \otimes H_X & 0 \\
    0 & I_{m_Z} \otimes R & 0 & I_T \otimes H_Z 
\end{array} \right),
\end{equation*}
where we have excluded the global redundancies of the form $(0 | 0 | I_T \otimes L_X | I_T \otimes L_Z)$, where $L_X, L_Z$ are logical operators of the code. 

The matrices $G_{\mathcal{C}}, R_{\mathcal{C}}$ contain exactly the same blocks that turn up in the fault complexes used to describe measurement-based analogs of repeated measurement \cite{Hillmann_arXiv_2024_fault_complexes}; we comment on this in Sec.~\ref{ssec:SSC_foliated_computation}.  The matrix $R_{\mathcal{C}}$ has the explicit form of a data syndrome code, which encodes the idea that repeated measurement is the simplest data syndrome code \cite{Fujiwara_PRA_2014_OG_data_syndrome, Ashikhmin_IEEE_2020_data_syndrome_codes}.

\textbf{The surface code (phenomenological) -- } We now specialize to repeated measurements of the surface code \cite{Bravyi_arXiv_1998_surface_code}. Consider the gauged SSC in the bulk. The gauge fields associated with single layers of the circuit are depicted in Fig.~\ref{fig:toric_example}a. In integer layers, we have matter fields of $X$-type and $Z$-type occupying the edges of the underlying lattice. We say this is split into the $X$-type and $Z$-type lattice. On the $Z$-type ($X$-type) lattice, we have measurement gauge fields $\tau_{m, \cdot}$ on the plaquettes (vertices), representing the measurement of plaquette (vertex) operators. The lattice on the half-integer locations can be obtained by swapping gauge fields with matter fields, and $X$ and $Z$ lattices, as shown on the left side of Fig.~\ref{fig:toric_example}a. The gauge fields on the half-integer layers are connected to the matter fields on the integer locations. Putting this all together, we can extract the Wilson loops and Gauss laws as in Fig.~\ref{fig:toric_example}b. We see that the gauged SSC for repeated measurements of a surface code yields two copies of a 3D surface code, albeit interpreted as a gauge theory rather than a quantum code \cite{Wen_2017_book}. This is consistent with the well-known statistical mechanical model of this process \cite{Dennis_JMP_2002_topological_quantum_memory}. 

\begin{figure}[ht!]
\centering 
\includegraphics[width=0.99\linewidth]{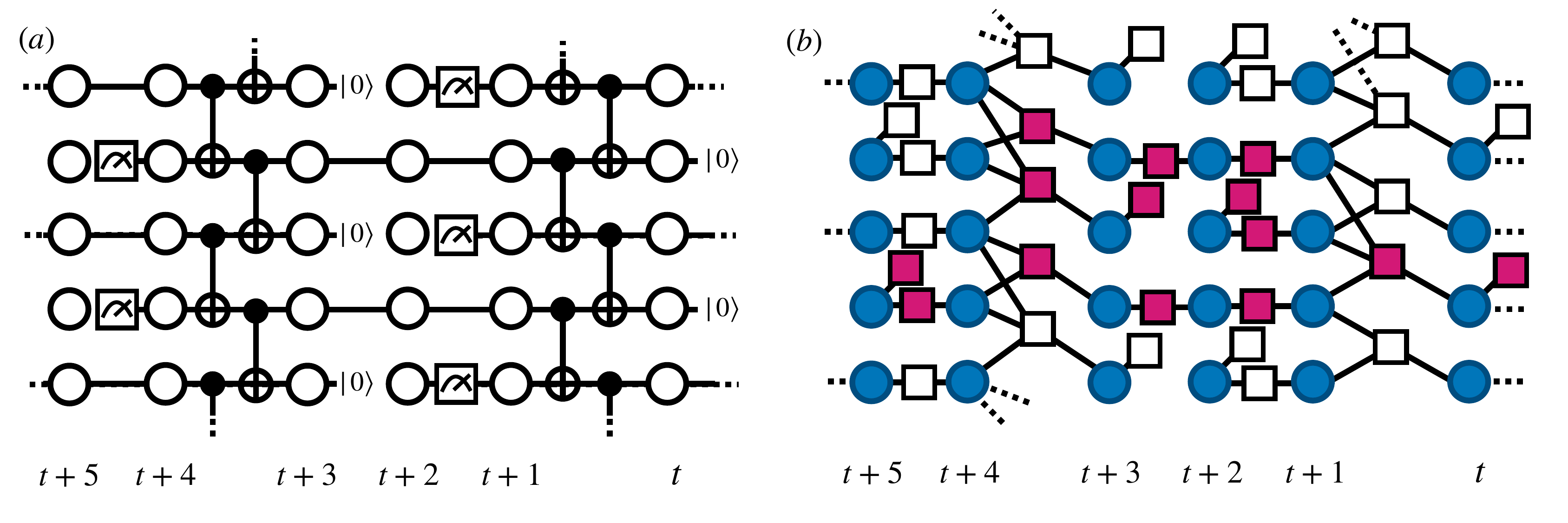} \caption{The gauged SSC for the walking repetition code circuit introduced in Ref.~\cite{McEwen_quantum_2023_time_dynamics}. In particular, a unit of this circuit, comprising three time steps, is the `stepping down' circuit depicted in App.~B of Ref.~\cite{McEwen_quantum_2023_time_dynamics}. \textbf{(a)} Circuit and discretization of two sets of measurements of the stepping down circuit. In each we prepare half the qubits in the $| 0 \rangle$ state and measure the rest in the $Z$ eigenbasis. The unitary gates are a ladder of cNOTs. We discretize this circuit by grouping all the cNOTs together, and separating measurements and state preparations into their individual time-slices. Dotted lines depict where this circuit would connect to other parts of the circuit. \textbf{(b)} The gauged SSC for this circuit in the $Z$ sector. The blue circles are the $Z$-type matter fields, whereas the squares are the gauge fields. We highlight in magenta all the gauge fields participating in a particular local Wilson loop of this circuit. This can be verified visually by noting that every matter field is connected to an even number of magenta squares, indicating a redundancy in the ECOs. This provides a simple example where the gauged SSC captures detectors beyond the repeated-measurement paradigm.
}\label{fig:walking_circuit_example} 
\end{figure}

\textbf{Repetition redux: Stepping circuits -- } As a final example, we revisit a circuit that performs syndrome extraction on the repetition code in a way that is distinct from repeated measurement. In particular, we consider the `stepping down' circuits introduced in Ref.~\cite{McEwen_quantum_2023_time_dynamics}, where the ancilla and data qubits are swapped in every time-step. In Fig.~\ref{fig:walking_circuit_example}a, we depict the circuit for two units of the stepping down circuit, and in Fig.~\ref{fig:walking_circuit_example}b gauged SSC associated with the stepping circuit for the repetition code. In this circuit, the detectors are no longer just products of the same check measured at consecutive time-steps. Nevertheless, they are still captured in the gauged SSC as redundancies, or Wilson loops. An example of such a detector is depicted in Fig.~\ref{fig:walking_circuit_example}b, and comprises all the magenta-highlighted gauge fields. This illustrates that the gauged SSC applies beyond the repeated-measurement setting, and naturally accommodates more general spacetime detector geometries. Extracting the exact geometrical structure of various circuits to compare to known gauge theories could be an interesting endeavor, which we leave for future work.





\section{Applications of the gauged SSC beyond QEC}\label{sec:applications}

We now turn to applications of the gauged SSC beyond fault distances. In Sec.~\ref{ssec:SSC_foliated_computation}, we study the relationship between the spacetime code and foliation \cite{Bolt_PRL_2016_foliated_codes, Brown_PRR_2020_universal_MBQC}, which takes a particularly simple form in the gauged SSC, and allows us to make contact with the gauge theory of MBQC. In Sec.~\ref{ssec:mixed_state_order}, I use the connection between foliation and the spacetime code to comment on the connection between the gauged SSC and mixed state topological order \cite{Negari_arXiv_2025_FT_mixed_state_phases, Sang_PRL_2025_markov_length, Sohal_PRXQ_2025_mixed_state_subsystem}. In particular, we will expand on the notion of a particular set of mixed state phases as a classical memory, and show that it can also be described by the gauge theory of a Clifford circuit. Finally, in Sec.~\ref{ssec:learning_Pauli_noise}, I draw connections to learning theory. Rather surprisingly, we find that the theory of learning Pauli noise \cite{Chen_PRXQ_self_consistent_learning} can be thought of as providing a parent theory for the gauged SSC, in the sense that for any circuit, the detectors, which are the observables, correspond exactly to learnable degrees of freedom, which are the observables in the theory of learning Pauli noise. This provides a little bit of a learning theoretic foundation for why, as we have claimed throughout this work, one might say that detectors are the true observables of a QEC circuit, and also provides some insight into why the spacetime code recently found use in several problems in learning theory \cite{zheng_arXiv_2026_logical_noise, xiao_arXiv_2026_insitu_benchmarking}.

\subsection{The gauged SSC is a foliated computation}\label{ssec:SSC_foliated_computation}

\subsubsection{A first foliated computation}\label{sssec:first_foliation}

We now turn to the idea of foliation. In contrast to Refs.~\cite{Bolt_PRL_2016_foliated_codes, Brown_PRR_2020_universal_MBQC} we focus on circuits that contain both mid-circuit measurements and unitary gates, and we will argue that the gauged SSC \textit{contains} a \textit{foliated computation}.  It is easiest to begin with a concrete example; let us re-consider the gauged SSC depicted in Figs.~\ref{fig:small_full_example}, which is derived from the circuit in Fig.~\ref{fig:detector_example} with open boundary conditions.

\begin{figure}[ht!]
\centering 
\includegraphics[width=0.5\linewidth]{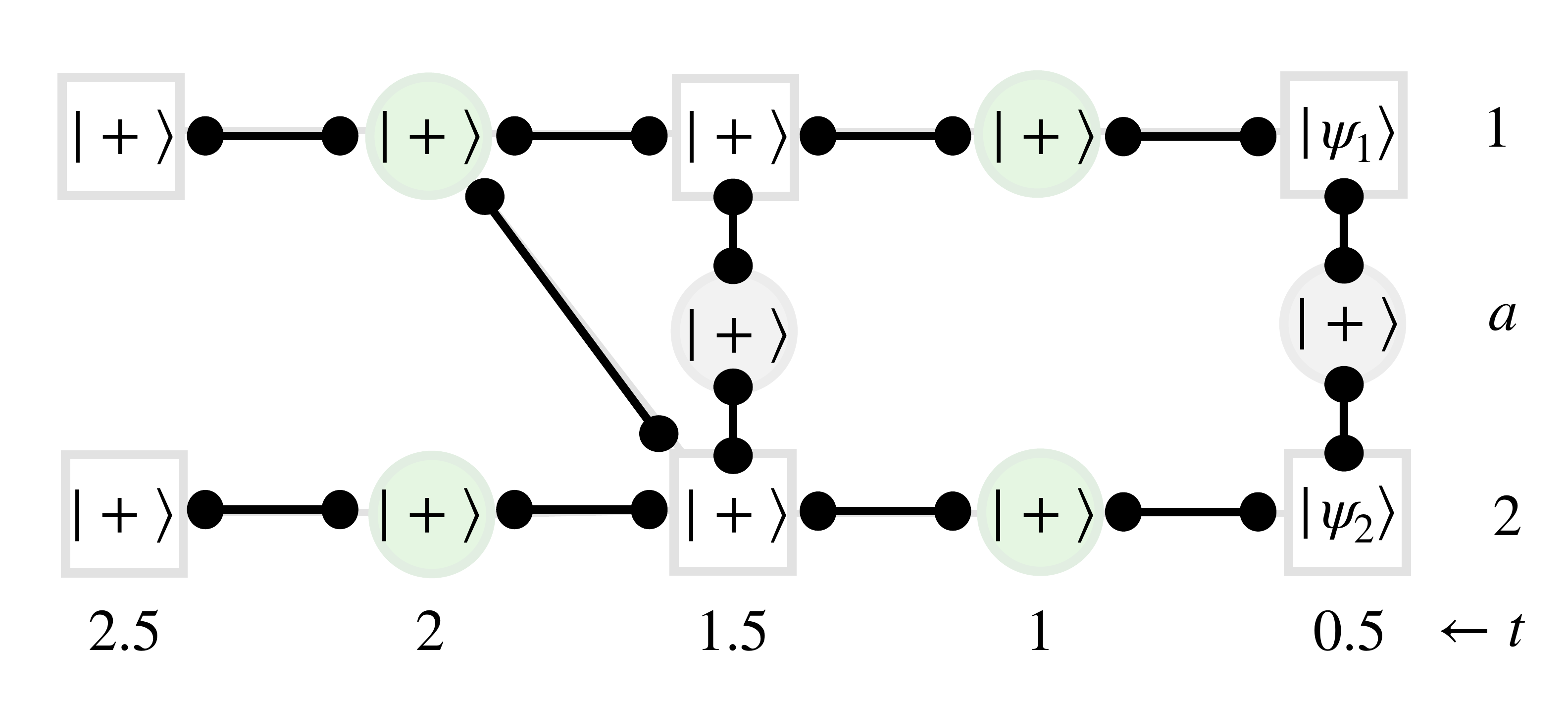} \caption{The resource state for the foliated computation constructed from the circuit in Fig.~\ref{fig:detector_example}. The $X$-type part of the gauged SSC (detectors excluded) is displayed in the background. Regarding this as a graph, we introduce qubits in the $| + \rangle$ state for every node except the input boundary. To obtain the resource state, we carry out $cZ$ gates along every edge. Horizontal and vertical axes serve as spatial and temporal labels for the qubits. }\label{fig:small_foliation_example} 
\end{figure}

From this gauged SSC, we will generate a resource state as illustrated in Fig.~\ref{fig:small_foliation_example}. We refer to this construction as a \textit{foliated computation}. We introduce a qubit for each gauge and each matter field in the $X$-type part of the gauged SSC. Every qubit is intitialized in the $| + \rangle $ state, except the ones associated with the input gauge fields, which can be initialized in arbitrary states $| \psi_1 \rangle, | \psi_2 \rangle$ (corresponding to what one would input into the circuit Fig.~\ref{fig:detector_example}. Then, we perform $cZ$ gates along all the edges in the $X$-type part of the gauged SSC to prepare a \textit{resource state}, which is very nearly a graph state \cite{Raussendorf_PRA_2005_cluster_states}. Finally, performing performing $X$ basis measurements on every layer of qubits except the last one carries out a \textit{foliated computation}. We claim that this carries out the same circuit up to a known software correction, provided we keep the outcomes of the measurements. We call this a \textit{foliated computation}. It is immediate to see that in the absence of unitaries or measurements, this reduces to the ``type-I foliated'' wire of Ref.~\cite{Brown_PRR_2020_universal_MBQC}. We will denote Paulis on the resource state using a time superscript and a spatial subscript, eg. $X^t_i$; see the axes in Fig.~\ref{fig:small_foliation_example}.

To show that the foliated computation is equivalent to the original circuit, we check the logical action, the imposed stabilizers, and the detector constraint. First, we consider the states $| \psi_1 \rangle, | \psi_2 \rangle$, corresponding to the input into the original circuit. It is sufficient to track $X_L^{(1)} = X_1 X_2, X_L^{(2)} = X_2, Z_L^{(1)} = Z_1, Z_L^{(2)} = Z_1 Z_2$ (choosing some complete set of logical operators with the appropriate commutation relations). The first measurement of $Z_1 Z_2$, suppose this has outcome $o(Z_1 Z_2)$, we lose the logical values of $X_L^{1}$, and fix the $Z_L^{(2)} \rightarrow (-1)^{o(Z_1 Z_2)} Z_1 Z_2$ as a stabilizer, but leave $X_L^{(1)}, Z_L^{(1)}$ alone. The next operation, which is $cX_{2 \rightarrow 1}$  transforms the stabilizer like $(-1)^{o(Z_1 Z_2)} Z_1 Z_2 \rightarrow (-1)^{o(Z_1 Z_2)} Z_1$, and the logicals transform like $X_L^{(1)} = X_1 X_2 \rightarrow X_2, Z_L^{(1)} = Z_1 \rightarrow Z_1 Z_2$. The final move is to measure $Z_1$ Since this is a stabilizer, the outcome is deterministic and equal to $o(Z_1 Z_2)$. Nothing changes. Hence, we see that the effect of this circuit is:
\begin{equation*}
    \begin{aligned}
        X_L &= X_1 X_2 \rightarrow X_2, \\
        Z_L &= Z_1 \rightarrow Z_1 Z_2, \\
        {\rm impose: } & \mathcal{S} = \langle (-1)^{o(Z_1 Z_2)} Z_1 \rangle,
    \end{aligned}
\end{equation*}
subject to the constraint that $o(Z_1 Z_2) = o(Z_1)$.

Logical action: First, in preparing the state (i.e. applying cZs), we transform 
\begin{gather*}
        X_L = X^{0.5}_1 X^{0.5}_2 \rightarrow Z^{1}_1 Z^{1}_2 X^{0.5}_1 X^{0.5}_2, \\
        Z_L = Z_1 \rightarrow Z_1.
\end{gather*}

Next, we observe that the following two operators are stabilizers
\begin{equation*}
    \begin{aligned}
        & X^{2.5}_2 \cdot X^{1.5}_1 X^{1.5}_2 \cdot Z^1_1 Z^1_2 \\
        &  Z^{2.5}_2 X^2_2 \cdot Z^{2.5}_1 X^2_1 \cdot X^1_1 Z^{0.5}_1.
    \end{aligned}
\end{equation*}
This tells us that, up to stabilizers of the foliated computation state, we have 
\begin{equation*}
    \begin{aligned}
        X_L &\sim X^{2.5}_2 \cdot X_1^{1.5} X_2^{1.5} X^{0.5}_1 X^{0.5}_2 \rightarrow (-1)^{o(X_1^{1.5}) + o(X_2^{1.5}) + o(X_1^{0.5}) + o(X_2^{0.5})} X^{2.5}_2\\
        Z_L &\sim Z^2_2 X^2_2 \cdot Z^2_1 X^2_1 \cdot X^1_1 \rightarrow (-1)^{o(X^2_2)+o(X^2_1) + o(X^1_1)} Z^2_2 Z^2_1,
    \end{aligned}
\end{equation*}
where the arrow indicates the transformation of the operator after the single-qubit $X$-measurements, so this procedure has accomplished exactly the logical action of the circuit, up to teleporting the state from one location to another.

Next, we should check that the measurement outcomes of the original circuit are available. To see this, note that $X^{0.5}_a Z^{0.5}_1 Z^{0.5}_2$ is a stabilizer of the resource state, so measuring $X^{0.5}_a$ is equivalent to measuring $Z^{0.5}_1 Z^{0.5}_2$ as a stabilizer. Unfortunately, when we created the resource state, we had to shuffle the ordering of unitaries and $X$ vs $Z$-type measurements a little, so at first glance, the second ancilla appears to be measuring $Z_1 Z_2$ as well. However, we note that we have 
\begin{equation*}
    Z_1^{2.5} X_1^2 X^{1.5}_a 
\end{equation*}
 as a stabilizer, so in fact we are effectively correctly measuring $Z_1$ on the final state! It is apparent that these things are equivalent in the noiseless case. Later, we see that surprisingly, up to a refactoring of error identification, we can make the correct propagation relations hold in the noisy case too. Having shown that the ancillas do carry out the expected measurements, this also implies that we have imposed the correct stabilizer, and that we have the correct detector. However, we should note that due to the fact that we had to impose some temporal order between $X, Z$ in order to foliate the state, the connectivity of the measurement had to change a little to maintain the same detecting regions. A simple way to avoid this problem is to simply discretize so that we only ever have either unitaries or measurements in a single time-slice. Finally, we should also point out that the fact that these two measurements have a fixed parity can be seen from the resource state stabilizer 
 \begin{equation*}
    X_a^{1.5} X^1_1 X^1_2 X_a^{0.5},
 \end{equation*}
 which tells us that $X_a^{0.5}, X_a^{1.5}$ have fixed parity conditioned on $X^1_1, X^1_2$ which are measured.

\subsubsection{Yet another redundancy in the gauged SSC}\label{sssec:banach-tarski-prop}

Before we get into the form of the foliated computation for more general circuits, we quickly ask: Why does this even work? Earlier, we have argued that the ECOs and in turn the gauged SSC essentially form an alternate representation of the circuit. In foliating the circuit, we have discarded half of the gauged SSC; yet we have found we are still able to reproduce the circuit using the remaining half.

The elementary propagators are independent by construction, and so never form a redundancy in the sense of Def.~\ref{def:op_redundancy}. However, Clifford operations must conserve commutation relations, and specifying $n$ commuting Paulis allows one to uniquely solve for their destabilizers -- these are Paulis which uniquely anti-commute with exactly one of the original set of Paulis. Together, a fully commuting set of Paulis and their destabilizers forms a generating set for the Pauli group \cite{Aaronson_2004_PRA_stab_sim}. In other words, solving for $U^{t>} X_i (U^{t>})^{\dag}$ for every $i = 1, ..., n$ already contains enough information for us to obtain $U^{t>} Z_i (U^{t>})^{\dag}$. With regards to measurements, the $X$-type part contains $Z$-type measurements in their matter field form and vice versa. Hence, in gauged SSCs where the $X$-type and $Z$-type parts separate, either part already implies the other. Henceforth we refer to these as \textit{CSS-like} circuits.
 
Finally, we note that we can make this connection between the $X$ and $Z$ parts of a CSS-like circuit explicit in the ECOs by a simple basis change. We reparameterize $G_{\mathcal{C}}$ as follows: We keep measurement slices the same, and all $X$-type elementary propagators. However, we replace the propagators of $Z$ operators with their backwards versions:
\begin{equation}\label{eq:Z_back_prop}
    (g_{{\rm prop}}^Z)_n^{t>} := \eta^{t+1}(Z_n ) \eta^{t+0.5} \left( \prod_{i \in A} Z_i \right) \eta^{t}((U^{t>})^{\dag} Z_n U^{t>}),
\end{equation}
where $A$ is the set of measurements in time $t+0.5$ that anti-commute with $Z_n$. This set of ECOs preserve the absence of propagator redundancies, we call it $G_{\mathcal{C}}^{\rm cl}$, and is related to the previous parameterization via a change of basis. The superscript ${\rm cl}$ is short for cluster, for reasons that will become clear shortly. The error identification is modified slightly -- we now identify the gauge field $\tau^{t+0.5}_{z, n}$ with the error $(U^{t>})^{\dag} Z_n U^{t>}$ occurring before the gates in that layer. In this basis, \textit{gauging $G_{\mathcal{C}}^{\rm cl}$ for a CSS-type circuit results in the direct sum of two gauge theories related to each other in the bulk by swapping gauge and matter fields.} However, we note that when gauging an unfortunate consequence is that mapping $X$ errors to gauge fields no longer conserves the error weight. 


\subsubsection{From SSC to resource state}

We now generalize the example in Sec.~\ref{sssec:first_foliation} to arbitrary CSS-like circuits. The construction of the \textit{foliated computation} is straightforward: For such a circuit, we construct the gauged SSC, but keep only the $X$-type part of it\footnote{We could work with the $Z$-type part, but this allows us to keep to conventions in the literature.}. Treating it as a graph $\Gamma^X = (V, E)$, with fields as nodes, we introduce a qubit in the $| + \rangle$ state for every node $v \in V$ except the input boundary nodes, which are allowed to be in arbitrary states. We then carry out $cZ$s along each edge $e \in E$. We refer to this as the \textit{resource state}. Note that this would be exactly a cluster state if we had also started with the input boundary nodes in the $| + \rangle$ state \cite{Raussendorf_PRA_2005_cluster_states}. The \textit{foliated computation} then refers to measuring every node except the output boundary fields in the $X$ basis, also referred to as a `measurement pattern' \cite{Pesah_arXiv_2025_FT_transformations, Brown_PRR_2020_universal_MBQC}.

In the noiseless case, the foliated computation is exactly equivalent to the circuit we started with, and represents a measurement-based compilation of the circuit. In other words, for the same input state, the foliated computation carries out exactly the same computation as the circuit, and extracts measurement results with the same statistics. This is true up to a Pauli frame, i.e. applying signs to different Pauli operators based on the outcomes of the various $X$-basis measurements. .


To show that this is true, we will explicitly derive the action of the foliated computation in App.~\ref{app_sec:deferred_proofs_applications} The derivation proceeds by showing that for each pair of layers of qubits, the correct action is carried out on a set of operators which can be identified with the Paulis of the circuit. One can quickly check that for repeated measurement syndrome extraction circuits, this reproduces the foliated codes of Ref.~\cite{Bolt_PRL_2016_foliated_codes}. More generally, for measurement only circuits, this produces the same compilation to MBQC as Ref.~\cite{Brown_PRR_2020_universal_MBQC}, justifying the name `foliated computation'\footnote{Foliated codes and computations may also include a final layer of measurements or an initial layer of state preparation and so on; our description is easily generalized to the various boundary conditions discussed in Sec.~\ref{ssec:boundaries}. We take it for granted that this is obvious.}. The main intuition here is that the horizontal (along the time axis) connectivity of the foliated computation ensures that the $X$-type operators of the circuit propagate in the right way, which is sufficient, due to their relationship as stabilizers and destabilizers, to also ensure that the $Z$-type operators propagate in the same way. 

We end this section with a few remarks. First, there are many well-established ways to compile circuits, in particular Clifford circuits, into MBQCs, so the foliated computation does not offer anything new in that regard. For instance, the seminal work Ref.~\cite{Raussendorf_PRL_2007_FTQC_2D} simulates circuit wires with 1D graph states, and uses an intermediate layer of nodes to mediate correlations between two wires. For Clifford gates, $X$-type measurements are used to propagate information and $Z$-type measurements are used to cut off correlations in the right patterns to carry out the Clifford gate.

Second, the scheme in Ref.~\cite{Brown_PRR_2020_universal_MBQC} is universal and also takes care of the compilation of non-CSS type circuits. While the compilation of CSS-type circuits is particularly elegant from the gauged SSC, it is not too difficult to generalize to non-CSS-like circuits. However, I have not explored any means that is so simple as the preceding one that I would dare claim the gauged SSC for a non-CSS-like circuit `is a foliated computation' in the same way. On the other hand, taking non-Clifford-ness into account is a little more complicated. In MBQC, non-Clifford-ness requires adaptivity, which is not (at the moment) naturally built into the gauged SSC.

The foliated computation is one possible compilation of a Clifford circuit into an MBQC. Ref.~\cite{Pesah_arXiv_2025_FT_transformations} has also proven another such mapping for any measurement-based quantum computation (MBQC), including foliated codes. For a given MBQC, they begin by writing down the full circuit that prepares the state and then carries out the requisite measurements. They then show that one can get from the SSC obtained from this circuit to the SSC that one gets from a logically equivalent circuit via a series of `fault-tolerant moves', which roughly speaking deform the SSC while preserving its distance.

In the next section, we will argue that foliated computations inherit, to some extent, the elements of fault tolerance from the gauged SSC. In contrast to Ref.~\cite{Pesah_arXiv_2025_FT_transformations}, ours is not a fully distance-preserving map, as it will not account for errors that may arise in preparing the resource state. However, by giving up a true distance-preserving mapping, we obtain a much shorter route to foliated computation that is much more physically evocative, which makes it easier for us to characterize the topology \cite{Hillmann_arXiv_2024_fault_complexes, Newman_Quantum_2020_generating_FT_cluster_states} and eventually the inherent mixed state order \cite{Negari_arXiv_2025_FT_mixed_state_phases} and classical memory in the resource state.

\subsubsection{Elements of fault tolerance}

We have now mapped a circuit to a foliated computation and argued that it carries out the correct action in the noiseless case. We now examine how the other elements of fault tolerance carry through. We assume that the resource state preparation is noiseless, so that the only errors that can occur are readout errors when measuring the qubits of the resource state. These are equivalently characterized as $Z$ errors on the resource state, since the qubits are measured in the $X$ basis. The following general statements apply to all bulk fields, i.e. other than the input/output boundary fields, which should be dealt with separately. 

\textit{Error identification -- } In defining the resource state, we have identified all gauge and matter fields in the $X$-type part of the gauged SSC with qubits of the resource state. In the discussion after Eq.~\eqref{eq:Z_back_prop}, we have pointed out that the $X$-type matter fields here can be identified with $Z$-type gauge fields (which we recall are now identified with errors of the form $(U^{t>})^{\dag} Z_n U^{t>}$). Hence, every qubit of the resource state can be identified with a gauge field. It is tedious but straightforward to check that this is consistent in the sense that a readout error on that qubit in fact yields the same error (has the same bare effect) that the gauge field is identified with. The input boundary is identified with errors in the same way, whereas we do not identify the output boundary is not identified with any error, since they are not measured. This captures all gauge fields in the gauged SSC except the output boundary fields. We can either regard this as a gauge-fixed version of the gauged SSC or identify output errors with errors on the final state after the foliated computation.

\textit{Gauss laws -- } Similar to how we can identify each qubit (other than the input boundary) of the resource state with a gauge field, we can alternatively identify each of them with a matter field. Each matter field is associated with a Gauss law, and so the Gauss laws of the gauged SSC map to stabilizers of the form of 
\begin{equation*}
    X_v \prod_{v' \in \mathcal{N}(v)} Z_{v'},
\end{equation*}
where $v$ is the vertex, which when regarded as a matter field, is associated with the Gauss law, and $\mathcal{N}(v)$ are its neighbors when the resource state is regarded as a graph. These are of course the standard stabilizers of the cluster state. The output boundary fields, which are regarded as $Z$-type matter fields, are identified with Gauss laws in the same way. The input boundary fields are not associated with Gauss laws. Similarly to error identification, the Gauss laws on boundaries can be fixed by either regard this as a gauge-fixed version of the gauged SSC or appending additional qubits to the input layer for the sake of consistency.

\textit{Detectors -- } The detectors of the gauged SSC map to pure $X$-stabilizers of the resource state. This is consistent with the usual identification of detector cells in cluster states, see for instance Refs.~\cite{Bergamaschi_ITCS_2025_single_shot, Newman_Quantum_2020_generating_FT_cluster_states} for the discussion of such stabilizers as `syndromes of syndromes'. 

Hence, the elements of fault tolerance are preserved in the sense that, up to some fiddling with the boundaries, there is a one-to-one mapping to these same quantities if supplied with a noiselessly prepared resource state. This tells us that the fault tolerance of the circuit and the resource state are only qualitatively related (albeit in a quantifiable way), but allows us to make contact with some concepts in mixed state topological order.

\subsubsection{A gauge theory in common}

We have now associated foliation with the gauged SSC, endowing one form of MBQC with a gauge theory, which we have represented as a chain complex in Eq.~\ref{eq:gauge_complex_2}. It turns out the foliation of the gauged SSC can be understood through the lens of yet another chain complex: 
\begin{equation}\label{eq:newman_complex}
    C_3 \xrightarrow{R_{\mathcal{C}}^Z} C_2 \xrightarrow{G_{\mathcal{C}}^X} C_1 \xrightarrow{R_{\mathcal{C}}^X} C_0,
\end{equation}
where $C_2$ are the $X$-type matter fields, which can be identified with $Z$-type gauge fields, $C_1$ are the $X$-type gauge fields, which can be identified with the $Z$-type matter fields. $G_{\mathcal{C}}^X$ is $G_{\mathcal{C}}$ with the $X$-type operators only, which, following the discussion of Sec.~\ref{sssec:banach-tarski-prop}, is the transpose of $G_{\mathcal{C}}^Z$ up to the boundary fields, and $R_{\mathcal{C}}^{X/Z}$ are the redundancies in $X, Z$, such that $R_{\mathcal{C}} = R_{\mathcal{C}}^X \oplus R_{\mathcal{C}}^Z$. 

Wonderfully, Eq.~\eqref{eq:newman_complex} is exactly the chain complex used to describe fault-tolerant cluster states in Ref.~\cite{Newman_Quantum_2020_generating_FT_cluster_states}, up to the choice of inclusion of global redundancies\footnote{A graph state created from a product of $| + \rangle$ states has the global redundancies as an $X$-type stabilizer, but a foliated computation, which leaves the possibility of some unspecified input state, may not.}! Furthermore, gauging repeated measurement circuits yield exactly the fault complexes with the product structure of Ref.~\cite{Hillmann_arXiv_2024_fault_complexes}, similar to our discussion in Sec.~\ref{ssec:examples}. Hence, the gauged SSC provides a simple unifying way to view spacetime codes and foliated computation, exactly analogous to the connection between the toric code and cluster states \cite{Raussendorf_NJP_2007_topological_cluster_states}.

In this sense, the gauged SSC can be viewed as a gauge theory of MBQC of Clifford circuits in arbitrary dimensions. Such a gauge theory has been developed in 1D for arbitrary single-qubit unitaries in Ref.~\cite{Wong_2024_quantum_gauge_theory_MBQC}. We comment on two aspects of the gauge theory presented in Ref.~\cite{Wong_2024_quantum_gauge_theory_MBQC}. First, their gauge theory utilizes periodic boundary conditions. Our work sheds light on the operational interpretation of periodic boundary conditions (which correspond to entanglement fidelity experiments, see Sec.~\ref{ssec:boundaries}), and allows us to understand what other kinds of boundary conditions would look like. Second, as a consequence of periodic boundary conditions, their gauge theory has global Wilson loops corresponding to the global redundancies we observe under periodic boundary conditions. Since they work in 1D, local detectors are not possible without destructively measuring the content of the computation, but our work shows that richer gauge-invariant structures arise beyond 1D. All that said, it seems that the gauged SSC could be a good starting point to extend the program of Ref.~\cite{Wong_2024_quantum_gauge_theory_MBQC} beyond 1D MBQC with periodic boundary conditions.

\subsection{Mixed state order}\label{ssec:mixed_state_order}

We now observe that the foliation of a circuit via the gauged SSC captures as a special case the mapping between circuit and noisy resource state developed in Ref.~\cite{Negari_arXiv_2025_FT_mixed_state_phases}. As a consequence, this opens up the study of the intersection between fault tolerance and mixed state phases for circuits that may utilize schemes beyond repeated measurements \cite{Fujiwara_PRA_2014_OG_data_syndrome, Ashikhmin_IEEE_2020_data_syndrome_codes, Chao_PRL_2018_Flag_I, Chao_PRXQ_2020_flag_II}, or circuit level models that include unitary gates, such as the stepping circuits \cite{McEwen_quantum_2023_time_dynamics} depicted in Fig.~\ref{fig:walking_circuit_example}. We now revisit the protocol introduced in Ref.~\cite{Negari_arXiv_2025_FT_mixed_state_phases}, and ask how the structure of mixed state topological order shows up in the gauged SSC.

Ref.~\cite{Negari_arXiv_2025_FT_mixed_state_phases} focuses on CSS codes that are repeatedly measured. These circuits, discussed in Sec.~\ref{ssec:examples}, are mapped to a resource state, identically to the discussion in the previous section. Errors are mapped to bit-flip errors on the resource state. To obtain a mixed state over error configurations, one may apply a complete bit-flip (apply $I$ with probability $1/2$ and $X$ with probability $1/2$) channel to every qubit except the qubits associated with the output boundary fields. Ref.~\cite{Negari_arXiv_2025_FT_mixed_state_phases} investigates the properties of this mixed state, and characterizes its topological order in terms of (1) the encoding of a classical memory, (2) the spacetime Markov length \cite{Sang_PRL_2025_markov_length}. Their work proved that the associated mixed state phase transition under a dephasing channel, heralded by a divergence in spacetime Markov length, corresponded exactly to the phenomenological threshold of the circuit. 

We will take (2) as a given, and instead focus on (1), which may be given a simple interpretation in terms of the gauged SSC. We start by explaining how quantum information propagates through the foliated code, refining a little the discussion of Ref.~\cite{Negari_arXiv_2025_FT_mixed_state_phases}. We then show that the classical memory associated with the mixed state has a natural interpretation in terms of the gauged SSC.

Since the nodes of the graph state corresponds to the fields corresponding to $X$-type, we can give them a time label $-0.5, 0, 0.5, ..., T, T+0.5$ identically to the original fields, where $-0.5, T+0.5$ are the boundary fields, and a qubit label. Note this is a slightly different convention from Fig.~\ref{fig:small_foliation_example}, where we have introduced an additional input layer (instead of trimming). In this convention, the final and initial layers $-0.5, T+0.5$ will not contain any qubits associated with measurements. There are two kinds of qubits, corresponding to the physical qubits of the original circuit and the measurements on the original circuit. We label the first kind of qubit $(w, i)$, with $w$ for `wire', and $i = 1, ..., n$. In repeated measurement circuits, the measurements are all of $X$ or $Z$ type stabilizers; we say there are respectively $n_X, n_Z$ stabilizers. Measurements of $X$ ($Z$) type stabilizers are in the half integer (integer) time steps, which we label $(S^X (S^Z), j)$ for $j = 1, ..., n_X (n_Z)$. To distinguish the graph state from the gauged SSC, we will denote Paulis $p^{(\cdot)}_{\cdot}$ with superscript being the time label and subscript being the qubit label. 

The detectors of the repeated measurement circuit, as studied in Sec.~\ref{ssec:examples}, are of the form,
\begin{equation}
    \begin{aligned}
        \tau^{t+0.5}_{j, x} \tau^{t-0.5}_{j, x} \prod_{i \in {\rm Sup } S_j^X} \tau_{i, x}^{t} &\qquad \leftrightarrow \qquad X^{(t+0.5)}_{(sX, j)} X^{(t-0.5)}_{(sX, j)} \prod_{i \in {\rm Sup} S^X_j} X^{(t)}_{(w, i)}, \\
        \tau^{t+0.5}_{j, z} \tau^{t-0.5}_{j, z} \prod_{i \in {\rm Sup } S_j^Z} \tau_{i, z}^{t} &\qquad \leftrightarrow \qquad X^{(t)}_{(sZ, j)} X^{(t-1)}_{(sZ, j)} \prod_{i \in {\rm Sup} S^Z_j} X^{(t-0.5)}_{(w, i)}, 
    \end{aligned}
\end{equation}
where the LHS are the operators corresponding to detectors on the gauge fields, and the RHS are stabilizers of the foliated computation that we identify with detectors. Notice that the time index is off-set for the $Z$-type gauge fields. 

Next, we may ask how quantum information is propagated through this resource state. In preparing the foliated computation, logical operators of the input are transformed into logical operators of the resource state, say $L_i$. After carrying out the foliated computation, we should have a logical operator $L_f$ supported on only the output layer. We say that a stabilizer $S_L$ of the resource state is responsible for the temporal consistency of $L_i, L_f$ if we have 
\begin{equation}\label{eq:LBL_equation}
    S_L L_i = L_f B_L,
\end{equation}
where $B_L$ is product of $X$-type operators supported only on layers $t$ where $t < T+0.5$. 

$Z$-type operators are unaffected by the $cZ$ gates used to prepare the resource state. Hence, for a $Z$-type logical $L_Z$ of the input code, we have $L_i^Z = \prod_{i \in {\rm Supp}(L^Z)} Z_{(w,i)}^{(-0.5)}, L_f^Z = \prod_{i \in {\rm Supp}(L^Z)} Z_{(w,i)}^{(T+0.5)}$. The connectivity of the resource state tells us 
\begin{equation}\label{eq:S_L^Z}
    \begin{aligned}
        S_{L^Z} = \prod_{i \in {\rm Supp}(L^Z)} Z_{(w,i)}^{(T+0.5)} \left(  \prod_{t = 0}^{T}  X_{w, i}^{(t)} \right)  Z_{(w, i)}^{(-0.5)} 
    \end{aligned}
\end{equation}
is a stabilizer of the resource state. Identifying $B_{L_Z} = \prod_{i \in {\rm Supp}(L^Z)} \prod_{t = 0}^{T}  X_{(w, i)}^{(t)}$ gives us the condition of Eq.~\eqref{eq:LBL_equation}. Hence, $S_{L_Z}$ ensures that the logical information associated with $L_Z$ persists through the foliated computation.

On the other hand, $X$-type operators are non-trivially transformed by the preparation of the foliated computation. In particular, for an $X$-type logical $L_X$ of the input code, we have $L_f^X = \prod_{i \in {\rm Supp}(L^X)} X_{(w,i)}^{(T+0.5)}$. However, 
\begin{equation*}
    L_i^X = \prod_{i \in {\rm Supp}(L^X)} Z_{(w,i)}^{(1)} X_{(w,i)}^{(1)}.
\end{equation*}
As such, we have 
\begin{equation}\label{eq:S_L^X}
    S_{L^X} = \prod_{i \in {\rm Supp}(L^X)} X^{(T+0.5)}_{(w, i)} \left( \prod_{t = 0}^{T}  X_{(w, i)}^{(t+0.5)} \right) Z_{(w, i)}^{(0)} X^{(-0.5)}_{(w, i)},
\end{equation}
which stabilizes the resource state, and satisfies Eq.~\eqref{eq:LBL_equation} with $B_{L^X} = \prod_{i \in {\rm Supp}(L^X)} \prod_{t = 0}^{T}  X_{(w, i)}^{(t+0.5)} $. The stabilizer $S_{L^X}$ should be compared to the ``$LBL'$ stabilizers'' of Ref.~\cite{Negari_arXiv_2025_FT_mixed_state_phases}, see the footnote\footnote{Ref.~\cite{Negari_arXiv_2025_FT_mixed_state_phases} identifies two sets of `$LBL'$ stabilizers' as responsible for the propagation of logical information. For $Z$-type logicals, their identification $S_{L^Z}'$ is the same as Eq.~\eqref{eq:S_L^Z}. However, for the $X$-type logicals, they instead identify $S_{L^X}' = \prod_{i \in {\rm Supp}(L^X)} X^{(T+0.5)}_{(w, i)} \left( \prod_{t = 0}^{T}  X_{(w, i)}^{(t+0.5)} \right) X^{(-0.5)}_{(w, i)}$ as their $LBL'$ stabilizer (notice the absence of $Z$ operators in $S_{L^X}'$ compared to Eq.~\eqref{eq:S_L^X}). These operators satisfy $S_{L^Z}' = L_f^Z B_{L_Z} L_i^Z, S_{L^X}' = L_f^X B_{L_X} L_i^X$, giving rise to the nomenclature. However, crucially, $S_{L^Z}', S_{L^X}'$ are not on equal footing. Assuming the input state isn't specifically prepared in an eigenstate, we have that $S_{L^Z}'$ is a stabilizer of any foliated code, whereas $S_{L^X}'$ is never a stabilizer. Conversely, $S_{L^Z}, S_{L^X}$ are always stabilizers of the foliated code corresponding to temporal propagation of the logical information, even for `classical' codes like the repetition code.}\footnote{As a final technical (and perhaps overly pedantic) note, Ref.~\cite{Negari_arXiv_2025_FT_mixed_state_phases} argues that the specific example of the repetition code is already `classical' in the sense that it lacks the $S_{L^X}'$ stabilizer. We take the view that this is not sufficient, since \textit{any foliated code} lacks the $S_{L^X}'$ stabilizer. The reason is the repetition code is classical is due to the errors it does (or does not) correct. In the absence of errors it is capable of transmitting quantum information. In terms of the gauged SSC/mixed state order of the resourece state, the classicality reveals itself by inspection of the Wilson loops, which are constrained to one type ($X/Z$) of gauge field. For the dephased state, this structure is a spatial one (detectors are constrained to odd or even time-slices), and does not otherwise inherently differentiate `quantum' or `classical' codes.  }. 

As observed in Ref.~\cite{Negari_arXiv_2025_FT_mixed_state_phases}, these objects can be interpreted as a classical memory. We identify the measurement outcomes from carrying out the foliated computation with physical classical bits. The detectors (identified with $X$ operators on the resource state) play the role of classical checks, in that the product of these measurement outcomes always give $+1$. On the other hand, the $B_{L_Z}, B_{L_X}$ operators play the role of classical logical bits, with the product of their measurement outcomes random. A quick counting argument tells us that this does not account for all degrees of freedom. For concreteness, say we have $n$ qubits, with $m$ stabilizers and hence $2 (n - m)$ logical operators. There are $2 n T$ wire qubits that are measured. The stabilizers are repeatedly measured in every time-step except the input, so there are $m T$ measurements associated with stabilizers, however these are always $+1$ (assuming the input state is noiseless). Discounting them, we have $2nT$ physical bits. Each pair of repeated measurements forms a detector, so there are $m(T-1)$ checks, and a total of $2(n-m)$ logical bits. We have $2nT - m(T-1) - 2(n-m) = (2n-m)(T-1) + m$ unaccounted-for degrees of freedom. If desired, these additional degrees of freedom can be explicitly identified\footnote{First, consider a stabilizer $s$. If $s$ is $Z$-type, then we have that $\prod_{i \in {\rm Sup} s}  X_{(w, i)}^{(t)}$, for any $t < T+0.5$, is independent from the identified checks and logical bits, with analogous $X$-type operator on the half-integer time steps. These hence give $m T$ degrees of freedom. For any $X$-type logical $L_X$ of the input code, we also have $\prod_{i \in {\rm Sup} L_X}  X_{(w, i)}^{(t + 0.5)}$, as independent of the previous constraints, and analogously for $Z$-type logicals $L_Z$. These give $2(n-m)(T-1)$ degrees of freedom, with the $-1$ coming from the fact that combining them all yield the previously identified logical bits associated with $B_{L_Z}, B_{L_X}$. These sum up to $(2n-m)(T-1)+m = (2n-m)(T-1) + m$ degrees of freedom, accounting for the remaining bits from the earlier discussion. These are all logical bits in the sense that their associated outcomes are random.}, but their exact form is not important to our discussion.

Finally, we return to the gauged SSC. The characterization of a mixed state phase in terms of a classical memory can be given a simple interpretation in terms of the gauge complex, Eq.~\eqref{eq:gauge_complex_2}. First, we note that a classical memory, or classical error correction code is simply a chain complex with two terms. Accordingly, we have identified the measurements of the foliated computation with gauge fields -- this is the space $V_1$. In the gauge complex, the checks acting on this are $R_{\mathcal{C}}$, which correspond to the detectors. Hence, the classical memory associated with this mixed state is simply the classical error correction code associated with the subcomplex $V_1 \xrightarrow{R_{\mathcal{C}}} V_0$. The procedure of applying maximal bit-flip noise to all qubits up to layer $T$ corresponds to dephasing all the Gauss laws (except those supported only on the output boundary). In the parlance of weak and strong symmetries \cite{Buca_NJP_2012_weak_and_strong}, the resulting mixed state retains operators associated with $R_{\mathcal{C}}$ as strong symmetries whereas those associated with $G_{\mathcal{C}}$ up to layer $T$ are weakened by the dephasing.

\subsection{Pauli Noise Learning}\label{ssec:learning_Pauli_noise}

One arena in which spacetime codes have a found a surprising use is in learning theory \cite{xiao_arXiv_2026_insitu_benchmarking, zheng_arXiv_2026_logical_noise}. In this section, we provide some insight into this, by establishing a connection between the gauged SSC and the pattern transfer graphs used to establish gauge freedoms in learning Pauli noise \cite{Chen_PRXQ_self_consistent_learning}. 

\begin{figure}[ht!]
\centering 
\includegraphics[width=0.7\linewidth]{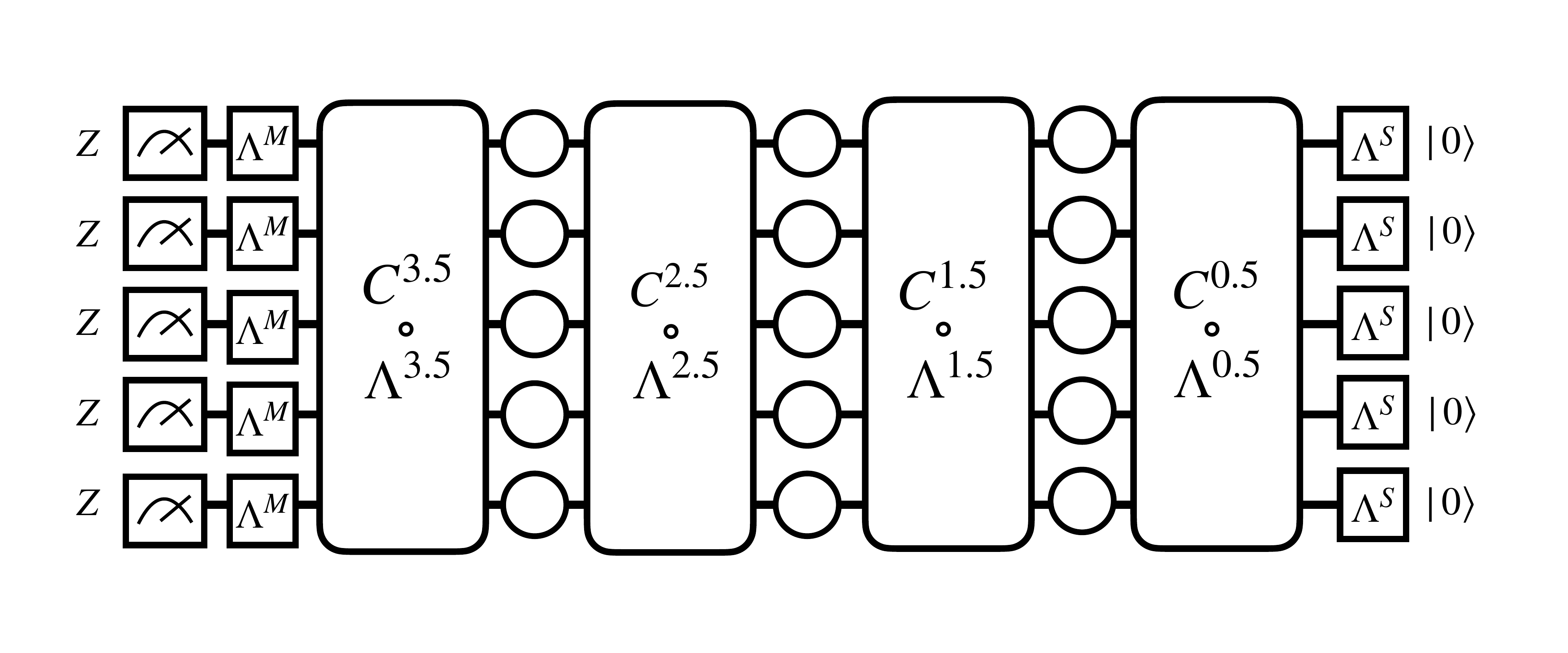} \caption{A typical benchmarking circuit, containing Clifford gates in every time step, as well as preparation of single-qubit $Z$ eigenstates at the start and single-qubit $Z$-measurements at the end. Each layer is assumed to be associated with a Pauli noise channel $\Lambda^{t+0.5}$. The preparation and measurement steps are associated with single qubit Pauli noise channels $\Lambda^{S/M}$. The goal is to estimate as many parameters of these noise channels as possible from the measurement outcomes of the circuit. }\label{fig:general_noise_circuit} 
\end{figure}

\subsubsection{Pauli noise models}

We will start with the usual benchmarking circuit depicted in Fig.~\ref{fig:general_noise_circuit}. We prepare each qubit in a $Z$ eigenstate, carry out some series of Clifford operations $C^{t+0.5}$, restricted to unitaries, and then measure every qubit in the $Z$ basis. Since single qubit basis changes are Clifford operations, this absorbs state preparation and measurement in arbitrary Pauli eigenstates. Such circuits are applied in a range of benchmarking experiments, see Ref.~\cite{Hashim_PRXQ_2025_benchmarking_review} for a review. We assume the following very general noise model: Before every Clifford operation $C^{t+0.5}$ we apply a Pauli noise channel $\Lambda^{t+0.5}$. For Clifford circuits, an arbitrary noise channel can be turned into a Pauli noise channel via twirling \cite{Wallman_PRA_2016_twirling}. We model state preparation and measurement (SPAM) noise with single qubit Pauli noise channels $\Lambda^{S/M}$ occurring just after/before the SPAM operations. In the parlance of Ref.~\cite{Chen_PRXQ_self_consistent_learning}, this is referred to as a \textit{complete} noise ansatz.

To parameterize this noise ansatz, we need to describe Pauli noise channels. A \textit{Pauli noise channel} $\Lambda$ is a distribution ${\rm Pr}(q)$ of Kraus operators over Pauli operators $q \in \overline{P}_n$. However, for our purposes it is useful to rewrite it in terms of Pauli eigenvalues $\lambda_q$, 
\begin{equation*}
    \Lambda(\cdot) = \sum_{q \in \overline{P}_n} {\rm Pr}(q) q (\cdot) q = \frac{1}{2^n} \sum_{q \in \overline{P}_n} \lambda_q q {\rm Tr}(q (\cdot)),
\end{equation*}
where ${\rm Pr}(q)$ and $\lambda_q$ are related by a Fourier transform, $\lambda_q = 4^{-n} \sum_{p \in \overline{P}_n} p (-1)^{\langle q, p \rangle}$, where $\langle \cdot, \cdot \rangle$ is the symplectic inner product. The trace-preserving property fixes $\lambda_I = 1$, so $\Lambda$ is characterized by $4^n - 1$ parameters. The advantage of writing these channels in terms of Pauli eigenvalues is that the channel can be easily composed, eg. two channels $\Lambda^{2}, \Lambda^{1}$ with Pauli eigenvalues $\lambda^{1/2}_q$ compose to give a channel with  Pauli eigenvalues $\lambda^1_q \lambda^2_q$. In our circuit, we take $\Lambda^{t+0.5}$ to be global noise channels, whereas the single qubit state preparation noise $\Lambda^S$ and measurement noise $\Lambda^M$, can WLOG be taken to only have $\lambda^{M/S}_Z$ components.

The Pauli eigenvalues fully parameterize the noise in the circuit. Following Ref.~\cite{Chen_PRXQ_self_consistent_learning}, we  will actually parameterize the noise ansatz via their logarithms; in the circuit of Fig.~\ref{fig:general_noise_circuit}, we have Pauli eigenvalues $\lambda^{\alpha}_{q}$, with $q$ being Paulis and $\alpha = t+0.5$ or $ \alpha = M/S$, we will denote $x^{\alpha}_{\beta} = - \ln \lambda^{\alpha}_{\beta}$, so that composing Pauli eigenvalues becomes taking linear combinations of their logarithms. We briefly consider mid-circuit measurements in Sec.~\ref{sssec:midcircmeas}.

\subsubsection{Learnable degrees of freedom and Wilson loops}

Now let us consider the gauged SSC for the circuit of Fig.~\ref{fig:general_noise_circuit}. We start with the probability distribution $F_k(\mathbf{x})$ of bit-strings $k$ in the measurement outcomes of the circuit for a given set of logarithmic eigenvalues $\mathbf{x}$. This is derived in App.~\ref{app_sec:deferred_proofs_applications} as, 
\begin{equation}\label{eq:outcome_dist_complete}
    \begin{aligned}
        F_k(\mathbf{x}) 
        &= \frac{1}{4^n} \sum_{b, b' \in \mathbb{Z}_2^n} \exp \left( - \sum_{j = 0}^{T-1} x_{ U^{0 \rightarrow j} Z[b'] (U^{0 \rightarrow j})^{\dag}}^{T-j+0.5} -x^M_{b} - x^S_{b'}  \right) (-1)^{k \cdot b}  {\rm Tr}(Z[b]  U^{0 \rightarrow T} Z[b'] (U^{0 \rightarrow T})^{\dag} ), \\
        &= \frac{1}{2^n} \sum_{b, b' \in \mathbb{Z}_2^n} \exp \left( - \sum_{j = 0}^{T-1} x_{ U^{0 \rightarrow j} Z[b'] (U^{0 \rightarrow j})^{\dag}}^{T-j+0.5} -x^M_{b} - x^S_{b'}  \right) (-1)^{k \cdot b} \delta_{Z[b],  U^{0 \rightarrow T} Z[b'] (U^{0 \rightarrow T})^{\dag} },
    \end{aligned}
\end{equation}
where we have denoted $Z[b]:= \otimes_{j = 1}^n Z_j^{b_j}$ for $b \in \mathbb{Z}_2^n$. Due to Lem.~\ref{lem:no_prop_redundancies} and the fact that we only prepare and measure qubits in the first and last time steps, every redundancy in this circuit must be associated with some initial product of operators $Z[b^i]$ that propagate to some final product of operators $Z[b^f]$, i.e. $Z[b^f] = U^{0 \rightarrow T} Z[b^i] (U^{0 \rightarrow T})^{\dag}$. Explicitly, redundancies have the form
\begin{equation}\label{eq:step-wise_label_R}
\begin{aligned}
    R &= \left\{ \eta^0(Z[b^i]),  \eta^t(Z[b^f]), \eta^t(U^{0 \rightarrow t} Z[b^i](U^{0 \rightarrow t})^{\dag})\eta^{t-1}(U^{0 \rightarrow t-1} Z[b^i](U^{0 \rightarrow t-1})^{\dag}): t = 1, ..., T   \right\} \\
    &=:\left\{ \eta^0(R^0),  \eta^t(R^T), \eta^t(R^t)\eta^{t-1}(R^{t-1}): t = 1, ..., T   \right\}.
\end{aligned}
\end{equation}
This allows us to rewrite Eq.~\eqref{eq:outcome_dist_complete} as 
\begin{equation}
    \begin{aligned}
        F_k(\mathbf{x}) 
        &= \frac{1}{2^n} \sum_{R \in \mathcal{R} } \exp \left( - \sum_{j = 0}^{T-1} x^{t+0.5}_{R^t} - x^M_{R^T} - x^S_{R^0}  \right) (-1)^{k \cdot b},
    \end{aligned}
\end{equation}
where $\mathcal{R}$ is the set of all redundancies in the circuit. This shows that the outcome distribution only depends on vectors $\mathbf{x}$ corresponding to the Wilson loops of the circuit. To explicitly estimate these quantities, one simply computes $E[R^T] = E[Z[b^f]]$, for which we have 
\begin{equation}\label{eq:expectation_value_distribution}
    - \ln E \left[ Z[R^T] \right] = \sum_{j = 0}^{T-1} x^{t+0.5}_{R^t} + x^M_{R^T} + x^S_{R^0}.
\end{equation}

This calculation shows us that the Wilson loops fully determine the outcome distribution of the circuit, and hence determine all the noise parameters that may be estimated from this distribution. Denoting the vector on the right-hand side of Eq.~\eqref{eq:expectation_value_distribution} as $\mathbf{x}_{R}$, we see that clearly any linear combination of $\mathbf{x}_{R}$ can be estimated, whereas any vector orthogonal to their span cannot be computed from the output distribution (the expectation value of $Z[b]$ for any $b$ not associated with a redundancy in the above manner is $0$). One is tempted to identify these unlearnable degrees of freedom with the Gauss laws; however, this does not follow  straightforwardly, as we have established Gauss laws to be orthogonal to Wilson loops over $\mathbb{Z}_2$ rather than $\mathbb{R}$.

\subsubsection{Mid-circuit measurements}\label{sssec:midcircmeas}

We now discuss the case of mid-circuit measurements. For simplicity, we consider again the repeated measurement circuit. We will work with a reduced noise model, where we either have an idling or measurement errors. This is not a realistic model for benchmarking purposes \footnote{That is, unless one is really committed to phenomenological noise, the way this work seems to have gotten by this point.}, but illustrative of the connections between detectors and learnability.

To model phenomenological noise, each measurement is assumed to be subject to an identical measurement error $\Lambda^M$ which only has one parameter $\lambda^M$ say, corresponding to a bit-flip error on the ancilla used to measure the stabilizer. In principle, there is also state preparation error on the ancilla, but in the phenomenological model it is easy to see that is exactly equivalent to a measurement error, so we model the ancilla preparation step as perfect. 

Besides that, the idling step is subject to a Pauli error channel $\Lambda^I$, with $\lambda^I_p$ for $n$-qubit Paulis $p$. We will work with a discretization wherein all stabilizers are measured in every time step, and index measurements with the respective stabilizer and time step, $S^{t+0.5}_j$, with $j$ indicating which stabilizer and the superscript indicating which time step.

Redundancies in this circuit are generated by redundancies of the form, 
\begin{equation}
    R = \{\eta^{t+1}(q), \eta^t(q), \eta^{t+1}(q) \eta^t(q)\} =: \{\eta^{t+1}(R^{t+1}), \eta^t(R^t), \eta^{t+1}(R^{t+1}) \eta^t(R^t)\},
\end{equation}
for some stabilizer $q = S_j$. Let us denote the outcome of the first measurement by $m_1$ and the outcome of the second by $m_2$. Without noise, $m_1 = m_2$. Set the POVM operators to be $E_m = (I + (-1)^m q)/2$. Then we have, 
\begin{equation}
    \begin{aligned}
        F_{m_1, m_2} = {\rm Tr} \left( \langle m_1 |_1 \otimes \langle m_2 |_2 \left( \mathcal{M}_2 \circ \Lambda^M_2 \circ \Lambda^I \circ \mathcal{M}_1 \circ \Lambda^M_1 (\rho \otimes | 0 \rangle \langle 0 |_{1} \otimes | 0 \rangle \langle 0 |_2 \right) | m_1 \rangle_1 \otimes | m_2 \rangle_2 \right),
    \end{aligned}
\end{equation}
where the labels $1, 2$ denote ancillas, with $|0 \rangle_i$ being some perfectly prepared reference state, and we model measurements with the map 
\begin{equation}
    \mathcal{M}_i \left( \rho \otimes | a \rangle \langle  a |_i \right) = \sum_{m = 0, 1} E_{m + a} \rho E_{m+a} \otimes | m \rangle \langle m |,
\end{equation}
where $a = 0,1$. This indicates that if the ancilla was in the state $| 0 \rangle$, then measuring an outcome $m$ on the ancilla afterwards would yield the correct POVM $E_m$. Conversely, we mix them up if the ancilla was in the state $| 1 \rangle$. Let's perform the calculations bit by bit. 
\begin{equation}
    \begin{aligned}
        &\langle m_1 |_1 \mathcal{M}_1 \circ \Lambda^M_1 (\rho \otimes | 0 \rangle \langle 0 |_{1} ) | m_1 \rangle_1 \\
        &=\frac{1}{2} \sum_{s_1 = 0, 1} \lambda^M_{s_1} \langle m_1 |_1 \mathcal{M}_1 (\rho \otimes Z_1^{s_1}) | m_1 \rangle_1 \\
        &= \frac{1}{2} \sum_{s_1 = 0, 1} \lambda^M_{s_1} \langle m_1 |_1 \mathcal{M}_1 (\rho \otimes ( | 0 \rangle \langle 0 |_1 + (-1)^{s_1} | 1 \rangle \langle 1 | )) | m_1 \rangle_1 \\
        &= \frac{1}{2} \sum_{a_1, s_1 = 0, 1} (-1)^{a_1 s_1} \lambda^M_{s_1} \langle m_1 |_1 \mathcal{M}_1 (\rho \otimes | a_1 \rangle \langle a_1 |_1 ) | m_1 \rangle_1 \\
        &= \frac{1}{2} \sum_{m, a_1, s_1 = 0, 1} (-1)^{a_1 s_1} \lambda^M_{s_1}  E_{m + a_1} \rho E_{m+a_1} \langle m_1 | m \rangle \langle m | m_1 \rangle_1 \\
        &= \frac{1}{2} \sum_{a_1, s_1 = 0, 1} (-1)^{a_1 s_1} \lambda^M_{s_1}  E_{m_1 + a_1} \rho E_{m_1 +a_1} 
    \end{aligned} 
\end{equation}
Applying $\Lambda^I$ to the operator part,
\begin{equation}
    \begin{aligned}
        \Lambda^I (E_{m_1 + a_1} \rho E_{m_1 +a_1}) = \frac{1}{2^n} \sum_{p \in \overline{\mathcal{P}_n }} \Lambda^I_p  {\rm Tr} \left( p E_{m_1 + a_1} \rho E_{m_1 +a_1} \right) p
    \end{aligned}
\end{equation} 

We can append the second ancilla and apply the second set of measurement and ancilla noise to get
\begin{equation}
    \begin{aligned}
        \langle m_2 |_2 \mathcal{M}_2 \circ \Lambda^M_2 (p \otimes | 0 \rangle \langle 0 |_{2} ) | m_2 \rangle_2 
        = \frac{1}{2} \sum_{a_2, s_2 = 0, 1} (-1)^{a_2 s_2} \lambda^M_{s_2}  E_{m_2 + a_2} p E_{m_2 +a_2} 
    \end{aligned}
\end{equation}

Putting all the expressions together, 
\begin{equation}
    \begin{aligned}
        F_{m_1, m_2} 
        &= \frac{1}{2^{n+2}} \sum_{p \in \overline{\mathcal{P}}_n }  \sum_{a_1, a_2, s_1, s_2 = 0, 1} (-1)^{a_2 s_2 + a_1 s_1} \lambda^M_{s_2} \lambda^M_{s_1}   \lambda^I_p  {\rm Tr} \left( p E_{m_1 + a_1} \rho E_{m_1 +a_1} \right)  {\rm Tr} \left( E_{m_2 + a_2} p E_{m_2 +a_2} \right)
    \end{aligned}
\end{equation}
Now the last term is only non-zero if $p = q$ in which case it is $2^{n-1} (-1)^{m_2 + a_2}$. Hence
\begin{equation}
    \begin{aligned}
        F_{m_1, m_2}
        &= \frac{1}{8}   \sum_{a_1, a_2, s_1, s_2 = 0, 1} (-1)^{a_2 s_2 + a_1 s_1 + m_2 + a_2} \lambda^M_{s_2} \lambda^M_{s_1}   \lambda^I_q  {\rm Tr} \left( q E_{m_1 + a_1} \rho E_{m_1 +a_1} \right) \\
        &=  \frac{1}{8}   \sum_{a_1, a_2, s_1, s_2 = 0, 1} (-1)^{a_2 (1 + s_2) + a_1 (1 + s_1) + m_2 + m_1} \lambda^M_{s_2} \lambda^M_{s_1}   \lambda^I_q  {\rm Tr} \left( E_{m_1 + a_1} \rho E_{m_1 +a_1} \right) \\
        &= \frac{1}{4} \sum_{a_1, s_1 = 0, 1} (-1)^{a_1 (1 + s_1) + m_2 + m_1} \lambda^M_{1} \lambda^M_{s_1}   \lambda^I_q  {\rm Tr} \left( E_{m_1 + a_1} \rho E_{m_1 +a_1} \right)
    \end{aligned}
\end{equation}

Finally, the expectation value of the parity of the two measurements is given by 
\begin{equation}
    \begin{aligned}
        \ln E[(-1)^{m_1 + m_2}] = \ln (F_{0, 0} + F_{1, 1} - F_{0, 1} - F_{1, 0}) = 2 x_1^M + x_q^I,
    \end{aligned}
\end{equation}
so we see that the detector actually yields an estimator for $2 x_1^M + x_q^I$. This is of course read-off from the detector in a similar way to before, with the two measurement slices associated with $\lambda_1^M$ and the propagator (through the identity) associated with $\lambda_q^I$.  

A Clifford circuit in general has redundancies comprising of some set of measurements $m \in R$ and some set of propagators $\eta^{t+1}(U^{t+0.5} q^t (U^{t+0.5})^{\dag}) \eta^{t}(q^t)$, where we combine all propagators associated with the same time-slices, and we set $q = I$ in time-slices for which the propagators have no support. We may perform a more tedious generalization of the above calculation to obtain that the log of the expectation value of the parity across all measurements $m$ yields an estimate of $\sum_{t} x_{q^t}^{t+0.5} + \sum_{m \in M_R} x^m_1$ in a model with a separate global Pauli noise channel in every time-slice and readout errors $x^m_1$ associated with each measurement. Conversely, averaging across all possible input into the circuit, expectation values across measurements that do not correspond to any detector in the above manner will yield random results. Hence, detectors in arbitrary Clifford circuits in fact characterize the learnable degrees of freedom.

\subsubsection{From gauged SSC to Pattern Transfer Graph}

A complete theory for the learnable and unlearnable degrees of freedom in Pauli noise channels has been established in Ref.~\cite{Chen_PRXQ_self_consistent_learning}. Here, we will only describe the formalism on a high level, without pretending to any formality. Suppose we are given a set of gates $\{ U \}$, and each gate comes with an associated Pauli noise channel $\Lambda_U$, such that the actual gates carried out are $U \Lambda_U( \cdot ) U^{\dag}$. Assuming single-qubit gates are free, we wish to characterize the parameters of $\Lambda_U$ for each $U$ in the gate set. To do so, one constructs the \textit{Pattern transfer graph} (PTG), which has vertices labelled by bit-strings, which represent the support of Pauli operators. For each Pauli operator $p$ and each gate $U$, we draw a (directed) edge from the bit-string with $1$'s only in the support of $p$ to the bit-string with $1$'s only in the support of $U p U^{\dag}$\footnote{When single-qubit gates are not free, we instead have a vertex for each Pauli operator. The edges then connect vertices which can be transformed into each other by gates in the gate set.}. Each edge can hence be said to represent a noise parameter, since it is associated with a gate and a Pauli operator. In the case of circuits of the form Fig.~\ref{fig:general_noise_circuit}, an additional vertex, called a root node, is introduced to capture SPAM errors, and an edge is drawn from this vertex to each other vertex to represent state preparation noise supported on the bit-string of the target vertex, in the opposite direction for the analogous measurement noise parameter. The extended construction for mid-circuit measurements is proposed in Ref.~\cite{Zhang_PRXQ_2025_midcircuit_learning}. For such a gate set, the learnable degrees of freedom in this gate set are in exact correspondence to rooted cycles (cycles containing the root node) in this graph, whereas the unlearnable degrees of freedom correspond to cuts and unrooted cycles.

The PTG characterizes Pauli noise associated with a gate set, whereas our discussion has used redundancies to characterize Pauli noise in a circuit. How are these two related? Let us consider a circuit, of the form Fig.~\ref{fig:general_noise_circuit}, constructed from some fixed gate set, and suppose this circuit has some redundancy $R$. Let ${\rm pt}(p)$ be the pattern of a Pauli operator $p$, i.e. the bit-string with $1$'s where $p$ has support. Since the circuit is constructed from a fixed gate set, we have that ${\rm pt}(R^t) \rightarrow {\rm pt}(R^{t+1})$ (see Eq.~\eqref{eq:step-wise_label_R}) is an edge in the graph. In this direction, we can thus map the propagators of the form $\eta^{t+1}(R^{t+1}) \eta^t(R^t)$ to edges in the PTG, and the initialization and preparation operators $\eta^T(R^T), \eta^0(R^0)$ to edges in and out of the root node. This immediately yields expressions of the form Eq.~\eqref{eq:expectation_value_distribution} associated to each detector, telling us what the associated learnable degree of freedom is. We comment that this mapping only works in one direction. It may be cleaner to think about vertices as associated with Paulis rather than patterns, and while this discussion only applies to circuits of the form Fig.~\ref{fig:general_noise_circuit}, we expect a similar analogy to hold for circuits with mid-circuit measurements. The upshot of this discussion is that we can think about redundancies of a circuit, and thus detectors in fault tolerance, as descended from the learnable degrees of freedom associated with the Pauli noise of some gate set. We hope this compelling analogy between the observables in either theory provides some learning theoretic insights into the modern detector-focused approach to fault tolerance.

\section{Discussion and Conclusion}

In this work, we have tried to develop a concrete bridge between fault tolerance and lattice gauge theory by showing that the spacetime code, when appropriately gauged, yields a classical gauge theory that faithfully encodes many elements of fault tolerance. In particular, by gauging a set of elementary circuit operators used to generate the spacetime code, we are able to reorganize the notions of circuit action, fault equivalence, and detectors in gauge-theoretic terms: faults become gauge configurations, equivalence becomes Gauss laws, and detectors become gauge-invariant observables, or Wilson loops. We have also tried to clarify several conceptual points that may be less transparent in the subsystem spacetime code, including the precise role of redundancies and the relationship between local and global redundancies, as well as the operational interpretation of boundary conditions. With the formalism in hand, we have then pointed out connections to foliation, mixed state order, and learning theory.  

Finally, we comment on possible future directions. While this work has focused on mapping circuits to gauge theories, the ability to go in the opposite direction would likely be a much more practically fruitful direction, as it would allow us to use gauge theories with desirable properties to construct new and interesting fault tolerant protocols. Here, we have only examined the most basic elements of gauge theory. It would be interesting to see how more exotic elements of gauge theory may show up in the fault tolerance context -- for instance, here we have restricted ourselves to flat configurations, where all our Gauss law operators always take the value $+1$. What would charges that flip these signs correspond to in circuits? We have also glibly referred to this gauge theory as a phase of matter. While we have proven a sense of distance inherited from a fault-tolerant protocol, this is only suggestive of stability, as we have not examined any phase diagrams, or examined the quantitative impact of noise or temperature. Finally, in this work we have restricted ourselves to Clifford circuits only. How can we extend the construction to non-Clifford circuits? The connection to foliation hints at one of the primary difficulties -- our work has mapped circuits to a static gauge theory, in the sense that we are fixing a set of spatial operators. However, to carry out non-Clifford computation in MBQC requires adaptivity \cite{Raussendorf_AnnPhys_2006_one_way_QC}, suggesting that  one may not be able to fix such a set of operators in advance. One possible starting point is the 1D gauge theory of cluster state computation developed in Ref.~\cite{Wong_2024_quantum_gauge_theory_MBQC}, which also treats non-Clifford computation. We leave all these directions to future work.



\section*{Acknowledgements}

I would like to acknowledge R. Sohal for teaching me about gauge theories and mixed state order, and a lot of the physics required to complete this work. Many of the key ideas in this work have been refined through constant discussions with him. I would also like to acknowledge X. Fu for teaching me about spacetime codes, Q. Xu for pointing out the compelling analogy between fault complexes and outcome codes, M. Teo for helpful discussions of graph homology and also pointing me to Ref.~\cite{kubica_2018_arXiv_ungauging_QEC}, which kickstarted this project. I also acknowledge A. Pocklington, G. Zheng, and S. Chen for many helpful discussions about gauge theory, quantum error correction, and learning theory respectively, and in particular thank A. Pocklington for reading drafts of this manuscript. I would like to acknowledge a helpful talk by R. Sahay that helped to clarify the role of local and global redundancies, as well as a talk by S. Lee that inspired the connection to Pauli noise learning.

\break

\appendix

\section{Some Useful Concepts}\label{app_sec:some_useful_concepts}

\subsection*{Chain complexes}

This section provides a quick description of chain complexes relevant to error correction on qubits \cite{Kitaev_2003_ann_phys_toric_double, Breuckmann_PRXQ_LDPC_review}. A chain complex $C$ of length $n$ is a collection of $n+1$ vector spaces $C_i$ (which we will always take to be over $\mathbb{Z}_2$) and linear maps $\partial_i: C_i \rightarrow C_{i-1}$ (referred to as boundary maps), written 
\begin{equation*}
    C = (C_n \xrightarrow{\partial_n} C_{n-1} ... \xrightarrow{\partial_1} C_1  \xrightarrow{\partial_0} C_0),
\end{equation*}
such that $\partial_i \partial_{i+1} = 0$, i.e. boundaries have no boundary. Elements of $C_i$ are referred to as $i$-chains. An $i$-cycle is an element of ${\rm ker} C_i$, whereas an $i$-boundary is an element of ${\rm Im} C_i$. The $i$-th homology is $H_i(C) = {\rm ker} \partial_i / {\rm Im} \partial_{i+1}$, with elements of $H_i(C)$ referred to as homology classes.

In this work, we do not use any particularly deep properties of chain complexes, merely relying on them to provide a succinct description of the physics common to error correction and gauge theory. 

\subsection*{Symplectic and Weyl representation of Paulis}

This section provides a quick review of the binary symplectic representation of Pauli operators \cite{NC_2019}. Let $x = (a, b)$, $W_x = i^{a \cdot b} \bigotimes_{j=1}^n X^{a_j} Z^{b_j}$. These give all Paulis up to a phase, and $W_x$ are also referred to as Weyl operators. We refer to the $x$ as the symplectic representation of the Pauli $\propto W_x$. 

There are several convenient facts about this representation. First, two vectors $x, y \in \mathbb{Z}_2^n \times \mathbb{Z}_2^n$, we have $W_{x + y} \propto W_x W_y$, reflecting the isomorphism between $\overline{\mathcal{P}}_n$ and $\mathbb{Z}_2^n \times \mathbb{Z}_2^n$. 

Second, this representation encodes the commutation relation between two Paulis via the symplectic inner product. In particular,  two Weyl operators $W_x, W_y$, associated with $x = (x^1, x^2), y = (y^1, y^2)$ commute if and only if 
\begin{equation*}
    \begin{aligned}
        \begin{pmatrix}
            x^1 & x^2
        \end{pmatrix} \begin{pmatrix}
            0 & I \\ I & 0
        \end{pmatrix} \begin{pmatrix}
            y^1 \\ y^2
        \end{pmatrix} = 0,
    \end{aligned}
\end{equation*}
and anticommute if they are $1$. The sandwiched matrix is referred to as the symplectic two-form, denoted 
\begin{equation*}
    \Omega = \begin{pmatrix}
            0 & I \\ I & 0
        \end{pmatrix}.
\end{equation*}

Finally, the Weyl operators $W_{x_1}, ..., W_{x_m}$ are independent if and only if the vectors $x_1, ..., x_m$ themselves are also independent.

\section{Deferred proofs from Sec.~\ref{sec:clifford_circuits_and_gauge_generators}}\label{app_sec:deferred_proofs}

\subsection*{Proofs from Sec.~\ref{ssec:ISGs}}\label{app_ssec:proofs_ISGs}

\subsubsection*{Proof of Lem.~\ref{lem:ISG_update_prop}}

\begin{proof}
    In the following, all equalities only hold up to $\pm$ signs, i.e. we are implicitly dealing with the phaseless Pauli group.
    
    
    Consider an operator $p \in {\rm ISG}(t+1)$. Suppose  $p \in \mathcal{M}^{t>}$, then $\eta^t(p)$ is a measurement slice, and $\eta^t(p) \in G_{\mathcal{C}}$. 
    
    Conversely, suppose $p \notin \mathcal{M}^{t>}$. From Prop.~\ref{prop:ISG_update} or otherwise, $p$ must commute with all measurements in $\mathcal{M}^{t>}$. From Lem.~\ref{lem:centralizer_from_propagator}, this implies that 
    \begin{equation*}
        O_1 := \eta^{t+1}(p) \eta^{t}((U^{t>})^{\dag} p U^{t>}),
    \end{equation*} 
    is a product of elementary propagators. 
    
    Next, from Prop.~\ref{prop:ISG_update}, $p$ must be a product of $U^{t>} s_i (U^{t>})^{\dag}$, with $s_i$ generators of ${\rm ISG}(t)$, as well as some set of measurement operators $m \in \mathcal{M}^{t>}$. Specifically, we may write 
    \begin{equation*}
        p = \prod_{s_i \in A} \left( U^{t>} s_i (U^{t>})^{\dag} \right) \times \prod_{m \in B} m,
    \end{equation*}
    for some set $A \subseteq {\rm ISG}(t)$ restricted to generators $s_1, ..., s_k$, and $B \subseteq \mathcal{M}^{t>}$.

    Finally, since measurements in a single Clifford operation must mutually commute, by Lem.~\ref{lem:centralizer_from_propagator}, the propagator $\eta^{t+1}(m) \eta^t((U^{t>})^{\dag} m U^{t>})$ is a product of ECOs for all $m \in B$. Multiplying by the measurement slice and over $B$, 
    \begin{equation*}
        O_2 := \prod_{m \in B} \eta^t((U^{t>})^{\dag} m U^{t>})
    \end{equation*}
    is a product of ECOs. Putting it all together, we have 
    \begin{equation*}
    \begin{aligned}
        \eta^{t+1}(p) &= O_1 \times \eta^{t}((U^{t>})^{\dag} p U^{t>}) \\
        &= O_1 \times \prod_{s_i \in A} \eta^{t} (s_i) \times \prod_{m \in B} \eta^{t}\left( (U^{t>})^{\dag} m U^{t>} \right) = O_1 \times \prod_{s_i \in A} \eta^{t} (s_i) \times O_2,
    \end{aligned}
    \end{equation*}
    where we have argued that $O_1, O_2$ are products of ECOs, and $\eta^t(s_i)$ is a product of ECOs by assumption of the lemma.
\end{proof}

\subsubsection*{Proof of Lem.~\ref{lem:alt_ISG_prop}}

\begin{proof}
    We do this by induction. Clearly the statement is trivially true for $t = 0$, since there no time-steps less than $0$. 
    
    Now suppose this is true for some time $s$. Let us consider a gauge operator $p \in \langle G_{\mathcal{C}} \rangle$ supported only on time-slice $s+1$, and generated by elementary circuit operators associated with time steps $t < s + 1$ only. We can divide the generators of $p$ into two sets $A_{<}, A_{s}$, such that 
    \begin{equation*}
        p = \pm \prod_{g' \in A_<} g' \cdot \prod_{g \in A_s} g,
    \end{equation*}
    where $A_<$ comprises elementary circuit operators associated with time-steps less than $s$ and $A_s$ comprises elementary circuit operators associated with time-step $s+0.5$ only.

    Operators in $A_s$ can only be supported on time-steps $s, s+0.5, s+1$, so in order for $p$ to only be supported on time-step $s+1$, we need that $\prod_{g' \in A_<} g'$ is not supported on any time-step earlier than $s$. This implies it is only supported on time-step $s$, so by hypothesis, $\prod_{g' \in A_<} g' \in \overline{\rm ISG}(s)$. Since $\prod_{g \in A_s} g$ must cause this term to disappear, we obtain that 
    \begin{equation*}
    \begin{aligned}
        \prod_{g \in A_s} g &= g^{s>}_{\rm prop}\left( \prod_{g' \in A_<} g' \right) \cdot \left( \mbox{ measurement slices }\right), 
    \end{aligned}
    \end{equation*}
    Since condition (i) tells us that $p$ has no support on $s+0.5$, we obtain from Lem.~\ref{lem:centralizer_from_propagator} that $U^{s>} \prod_{g' \in A_<} g' (U^{s>})^{\dag}$ commutes with all measurements in $\mathcal{M}^{s>}$. From Prop.~\ref{prop:ISG_update}, this implies that it is in $\overline{\rm ISG}(s+1)$. Since measurement slices must also be in $\overline{\rm ISG}(s+1)$, we obtain that $p$ is a product of elements in $\overline{\rm ISG}(s+1)$ and hence is itself in $\overline{\rm ISG}(s+1)$.
\end{proof}

\subsection*{Proofs from Sec.~\ref{ssec:elts_of_FT}}


\subsubsection*{Proof of Prop.~\ref{prop:fault_equiv}}

\begin{proof}
    Suppose two faults $F, F'$ are equivalent. From Lem.~\ref{lem:prop_to_effect}, we multiply $F, F'$ each by some set of ECOs so that they are supported only on half-integer time-slices as well as the final time-slice. By assumption, their support on half-integer time-slices must be identical, since this corresponds to the measurements they flip. Furthermore, by assumption their support on the final time-slice may differ but only up to an element of ${\rm ISG}(T)$. By Cor.~\ref{cor:gauge_op_ISG}, this ISG element can be formed by a set of ECOs. Hence, we can form $F'$ from $F$ (and vice versa) via multiplication by ECOs.
    
    To show the other direction, we need to show that the effect due to a fault $F$ cannot be modified by multiplying it by ECOs. It suffices to note that faults combine linearly\footnote{We have not shown this but it follows directly from the results of Ref.~\cite{Delfosse_arXiv_2023_spacetime_code_clifford}}. Then, this reduces to showing that any ECO has a null effect. First, any measurement slice in time $t$ is already in ${\rm ISG}(t)$, so it has a null effect on the state, or any subsequent measurement on the state. Next, any propagator has the form 
    \begin{equation*}
        \eta^{t+1}(g^{t>}_{\rm prop}(p)) \eta^t(p).
    \end{equation*}
    Following the procedure in Lem.~\ref{lem:prop_to_effect} annihilates this operator, so it also has null effect. Hence, the effect due to some fault $F$ cannot be modified by multiplying by ECOs; the proposition follows.
\end{proof}

\subsubsection*{Proof of Prop.~\ref{prop:detectors_are_stabilizers}}

Instead of proving this from scratch, we will make use of results from Ref.~\cite{Pesah_arXiv_2025_FT_transformations}. First, we divide 
\begin{equation}
    \langle G_{\mathcal{C}} - \{ \mbox{ meas. dephasers} \}\rangle = \mathcal{G}_{\mathcal{C}}^{\rm cmm} \cup \mathcal{G}_{\mathcal{C}}^{\rm acmm},
\end{equation}
into two sets; $\mathcal{G}_{\mathcal{C}}^{\rm cmm}$ comprises operators whose support on integer time-slices commute with all measurement slices, and $\mathcal{G}_{\mathcal{C}}^{\rm acmm}$ the rest of them. By Lem.~\ref{lem:centralizer_from_propagator}, we have that $\mathcal{G}_{\mathcal{C}}^{\rm cmm}$ is supported entirely on integer time-slices, and coincides exactly with the gauge group of the subsystem spacetime code constructed in Ref.~\cite{Pesah_arXiv_2025_FT_transformations}. We can now cite a fact from their work:

\begin{fact}\label{fact:Pesahspackle}
    Let $m_1, ..., m_k$ be measurements at times $t_1, ..., t_k$, indexed in each half-integer location by $i_1,..., i_k$, such that $\{m_1, ..., m_k\}$ form a detector. Then 
    \begin{equation*}
        {\rm spackle}'(m_1, ..., m_k) = {\rm spackle}(m_1, ..., m_k) \times \prod_{j = 1}^k \eta^{t_j}(X_{i_j})
    \end{equation*}
    is in the center of $\mathcal{G}_{\mathcal{C}}^{\rm cmm}$. Furthermore, ${\rm spackle}'(m_1, ..., m_k)$ is formed by propagators and measurement slices.
\end{fact}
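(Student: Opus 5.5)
The plan is to work directly with the explicit form of ${\rm spackle}'$. Multiplying ${\rm spackle}(m_1,\dots,m_k)$ by $\prod_j \eta^{t_j}(X_{i_j})$ is intended to cancel the half-integer $X$ factors. I read the indices $t_j$, $i_j$ this way, as the half-integer ancilla locations of the $m_j$. Under that reading ${\rm spackle}'$ is supported only on integer slices:
\begin{equation*}
{\rm spackle}'(m_1,\dots,m_k)=\prod_{s}\eta^{s}(P_s),
\end{equation*}
where $P_s$ is the product, over all detector members measured before slice $s$, of the measured Paulis conjugated forward to slice $s$ by the $U$'s. Both claims then become statements about the family $\{P_s\}$.

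The first and key step is to extract from Def.~\ref{def:detector} the commutation property
\begin{equation*}
[P_s, q]=0 \quad \text{for every } q\in\mathcal{M}^{s'>} \text{ with } s'\geq s,
\end{equation*}
where $q$ is propagated to a common slice for the comparison. I would argue this by contradiction. Suppose some later measurement anticommutes with the propagated product of earlier detector members. Then by the update rule of Prop.~\ref{prop:ISG_update}, that product is removed from the stabilizer group. The product of outcomes of the subset measured before that time would then become uniformly random, and so would the total parity of the detector, contradicting Def.~\ref{def:detector}. This is the step I expect to be the main obstacle. The difficulty is bookkeeping: later detector members measured in the same layer must be handled, and so must later members whose outcomes are correlated with the randomized ones. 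I would lean on Lem.~\ref{lem:detector_ISG} and on the induction in Lem.~\ref{lem:alt_ISG_prop} to make the argument rigorous. The idea is to show that $P_s$ remains, up to measurement slices, in the group generated by $\overline{\rm ISG}(s)$ and the not-yet-measured detector members.

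Given the commutation property, the second claim (generation) follows by telescoping. For each $s$, build $\eta^s(P_s)$ in three moves. First, start from the measurement slices $\eta^{t_j+1}(q_j)$. Second, repeatedly multiply by the propagator $g^{s'>}_{\rm prop}$ of the current product, pushing it forward one layer at a time. Third, at each slice where a new detector member is measured, multiply in its measurement slice. By Lem.~\ref{lem:centralizer_from_propagator} and the commutation property, each propagator used has trivial half-integer support. Hence the half-integer parts cancel and the result is exactly $\eta^{s}(P_s)$. Taking the product over $s$ shows that ${\rm spackle}'$ is formed by propagators and measurement slices.

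For centrality in $\mathcal{G}^{\rm cmm}_{\mathcal{C}}$, it suffices to check commutation with two kinds of generators. The first kind is the measurement slices; these commute with ${\rm spackle}'$ slice by slice, by the commutation property above. The second kind is elements of $\mathcal{G}^{\rm cmm}_{\mathcal{C}}$ built from propagators. For a propagator product $\eta^{s+1}(UpU^\dagger)\eta^{s}(p)$, the commutator with $\eta^{s+1}(P_{s+1})\eta^{s}(P_s)$ is computed as follows. The contribution at slice $s+1$ equals that at slice $s$, because $P_{s+1}=UP_sU^\dagger$ times new measurements, and conjugation by $U$ preserves commutation. The only discrepancy comes from newly measured $q$ at layer $s+0.5$. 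These contribute exactly the symplectic product of $UpU^\dagger$ with those $q$. That product vanishes because elements of $\mathcal{G}^{\rm cmm}_{\mathcal{C}}$ have trivial half-integer support, which by Lem.~\ref{lem:centralizer_from_propagator} means they commute with the measurements.
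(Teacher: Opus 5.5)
The paper gives no proof of this Fact; it imports it from Ref.~\cite{Pesah_arXiv_2025_FT_transformations}, so your proposal has to stand on its own. Its architecture is sound once the right commutation lemma is in hand. The telescoping construction of each $\eta^s(P_s)$ and your slice-by-slice commutator bookkeeping both work. The reduction of $\mathcal{G}^{\rm cmm}_{\mathcal{C}}$ to measurement slices and single-layer products $g^{s>}_{\rm prop}(p)$ with trivial half-integer support is also legitimate, because only layer-$(s+0.5)$ propagators touch slice $s+0.5$.

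The gap is the key step. As you state it, the claim is that $[P_s,q]=0$ for all $q\in\mathcal{M}^{s'>}$ with $s'\ge s$, and this is false. Take ${\rm ISG}(0)=\langle I\rangle$ and no unitaries, and measure $Z_2$ at $0.5$, $Z_1$ at $1.5$, $X_1X_2$ at $2.5$, and $Z_1Z_2$ at $3.5$. By Prop.~\ref{prop:ISG_update}, the last outcome equals the sum of the first two, so $D=\{Z_2,Z_1,Z_1Z_2\}$ is a detector. Yet $P_1=Z_2$ anticommutes with the later $X_1X_2$.

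The same example refutes your contradiction argument. After layer $2.5$, the parity of the members measured before slice $1$ is indeed erased from the state. The total parity nevertheless stays deterministic, because $Z_1$, measured in between, has already turned the partial product into $P_2=Z_1Z_2$, which survives. Information can also be re-encoded through products with other anticommuting ISG elements, as in step 2(b) of Prop.~\ref{prop:ISG_update}. This is exactly the bookkeeping you flagged, and neither Lem.~\ref{lem:detector_ISG} nor Lem.~\ref{lem:alt_ISG_prop} resolves it.

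Your sketch also leaves implicit where ${\rm ISG}(0)=\langle I\rangle$ enters, and that assumption is essential. With input $|0\rangle$, measuring $Z_1$ and then $X_1$ makes $\{Z_1\}$ a set of definite parity whose ${\rm spackle}'$ anticommutes with the later measurement slice $\eta^2(X_1)$.

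What your later steps actually use is only a one-layer statement. For every slice $s$, write $P_s':=U^{s>}P_s(U^{s>})^\dagger$; then $P_s'$ must commute with every $q\in\mathcal{M}^{s>}$. Members measured in that same layer commute with $q$ automatically, which covers the measurement slices. This statement is true but needs a real argument.

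One option is to induct on $s$, carrying the invariant $(-1)^{\pi_s}P_s\in{\rm ISG}(s)$, where $\pi_s$ is the parity of the members measured before slice $s$.
\begin{itemize}
\item Suppose $P_s'$ anticommutes with some $q\in\mathcal{M}^{s>}$. Choose generators of $U^{s>}{\rm ISG}(s)(U^{s>})^\dagger$ in which only $P_s'$ anticommutes with $q$.
\item Inserting $q$ just before the measurements of that layer flips the sign of $P_s'$ alone. It changes neither the layer's outcome statistics nor the post-measurement state, because all projectors of the layer commute with $q$ and $q$ is absorbed by its own projector.
\item Take the input maximally mixed, which ${\rm ISG}(0)=\langle I\rangle$ permits. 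The outcome-averaged noiseless circuit is then unital. Hence every nontrivial element of ${\rm ISG}(s)$ has a sign of zero mean, and every sign assignment on a generating set occurs.
\item So there are two reachable states at slice $s$ that differ only in $\pi_s$ but have identical futures. The parity of the later members therefore cannot equal $\pi_s$ plus a constant, and $D$ would not be a detector.
\item If instead the commutation holds, the invariant passes to $s+1$ by Prop.~\ref{prop:ISG_update}.
\item For slices after the last member, $\pi_s$ is constant, and the zero-mean fact forces $P_s=I$.
\end{itemize}

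Alternatively, you can stay inside the paper. Prop.~\ref{prop:detectors_are_redundancies} gives a redundancy $R$ with $\phi(R)=D$. Lem.~\ref{lem:meas_to_ISG_inductive_case} identifies $\eta^s(P_s)$ with the partial product of $R$ up to slice $s$. Each layer's propagator product in $R$ must have trivial half-integer support, since nothing else in $R$ can cancel it. By Lem.~\ref{lem:centralizer_from_propagator}, this is precisely the one-layer property, and Lem.~\ref{lem:squishing_first_measurement} is its first instance. With this lemma replacing yours, your generation and centrality arguments go through unchanged.
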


We can now proceed to prove Prop.~\ref{prop:detectors_are_stabilizers}. 

\begin{proof}
    Since $\mathcal{G}_{\mathcal{C}}^{\rm cmm} \subseteq \mathcal{G}_{\mathcal{C}}$ and $X_{i_j} \in G_{\mathcal{C}}$ as measurement dephasers, from the first part of Fact.~\ref{fact:Pesahspackle}, we obtain that the spackle of a detector is in $\mathcal{G}_{\mathcal{C}}$. 
    
    Now we need to show that it commutes with operators in $\mathcal{G}_{\mathcal{C}}^{\rm acmm}$. Let us consider one such operator, associated with some time $t + 0.5$, which has the form 
    \begin{equation*}
        g := \eta^{t+1}(U^{t>} X_n (U^{t>})^{\dag}) \eta^{t+0.5} \left( \prod_{j \in A} Z_j \right) \eta^{t}(X_n),
    \end{equation*}
    where $A$ indexes the set of measurements in time $t+0.5$ that anti-commute with $U^{t>} X_n (U^{t>})^{\dag}$. 
    
    We will show that this operator commutes with ${\rm spackle}(m)$ for any $m$ (hence it must commute with the product of many $m$'s). Let $m$ be associated with time $\tau + 0.5$. First, if $t < \tau$, then $g$ does not overlap with ${\rm spackle}(m)$. If $t > \tau$, then the support of ${\rm spackle}(m)$ on $t, t+1$ is of the form $\eta^{t+1}(U^{t>} p (U^{t>})^{\dag}) \eta^t(p)$ for some Pauli $p$, which commutes with $g$. 

    Finally, suppose $t = \tau$. Recall that $\pi^t({\rm spackle}(m)) = m$. If $U^{t>} X_n U^{t>}$ commutes with $m$, then $g$ will not have $Z$ support in the measurement location associated with $m$, and so $g$ will commute with ${\rm spackle}(m)$. Conversely, if $U^{t>} X_n U^{t>}$ anti-commutes with $m$ then $g$ will have $Z$ support in the measurement location associated with $m$, and also anti-commute in that location, so that overall $g$ will commute with ${\rm spackle}(m)$.

    Hence, $g$ commutes with any spackle of any $m$, so it must commute with products of them and hence any detector.
\end{proof}

\subsubsection*{Proof of Prop.~\ref{prop:detectors_and_errors_I}}

We now cite a fact from Ref.~\cite{Delfosse_arXiv_2023_spacetime_code_clifford} (albeit in the language of Ref.~\cite{Pesah_arXiv_2025_FT_transformations}). 

\begin{fact}[The stabilizer spacetime code]\label{fact:stab_SC}
    Let $m_1, ..., m_k$ be measurements forming a detector. Then any spacetime fault supported on integer time-slices and anti-commuting with 
    \begin{equation*}
        {\rm spackle}'(m_1, ..., m_k)
    \end{equation*}
    also violates the detector. 
\end{fact}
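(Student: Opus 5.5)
The plan is to reduce the statement to a commutation count, using Lem.~\ref{lem:prop_to_effect} together with Prop.~\ref{prop:detectors_are_stabilizers}. Let $F$ be a fault supported only on integer time-slices, and let $D=\{m_1,\dots,m_k\}$ be the detector. The operators ${\rm spackle}'(m_1,\dots,m_k)$ and ${\rm spackle}(m_1,\dots,m_k)$ differ only by the measurement dephasers $\eta^{t_j}(X_{i_j})$, which sit on half-integer locations. Since $F$ has no half-integer support, $F$ anticommutes with ${\rm spackle}'(D)$ if and only if it anticommutes with ${\rm spackle}(D)$. So it suffices to show that the symplectic product $\langle F, {\rm spackle}(D)\rangle$ equals the parity by which $F$ flips the outcomes in $D$.

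Next, pass from $F$ to its bare effect. By Lem.~\ref{lem:prop_to_effect}, $F^{\rm eff} = F\cdot \prod g$ for some product of ECOs $g$. By Prop.~\ref{prop:detectors_are_stabilizers}, ${\rm spackle}(D)$ lies in the center of $\mathcal{G}_{\mathcal{C}}$, so it commutes with every ECO. Hence $\langle F,{\rm spackle}(D)\rangle = \langle F^{\rm eff},{\rm spackle}(D)\rangle$.

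Now $F^{\rm eff}$ has a simple form. It is a product of $Z$'s on half-integer locations, which encode $o_F$, the set of flipped measurements, times the output Pauli $p_F = \pi^T(F^{\rm eff})$ on the final slice. On half-integer locations, ${\rm spackle}(D)$ carries exactly one $X$ at the ancilla of each $m_j\in D$ and nothing else. Therefore the half-integer contribution to the symplectic product is $|o_F\cap D| \bmod 2$, which is precisely whether the detector's parity is flipped.

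The remaining step, and the one I expect to be the main obstacle, is to show that the final-slice contribution $\langle p_F, \pi^T({\rm spackle}(D))\rangle$ vanishes. Here $\pi^T({\rm spackle}(D)) = \prod_j U^{t_j\to T} m_j (U^{t_j\to T})^\dagger$ is the product of all detector measurements propagated to the end. My first attempt will show that this product is trivial (up to sign) for a genuine detector, arguing by induction on the latest measurement time using Lem.~\ref{lem:detector_ISG}. The last measurements in $D$ multiply to an element of the propagated ISG, and the determinism of the outcome parity should force the propagated product of the earlier members of $D$ to cancel it exactly, not merely up to other stabilizers. This mirrors the cNOT example, where $Z_1Z_2$ propagates to $Z_1$ and cancels against the second measurement. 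If exact cancellation fails in general, the fallback is to use the second part of Fact~\ref{fact:Pesahspackle}. That fact says ${\rm spackle}'(D)$ is a product of propagators and measurement slices, so we can replace ${\rm spackle}(D)$ by a representative whose final-slice support is removed via multiplication by ECOs, following Ref.~\cite{Delfosse_arXiv_2023_spacetime_code_clifford}, where the spackle is truncated after the detector's last measurement. With the final-slice term eliminated, the equality of the commutation parity with the detector flip gives the statement.
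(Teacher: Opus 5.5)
Your argument is correct, and it takes a different route from the paper. The paper does not prove this statement itself; it imports it from Ref.~\cite{Delfosse_arXiv_2023_spacetime_code_clifford}. In that reference, a fault flips a measurement $m$ exactly when it anticommutes with $m$ propagated \emph{backward} onto all earlier slices. The natural detector stabilizer there is therefore the product of these backward cumulants. Matching it to the \emph{forward} object ${\rm spackle}'(D)$ requires that, for a detector, the forward and backward products coincide.

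You instead stay inside the paper's own machinery. Centrality of ${\rm spackle}(D)$ (Prop.~\ref{prop:detectors_are_stabilizers}) lets you trade $F$ for its bare effect $F^{\rm eff}$ (Lem.~\ref{lem:prop_to_effect}). The half-integer ancillas then read off $|o_F\cap D| \bmod 2$ directly. This shows what the ancilla locations buy. It also isolates exactly where the detector hypothesis is needed beyond centrality. That matters, because spackles of non-detectors can also be central, e.g.\ a lone $Z$ measurement in an otherwise empty circuit.

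Both routes rest on the same extra input: $\pi^T({\rm spackle}(D))$, the product of the detector's measurements conjugated forward to slice $T$, must be $\pm I$. In the cited route this is exactly the forward--backward identity.

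On that step, your first route is the right one, and you should drop the fallback. Multiplying ${\rm spackle}(D)$ by an ECO $g$ shifts $\langle F,\cdot\rangle$ by $\langle F,g\rangle$, which is nonzero in general. Moreover, the final-slice support cannot be ``removed''; it must be shown absent. If it were nontrivial, a fault $\eta^T(p)$ with $p$ anticommuting with it would flip nothing and contradict the statement.

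The cleanest closure uses Prop.~\ref{prop:detectors_are_redundancies}.
\begin{itemize}
\item Write $D=\phi(R)$ for a redundancy $R$.
\item Apply the map that conjugates the slice-$s$ part of an operator forward to slice $T$, multiplies over $s$, and discards half-integer support.
\item Each propagator goes to $I$, and each measurement slice $\eta^{t_j+1}(m_j)$ goes to $m_j$ conjugated forward to $T$.
\item Since $\prod_{g\in R} g=\pm I$, this gives $\pi^T({\rm spackle}(D))=\pm I$.
\end{itemize}

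Your induction via Lem.~\ref{lem:detector_ISG} also works once made precise. Assume ${\rm ISG}(0)=\langle I\rangle$ and let $D_{\rm last}$ be the latest-time measurements of $D$. Their product equals, up to a fixed sign, the forward-conjugated product of an ancestry set $A$ of strictly earlier measurements, and the outcome parity of $A$ fixes that sign. Hence $(D\setminus D_{\rm last})\oplus A$ is a detector with strictly earlier latest time, and you can induct.

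The assumption ${\rm ISG}(0)=\langle I\rangle$ is genuinely used here. With a nontrivial input, a single measurement of an input stabilizer is already a detector whose spackle survives to slice $T$.
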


The proof of Prop.~\ref{prop:detectors_and_errors_I} now follows as a corollary.

\begin{proof}
    Fact.~\ref{fact:stab_SC} accounts for all data errors, so we only need to consider readout errors (the two sets of errors clearly combine linearly and form a basis for the possible errors). In this case, a readout error corresponds to a $Z$ on the half-integer locations where ${\rm spackle}(m_1, ..., m_k)$ has an $X$, so the claim follows.
\end{proof}

\section{Open boundary conditions}\label{app_sec:obc}

In Sec.~\ref{ssec:construction}, we introduced temporal boundaries to the gauged SSC in a rather ad hoc way. After disentangling the matter fields, we introduced an additional layer of gauge fields to the end of the circuit and shaved off the initial layer of matter fields. At first glance, this might rather arbitrary, but it allowed us to prove our claims about the correspondence between the elements of fault tolerance and gauge theory. Readers satisfied with that can safely skip this section without losing continuity.

On the other hand, one might worry that our procedure breaks the character of the gauge theory in some important way. We will now examine the gauging step a little more carefully to argue that these boundary conditions are in fact consistent and can be given a clean interpretation.

The key thing to note is that while we often say we gauge a theory on matter fields to obtain a theory on gauge fields, especially when it comes to $\mathbb{Z}_2$ theories on qubits, this actually comprises two steps: the actual gauging step, followed by the disentangling step. The gauging step imposes a Gauss law constraint that mixes both matter and gauge fields, and takes the form in Eq.~\ref{eq:matter_gauge_gauss}. The disentangling step performs cNOT operations between gauge and matter fields to transform the constraint into one that only lives on gauge fields, Eq.~\ref{eq:matter_gauge_gauss} -- equivalently, one fixes the state of the matter fields to be a product of $| + \rangle$ states.

For the bulk of the circuit we perform both in one fell swoop to obtain the Gauss law constraints. This does not affect our error identification step, as we still have gauge field configurations that we can identify with each circuit error in the bulk. However, let us now examine more carefully what happens at the boundaries.

For the initial time boundary, after the first gauging step, we have Gauss laws $S^0_{i, \alpha} = (\sigma^X)^0_{i, \alpha} (\tau^X)^{0.5}_{i, \alpha} = +1$. Fixing the matter field and disentangling also disentangles $(\tau^X)^{0.5}_{i, \alpha}$ from the rest of the gauge theory, which spoils our expectation that errors associated with it should be propagable into the bulk. To avoid this unpleasant conclusion, we simply avoid fixing and disentangling the matter field. The thing that we must modify is the error identification step, where we now treat the pair of gauge and matter fields as an effective degree of freedom, and we can identify $|11 \rangle, | 00\rangle$ with the presence of error and $ |10 \rangle, |01 \rangle$ with the absence of error. Combining this into an effective field yields the initial boundary condition we prescribed. The initial layer of gauge fields are never involved in redundancies so this does not change the structure of redundancies.

For the final time boundary, again we only perform the first gauging step, but do not disentangle the matter field, and treat it as a dynamical component of the gauged SSC. This is equivalent to introducing an additional layer of gauge fields and performing the full gauging and disentangling procedure. Since this final layer of gauge fields is actually a matter field in disguise, they are never involved in redundancies, so again our treatment of the final time boundary does not introduce new redundancies.

In other words, our prescription for boundary conditions is fully consistent with gauging the original theory and disentangling in the bulk but leaving the initial and final layers of matter fields dynamical. Furthermore, it does not introduce any new redundancies and so does not fundamentally change the structure of the theory.

Boundary conditions can also be given an operational interpretation, which can lead to more pleasing boundary conditions, we examine several cases when we look at examples.

\section{Spackle as error and the domain wall picture}\label{app_sec:spackle_domain_wall}

In Sec.~\ref{sec:circuits}, we outlined the standard way to identify errors with spacetime locations. In this short appendix, we point out that there is another possible way to do the error identification step. Instead of identifying single Pauli errors with single space time locations, we may choose to identify Pauli errors with their entire spackle. For instance, with a single-qubit identity wire with $T$ time slices, we may identify an $X$ error in the $t$-th time-step with the operator $\prod_{t' = t+1}^T \eta^{t'}(X)$. Such an operator would commute with all ECOs except $\eta^{t}(Z) \eta^{t+1}(Z)$, so it identifies this error with a particular gauge operator. This has several advantages: 
\begin{enumerate}
    \item Operationally, the spackles encode commutation relations with operators at every single time-step, so that errors and measurements at different time-steps already have the correct commutation relations without having to further propagate the errors. As such, one can read off from the spackle which detectors are triggered. Furthermore, the final time-slice of the spackle tells you how the final state is changed, so spackles contain just as much information about the elements of fault tolerance as the single Pauli errors.
    (Note that the object identified on the RHS in Fact.~\ref{fact:Pesahspackle} contains information about measurement flips in the half-integer time-slices, so everything can be directly read off of these.)
    \item This directly recovers the error identification used in the gauged SSC. For the 1D chain, this allows us to interpret the gauge field in the gauged SSC as a domain wall between the support of the spackle and the rest of the spacetime qubits.  
\end{enumerate}
Unfortunately, identifying errors in such a way renders the notion of a distance in the spacetime code not very meaningful, as there is no longer any clear correspondence between the weight of an error in the circuit and the weight of the operator we identify it with. Developing this perspective fully is left to future work. 

\section{The transpose code yields equivalent error configurations}\label{app_sec:gauss_laws}

The goal of this section is to prove Prop.~\ref{prop:spacetime_gauss_law}. 

To do so, we must show that Gauss laws map error configurations to equivalent error configurations, in the sense of Def.~\ref{def:fault_equiv}. Conveneniently, we have already proven Prop.~\ref{prop:fault_equiv}, so all we have to show is the following, for each Gauss law: A Gauss law takes an gauge field configuration, which is associated with an error, to a second gauge field configuration, which is associated with a second error. Then, we simply have to show that the second error and the first error are related by some set of ECOs. Since Gauss laws combine linearly, we just have to show this for each local Gauss law.

\begin{lem}[Gauss law for measurements]\label{app_lem:gauss_law_1}
    Consider the errors associated with two gauge field configurations related by the Gauss law associated with a measurement location $\sigma^{t+0.5}_{m, i}$. These errors are related by ECOs and are hence equivalent.
\end{lem}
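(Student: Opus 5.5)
The plan is to make the Gauss law at the measurement location $\sigma^{t+0.5}_{m,i}$ explicit and then show that the fault it toggles is a product of ECOs. By Eq.~\eqref{eq:gauge_gauss}, this Gauss law is $\prod_{g} \tau^X[g]$, where $g$ runs over all ECOs whose support includes the ancilla location $(i, t+0.5)$. Measurement slices live only on integer slices, and dephasers have been dropped, so these are exactly the elementary propagators $(g^{X/Z}_{\rm prop})^{t>}_n$ for which $U^{t>}X_n(U^{t>})^\dagger$ (resp.\ $U^{t>}Z_n(U^{t>})^\dagger$) anticommutes with $q_i^{t>}$. Applying the Gauss law therefore flips the gauge fields $\tau^{t+0.5}_{n,x}$ for all $n$ with $Z_n$-type support, and $\tau^{t+0.5}_{n,z}$ for all $n$ with $X_n$-type support, in the symplectic decomposition of the Pauli $p := (U^{t>})^\dagger q_i^{t>} U^{t>}$.

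Under the error identification of Eq.~\eqref{eq:error_identification_II}, $\tau_{n,x}$ corresponds to $Z_n$ and $\tau_{n,z}$ to $X_n$, both inserted just before $U^{t>}$. The toggled fault is therefore exactly the data error $p$, up to phase, inserted before $C^{t>}$. Conjugating through the unitary, this is the error $q_i^{t>}$ applied after $U^{t>}$ and before the measurements.

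It remains to show that this fault is equivalent to the null fault via Prop.~\ref{prop:fault_equiv}. In the original spacetime-location language, the error $p$ placed before $C^{t>}$ is $\eta^t(p)$. Multiplying by the propagator $g^{t>}_{\rm prop}(p)$ gives $\eta^{t+1}(q_i^{t>})\,\eta^{t+0.5}(\prod_{j\in A} Z_j)$, where $A$ is the set of measurements in $\mathcal{M}^{t>}$ anticommuting with $q_i^{t>}$. This set is empty because the measurements in one layer commute, and $q_i$ commutes with itself. What remains is the measurement slice $(g_{\rm ms})_i^{t>}$, so the fault equals a product of ECOs, i.e.\ it has null effect. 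Physically, inserting $q_i$ right before measuring $q_i$ does nothing.

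Since Gauss laws act by addition in $V_1$ and faults combine linearly, any two configurations related by this Gauss law differ by this trivial fault and are hence equivalent. The main obstacle I expect is the bookkeeping in the first step. One must check that the $X$/$Z$ labels of the gauge fields are matched correctly to $Z$/$X$ errors, and that the propagator's ancilla support (defined via commutation after conjugation by $U^{t>}$) is exactly the symplectic support of $p$. One must also handle the boundary cases, namely $t+0.5$ adjacent to the trimmed time-$0$ layer or the appended $C^f$, where the identification was modified; these reduce to the same computation because measurement Gauss laws never touch boundary fields.
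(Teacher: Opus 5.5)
Your proposal is correct and follows the same route as the paper's proof. Both arguments identify the Gauss law at the ancilla location with the data error given by the measured Pauli placed just before $C^{t+0.5}$, and then use Lem.~\ref{lem:centralizer_from_propagator} to see that $\eta^t$ of that Pauli equals the measurement slice times a propagator with no ancilla support. Your version is also more careful than the paper's: the paper writes the Gauss law and the resulting error in terms of $q$ itself rather than $(U^{t>})^{\dagger} q\, U^{t>}$, so its argument as written holds only when $(U^{t>})^{\dagger} q\, U^{t>} = \pm q$, for instance when $U^{t>}=I$.
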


\begin{proof}
    Let the measurement $\sigma^{t+0.5}_{m, i}$ be associated with the Pauli $q = \underline{q_1 q_2 ... q_n}$, where we use the underline to denote that $q_i$ is the Pauli in the $i$-th position of $q$. We suppose that $q_i = X$ or $Z$ for simplicity (this easily extends to $Y$ by linearity). Then, denote $q_i^c = X$ if $q_i = Z$ and vice versa. The associated measurement slice is $\eta^{t+1}(q)$ and the Gauss law is $\prod_{i = 1}^n \tau_{q_i^c, i}^{t+0.5}$. Note that from Lem.~\ref{lem:centralizer_from_propagator}, and the fact that measurement slices only exist for $t+1>0$, $\eta^{t}(q)$ is a product of the measurement slice and some set of elementary propagators.

    Finally, $\tau_{q_i^c, i}^{t+0.5}$ is identified with the error $q_i$ occurring just before $C^{t+0.5}$. This is the same as the error $\eta^t(q_i)$. Hence, the Gauss law maps errors to errors in the same way as $\eta^t(q)$, and the claim follows.
\end{proof}

\begin{lem}[Gauss law for spacetime locations]\label{app_lem:gauss_law_2}
    Consider the errors associated with two gauge field configurations related by the Gauss law associated with a spacetime location $\sigma^{t+0.5}_{\alpha, i}$, $\alpha = x, z$. These errors are related by ECOs and are hence equivalent.
\end{lem}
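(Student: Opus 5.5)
The plan is to reduce the statement, as in Lem.~\ref{app_lem:gauss_law_1}, to showing that the circuit fault obtained by flipping \emph{all} gauge fields in the support of one Gauss law is a product of ECOs. Prop.~\ref{prop:fault_equiv} then says this fault is equivalent to the null fault. Since errors combine linearly, any two configurations related by this Gauss law are then equivalent.

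I take the location to be the integer-time location $(i,x,t)$; the case $\alpha=z$ is identical with $X\leftrightarrow Z$ exchanged. The first step is to enumerate the ECOs whose symplectic support contains the $X$-component at $(i,t)$. There are three kinds.
\begin{enumerate}
\item The elementary propagator $(g^X_{\rm prop})_i^{t>}$, whose tail is $\eta^t(X_i)$. Its gauge field $\tau^{t+0.5}_{i,x}$ is identified with a $Z_i$ error just before $C^{t+0.5}$.
\item The propagators $(g^X_{\rm prop})_n^{t-1>}$ and $(g^Z_{\rm prop})_n^{t-1>}$ whose heads $U^{t-0.5}X_n(U^{t-0.5})^\dag$ and $U^{t-0.5}Z_n(U^{t-0.5})^\dag$ have an $X$-component on qubit $i$. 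Their gauge fields are identified with $Z_n$ and $X_n$ errors, respectively, just before $C^{t-0.5}$.
\item The measurement slices $\eta^{t}(q)$ for $q\in\mathcal{M}^{t-0.5}$ with an $X$-component on qubit $i$. Their gauge fields are identified with readout flips of $q$.
\end{enumerate}

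The key step is a symplectic bookkeeping identity. Write $p:=(U^{t-0.5})^\dag Z_i U^{t-0.5}$. Then $p$ has a $Z_n$ component iff it anticommutes with $X_n$, iff $Z_i$ anticommutes with $U^{t-0.5}X_n(U^{t-0.5})^\dag$, iff that conjugate has an $X$ on qubit $i$. Likewise, $p$ has an $X_n$ component iff $U^{t-0.5}Z_n(U^{t-0.5})^\dag$ has an $X$ on qubit $i$. Finally, a measurement $q$ anticommutes with $Z_i$ iff $q$ has an $X$ on qubit $i$.

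Hence the type-2 and type-3 fields together are exactly the fault ``$p$ inserted just before $C^{t-0.5}$, together with readout flips of precisely those measurements in $\mathcal{M}^{t-0.5}$ anticommuting with $Z_i$.'' In spacetime-location language this is $\eta^{t-1}(p)\,\eta^{t-0.5}(\prod_{j\in A}Z_j)$. The type-1 field is $\eta^t(Z_i)$. Their product is exactly the (non-elementary) propagator $g^{t-0.5}_{\rm prop}(p)$, which by construction is a product of elementary propagators. It is therefore in $\langle G_{\mathcal{C}}\rangle$ and has null effect by Prop.~\ref{prop:fault_equiv}; equivalently, propagating it as in Lem.~\ref{lem:prop_to_effect} annihilates it.

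The step I expect to be the main obstacle is the consistency of the error identification with these conventions, together with the boundaries. Concretely, I must check three things.
\begin{itemize}
\item The half-integer support of $g^{t-0.5}_{\rm prop}(p)$ matches the readout flips generated by propagating $p$ through $C^{t-0.5}$. This should follow from the definition of the sets $A,B$ in Eqs.~\eqref{eq:X_prop}, \eqref{eq:Z_prop}.
\item At $t=T$, the type-1 field is replaced by the output boundary field $\tau^f_{i,x}$, identified with $Z_i$ on the final state, and the same argument applies.
\item The locations at $t=0$ have no Gauss law, so no case is needed there.
\end{itemize}
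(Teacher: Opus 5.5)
Your proposal is correct and follows the paper's proof essentially step by step. The paper likewise enumerates the gauge fields in the Gauss law at $(i,x,t)$ and uses the same anticommutation bookkeeping to show that the fields at time $t-0.5$ assemble into the error $\eta^{t-1}\bigl((U^{t-0.5})^{\dagger} Z_i U^{t-0.5}\bigr)$ together with readout flips of exactly the measurements anticommuting with $Z_i$; it then concludes that the full Gauss-law move is the propagator $\eta^t(Z_i)\,\eta^{t-0.5}(\cdots)\,\eta^{t-1}\bigl((U^{t-0.5})^{\dagger} Z_i U^{t-0.5}\bigr)$, hence a product of ECOs. The only cosmetic difference is that the paper first treats the purely unitary case and then adds the measurement slices, whereas you handle both, and $Y$-components via ``has an $X$-component'', in a single pass.
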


\begin{proof}
    Recall that for open boundary conditions, we have chosen the spacetime locations in the gauged SSC to run from $1$ to $T$ (it is straightforward to extend this claim to $0$ as well). We will consider the case of a spacetime location $\sigma^{t}_{x, i}$ as the $z$ case will follow analogously.
    
    We start by considering the Gauss law in the pure unitary case. Let $A$ be the set of single-qubit Paulis $\alpha_j$ that propagate into $X_i$, i.e.
    \begin{equation*}
        A = \{\alpha_j: X_i \in {\rm supp} U^{t-0.5} \alpha[j]_j (U^{t-0.5})^{\dag}; \alpha[j] \in \{X, Z\} \}.
    \end{equation*}
    The Gauss law is then
    \begin{equation*}
        \tau^{t+0.5}_{x, i} \prod_{j \in A} \tau^{t-0.5}_{\alpha[j], j}.
    \end{equation*}
    This maps errors to errors via 
    \begin{equation*}
        \eta^{t}(Z_i) \prod_{j \in A} \eta^{t-1}(\alpha[j]_j^c),
    \end{equation*}
    where similar to before, we use $\alpha[j]_j^c = X$ if $\alpha[j]_j = Z$ and vice versa. We now examine the terms $\alpha[j]$. Note that
    \begin{equation*}
        X_i \in {\rm supp} U^{t-0.5} \alpha[j]_j (U^{t-0.5})^{\dag} \iff \left[Z_i,  U^{t-0.5} \alpha[j]_j (U^{t-0.5})^{\dag} \right] \neq 0 \iff \left[(U^{t-0.5})^{\dag} Z_i U^{t-0.5}, \alpha[j]_j \right] \neq 0.
    \end{equation*}
    This implies that the Pauli in the $j$-th position of $(U^{t-0.5})^{\dag} Z_i U^{t-0.5}$ is simply $\alpha[j]_j^c$. We conclude that 
    \begin{equation*}
        \eta^{t}(Z_i) \prod_{j \in A} \eta^{t-1}(\alpha[j]_j^c) = \eta^t(Z_i) \eta^{t-1} \left((U^{t-0.5})^{\dag} Z_i U^{t-0.5}\right),
    \end{equation*}
    which, as a propagator, is a product of ECOs, and hence maps errors to equivalent errors. The claim follows for the pure unitary case.

    Next, we consider the case when there may be a measurements in time-slice $t-0.5$. Let 
    \begin{equation*}
        B = \{q_j \in \mathcal{M}^{t+0.5}: X_i \in {\rm supp} q_j\}.
    \end{equation*}
    We observe this is equivalent to 
    \begin{equation*}
        B = \{q_j \in \mathcal{M}^{t+0.5}: [Z_i, q_j] \neq 0 \}.
    \end{equation*}
    
    The Gauss law is then 
    \begin{equation*}
        \tau_{x, i}^{t+0.5} \prod_{l \in B} \tau_{m, l}^{t}  \prod_{j \in A} \tau^{t-0.5}_{\alpha[j], j}.,
    \end{equation*}
    where $\tau_{m, l}^{t}$ refers to the gauge field associated with $q_l$. Each $\tau_{m, l}^t$ is associated with a $Z$ operator on the half-integer time-slice readout location, so this maps errors to errors via
    \begin{equation*}
        \eta^t(Z_i) \prod_{l \in B} \eta^{t-0.5}(Z_l) \eta^{t-1} \left((U^{t-0.5})^{\dag} Z_i U^{t-0.5}\right),
    \end{equation*}
    From the equivalent definition of $B$, we see that this is a propagator, and thus is formed by ECOs, proving our claim for the case with unitaries and measurements.

    We note that the above proof also holds for input and output boundaries. For output boundaries, as it only required us to posit terms at $t-0.5, t-1$. Since we picked out input boundary to start from $0.5$, the arguments above all hold.
\end{proof}

\subsubsection*{Proof of Prop.~\ref{prop:spacetime_gauss_law} -- }

\begin{proof}
    Prop.~\ref{prop:spacetime_gauss_law} follows from Lems.~\ref{app_lem:gauss_law_1}, \ref{app_lem:gauss_law_2} and the linearity of the mapping betwen Gauss laws and ECOs.
\end{proof}

\section{Detectors are redundancies}\label{app_sec:detectors_are_redundancies}

The goal of this appendix is to formally prove Prop.~\ref{prop:detectors_are_redundancies}.

We point out in Lem.~\ref{lem:detector_redundancy_groups} that both detectors and redundancies have a group structure, allowing us to reasonably talk about isomorphism. Note that since this group structure is over $\mathbb{Z}_2$, it is simply a choice whether we want to regard this as a group or vector space. Then, the proof of Prop.~\ref{prop:detectors_are_redundancies} follows by explicit construction of a homomorphism in Prop.~\ref{prop:redundancies_to_detectors}, which we then show in Lem.~\ref{lem:injectivity_phi} and Lem.~\ref{lem:surjectivity_phi} is injective and surjective, leading to the claim.


\subsection*{Detectors and redundancies as groups}

We start by pointing out that detectors and redundancies both have a group structure isomorphic to the product of some number of copies of $\mathbb{Z}_2$. Let $\mathcal{M}$ be the set of all measurements in a circuit. We say that a measurement $m$ can have outcomes $o(m) = 0$ or $o(m) = 1$ as a random variable, depending on the input state of the circuit. Then detectors are subsets $D \subseteq \mathcal{M}$ such that the $\sum_{m \in D} o(m) = O_D$, where addition is taken modulo $2$, and $O_D = 0$ or $O_D = 1$ deterministically, independently of the input state of the circuit, when the circuit is noiseless. 

We can combine detectors to obtain detectors by taking the symmetric difference of the two sets, since the outcome,
\begin{equation*}
    \begin{aligned}
        \sum_{m \in D \oplus D'} o(m)
        &= \sum_{m \in D} o(m) - \sum_{m \in D \cap D'} o(m) + \sum_{m \in D'} o(m) - \sum_{m \in D \cap D'} o(m) \\
        &= \sum_{m \in D} o(m) + \sum_{m \in D'} o(m) \\ 
        &= O_D + O_{D'} =: O_{D \oplus D' }
    \end{aligned}
\end{equation*}
is also deterministic and hence a detector. Clearly, the empty set acts as an identity and each detector is its own inverse. Here, we may also regard $\oplus$ is as addition over $\mathbb{Z}_2$ variables indicating set membership, so this group is isomorphic to a subgroup of $\mathbb{Z}_2^{|\mathcal{M}|}$.
 
Similarly, redundancies in $G_{\mathcal{C}}$ also form a group. A set $R \subseteq G_{\mathcal{C}}$ is a redundancy if $\prod_{p \in R} p^{\otimes 2} = I$. Note the two fold tensor product is taken simply to get rid of potential sign and ordering ambiguities when taking the product, since two fold products of Paulis all commute. The symmetric difference of two redundancies is a redundancy, since 
\begin{equation*}
\begin{aligned}
    \prod_{p \in R \oplus R'} p^{\otimes 2} 
    &= \prod_{p_1 \in R } p_1^{\otimes 2} \cdot \prod_{p_2 \in R \cap R' } p_1^{\otimes 2} \cdot \prod_{p_3 \in R'} p_3^{\otimes 2} \cdot \prod_{p_4 \in R \cap R' } p_4^{\otimes 2} \\
    &= \prod_{p_1 \in R } p_1^{\otimes 2} \cdot \prod_{p_3 \in R'} p_3^{\otimes 2} \\
    &= I \cdot I = I.
\end{aligned}    
\end{equation*}
Again, the identity is given by the empty set $\varphi \subseteq G_{\mathcal{C}}$, and each redundancy is its own inverse. This group is isomorphic to a subgroup of $\mathbb{Z}_2^{|G_{\mathcal{C}}|}$. We note that $\mathbb{Z}_2^{|G_{\mathcal{C}}|}$ is strictly larger than $\mathbb{Z}_2^{|\mathcal{M}|}$.

For easy reference, we formally state this:
\begin{lem}[Detectors and redundancies form groups]\label{lem:detector_redundancy_groups}
    The set of detectors and the set of redundancies both form groups, with the group operation being the symmetric difference. Note this implies they are both $\mathbb{Z}_2$ linear spaces.
\end{lem}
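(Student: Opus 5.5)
We need to prove Lem.~\ref{lem:detector_redundancy_groups}: detectors and redundancies are each closed under symmetric difference, contain the empty set, and every element is its own inverse. The proof is essentially the two computations already written above the lemma. The plan is to make each of them airtight and then note that any subset of $\mathbb{Z}_2^{N}$ with these closure properties is a subgroup, hence a $\mathbb{Z}_2$-linear subspace.

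\textbf{Detectors.} Identify a set $D\subseteq\mathcal{M}$ with its indicator vector $\chi_D\in\mathbb{Z}_2^{|\mathcal{M}|}$, so that $\chi_{D\oplus D'}=\chi_D+\chi_{D'}$.
\begin{enumerate}
\item Write the parity of a detector as $\sum_{m\in D}o(m)=\chi_D\cdot o$ (mod 2), where $o$ is the outcome vector of a noiseless run.
\item By linearity, $\chi_{D\oplus D'}\cdot o=\chi_D\cdot o+\chi_{D'}\cdot o=O_D+O_{D'}$.
\item This value is deterministic for every input state and every noiseless run, since each summand is. Hence $D\oplus D'$ is again a detector.
\item The empty set has parity $0$, so it is a (trivial) detector and serves as the identity. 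Each detector satisfies $D\oplus D=\emptyset$, so it is its own inverse.
\end{enumerate}
Associativity is inherited from $\mathbb{Z}_2^{|\mathcal{M}|}$. The detectors therefore form a subgroup of $\mathbb{Z}_2^{|\mathcal{M}|}$.

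\textbf{Redundancies.} Identify a set $R\subseteq G_{\mathcal{C}}$ with its indicator vector in $\mathbb{Z}_2^{|G_{\mathcal{C}}|}$.
\begin{enumerate}
\item Observe that the condition $\prod_{g\in R}g^{\otimes 2}=I$ is equivalent to $\sum_{g\in R}v_g=0$, where $v_g$ is the symplectic (phaseless) vector of $g$. The factors $g^{\otimes2}$ pairwise commute and each squares to $I$, so the product depends only on the phaseless class of $\prod_g g$. Equivalently, $R$ is a redundancy iff $\chi_R\in\ker G_{\mathcal{C}}^T$, using the matrix of Eq.~\eqref{eq:gauge_complex_1}.
\item The kernel of a linear map over $\mathbb{Z}_2$ is a subspace, which gives closure under $\oplus$, the identity $\emptyset$, and self-inverses immediately.
\item Alternatively, one can use the direct computation above. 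Elements of $R\cap R'$ appear twice and $(g^{\otimes2})^2=I$, so $\prod_{R\oplus R'}g^{\otimes2}=\prod_R g^{\otimes2}\prod_{R'}g^{\otimes2}=I$.
\end{enumerate}

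The only step requiring any care is step 1 of the redundancy part. There one must check that the doubled product is insensitive to ordering and phases, so that membership is a genuinely linear condition; the check is that $g^{\otimes2}$ commute among themselves. I expect no real obstacle. The closing remark that both groups are $\mathbb{Z}_2$-vector spaces then follows because every element has order $2$ and the groups are abelian.
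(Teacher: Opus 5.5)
Your proposal is correct and is essentially the paper's own argument. For detectors, outcome parities add under symmetric difference; for redundancies, the commuting involutions $g^{\otimes 2}$ indexed by $R\cap R'$ cancel. In both cases $\emptyset$ is the identity and every element is its own inverse.

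One caution on your step 1. Commutativity and $(g^{\otimes 2})^2=I$ make $R\mapsto\prod_{g\in R}g^{\otimes 2}$ a homomorphism, whose kernel is the set of redundancies, and that already suffices for the lemma. They do not, however, make the doubled product depend only on $\sum_{g\in R}v_g$: on one qubit, $\{X,Z,Y\}$ has zero symplectic sum, yet $(X\otimes X)(Z\otimes Z)(Y\otimes Y)=-I$. So your identification with $\ker G_{\mathcal{C}}^T$ needs an additional check specific to ECOs, and it is your step 3, i.e.\ the paper's computation, that actually carries the proof.
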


Both groups are finite, so it suffices to show that we have injective homomorphisms from $\mathcal{D}$ to $\mathcal{R}$ and vice versa.

\subsection*{Mapping redundancies to detectors}

The goal is to show the following:
\begin{prop}[Redundancies to detectors]\label{prop:redundancies_to_detectors}
    Let $R$ be a redundancy, and let $\phi(R)$ be the set of measurements associated with measurement slices in $R$. Then $\phi(R)$ is a detector.
\end{prop}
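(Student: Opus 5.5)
The plan is to multiply out the elements of $R$ in causal (time) order and show, by induction on the time-step, that the partial products are always ISG elements supported on a single integer time-slice. Write $R = M \sqcup P$, where $M$ is the set of measurement slices and $P$ the set of elementary propagators. For each half-integer time $t+0.5$, let $R^{\le t+0.5}$ denote the ECOs of $R$ associated with times $\le t+0.5$, and let $O^{t+1} := \prod_{g \in R^{\le t+0.5}} g$, taken up to sign.

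\textbf{Step 1: structure of the partial products.} Since $\prod_{g\in R} g = I$ and ECOs at time $t+0.5$ are supported only on slices $t, t+0.5, t+1$, the product $O^{t+1}$ must cancel against the product of the remaining ECOs, which live at times $\ge t+1.5$ and are supported on slices $\ge t+1$. Hence $O^{t+1}$ is supported only on slice $t+1$. In particular, it has no support on slice $t+0.5$, so its measurement-location $Z$'s all cancel. Since ${\rm ISG}(0)=\langle I\rangle$ and $O^{t+1}$ is generated by ECOs at times $<t+1$, Lem.~\ref{lem:alt_ISG_prop} gives $\pi^{t+1}(O^{t+1}) \in \overline{\rm ISG}(t+1)$.

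\textbf{Step 2: tracking signs.} The claim is that the outcome parity $\sum_{m \in M^{\le t+0.5}} o(m)$ determines the sign of $\pi^{t+1}(O^{t+1})$ as an element of ${\rm ISG}(t+1)$; more precisely, the signed stabilizer equals $(-1)^{\sum_m o(m)+c}\,\pi^{t+1}(O^{t+1})$ for a constant $c$ depending only on the circuit. I would prove this inductively alongside Step 1, using the update rules of Prop.~\ref{prop:ISG_update}:
\begin{itemize}
\item propagators conjugate the previous stabilizer by $U^{t>}$ without altering its sign;
\item multiplying by a measurement slice $\eta^{t+1}(q)$ multiplies by $(-1)^{o(q)}q \in {\rm ISG}(t+1)$;
\item the vanishing of the half-integer support guarantees, via Lem.~\ref{lem:centralizer_from_propagator}, that the propagated operator commutes with all of $\mathcal{M}^{t>}$, so it survives the measurements with its sign intact.
\end{itemize}

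\textbf{Step 3: conclusion.} At the final time, or at any $t$ after the last element of $R$, the product is $I$ with sign $+1$. The induction then gives $\sum_{m\in M} o(m) = c$ deterministically in every noiseless run, so $\phi(R)=M$ is a detector. This matches the intuition in the main text: the last measurement in $M$ measures an ISG element whose sign is fixed by the earlier ones, as in Lem.~\ref{lem:detector_ISG}. Finally, $\phi$ is clearly a homomorphism for symmetric difference, since it simply restricts a subset to its measurement slices.

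The main obstacle is Step 2, the sign bookkeeping. The paper works with phaseless Paulis throughout, and propagators acting on anticommuting operators can introduce measurement-dependent signs. The cleanest route I would try first is to work with signed stabilizer generators for each noiseless run, and use the fact that every factor in $O^{t+1}$, at the moment it is multiplied in, commutes with the measurements at that layer. The anticommuting pieces never survive into the partial product, since their half-integer support must cancel by Step 1. As a result, signs change only through measurement outcomes and through fixed Clifford conjugation phases.
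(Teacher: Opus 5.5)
Your proposal is correct and follows essentially the same route as the paper. In both, you multiply the elements of $R$ out causally, show inductively (via Prop.~\ref{prop:ISG_update} and Lem.~\ref{lem:centralizer_from_propagator}) that each partial product is a signed ISG element whose sign is fixed by the outcome parity so far up to a circuit constant, and conclude from the full product being $I$ that this parity is deterministic. Your global cancellation argument for single-slice support neatly replaces the paper's separate terminating and base-case lemmas. You should, however, state explicitly that the propagators of $R$ at time $t+0.5$ are forced to equal $g^{t>}_{\rm prop}(\pi^t(O^t))$, by the unique decomposition of propagators; this is the step your sign bookkeeping implicitly relies on.
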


We start with an obvious but important fact:
\begin{lem}[No propagator redundancies]\label{lem:no_prop_redundancies}
    There are no propagator redundancies.
\end{lem}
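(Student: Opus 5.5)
The statement is that no nonempty set of elementary propagators (with measurement slices and dephasers excluded) multiplies to the identity. Equivalently, the rows of $G_{\mathcal{C}}$ that correspond to elementary propagators are linearly independent over $\mathbb{Z}_2$. The plan is a "latest time-slice" argument: look at the earliest time-step where the redundancy has support, and observe that there the propagators act on integer slice $t$ through a single, plainly invertible factor.

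Suppose $R$ is a nonempty set of elementary propagators with $\prod_{g\in R} g^{\otimes 2}=I$, working throughout with phaseless Paulis in the symplectic representation. Let $t+0.5$ be the \emph{smallest} half-integer time such that $R$ contains an elementary propagator associated with time $t+0.5$. Collect those propagators as
\[
\{(g^{X}_{\rm prop})^{t>}_n : n\in A_X\}\cup\{(g^{Z}_{\rm prop})^{t>}_n : n\in A_Z\},
\]
and note that their product is $g^{t>}_{\rm prop}(p)$ with
\[
p=\prod_{n\in A_X}X_n\prod_{n\in A_Z}Z_n \neq I .
\]

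The next step is to restrict the product over $R$ to the integer slice $t$. A propagator associated with time $s+0.5$ is supported only on slices $s$, $s+0.5$ and $s+1$. Propagators with $s>t$ therefore have no support on slice $t$. By the choice of $t$, there are no propagators with $s<t$. So the only contribution to $\pi^t$ of the product comes from the time-$(t+0.5)$ propagators, and by definition (Eqs.~\eqref{eq:X_prop}, \eqref{eq:Z_prop}) this contribution is $\eta^t(X_n)$ or $\eta^t(Z_n)$ factor by factor. Hence
\[
\pi^t\Big(\prod_{g\in R}g\Big)=p\neq I ,
\]
which contradicts the assumption that $R$ is a redundancy.

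The same argument can be phrased linearly: the map $p\mapsto g^{t>}_{\rm prop}(p)$ is injective because its $\eta^t$ component is the identity map on $\mathbb{Z}_2^{2n}$. That is exactly the "unique decomposition into propagators" already remarked after the definition of propagators.

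I expect the only potential obstacle to lie at the temporal boundaries and in any rediscretization. Under open boundary conditions the slice-$0$ locations are removed, so the earliest-slice argument must instead be run with the \emph{latest} time-step. There the $\eta^{t+1}$ component is $U^{t>}pU^{t>\dagger}$. This is still injective in $p$, since conjugation by a unitary is invertible, and the later slices receive contributions only from the final boundary fields, which are not propagators. Choosing the earliest or the latest step as appropriate for the boundary conditions in use closes the argument in every case. The same reasoning covers the backwards propagators of Eq.~\eqref{eq:Z_back_prop}, whose slice-$(t+1)$ component is again injective.
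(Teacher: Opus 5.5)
Your earliest-slice argument is correct and is essentially what the paper has in mind. The paper calls the lemma obvious, saying the elementary propagators are ``independent by construction'', and gives no further proof. The step you isolate is that on the earliest slice $t$ only the time-$(t+0.5)$ propagators contribute, through the independent factors $\eta^t(X_n)$, $\eta^t(Z_n)$. This is exactly the step used in the proof of Lem.~\ref{lem:terminating_redundancy}. Nothing more is needed for the setting where the paper actually uses the lemma: the unmodified $G_{\mathcal{C}}$ of Sec.~\ref{sec:ECOs} on slices $0,\dots,T$, as in App.~\ref{app_sec:detectors_are_redundancies}.

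Your closing paragraph, however, claims more than is true, and its generalizations should be dropped.

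\textbf{Periodic boundary conditions.} Here there is no earliest or latest slice, and the statement is actually false. In the repeated-measurement CSS example of Sec.~\ref{ssec:examples}, the excluded global redundancies $(0|0|I_T\otimes L_X|I_T\otimes L_Z)$ have no support on measurement slices. The propagators of a logical operator around the time loop telescope to the identity, so these are propagator redundancies.

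\textbf{Open boundary conditions.} Your fix here rests on a convention: it works because you do not count the final boundary fields as propagators. If instead they are read as the propagators of the appended identity layer $C^f$ of Fig.~\ref{fig:state_convention_II}, then, with slice $0$ removed, they can combine with ordinary propagators into measurement-free redundancies. An example is the propagators carrying a logical operator from time $0.5$ to $T$. This is why the paper stipulates that the boundary fields introduce no new redundancies.

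\textbf{Backward propagators.} Your remark on the backward propagators of Eq.~\eqref{eq:Z_back_prop} needs the CSS-like restriction under which the paper introduces them. At time $t+0.5$ their joint slice-$(t+1)$ component is $\prod_{n\in A_X}U^{t>}X_n(U^{t>})^{\dagger}\prod_{n\in A_Z}Z_n$. This is not injective for a general Clifford. For $U^{t>}=H_1$, the forward $X$-propagator and the backward $Z$-propagator of qubit $1$ are both $\eta^{t+1}(Z_1)\,\eta^{t+0.5}(\cdots)\,\eta^{t}(X_1)$, so together they form a two-element propagator redundancy. The map is injective when $U^{t>}$ maps $X$-type Paulis to $X$-type Paulis, as it does for CSS-like circuits.
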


Let $R \subseteq G_{\mathcal{C}}$ be a redundancy. From Lem.~\ref{lem:no_prop_redundancies}, if $R$ is non-empty, then there must exist a non-empty subset $M_R \subseteq R$ comprising measurement slices, where $M_R \simeq \phi(R)$. We can of course identify $M_R$ with the measurements themselves. We will show that this is a detector.

This will take a bit of work. We record two easy lemmas that will help in the proof.

\begin{lem}\label{lem:no_meas_dephasers}
    $R$ does not contain measurement dephasers.
\end{lem}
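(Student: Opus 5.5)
The plan is to use the structure of the measurement dephasers. Each dephaser $(g_r)^{t>}_j = \eta^{t+0.5}(X_j)$ is the only ECO with $X$-type support on the half-integer location $(j, t+0.5)$. Every other ECO touching that location acts on it with $Z$ only: propagators carry $\eta^{t+0.5}(\prod Z)$ factors, and measurement slices have no half-integer support at all.

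In the symplectic representation, I would write the defining condition of Def.~\ref{def:op_redundancy} as the vanishing of the sum over $\mathbb{Z}_2$ of the symplectic vectors of the elements of $R$. The doubled product $\prod_{g\in R} g^{\otimes 2}=I$ is exactly equivalent to this, since $g^{\otimes 2}$ forgets phases and ordering. I would then restrict this sum to the $X$-component of the half-integer location $(j,t+0.5)$. The only possible contribution to that coordinate comes from $(g_r)^{t>}_j$ itself, so the condition reads $[(g_r)^{t>}_j \in R] = 0 \pmod 2$, and hence $(g_r)^{t>}_j\notin R$. Since $(j,t+0.5)$ was arbitrary, this proves the lemma.

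The only real obstacle is bookkeeping. One has to confirm, from the definitions in Sec.~\ref{sec:ECOs}, that no other ECO has $X$-support on a half-integer location. Measurement slices live on integer slice $t+1$ only, and the propagators in Eqs.~\eqref{eq:X_prop},~\eqref{eq:Z_prop} have only $Z$ factors at $t+0.5$. This remains true after the boundary modifications, because those only alter integer-time locations and add boundary gauge fields. If the classical treatment of half-integer locations (only one degree of freedom) is used, the statement is even simpler: dephasers live in a disconnected component, as noted in the footnote in Sec.~\ref{sec:gauging_SSC}.
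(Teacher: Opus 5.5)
Your proposal is correct and follows the paper's own argument. The paper observes that the dephaser $\eta^{t+0.5}(X_j)$ is the only ECO with $X$-type support at the half-integer location $(j,t+0.5)$, so that coordinate of the $\mathbb{Z}_2$ sum over $R$ vanishes only if the dephaser is absent; your symplectic bookkeeping makes the same step explicit. One small point: you only need the direction $\prod_{g\in R} g^{\otimes 2}=I \Rightarrow$ the symplectic vectors sum to zero. That direction holds, whereas the converse can fail by a sign when $\prod_{g\in R} g=\pm i I$, so ``exactly equivalent'' is slightly stronger than what you use or need.
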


\begin{proof}
    For every half-integer spacetime location, we have no operator (besides the measurement dephaser) in $G_{\mathcal{C}}$ with $X$-type support. The claim follows.
\end{proof}

\begin{lem}[Terminating a redundancy]\label{lem:terminating_redundancy}
    Let $t_s$ be the earliest time-slice for which any given operator in $R$ has some support. Then, the set of operators in $R$ supported on $t_s$ must comprise at least some measurement slice associated with time $t_s - 0.5$.

    Similarly, let $t_f$ be the latest time-slice for which any given operator in $R$ has some support. Then, the set of operators in $R$ supported on $t_f$ must comprise at least some measurement slice associated with time $t_f - 0.5$.
\end{lem}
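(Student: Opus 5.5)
The plan is a direct support-counting argument, done on the single time-slice in question. We use two facts: the product of all elements of $R$ is the identity, and the elementary propagators are built from an independent generating set. By Lem.~\ref{lem:no_meas_dephasers}, $R$ contains only measurement slices and elementary propagators. Every such operator has support on at least one integer time-slice, so $t_s$ and $t_f$ are integers.

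\textbf{Earliest slice.} First I would list which ECOs in $G_{\mathcal{C}}$ can act nontrivially on an integer slice $t$.
\begin{itemize}
\item Measurement slices associated with $t-0.5$, which are supported on $t$ only.
\item Elementary propagators associated with $t-0.5$, which are supported on $t-1$, $t-0.5$ and $t$.
\item Elementary propagators associated with $t+0.5$, which are supported on $t$, $t+0.5$ and $t+1$.
\end{itemize}
Take $t=t_s$. A propagator associated with $t_s-0.5$ always has nontrivial support $\eta^{t_s-1}(X_n)$ or $\eta^{t_s-1}(Z_n)$ on slice $t_s-1$. By minimality of $t_s$, no such propagator lies in $R$.

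Suppose, for contradiction, that $R$ contains no measurement slice associated with $t_s-0.5$. Then the elements of $R$ touching slice $t_s$ are exactly some set $P$ of elementary propagators associated with $t_s+0.5$. Since $R$ multiplies to the identity and no other element of $R$ touches slice $t_s$, we need
\[
\prod_{g\in P}\pi^{t_s}(g)=I .
\]
Each $\pi^{t_s}(g)$ is a distinct element of $\{X_n,Z_n\}_{n=1}^{N}$, and these are independent in the symplectic representation. So the product is the identity only if $P=\emptyset$. But then no element of $R$ is supported on $t_s$, which contradicts the definition of $t_s$.

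\textbf{Latest slice.} The argument is the mirror image. A propagator associated with $t_f+0.5$ has nontrivial support on slice $t_f+1$, so it is excluded by maximality of $t_f$. Suppose $R$ contains no measurement slice associated with $t_f-0.5$. Then the elements touching slice $t_f$ are some set $P$ of propagators associated with $t_f-0.5$, and their combined support on slice $t_f$ is
\[
\eta^{t_f}\!\left(U^{t_f-0.5}\,p\,(U^{t_f-0.5})^{\dag}\right),
\]
where $p$ is the product of the distinct generators $X_n,Z_n$ labelling $P$. Conjugation by a unitary is a bijection on Paulis, and $p\neq I$ whenever $P\neq\emptyset$ by the same independence argument. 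So this support is nontrivial, contradicting $\prod_{g\in R}g^{\otimes 2}=I$. If instead $P=\emptyset$, nothing in $R$ touches $t_f$, again a contradiction.

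The only step needing care is the bookkeeping in the first part. Each of $t_s$ and $t_f$ must be matched with the right neighbouring layer of ECOs, so that exactly one family of propagators is excluded by extremality. The slice cancellation should also be done in the phaseless, symplectic picture, where independence of the single-qubit generators, and its preservation under Clifford conjugation, is immediate.
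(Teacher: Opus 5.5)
Your proof is correct and takes essentially the same route as the paper's. Like the paper, you exclude the propagators from the layer beyond the extremal slice, then use independence of the remaining propagators' supports on that slice to force a measurement slice into $R$; you additionally write out the $t_f$ case, which the paper only calls similar, and correctly place the extra support of a propagator associated with $t_s-0.5$ on slice $t_s-1$, where the paper's text says $t_s-1.5$.
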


\begin{proof}
    The ECOs supported on $t_s$ are either propagators associated with time $t_s - 0.5$ or $t_s + 0.5$, or measurements associated with $t_s - 0.5$. Since propagators supported on $t_s - 0.5$ also have support on $t_s - 1.5$, they cannot be contained in $R$.     Next, in order for $R$ to be a redundancy, the support on $t_s$ of the propagators associated with time $t_s + 0.5$, together with the measurements associated with time $t_s - 0.5$, must themselves multiply to $\pm I$. Since the support of said propagators on $t_s$ are necessarily independent, $R$ must also contain at least one measurement associated with time $t_s - 0.5$. A similar argument holds for $t_f$.
\end{proof}

In other words, redundancies must begin and end in measurements. 

We now establish some notation for the next couple of lemmas. Let $t_1 + 0.5 < ... < t_k + 0.5$ be the set of times associated with measurements in $M_R$. From Lem.~\ref{lem:terminating_redundancy}, we know that $t_1 +1 = t_s$ and $t_k + 1 = t_f$. We will denote by $q_j \in \mathcal{P}_n$ the product (arbitrarily fixing the order) of Paulis measured in time $t_j+0.5$. This is effectively a measurement itself, and we can denote by $o(q_j) = 0,1$ the outcome of this measurement. Subsequently, with this understanding, we will simply refer to $q_j$'s as measurements. Finally, we denote $G^{t>}_R \subseteq R$ the set of propagators in $R$ associated time-step $t+0.5$.

\begin{lem}[Base case]\label{lem:meas_to_ISG_base_case}
    Let $t'$ be an integer time such that $t_1 < t' \leq t_2 + 1$. Then 
    \begin{equation*}
    s_{t_1 \rightarrow t'} := (-1)^{o(q_1)} U^{t_1 + 1 \rightarrow t'} q_1 (U^{t_1 + 1\rightarrow t'})^{\dag} \in {\rm ISG}(t')    
    \end{equation*}
    Furthermore, $\eta^{t'}(s_{t_1 \rightarrow t'})$ is, up to a sign, equal to $\eta^{t_1+1}(q_1) \cdot \prod_{t = t_1}^{t'-1} \prod_{g \in G^{t>}_R} g$, i.e. a product of all ECOs in $R$ associated with times $t'-0.5$ or earlier other than possibly the measurement at $t_2 + 0.5$.
\end{lem}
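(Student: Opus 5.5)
We argue by induction on $t'$, running from $t' = t_1+1$ up to $t' = t_2+1$, and prove both claims (ISG membership and the ECO decomposition) together. Throughout, signs are tracked only for the ISG statement; the ECO identity is taken up to sign, as in Lem.~\ref{lem:ISG_update_prop}.

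\textbf{Base case, $t' = t_1+1$.} The operator $q_1$ is the product of the commuting Paulis measured at $t_1+0.5$. Right after those measurements, $(-1)^{o(q_1)}q_1 \in {\rm ISG}(t_1+1)$ by Prop.~\ref{prop:ISG_update}. In this case $U^{t_1+1\rightarrow t_1+1}$ is the identity and the product over $t$ is empty. The measurement slices in $R$ at time $t_1+0.5$ multiply to $\eta^{t_1+1}(q_1)$, which gives the second claim.

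\textbf{Inductive step.} Assume the claim holds at some $t'$ with $t'\le t_2$. Let
\[
P_{t'} := \eta^{t_1+1}(q_1)\prod_{t=t_1}^{t'-1}\prod_{g\in G_R^{t>}} g = \eta^{t'}(s_{t_1\rightarrow t'}).
\]

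We first identify the new propagators. Multiply $P_{t'}$ by the propagators $G_R^{t'>}$ at time $t'+0.5$. By Lem.~\ref{lem:terminating_redundancy}, and since $t'<t_2+0.5$ means no measurement slice of $R$ lies in the window $(t_1+0.5, t_2+0.5)$, the only other ECOs of $R$ touching slice $t'$ are those already in $P_{t'}$. Because $R$ is a redundancy, the total support on slice $t'$ of the ECOs in $R$ must cancel. Hence the $\eta^{t'}$-parts of $G_R^{t'>}$ multiply to $s_{t_1\rightarrow t'}$, up to sign. The elementary propagators have independent support on their input slice, which is the fact behind Lem.~\ref{lem:no_prop_redundancies}. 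Therefore $G_R^{t'>}$ is exactly the decomposition of $g^{t'>}_{\rm prop}(s_{t_1\rightarrow t'})$ into elementary propagators.

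Next we show that nothing survives on the half-integer slice. The support of $R$ on the half-integer locations at $t'+0.5$ must also vanish, since no measurement slices or dephasers (Lem.~\ref{lem:no_meas_dephasers}) act there. So $\pi^{t'>}(g^{t'>}_{\rm prop}(s_{t_1\rightarrow t'}))=I$. Lem.~\ref{lem:centralizer_from_propagator} then gives that $U^{t'>}s_{t_1\rightarrow t'}(U^{t'>})^{\dag}$ commutes with every measurement in $\mathcal{M}^{t'>}$.

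Now we apply the ISG update rule. By the ISG update rules (Prop.~\ref{prop:ISG_update}), a conjugated ISG element that commutes with all measurements in the layer survives the update with its sign unchanged. This gives $s_{t_1\rightarrow t'+1}\in{\rm ISG}(t'+1)$. Multiplying, $P_{t'}\prod_{g\in G_R^{t'>}}g = \eta^{t'+1}(U^{t'>}s_{t_1\rightarrow t'}(U^{t'>})^{\dag}) = \eta^{t'+1}(s_{t_1\rightarrow t'+1})$, which is the second claim at $t'+1$.

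\textbf{The endpoint $t'=t_2+1$.} This is the step I expect to be the main obstacle. At time $t_2+0.5$ the measurement slices of $R$ now appear, so the half-integer cancellation argument above needs more care. The fix is to note that the vanishing of the support at $t_2+0.5$ is unaffected by measurement slices, because measurement slices live on the integer slice $t_2+1$. The argument therefore goes through verbatim for the propagators $G_R^{t_2>}$, and it excludes exactly the measurement at $t_2+0.5$, matching the statement of the lemma.

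A subtlety appears when $q_1$ anticommutes with later-measured non-$R$ operators. This is ruled out by the commutation shown at each step, so the sign $(-1)^{o(q_1)}$ is carried through deterministically.
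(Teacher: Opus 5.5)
Your proof is correct and follows essentially the same route as the paper's. Both argue by induction on $t'$. Cancellation of $R$ on slice $t'$ forces $G^{t'>}_R$ to be the elementary decomposition of $g^{t'>}_{\rm prop}(s_{t_1\to t'})$. Cancellation on the half-integer slice $t'+0.5$, via Lem.~\ref{lem:centralizer_from_propagator}, gives commutation with every measurement in that layer, and Prop.~\ref{prop:ISG_update} then gives ISG membership. Your handling of the endpoint is, if anything, slightly cleaner than the paper's remark about multiplying by the measurements at $t_2+0.5$: you note that those measurement slices sit on slice $t_2+1$ rather than on slice $t_2$ or the half-integer slice, so the same argument applies unchanged.
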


\begin{proof}

We will prove this inductively.

Base case, i.e. $t' = t_1 + 1$: Since $q_1$ is measured at $t_1 + 0.5$, by Prop.~\ref{prop:ISG_update}, we have $(-1)^{o(q_1)} q_1 \in {\rm ISG}(t_1 + 1)$, where $o(q_1)$ is the sum of the outcomes from each measurement comprising $q_1$. 

Next, we prove the inductive step. Suppose  $s_{t_1 \rightarrow t'} := (-1)^{o(q_1)} U^{t_1 + 1 \rightarrow t'} q_1 (U^{t_1 + 1 \rightarrow t'})^{\dag} \in {\rm ISG}(t')$ and $\eta^{t'}(s_{t'})$ is, up to a sign, equal to $\eta^{t_1}(q_1) \cdot \prod_{t = t_1}^{t'-1} \prod_{g \in G^{t>}_R} g$ for some $t' < t_2 + 1$.

First, since $t' < t_2 + 1$, $R$ does not contain a measurement slice supported on $t'$. Hence, in order for the product of elements in $R$ to have no support on time $t'$, we require that $\prod_{g \in G^{t'>}_R} g = \pm g_{\rm prop}^{t'>}(s_{t'})$. Furthermore, by Lem.~\ref{lem:centralizer_from_propagator}, for the product of elements in $R$ to have no support on the half-integer time-slice $t' + 0.5$, we require that 
\begin{equation*}
    s_{t_1 \rightarrow t'+1} = U^{t'+0.5} s_{t_1 \rightarrow t'} (U^{t'+0.5})^{\dag} = \pm \prod_{g \in G^{t'+0.5}_R} g \cdot \eta^{t'}(s_{t'})
\end{equation*} 
commutes with all measurements in the full circuit at time $t' + 0.5$. By Prop.~\ref{prop:ISG_update}, this also implies that $s_{t_1 \rightarrow t'+1} \in {\rm ISG}(t'+1)$. 

For the case $t' = t_2$, we note that multiplying an element of ${\rm ISG}(t' + 1)$ by measurements carried out in $t' + 0.5$ yields an element of ${\rm ISG}(t' + 1)$ up to a sign.  

The claim follows.

\end{proof}



Next, we should show that this holds for all the measurement times in $R$. The proof proceeds very similarly to Lem.~\ref{lem:meas_to_ISG_base_case}.

\begin{lem}[Redundancies must propagate ISG elements I]\label{lem:meas_to_ISG_inductive_case}
    \begin{equation*}
    s_{t_j + 1 \rightarrow t'} := (-1)^{\sum_{l = 1}^j o(q_l)} U^{t_{j-1} + 1 \rightarrow t'} q_j U^{t_{j-1} + 1 \rightarrow t_{j} + 1} q_{j-1} \cdot ... \cdot U^{t_1 \rightarrow t_2} q_1 (U^{t_1 + 1 \rightarrow t_j + 1})^{\dag} \in {\rm ISG}(t')
    \end{equation*}
    for all $t_j = t_2, ..., t_{k-1}$ and $t_j < t' \leq t_{j+1} + 1$. 
    \begin{equation*}
        \eta^{t'}(s_{t_j + 1 \rightarrow t'}) = \pm \prod_{l=1}^{j} \eta^{t_l}(q_l) \cdot \prod_{t = t_1}^{t'-1} \prod_{g \in G^{t>}} g,
    \end{equation*}
    i.e. a product of all ECOs in $R$ associated with times $t'+0.5$ or earlier, except for possibly measurements in $R$ at $t'+0.5$.
\end{lem}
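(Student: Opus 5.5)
The plan is a double induction: an outer induction on the index $j$ of the measurement times $t_1 < \dots < t_k$ in $M_R$, and inside each interval an inner induction on the integer time $t'$. This mirrors the proof of Lem.~\ref{lem:meas_to_ISG_base_case}, which serves as the case $j=1$.

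For the outer step, assume the statement holds for $j-1$. In particular, take $t' = t_j + 1$. Then $s_{t_{j-1}+1 \rightarrow t_j+1} \in {\rm ISG}(t_j+1)$, and $\eta^{t_j+1}$ of it equals, up to sign, the product of all ECOs in $R$ associated with times up to $t_j + 0.5$, excluding the measurement slices at $t_j+0.5$. This matches the last step of the base-case lemma, where multiplication by measurements at that layer was allowed. Because $R$ is a redundancy, the product of all its elements has no support on slice $t_j+1$. The elements of $R$ supported there are:
\begin{enumerate}
\item the earlier product, which acts by $\eta^{t_j+1}(s_{t_{j-1}+1\rightarrow t_j+1})$,
\item the measurement slices $\eta^{t_j+1}(q_j)$,
\item the propagators of $G^{t_j+1>}_R$.
\end{enumerate}
Multiplying the first two gives the new starting operator $\pm\, q_j\, s_{t_{j-1}+1\rightarrow t_j+1}$. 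This is an ISG$(t_j+1)$ element, since $q_j$ is measured at $t_j + 0.5$ and so $(-1)^{o(q_j)}q_j \in {\rm ISG}(t_j+1)$ by Prop.~\ref{prop:ISG_update}, and ISGs are closed under products. The sign $(-1)^{\sum_{l\le j} o(q_l)}$ accumulates in the same way.

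For the inner induction on $t'$ in the range $t_j + 1 < t' \le t_{j+1}+1$, I repeat the base-case argument verbatim. For $t' \le t_{j+1}$ there is no measurement slice of $R$ on slice $t'$, and Lem.~\ref{lem:no_meas_dephasers} rules out dephasers. So the only way to cancel the support on $t'$ is $\prod_{g\in G^{t'>}_R} g = \pm g^{t'>}_{\rm prop}(s)$, using the unique decomposition of propagators. Cancelling the support on $t'+0.5$ then forces $U^{t'>} s (U^{t'>})^\dagger$ to commute with $\mathcal{M}^{t'>}$, by Lem.~\ref{lem:centralizer_from_propagator}. Prop.~\ref{prop:ISG_update} then places it in ${\rm ISG}(t'+1)$, with the sign unchanged because no new outcomes enter. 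The endpoint $t' = t_{j+1}+1$ is handled as in the base lemma: the product is allowed to omit the measurement slices at $t_{j+1}+0.5$, which is exactly what feeds the next outer step.

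I expect the main obstacle to be bookkeeping rather than conceptual. The index conventions in the statement, such as $\eta^{t_l}$ versus $\eta^{t_l+1}$ for measurement slices and the nested unitary compositions, have to be reconciled carefully. The proof also needs to justify that $q_j$, a product of several commuting measurements at one layer, behaves like a single measurement with outcome $\sum o$. That holds because commuting measurements at one layer jointly land in the ISG.
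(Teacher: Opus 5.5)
Your proposal is correct and follows essentially the same route as the paper. The paper also runs an outer induction over the measurement times, seeded by Lem.~\ref{lem:meas_to_ISG_base_case}. At slice $t_j+1$ it absorbs $(-1)^{o(q_j)}q_j$ using Prop.~\ref{prop:ISG_update} and closure of the ISG under products, and it handles the intermediate times by repeating the base-case propagation step. Your version is more explicit than the paper's terse proof: it spells out how the vanishing of $R$ on slices $t'$ and $t'+0.5$, together with Lem.~\ref{lem:no_meas_dephasers} and Lem.~\ref{lem:centralizer_from_propagator}, forces the propagators at $t'+0.5$ to be exactly $g^{t'>}_{\rm prop}(s)$.
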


\begin{proof}
    We will prove this inductively. The base case is given by Lem.~\ref{lem:meas_to_ISG_base_case}.

    Next, we suppose the claim holds for some $t_j < t_{k-2}$. First, we note that by Prop.~\ref{prop:ISG_update},
    \begin{equation*}
            s_{t_{j+1} + 1 \rightarrow t_{j+1} + 1} = (-1)^{o(q_{j+1})} q_{j+1} s_{t_{j} \rightarrow t_{j+1}+1} \in {\rm ISG}(t_{j+1} + 1),
    \end{equation*}
    and 
    \begin{equation*}
        \eta^{t_{j+1} + 1}(s_{t_{j+1} + 1 \rightarrow t_{j+1} + 1} ) = \eta^{t_{j+1} + 1}(q_{j+1} ) \cdot \eta^{t_{j+1} + 1}(s_{t_j+1 \rightarrow t_{j+1} + 1}) = \pm \prod_{l=1}^{j+1} \eta^{t_l}(q_l) \cdot \prod_{t = t_1}^{t_{j+1}} \prod_{g \in G^{t>}_R} g
    \end{equation*}
    follows because $\eta^{t_{j+1} + 1}(q_{j+1} ) \in R$ by assumption.

    We further suppose the claim holds for $s_{t_{j+1} + 1 \rightarrow t'}$. Then this implies that it holds for $s_{t_{j+1} + 1 \rightarrow t'+1}$ for $t' \leq t_{j+2} + 1$ by the same reasoning as the inductive step in Lem.~\ref{lem:meas_to_ISG_base_case}.

    The full claim follows.
\end{proof}

We remark that any given (non-terminal) measurement in a detector may anti-commute with other measurements in a circuit, say in the next time-step. However, in order for this measurement to be a non-terminal part of a detector, there must exist an ISG element which also anti-commutes with the same measurement, such that their product can propagate freely. As per Cor.~\ref{cor:gauge_op_ISG}, every ISG element is encoded by a product in $G_{\mathcal{C}}$. The fact that, rather counter-intuitively, the products of measurements (up to propagators) commute through the measurements in every time-step of the circuit contained in bulk of the detector is a reflection of this.

Finally, we can prove Prop.~\ref{prop:redundancies_to_detectors}.

\begin{proof}
    From Lem.~\ref{lem:meas_to_ISG_inductive_case}, we have that 
    \begin{equation*}
    s_{t_{k-1} + 1 \rightarrow t_k+1} := (-1)^{\sum_{l = 1}^{k-1} o(q_l)} U^{t_{k-1} + 1 \rightarrow t_k + 1} q_{k-1} U^{t_{j-1} \rightarrow t_{j}} q_{j-1} \cdot ... \cdot U^{t_1 \rightarrow t_2} q_1 (U^{t_1 \rightarrow t_j})^{\dag} \in {\rm ISG}(t_k+1),
    \end{equation*}
    and furthermore that $\eta^{t_k+1}(s_{t_{k-1} + 1 \rightarrow t_k+1})$ is the product of every element of $R$ except the final measurement $q_k$. 

    By Prop.~\ref{prop:ISG_update}, we have 
    \begin{equation*}
    (-1)^{o(q_k)} q_k \cdot s_{t_{k-1} + 1 \rightarrow t_k+1} \in {\rm ISG}(t_k+1).
    \end{equation*}
    However, since $R$ is a redundancy, we must have that $(-1)^{o(q_k)} q_k \cdot s_{t_{k-1} + 1 \rightarrow t_k+1} \propto \pm I$. Since it is in an ISG, it must be equal to $I$. This implies that the sign coming from the outcomes, which is $(-1)^{\sum_{l = 1}^k o(q_l)}$ must be deterministic. Hence, the measurements $q_1, ..., q_k$ must form a detector. This is exactly the set of measurements given in $\phi(R)$, so we conclude that $\phi(R)$ is a detector.
\end{proof}

We have shown that the map $\phi$ indeed maps redundancies to detectors. For completeness, we point out that it is a homomorphism.
\begin{lem}[Homomorphism property]\label{lem:phi_homomorphism}
    $\phi$ is a homomorphism.
\end{lem}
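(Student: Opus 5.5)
We need to show that $\phi(R \oplus R') = \phi(R) \oplus \phi(R')$, where both groups carry the symmetric-difference operation of Lem.~\ref{lem:detector_redundancy_groups}. The plan is to treat $\phi$ as a restriction-of-support map and check that restriction commutes with symmetric difference.

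First we make the map precise. Each redundancy $R \subseteq G_{\mathcal{C}}$ splits uniquely as a disjoint union $R = M_R \sqcup P_R$. Here $M_R$ collects the measurement slices and $P_R$ collects the elementary propagators; by Lem.~\ref{lem:no_meas_dephasers}, $R$ contains no measurement dephasers. The measurement slices are in bijection with the circuit measurements $\mathcal{M}$ through $(g_{\rm ms})_i^{t>} \mapsto q_i^{t>}$. Writing $\iota$ for this bijection, we have $\phi(R) = \iota(R \cap G_{\rm ms})$, where $G_{\rm ms}$ is the set of measurement slices.

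The key step is the elementary set identity $(R \oplus R') \cap G_{\rm ms} = (R \cap G_{\rm ms}) \oplus (R' \cap G_{\rm ms})$, which holds because intersection distributes over symmetric difference. In indicator-vector language, intersecting with $G_{\rm ms}$ is a coordinate projection $\mathbb{Z}_2^{|G_{\mathcal{C}}|} \to \mathbb{Z}_2^{|G_{\rm ms}|}$, and any such projection is $\mathbb{Z}_2$-linear. Since $\iota$ is a bijection of underlying sets, it also preserves symmetric difference. Composing the two gives $\phi(R \oplus R') = \phi(R) \oplus \phi(R')$.

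It remains to check that $\phi$ lands in the detector group and respects identities. Prop.~\ref{prop:redundancies_to_detectors} already shows that $\phi(R)$ is a detector for nonempty $R$. We also need $\phi(\emptyset) = \emptyset$, which is immediate. Together with the additivity above, $\phi$ is a homomorphism $\mathcal{R} \to \mathcal{D}$.

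There is essentially no obstacle. The only subtlety is that the detector sign $O_D$ is additive: Lem.~\ref{lem:detector_redundancy_groups} shows $O_{D \oplus D'} = O_D + O_{D'}$. So if detectors are viewed as pairs $(D, O_D)$, compatibility of signs also holds automatically. I would note this in one line for completeness.
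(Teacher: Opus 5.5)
Your proposal is correct and follows the paper's own argument: split each redundancy into its measurement slices and propagators, and note that restricting to the measurement slices commutes with symmetric difference. The paper does this in one line. Your extra checks are harmless: that the image lies in $\mathcal{D}$ via Prop.~\ref{prop:redundancies_to_detectors}, that $\phi(\emptyset)=\emptyset$, and that the sign $O_D$ is additive.
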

\begin{proof}
    A redundancy $R_i$ can be split into two disjoint sets of operators, $R_i = M_i \cup P_i$, where $M$ are measurement slices and $P$ are propagators. Then $\phi(R_1 \oplus R_2) = \phi(M_1 \oplus M_2 \cup P_1 \oplus P_2) = M_1 \oplus M_2 = \phi(R_1) \oplus \phi(R_2)$, identified with measurements in the circuit, so $\phi$ preserves the group operation. Hence it is a homomorphism.
\end{proof}

\subsection*{The mapping is an isomorphism}

To promote $\phi$ to an isomorphism, we prove that it is injective and surjective.

\begin{lem}[Injectivity]\label{lem:injectivity_phi}
    $\phi$ is injective.
\end{lem}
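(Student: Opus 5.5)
Since $\phi$ is a homomorphism of $\mathbb{Z}_2$-vector spaces (Lem.~\ref{lem:phi_homomorphism}), injectivity reduces to showing that the kernel is trivial. Concretely, we must show that any redundancy $R$ with $\phi(R) = \emptyset$ is itself empty. The plan is to take such an $R$ and observe that, since $\phi(R)$ is exactly the set of measurement slices in $R$, the redundancy contains no measurement slices at all. By Lem.~\ref{lem:no_meas_dephasers} it also contains no measurement dephasers, so $R$ consists entirely of elementary propagators.

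We then invoke Lem.~\ref{lem:no_prop_redundancies}: there are no nonempty redundancies built solely from propagators, so $R = \emptyset$ and $\ker \phi$ is trivial. Since this lemma is stated as "obvious", I would also supply its short justification, in the same spirit as Lem.~\ref{lem:terminating_redundancy}. Suppose $R$ is a nonempty set of elementary propagators, and let $t_s$ be the earliest integer time-slice on which some element of $R$ has support. Only propagators associated with time $t_s+0.5$ can be supported on $t_s$, because a propagator associated with $t_s-0.5$ would also have support on $t_s-1$, contradicting the minimality of $t_s$.

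The support of these propagators on slice $t_s$ is $\eta^{t_s}(X_n)$ or $\eta^{t_s}(Z_n)$ for distinct labels $n$ and Pauli type. These operators are linearly independent in the symplectic representation, so their product cannot cancel to the identity on slice $t_s$. This contradicts $\prod_{g\in R} g^{\otimes 2} = I$, hence $R$ must be empty. The same argument shows more generally that the propagators associated with a single time-step are independent, which is the fact the unique-decomposition remark after Eq.~\eqref{eq:Z_prop} already relies on.

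The only potential obstacle is boundary bookkeeping. With the open boundary conditions, the initial slice $0$ is removed and output boundary fields are added. One must check that the earliest-slice argument still applies, which holds because boundary gauge fields are never part of redundancies (App.~\ref{app_sec:obc}). One must also check that input-ISG gauge fields, when present, are treated like measurement slices for the purposes of $\phi$, so that they do not give a propagator-only redundancy. With this settled, $\ker\phi = \{\emptyset\}$ and $\phi$ is injective.
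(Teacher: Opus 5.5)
Your proof is correct and follows the paper's argument. The paper takes $R_1, R_2$ with $\phi(R_1)=\phi(R_2)$, observes that $R_1\oplus R_2$ is a redundancy made only of propagators, and rules it out with Lem.~\ref{lem:no_prop_redundancies}; this is your trivial-kernel argument in a different phrasing, and your earliest-slice proof of Lem.~\ref{lem:no_prop_redundancies} (which the paper calls obvious) and your explicit appeal to Lem.~\ref{lem:no_meas_dephasers} fill in steps without changing the route.
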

\begin{proof}
    Let us consider two redundancies $R_1, R_2$ containing the same set of measurements, i.e. $\phi(R_1) = \phi(R_2)$. Then they must differ by a set of propagators $P = R_1 \oplus R_2$. Since the symmetric difference of two redundancies must be a redundancy by Lem~\ref{lem:detector_redundancy_groups}, and a non-empty set of propagators alone cannot be a redundancy by Lem.~\ref{lem:no_prop_redundancies}, $P$ must be empty. Hence $R_1 = R_2$, and the claim follows.
\end{proof}

To show surjectivity, we start with a lemma.
\begin{lem}\label{lem:squishing_first_measurement}
    Let $D = \{m_1, ..., m_k\}$ be a detector. Then, $(-1)^{o(m_1)} U^{t_1 + 1 \rightarrow t_2 + 1} m_1 (U^{t_1 + 1 \rightarrow t_2 + 1})^{\dag} \in {\rm ISG}(t_2 + 1)$.
\end{lem}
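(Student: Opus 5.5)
The plan is to argue by contradiction, combining the ISG update rules of Prop.~\ref{prop:ISG_update} with an outcome-randomness argument that uses ${\rm ISG}(0) = \langle I \rangle$. As in the notation preceding Lem.~\ref{lem:meas_to_ISG_base_case}, I first group all measurements of $D$ made at a common time into a single effective measurement. Concretely, $m_1$ denotes the product of the measurements of $D$ at the earliest time $t_1 + 0.5$, and $m_2$ the product of those at the next time $t_2 + 0.5$.

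The first step is the base case. By Prop.~\ref{prop:ISG_update}, $(-1)^{o(m_1)} m_1 \in {\rm ISG}(t_1 + 1)$. Writing
\begin{equation*}
    p_{t'} := U^{t_1+1 \rightarrow t'} m_1 (U^{t_1+1 \rightarrow t'})^{\dag},
\end{equation*}
it then suffices to show, by induction on $t'$ from $t_1 + 1$ to $t_2 + 1$, that $(-1)^{o(m_1)} p_{t'} \in {\rm ISG}(t')$. By Prop.~\ref{prop:ISG_update}, the inductive step fails only if, at some time $t' + 0.5 \le t_2 + 0.5$, the conjugated element $p_{t'+1}$ either anticommutes with some measurement in $\mathcal{M}^{t'>}$, or ends up only in a product with another ISG generator. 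At $t' + 0.5 = t_2 + 0.5$ the conclusion is read off before the $D$-measurements are appended. At that time they commute with the surviving elements anyway, by the remark after Lem.~\ref{lem:detector_ISG} applied to the final step.

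The key step is to show that such a failure contradicts determinism of the parity $\sum_{m \in D} o(m)$. To do this I will track signs of ISG elements as affine functions of the outcome bits, in the spirit of the ancestry formalism of Ref.~\cite{Blackwell_arXiv_2025_distance_floquet}. Since ${\rm ISG}(0)$ is trivial, $o(m_1)$ is either a fresh uniformly random bit or a fixed affine function of strictly earlier outcomes, none of which lie in $D$. Suppose the inductive step fails at $t' + 0.5$. Then, after the anticommuting measurement, Prop.~\ref{prop:ISG_update} replaces $p$ by the new measurement and multiplies the other anticommuting generators by $p$. In the second case, the only surviving elements whose signs depend on $o(m_1)$ have the form $p \cdot s$ with $s \in {\rm ISG}$. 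I then want to show that the later outcomes, in particular those of $m_2, \dots, m_k$, depend on $o(m_1)$ only through $o(m_1) + o(s)$, where $o(s)$ is the sign bit of $s$. Here $o(s)$ is a sum involving outcomes not in $D$ that are conditionally random given the rest. Consequently $\sum_{m\in D} o(m)$ would not be deterministic, a contradiction.

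The main obstacle is exactly this last case, where $p$ survives only as $p \cdot s$. Here one must rule out that $s$ itself carries sign information that cancels inside $D$. My plan is to use minimality of $t_1$ and $t_2$: between $t_1 + 0.5$ and $t_2 + 0.5$ no measurement of $D$ occurs, so no outcome in $D$ can appear in $o(s)$. The randomness of $o(s)$ conditioned on everything else then forces the contradiction. If this turns out only to give membership of $(-1)^{o(m_1)} p_{t_2+1}$ in ${\rm ISG}(t_2+1)$ modulo elements whose signs are independent of $D$, I will absorb those elements using Cor.~\ref{cor:gauge_op_ISG}. That weaker form is in any case sufficient for the subsequent surjectivity argument.
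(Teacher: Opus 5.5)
Your contradiction strategy is the paper's: an anticommuting measurement between $t_1+0.5$ and $t_2+0.5$ leaves $o(m_1)$ only in combination with other signs. However, the last step is not handled, and the key cancellation step is justified by the wrong fact.

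\textbf{The last step, $t'=t_2$.} The lemma asserts membership in ${\rm ISG}(t_2+1)$, i.e.\ after \emph{all} of $\mathcal{M}^{t_2>}$. This is used downstream: the surjectivity proof replaces $m_2$ by $m_2\,p_{t_2+1}$ at time $t_2+0.5$, which requires $p_{t_2+1}$ to commute with every measurement there.
\begin{itemize}
\item Reading the conclusion off ``before the $D$-measurements are appended'' proves a weaker statement.
\item The appeal to Lem.~\ref{lem:detector_ISG} is invalid. That lemma concerns the \emph{latest} time $t_k$ of a detector, and $\{m_1,m_2\}$ need not be a detector when $k>2$.
\item Your contradiction argument also does not reach $t'=t_2$. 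It relies on no $D$-outcome being produced in the window, but $o(m_2)$ is produced exactly at $t_2+0.5$.
\item The fallback does not rescue this. Cor.~\ref{cor:gauge_op_ISG} only says ISG elements are products of ECOs; it cannot turn $p_{t_2+1}s\in{\rm ISG}(t_2+1)$ into $p_{t_2+1}\in{\rm ISG}(t_2+1)$. The base case of surjectivity needs $m_2=\pm p_{t_2+1}$ exactly.
\end{itemize}

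\textbf{The cancellation step.} The claim ``no outcome in $D$ appears in $o(s)$'' is false in general. When several $D$-measurements at $t_1+0.5$ are grouped into $m_1$, $o(s)$ can be one of their outcomes. It is also not what matters: you must show that no combination of later $D$-outcomes sums to $o(m_1)$.

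The missing ingredient is the following. With ${\rm ISG}(0)$ trivial (take the maximally mixed input, which is WLOG for determinism), the signs of an independent generating set of each ${\rm ISG}(t)$ are linearly independent affine functions of the fresh, uniformly random outcomes. Every case of Prop.~\ref{prop:ISG_update} preserves this, so a phaseless ISG element is determined by its sign function. The argument then runs as follows, uniformly for every $t_1<t'\le t_2$:
\begin{enumerate}
\item Every outcome in $D\setminus\{m_1\}$ is a combination of signs of ${\rm ISG}(t'+1)$ elements and later fresh bits. For $t'=t_2$ this includes $o(m_2)$, because $(-1)^{o(m_2)}m_2\in{\rm ISG}(t_2+1)$.
\item Determinism of the parity therefore forces some $g\in{\rm ISG}(t'+1)$ with $o(g)=o(m_1)+{\rm const}$.
\item Write $g=hM$, with $h$ a conjugated ${\rm ISG}(t')$ element commuting with $\mathcal{M}^{t'>}$ and $M\in\langle\mathcal{M}^{t'>}\rangle$. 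Then $o(M)$ contains a fresh bit unless $M$ already lies in the conjugated ${\rm ISG}(t')$. So $g$ is, up to sign, a conjugated ${\rm ISG}(t')$ element commuting with $\mathcal{M}^{t'>}$.
\item By the inductive hypothesis, $p_{t'}$ is the unique element of ${\rm ISG}(t')$ with sign function $o(m_1)$. Injectivity then gives $g=\pm p_{t'+1}$.
\end{enumerate}
This settles the last step as well. The paper's own proof is equally terse at this point, simply asserting that any $o(m_1)$-dependent element must also depend on other prior outcomes.
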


\begin{proof}
    It suffices to show that in order for $D$ to be a detector, there cannot exist $m'$ at some time $t'$ where $t_1 < t' \leq t_2$ that anti-commutes with $U^{t_1 + 1 \rightarrow t'+1} m_1 (U^{t_1 + 1 \rightarrow t'+1})^{\dag}$. 
    
    Recall our convention that $m_1, ..., m_k$ are labelled time-sequentially with $t_1 < t_2 < ... < t_k$. We observe that in order for $D$ to be a detector, ${\rm ISG}(t_2 + 1)$ must have an element of the form $(-1)^{o(m_1)} s$, where $s$ is some Pauli operator that does not depend on any other measurement in the circuit. If there existed no such operator, $D$ cannot be a detector, as it must include at least one other measurement before $m_2$ for the total parity to be deterministic.  

    Suppose there exists some $m'$ that anti-commutes with $U^{t_1 + 1 \rightarrow t'+1} m_1 (U^{t_1 + 1 \rightarrow t'+1})^{\dag}$. WLOG, let $m'$ be the first such anti-commuting measurement, so that $(-1)^{o(m_1)} U^{t_1 + 1 \rightarrow t'} m_1 (U^{t_1 + 1 \rightarrow t'})^{\dag} \in {\rm ISG}(t')$. We may write down a basis for ${\rm ISG}(t')$ with this operator as the only element depending on $o(m_1)$.
    
    By Prop.~\ref{prop:ISG_update}, if there exists no element $q \in {\rm ISG}(t' + 1)$ such that $q  U^{t_1 + 1 \rightarrow t' + 1} m_1 (U^{t_1 + 1 \rightarrow t' + 1})^{\dag}$ commutes with $m'$, then ${\rm ISG}(t' + 1)$ can have no dependence on $o(m_1)$. Hence, there must exist such an element. However, since all ISG elements in the circuit must depend on some prior measurement (since ${\rm ISG}(0) = \{\}$ by assumption), any element of ${\rm ISG}(t'+1)$ that depends on $o(m_1)$ must also depend on the outcome of such prior measurements, meaning $D$ cannot be a detector. Hence, by contradiction, we cannot have any such $m'$ anti-commuting with $U^{t_1 + 1 \rightarrow t'+1} m_1 (U^{t_1 + 1 \rightarrow t'+1})^{\dag}$. The claim follows.
\end{proof}

\begin{cor}\label{cor:squishing_first_measurement}
    Let $D = \{m_1, ..., m_k\}$ be a detector. Then, $\eta^{t_2 + 1}(U^{t_1 + 1 \rightarrow t_2 + 1} m_1 (U^{t_1 + 1 \rightarrow t_2 + 1})^{\dag})$ is a product of ECOs.
\end{cor}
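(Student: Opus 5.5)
The corollary follows by feeding the ISG membership of Lem.~\ref{lem:squishing_first_measurement} into Cor.~\ref{cor:gauge_op_ISG}. One subtlety shapes the argument: we want a product of ECOs only, with no reference to an input ISG. In this part of the appendix we have ${\rm ISG}(0)=\langle I\rangle$, so the generating set $\langle G_{\mathcal{C}} \cup \eta^0({\rm ISG}(0))\rangle$ reduces to $\langle G_{\mathcal{C}}\rangle$.

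First, apply Lem.~\ref{lem:squishing_first_measurement} to the detector $D$. This gives
\begin{equation*}
    s := (-1)^{o(m_1)} U^{t_1 + 1 \rightarrow t_2 + 1} m_1 (U^{t_1 + 1 \rightarrow t_2 + 1})^{\dag} \in {\rm ISG}(t_2 + 1).
\end{equation*}
Dropping the sign, its phaseless version lies in $\overline{\rm ISG}(t_2+1)$.

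Second, invoke Cor.~\ref{cor:gauge_op_ISG} with ${\rm ISG}(0)$ trivial. This yields $\eta^{t_2+1}({\rm ISG}(t_2+1)) \subseteq \langle G_{\mathcal{C}}\rangle$, so $\eta^{t_2+1}(s)$ is a product of ECOs up to sign. Since ECO products are only tracked up to phase, this is the claim.

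There is a more constructive route that avoids the full corollary, and I would include it as a sanity check because it also exhibits which ECOs are used. Start from the measurement slice $\eta^{t_1+1}(m_1)$, which is an ECO. For each $t'$ with $t_1+1 \le t' \le t_2$, Lem.~\ref{lem:squishing_first_measurement}, as proved, shows that $U^{t_1+1\rightarrow t'+1} m_1 (U^{t_1+1\rightarrow t'+1})^\dag$ commutes with every measurement in $\mathcal{M}^{t'>}$. Lem.~\ref{lem:centralizer_from_propagator} then says that the propagator $g_{\rm prop}^{t'>}$ of $U^{t_1+1\rightarrow t'} m_1 (U^{t_1+1\rightarrow t'})^{\dag}$ has no half-integer support. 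So it equals $\eta^{t'+1}(\cdot)\,\eta^{t'}(\cdot)$ of the conjugated operator, and it is a product of elementary propagators.

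Multiplying the measurement slice by these propagators telescopes. Each intermediate slice cancels, leaving exactly $\eta^{t_2+1}(U^{t_1 + 1 \rightarrow t_2 + 1} m_1 (U^{t_1 + 1 \rightarrow t_2 + 1})^{\dag})$.

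The main obstacle is the commutation hypothesis in the constructive route. The statement of Lem.~\ref{lem:squishing_first_measurement} only gives ISG membership, and the non-anticommutation fact appears in its proof rather than its statement. I would either cite the proof explicitly, or derive the fact directly: an ISG element that persists from $t'$ to $t'+1$ must commute with $\mathcal{M}^{t'>}$, since otherwise Prop.~\ref{prop:ISG_update} would remove it or multiply it by another generator. If this step proves awkward, the first route via Cor.~\ref{cor:gauge_op_ISG} suffices on its own.
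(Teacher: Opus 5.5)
Your proposal is correct, and your first route (ISG membership from Lem.~\ref{lem:squishing_first_measurement}, then Cor.~\ref{cor:gauge_op_ISG} with ${\rm ISG}(0)$ trivial) matches the paper, which presents the statement as an immediate consequence of Lem.~\ref{lem:squishing_first_measurement} without a separate proof. Keep the telescoping route as well, since Lem.~\ref{lem:surjectivity_phi} uses the stronger form ``measurement slice(s) of $m_1$ times propagators'' (any other measurement slice in the product would give $\phi(R)\neq D$), but justify its commutation hypothesis by citing the proof of Lem.~\ref{lem:squishing_first_measurement}: your fallback presupposes that $(-1)^{o(m_1)}U^{t_1+1\rightarrow t'}m_1(U^{t_1+1\rightarrow t'})^{\dag}$ lies in every intermediate ${\rm ISG}(t')$, and membership in ${\rm ISG}(t_2+1)$ alone does not give this.
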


The above allows us to show surjectivity.

\begin{lem}[Surjectivity]\label{lem:surjectivity_phi}
    $\phi$ is surjective.
\end{lem}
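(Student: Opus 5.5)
Given a detector $D = \{m_1, \dots, m_k\}$, ordered so that $t_1 \le t_2 \le \dots \le t_k$, we construct a redundancy $R$ with $\phi(R) = D$ by induction on $k$, using Cor.~\ref{cor:squishing_first_measurement} to absorb the earliest measurement into the next one.

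Concretely, take the measurement slice $\eta^{t_1+1}(m_1)$ and, for each $t = t_1+1, \dots, t_2$, multiply by the propagator $g^{t>}_{\rm prop}$ of the current conjugated Pauli $U^{t_1+1\rightarrow t} m_1 (U^{t_1+1\rightarrow t})^{\dag}$. By Lem.~\ref{lem:squishing_first_measurement}, this operator commutes with every measurement in the intervening layers. Lem.~\ref{lem:centralizer_from_propagator} then says these propagators leave no support on half-integer slices. The product is therefore $\eta^{t_2+1}(s)$ with
\[
s = U^{t_1+1\rightarrow t_2+1} m_1 (U^{t_1+1\rightarrow t_2+1})^{\dag},
\]
built from one measurement slice and a set $P_1$ of elementary propagators.

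Next we argue that $\{m_2, \dots, m_k\}$ together with the "virtual" operator $s$ (carrying sign $(-1)^{o(m_1)}$) still behaves like a detector. If $t_2 = t_1$, we simply merge $m_1$ and $m_2$ into a single measurement at that time. Otherwise, the key claim is that $s$ is generated by the measurements at $t_2+0.5$ together with the elements of ${\rm ISG}(t_2+1)$ inherited causally. The inductive step is then to replace $m_2$ by the combined Pauli $m_2 s$, which is effectively a measured element at time $t_2+0.5$ with outcome $o(m_1)+o(m_2)$. This gives a detector of length $k-1$.

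To make this precise, I would generalize the statement being proved. For every collection $\{q_1, \dots, q_k\}$ whose first element is an ISG element (tagged with the outcome parity it carries) and whose total parity is deterministic, I would produce a set of ECOs whose product vanishes, with measurement slices exactly the original measurements. I would run the induction on this generalized statement, applying Lem.~\ref{lem:squishing_first_measurement} to the ISG element rather than to a bare measurement; its proof only used that the first object is an ISG element depending on $o(m_1)$.

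Termination: at the last step we hold $\eta^{t_k+1}(s')$, where $s' \in {\rm ISG}(t_k+1)$ carries the parity of $m_1, \dots, m_{k-1}$, and $m_k$ is measured at $t_k+0.5$. Lem.~\ref{lem:detector_ISG} together with determinism of the total parity forces $s' = m_k$ up to sign. If instead $s' \neq m_k$, then $s' m_k$ would be a nontrivial ISG element whose sign depends on earlier outcomes, contradicting the assumption that the parity is deterministic with ${\rm ISG}(0)$ trivial. Multiplying by the slice $\eta^{t_k+1}(m_k)$ then cancels everything, so $R$ is the union of the measurement slices of $D$ and the collected propagators $P_1 \cup P_2 \cup \dots$. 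Finally, $\phi(R) = D$ because no propagator is counted twice: any repeat could be removed by symmetric difference.

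The main obstacle is the middle step. We must justify that after absorbing $m_1$, the product $m_2 s$ restricted to slice $t_2+1$ is expressible through the slice $\eta^{t_2+1}(m_2)$ and propagators alone, with no leftover ISG element generated by measurements outside $D$. Lem.~\ref{lem:squishing_first_measurement} guarantees only commutation up to time $t_2$, not beyond. I would handle this with the same contradiction argument as in its proof: any dependence on outside measurements would make the parity of $D$ random. If that fails, I would fall back on the ancestry formalism of Ref.~\cite{Blackwell_arXiv_2025_distance_floquet} to express the final ISG element causally via Cor.~\ref{cor:gauge_op_ISG}.
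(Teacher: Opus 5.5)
Your proof has the same skeleton as the paper's: induction on the length of the detector, absorbing $m_1$ into $m_2$ via Cor.~\ref{cor:squishing_first_measurement}, and closing off at the last measurement. The difference is in how the reduced object re-enters the induction.

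The paper does not generalize the statement. It operates on the circuit instead, building $\mathcal{C}'$ in which $m_1$ is deleted and $m_2$ is replaced by the genuine measurement $m_2 U^{t_1+1\rightarrow t_2+1} m_1 (U^{t_1+1\rightarrow t_2+1})^{\dag}$. Then $D'$ is an honest length-$k$ detector of an honest circuit. Since the inductive hypothesis is quantified over all Clifford circuits, both it and Lem.~\ref{lem:squishing_first_measurement} apply verbatim. $R$ is recovered from $R'$ by replacing the slice of the combined measurement with the slice of $m_2$ plus the ECOs of Cor.~\ref{cor:squishing_first_measurement}. The remaining elements of $R'$ live after $t_2$ and are unchanged by the surgery.

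Keeping the circuit fixed, as you do, forces you to prove Lem.~\ref{lem:squishing_first_measurement} for a virtual first element carrying the parity $o(m_1)+\dots+o(m_j)$. That is exactly your main obstacle, and the surgery sidesteps it. The price is that the paper asserts, without argument, that $D'$ is a detector of $\mathcal{C}'$.

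If you keep your route, two remarks.

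\textbf{The first half of your obstacle is immediate.} We have $\eta^{t_2+1}(m_2 s)=\eta^{t_2+1}(m_2)\,\eta^{t_2+1}(s)$, and the second factor uses only the slice of $m_1$ and propagators, so nothing from outside $D$ enters. What actually remains is that the accumulated Pauli commutes with the measurements in $(t_2,t_3]$; the ancestry formalism is not needed for this.

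\textbf{The generalized lemma needs a modified argument.} Your claim that the proof of Lem.~\ref{lem:squishing_first_measurement} "only used that the first object is an ISG element" is not quite right. Its contradiction says the compensating ISG element must depend on measurements outside $D$. In your setting that no longer covers every case, because the compensating element may carry the parity of a proper subset of $\{m_1,\dots,m_j\}$.

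The fact you already use at termination handles all cases uniformly. With ${\rm ISG}(0)$ trivial, no nontrivial ISG element has a deterministic sign, since the outcome-averaged circuit is unital. Hence the accumulated parity is carried by a unique ISG element. Once a measurement anticommuting with that element is performed, no surviving ISG element carries the parity. Every later outcome is the sign of a surviving element plus fresh independent randomness, so the total parity of $D$ could not be deterministic.
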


\begin{proof}
    We will show this inductively. Base case: Suppose $D = \{m_1, m_2\}$ is a detector. Then it follows from the argument of Lem.~\ref{lem:squishing_first_measurement} that $(-1)^{o(m_1)} U^{t_1 + 1 \rightarrow t_2 + 1} m_1 (U^{t_1 + 1 \rightarrow t_2 + 1})^{\dag}$ is the only element of ${\rm ISG}(t_2 + 1)$ that depends only on $o(m_1)$, hence, for $D$ to be a detector, we must have $m_2 = \pm U^{t_1 + 1 \rightarrow t_2 + 1} m_1 (U^{t_1 + 1 \rightarrow t_2 + 1})^{\dag}$. Then from Cor.~\ref{cor:squishing_first_measurement}, we know that $\eta^{t_2+1}(m_2)$ is the product of the measurement slice $\eta^{t_1 + 1}(m_1)$ and propagators, so that together with the measurement slice associated with $m_2$ we have a redundancy $R$ such that $\phi(R) = D$.

    Now suppose that for any Clifford circuit with a detector comprising $k$ or fewer (but larger than $2$) measurements, we can form a redundancy that maps to this detector under $\phi$. Now, consider a circuit $\mathcal{C}$ containing a detector $D = \{m_1, ..., m_{k+1}\}$. From the argument of Lem.~\ref{lem:squishing_first_measurement}, we know that there are no measurements between $m_1, m_2$ that anticommute with $m_1$. Hence, we may construct an alternative circuit $\mathcal{C}'$ which is identical to $\mathcal{C}$ everywhere except time-step $t_1 + 0.5, t_2 +0.5$. In $t_1 + 0.5$, we remove the measurement of $m_1$, and in time-step $t_2 + 0.5$, we replace the measurement of $m_2$ with measurement of $m_2 U^{t_1 + 1 \rightarrow t_2 + 1} m_1 (U^{t_1 + 1 \rightarrow t_2 + 1})^{\dag}$. Clearly, in this alternative circuit, we will have $D' = \{m_2 U^{t_1 + 1 \rightarrow t_2 + 1} m_1 (U^{t_1 + 1 \rightarrow t_2 + 1})^{\dag}, m_3, ..., m_{k+1}\}$ as a detector of length $k$, so that there exists a $R'$ with $\phi(R') = D'$. Now, to construct an analogous $R$ for the original circuit, we note that in deforming $\mathcal{C}$ to $\mathcal{C}'$ we have not changed any of the propagators and measurement slices after time $t_2$ at all, so we can keep all these operators in $R$ (specifically the operators in $\mathcal{C}$ which are identified with them). The measurement slice in $R'$ associated with $m_2 U^{t_1 + 1 \rightarrow t_2 + 1} m_1 (U^{t_1 + 1 \rightarrow t_2 + 1})^{\dag}$ can be replaced with the ECOs described in Cor.~\ref{cor:squishing_first_measurement} along with the ECO associated with the measurement of $m_2$. Then clearly this new $R$ will remain a redundancy, and we will have $\phi(R) = D$. Hence, the full claim follows.
\end{proof}

\subsection*{Detectors are Wilson loops}

Proof of Prop.~\ref{prop:wilson_like_detectors}.

\begin{proof}
    Let us consider an error in the support of a redundancy, i.e. the error identified with any of the gauge fields in the support of the redundancy. There are two such kinds of errors: measurement errors associated with measurement slices, or propagation errors. The claim is obvious for measurement errors. For propagation errors, first consider a single $X$-type error on the $i$-th qubit at some time $t + 0.5$, which are associated with a single elementary propagator $\eta^{t}(Z_i) \eta^{t+1}(U^{t>} Z_i (U^{t>})^{\dag})$. For now, assume $\eta^{t}(X_i) \eta^{t+1}(U^{t>} X_i (U^{t>})^{\dag})$ is not part of the same redundancy. Now since the unitaries must preserve commutation relations, propagating the $Z_i$ operator through each time-step always obtains operators that anti-commute with the propagated version of the $X_i$ operator.  
    In a redundancy, the propagation of this $Z$ operator must terminate on measurements, which may occur at different time-steps. Nevertheless, this implies that if we propagated all the associated measurements to the end of the circuit, we will obtain an operator that anti-commutes with the $X_i$ error propagated all the way to the end of the circuit. Hence, this error must cause a parity flip in this set of measurements and thus violate the detector. If the redundancy contains the ECO of the form $\eta^{t}(X_i) \eta^{t+1}(U^{t>} X_i (U^{t>})^{\dag})$ then we consider the propagation of the $Y_i$ operator, which similarly anticommutes with $X_i$. A similar reasoning holds for $Z$ errors. Finally, we can take linear combinations of such errors to obtain the claim.
\end{proof}

\section{The action of a foliated computation}\label{app_sec:foliated_computation}

To show that the action of a foliated computation for a CSS-like circuit is correct, it suffices to consider a particular subgraph $\Gamma^{t+0.5}$ of the $X$-type part of the gauged SSC. 

First, we say the Clifford operation at time $t+0.5$ comprises a unitary $U$ and measurements of commuting pure $X$ operators $q^X_1, ..., q^X_{k^X}$ and pure $Z$ operators $q^Z_1, ..., q^Z_{k^Z}$. Some useful additional notation: set $A_i$ to be the index set of qubits in $U X_i U^{\dag}$ that have non-trivial support, and $B^{t+0.5}_i$ the analogous index set for $U Z_i U^{\dag}$. For an index $i$ for which a measurement has non-trivial support we will simply say $i \in q^{X/Z}_j$.

Next, we introduce three layers of nodes (corresponding to qubits) in $\Gamma^{t+0.5}$. 
\begin{itemize}
    \item In layer $t+0.5$, we introduce qubits $(i, t+0.5)$ for each wire in the circuit.  We also introduce qubits $(a_j, t+0.5)$ for each $Z$-type measurement.
    \item In layer $t+1$, we introduce qubits $(i, t+1)$ for each wire in the circuit. We also introduce qubits $(a_j, t+1)$ for each $X$-type measurement.
    \item In layer $t+1.5$, we introduce qubits $(i, t+1.5)$ for each wire in the circuit.
\end{itemize}

The intra-layer edges correspond to:
\begin{itemize}
    \item For $t+0.5$, the qubit $(a_j, t+0.5)$ is connected to the qubit $(i, t+0.5)$ if and only if we have $[q^Z_j, U^{t+0.5} X_i (U^{t+0.5})^{\dag}] \neq 0$, i.e. $q^Z_j$ does not commute with the propagated version of $X_i$.
    \item For $t + 1$, the intra-layer edges are given by connecting the $X$-type-measurement-associated qubit with the wires in the support of the Pauli they measure. 
    \item For $t + 1.5$, we do not introduce any intra-layer edges.
\end{itemize}

Finally, the inter-layer edges are given by:
\begin{itemize}
    \item For layers $t + 0.5$ and $t + 1$, we have edges between qubits $(i, t+0.5)$ and the qubits $(j, t + 1)$ for $j \in A^{t+0.5}_i$.
    \item Between layers $t + 1$ and $t + 1.5$, we have edges between qubits $(i, t+1)$ and $(i, t+1.5)$. 
\end{itemize}
Finally, we foliate this subgraph $\Gamma^{t+0.5}$ by leaving the wire qubits $(i, t+0.5)$ in an arbitrary state and preparing all other qubits in a $ | + \rangle$ state, and then applying $cZ$ gates along all edges. We refer to this resource state as $ | \Gamma^{t+0.5}_b \rangle$.

We will now show that this subgraph carries out the correct Clifford operation associated with the time-slice $t+0.5$. 

First, we observe that the operators
\begin{equation*}
    \begin{aligned}
        \widetilde{X}_i^{t+0.5} &= (-1)^{\zeta[t+0.5]_i^x} X^{t+0.5}_i Z^{t+1}_i, \\
        \widetilde{Z}_i^{t+0.5} &= (-1)^{\zeta[t+0.5]_i^z} Z^{t+0.5}_i,\\
    \end{aligned}
\end{equation*}
where $\zeta[t+0.5]_i^{x/z} = 0,1$ are arbitrarily chosen, are not stabilizers of $| \Gamma^{t+0.5}_b \rangle$, and they commute with all stabilizers. We identify these with the Paulis before application of the Clifford operation, and identify the analogous operators with $t \rightarrow t+1$ with the Paulis after application of the Clifford operation. 

We claim that we apply the Clifford operation by measuring all qubits in layers $t+0.5, t+1, t+1.5$ and all qubits $(a_j, t+1.5)$. We refer to the post measurement state as $| \Gamma^{t+0.5}_f \rangle$. To show this, we must show:
\begin{enumerate}
    \item \textit{The correct operators are measured --} The final state will have stabilizers that are proportionate to $\widetilde{q}_i^{X/Z}$, where the sign includes a term that comes from measuring the associated ancilla.
    \item \textit{Logical degrees of freedom survive --} If $\prod_{j \in A_i} \widetilde{X}^{t+0.5}_j$ stabilizes $| \Gamma^{t+0.5}_b \rangle$ and $\prod_{j \in A_i} X^{t+0.5}_j$ commutes with all $q_l^Z$ then $\prod_{j \in A_i} \widetilde{X}^{t+0.5}_j$ will (up to a sign) stabilize $| \Gamma^{t+0.5}_f \rangle$. Similarly for any possible $Z$-type stabilizer.
    \item \textit{Anti-commuting degrees of freedom are removed -- } If $\prod_{j \in A_i} \widetilde{X}^{t+0.5}_j$ stabilizes $| \Gamma^{t+0.5}_b \rangle$ and $\prod_{j \in A_i} X^{t+0.5}_j$ anti-commutes with some $q_l^Z$ then $\prod_{j \in A_i} \widetilde{X}^{t+0.5}_j$ (with any sign) will not be a stabilizer of $| \Gamma^{t+0.5}_f \rangle$. Similarly for any possible $Z$-type stabilizer.
\end{enumerate}
We note that we do not have to consider combinations of $X, Z$-type operators due to our restriction to Clifford circuits.

\textit{The correct operators are measured --} For $X$-type measurements, we note that 
\begin{equation*}
    X_{a_j}^{t+1} \prod_{i \in q^X_j} Z_i^{ t+1} \sim X_{a_j}^{t+1} \prod_{i \in q^X_j} X_i^{ t+1.5},
\end{equation*}
stabilizes $| \Gamma^{t+0.5}_b \rangle$, so measuring $X_{a_j}^{t+1}$ imposes the stabilizer 
\begin{equation*}
    (-1)^{o(x_{a_j}^{t+1})} \prod_{i \in q^X_j} X_i^{ t+1.5}
\end{equation*}
on $| \Gamma^{t+0.5}_f \rangle$. 

For $Z$-type measurements, the argument is more subtle. First, note that $((i, t+0.5), (j, t+1)) \in E$ if and only if $j \in A_i$. We also have  $((i, t+0.5), (j, t+1)) \in E$ if and only if $i \in {\rm supp} U^{\dag} Z_j U$.

Now $((a_j, t+0.5), (i, t+1)) \in E$ if and only if $[q_j^Z, U X_i U^{\dag}] \neq 0$. Requiring also the CSS property, this is equivalent to $i \in {\rm supp} U^{\dag} q_j^Z U$. So we have stabilizers,
\begin{equation*}
    X^{t+0.5}_{a_j} \prod_{i \in U^{\dag} q_j^Z U} Z_i^{t+0.5}.
\end{equation*}
However, we also have stabilizers, 
\begin{equation*}
    Z_j^{t+1.5} X_j^{t+1} \prod_{i \in U^{\dag} Z_j U} Z_i^{t+0.5} \prod_{l: j \in q_l^X} Z_{a_l}^{t+1}
\end{equation*}
Multiplying through the support of $q_j^Z$, 
\begin{equation*}
    \begin{aligned}
        &\prod_{p \in q_j^Z} \left( Z_p^{t+1.5} X_p^{t+1} \prod_{i \in U^{\dag} Z_p U} Z_i^{t+0.5} \prod_{l: p \in q_l^X} Z_{a_l}^{t+1} \right) \\
        &= \prod_{p \in q_j^Z} Z_p^{t+1.5} X_p^{t+1} \prod_{i \in U^{\dag} q_j^Z U} Z_i^{t+0.5},
    \end{aligned}
\end{equation*}
where the disappearance of the last term comes from the fact that $q_j^Z, q_l^X$ must commute for all $j, l$. Finally, multiplying this by the stabilizer we started with shows that 
\begin{equation}
    X_{a_j}^{t+0.5} \prod_{p \in q_j^Z} X_p^{t+1} Z_p^{t+1.5}
\end{equation}
is a stabilizer of $|\Gamma_b^{t+0.5} \rangle$. Hence, measuring all the qubits in layer $t+0.5, t+1$ in the $X$ basis imposes a stabilizer 
\begin{equation}
    (-1)^{o(x_{a_j}^{t+0.5})} \prod_{i \in q_j^Z} (-1)^{o(x_p^{t+1})} Z_p^{t+1.5}
\end{equation}
on $|\Gamma^{t+0.5}_f \rangle$. Hence, we see that the correct stabilizers are imposed with the ancillas contributing to the signs in the right way.

\textit{Logical degrees of freedom survive --} Now, suppose the initial state is stabilized by some $s^X = \prod_{i \in S} X_i$ such that $[U s^X U^{\dag}, q_j^Z] = 0$ for all $j$. Then after applying the $cZ$ gates, we have 
\begin{equation*}
    \prod_{i \in S} X_i^{t+0.5} \prod_{j \in A_i} Z_j^{t+1}
\end{equation*}
as a stabilizer. The condition $[U s^X U^{\dag}, q_j^Z] = 0$ tells us that 
\begin{equation*}
    \prod_{j \in A_i} Z_j^{t+1} X_j^{t+1.5} 
\end{equation*}
Hence, we have 
\begin{equation*}
    \prod_{i \in S} X_i^{t+0.5} \prod_{j \in A_i} X_j^{t+1.5}
\end{equation*}
as a stabilizer of $| \Gamma_f^{t+0.5} \rangle$, and measuring gives 
\begin{equation*}
    \prod_{i \in S} (-1)^{o(x_i^{t+0.5})} \prod_{j \in A_i} X_j^{t+1.5}
\end{equation*}
as a stabilizer of the final state. Recognizing that we have the correct connectivity for propagating $Z$-type operators as described in Sec.~\ref{sssec:banach-tarski-prop}, an almost identical argument holds for $Z$-type stabilizers.

\textit{Anti-commuting degrees of freedom are removed --} This follows from having established that measurements are carried out correctly, so that on the final remaining layer of qubits we cannot have incompatible stabilizers.

\bigskip

We have shown that the foliated computation construction carries out the correct action on qubits in some layer $t+0.5$ and teleports them to $t+1.5$. To obtain the full foliated computation for the entire circuit, we can concatenate this procedure. We first imagine doing this for layers $0.5, 1, 1.5$. Then we introduce qubits in $t = 2, 2.5$ and carry out the $cZ$ gates. These cZ gates commute with all measurements on layers $0.5$ and $1$ that have already been done, and the subsequent measurements on layers $1.5, 2$ also commute with all cZ gates and $X$ measurements that were done in the original layers. Proceeding in this manner we see that we always have the right state, up to measurement outcomes, on the final half-integer layer, which finally gets us to the output boundary.

\section{Deferred derivations from Sec.~\ref{sec:applications}}\label{app_sec:deferred_proofs_applications}

\subsection*{Outcome distribution for complete noise ansatz}

In this section, we derive Eq.~\ref{eq:outcome_dist_complete}. Let $b \in \mathbb{Z}_2^n$. We denote $Z[b]:= \otimes_{j = 1}^n Z_j^{b_j}$. The initial state may be written
\begin{equation}
    \rho_0 = \frac{1}{2^n} \otimes_{j=1}^n (I_j + Z_j) = \frac{1}{2^n} \sum_{b \in \mathbb{Z}_2^n} Z[b].
\end{equation}
Errors on the state preparation can only (up to a global phase), flip $|0 \rangle's$ to $|1 \rangle's$. Hence only the $Z$-type Pauli eigenvalues in state preparation matter. These can be indexed by $\lambda^S_b := \lambda^S_{Z[b]}$ for some $b \in \mathbb{Z}_2^n$. This gives the effect of $(\Lambda^S)^{\otimes n}$ on the initial state as, 
\begin{equation}
\begin{aligned}
    (\Lambda^S)^{\otimes n}(\rho_0) 
    = \frac{1}{2^n} \sum_{b \in \mathbb{Z}_2^n} \lambda_b^S Z[b].
\end{aligned}
\end{equation}

The measurement outcomes of the circuit can be indexed by a bit string $k \in \mathbb{Z}_2^n$ representing the outcome of each $Z$ basis measurement. The POVM in the noiseless case is 
\begin{equation}
    E_k = \frac{1}{2^n} \sum_{b \in \mathbb{Z}_2^n} (-1)^{k \cdot b} Z[b],
\end{equation}
with the product forming $Z[b]$ picking up a minus sign everywhere $k$ is non-zero. In the noisy case, we also have that only the $Z$-type eigenvalues matter, so that,  
\begin{equation}
    \begin{aligned}
        {\rm Tr} (E_k (\Lambda^M)^{\otimes n}(\cdot ))  =\frac{1}{2^n} \sum_{b \in \mathbb{Z}_2^n} (-1)^{k \cdot b} \lambda^M_{b} {\rm Tr}(Z[b] (\cdot) ).
    \end{aligned}
\end{equation} 

The action of the noisy Clifford unitaries $\widetilde{C}$ may be written,
\begin{equation}
    \begin{aligned}
        \widetilde{C}(\cdot) 
        &= \frac{1}{2^n} \sum_{p_0 \in \overline{\mathcal{P}}_n} \left( \prod_{j = 0}^{T-1} \lambda_{ U^{0 \rightarrow j} p_0 (U^{0 \rightarrow j})^{\dag}}^{T-j+0.5} \right) {\rm Tr} \left( (\cdot) p_0 \right) U^{0 \rightarrow T} p_0 (U^{0 \rightarrow T})^{\dag}.
    \end{aligned}
\end{equation}

These allow us to work out the probability distribution over $k$ for some $\mathbf{x}$ representing the negative log Pauli eigenvalues of the circuit elements is,
\begin{equation}
    \begin{aligned}
        F_k(\mathbf{x}) 
        = \frac{1}{4^n} \sum_{b, b' \in \mathbb{Z}_2^n} \left( \prod_{j = 0}^{T-1} \lambda_{ U^{0 \rightarrow j} Z[b'] (U^{0 \rightarrow j})^{\dag}}^{T-j+0.5} \right) (-1)^{k \cdot b} \lambda^M_{b} \lambda^S_{b'} {\rm Tr}(Z[b]  U^{0 \rightarrow T} Z[b'] (U^{0 \rightarrow T})^{\dag} ) 
    \end{aligned}
\end{equation}

\printbibliography

\end{document}